\theoremstyle{plain}
\newtheorem{theorem}{Theorem}[section]
\newtheorem{corollary}[theorem]{Corollary}
\newtheorem{lemma}[theorem]{Lemma}
\newtheorem{proposition}[theorem]{Proposition}
\theoremstyle{definition}
\newtheorem{algorithm}[theorem]{Algorithm}
\newtheorem{definition}[theorem]{Definition}
\newcommand{\F}{\mathbb{F}}
\newcommand{\N}{\mathbb{N}}
\newcommand{\R}{\mathbb{R}}
\newcommand{\Z}{\mathbb{Z}}
\newcommand{\cD}{\mathcal{D}}
\newcommand{\cG}{\mathcal{G}}
\newcommand{\cH}{\mathcal{H}}
\newcommand{\cM}{\mathcal{M}}
\newcommand{\cQ}{\mathcal{Q}}
\newcommand{\cR}{\mathcal{R}}
\newcommand{\cS}{\mathcal{S}}
\newcommand{\cX}{\mathcal{X}}
\newcommand{\cY}{\mathcal{Y}}
\newcommand{\cZ}{\mathcal{Z}}
\newcommand{\tc}{\tilde{c}}
\newcommand{\teps}{\tilde{\varepsilon}}
\newcommand{\Geo}{\mathrm{Geo}}
\newcommand{\Lap}{\mathrm{Lap}}
\newcommand{\poly}{\mathrm{poly}}
\newcommand{\supp}{\mathrm{supp}}
\newcommand{\ctime}{\mathrm{Time}}
\newcommand{\unif}{\mathrm{Unif}}
\newcommand{\eps}{\varepsilon}
\newcommand{\zo}{\{0,1\}}
\newcommand{\vpref}[1]{\mbox{\protect\vphantom{hy}\ref{#1}}}
\newcommand{\overfullhypen}{-}
\title{Differential Privacy on Finite Computers\footnote{A condensed version of this paper appeared in ITCS 2018 \cite{BaVa18}}}
\author{Victor Balcer\thanks{
Supported by NSF grant CNS-1237235 and CNS-1565387.}
\hspace{3em}
Salil Vadhan\thanks{
\protect\url{salil.seas.harvard.edu}.
Supported by NSF grant CNS-1237235, a Simons Investigator Award, and a grant from the Sloan Foundation.}\\
Center for Research on Computation \& Society\\
School of Engineering \& Applied Sciences\\
Harvard University\\
\texttt{vbalcer@g.harvard.edu, salil\textunderscore vadhan@harvard.edu}}
\date{December 24, 2018}
\begin{document}

\maketitle
\begin{abstract}

We consider the problem of designing and analyzing differentially private algorithms that can be implemented on {\em discrete} models of computation in {\em strict} polynomial time, motivated by known attacks on floating point implementations of real-arithmetic differentially private algorithms (Mironov, CCS 2012) and the potential for timing attacks on expected polynomial-time algorithms.
As a case study, we examine the basic problem of approximating the histogram of a categorical dataset over a possibly large data universe $\mathcal{X}$.
The classic Laplace Mechanism (Dwork, McSherry, Nissim, Smith, TCC 2006 and J. Privacy \& Confidentiality 2017) does not satisfy our requirements, as it is based on real arithmetic, and natural discrete analogues, such as the Geometric Mechanism (Ghosh, Roughgarden, Sundarajan, STOC 2009 and SICOMP 2012), take time at least linear in $|\mathcal{X}|$, which can be exponential in the bit length of the input.

In this paper, we provide strict polynomial-time discrete algorithms for approximate histograms whose simultaneous accuracy (the maximum error over all bins) matches that of the Laplace Mechanism up to constant factors, while retaining the same (pure) differential privacy guarantee.
One of our algorithms produces a sparse histogram as output.
Its  ``per-bin accuracy'' (the error on individual bins) is worse than that of the Laplace Mechanism by a factor of $\log|\mathcal{X}|$, but we prove a lower bound showing that this is necessary for any algorithm that produces a sparse histogram.
A second algorithm avoids this lower bound, and matches the per-bin accuracy of the Laplace Mechanism, by producing a compact and efficiently computable representation of a dense histogram; it is based on an $(n+1)$-wise independent implementation of an appropriately clamped version of the Discrete Geometric Mechanism.
\end{abstract}

\section{Introduction}\label{sec:intro}

{\em Differential Privacy} \cite{DwMcNiSm06} is by now a well-established framework for privacy-protective statistical analysis
of sensitive datasets.
Much work on differential privacy involves an interplay between statistics and computer science.
Statistics provides many of the (non-private) analyses that we wish to approximate with differentially private algorithms, as well as probabilistic tools that are useful in analyzing such algorithms, which are necessarily randomized.
From computer science, differential privacy draws upon a tradition of adversarial modeling and strong security definitions, techniques for designing and analyzing randomized algorithms, and considerations of algorithmic resource constraints (such as time and memory).

Because of its connection to statistics, it is very natural that much of the literature on differential privacy considers the estimation of real-valued functions on real-valued data (e.g. the sample mean) and introduces noise from continuous probability distributions (e.g. the Laplace distribution) to obtain privacy.
However, these choices are incompatible with standard computer science models for algorithms (like the Turing machine or RAM model) as well as implementation on physical computers (which use only finite approximations to real arithmetic, e.g. via floating point numbers).
This discrepancy is not just a theoretical concern;
Mironov \cite{Mironov12} strikingly demonstrated that common floating-point implementations of the most basic differentially private algorithm (the Laplace Mechanism) are vulnerable to real attacks.
Mironov shows how to prevent his attack with a simple modification to the implementation, but this solution is specific to a single differentially private mechanism and particular floating-point arithmetic standard.
His solution increases the error by a constant factor and seems likely to be quite efficient in practice.
However, he provides no bounds on asymptotic running time.
Gazeau, Miller and Palamidessi \cite{GaMiPa13} provide more general conditions under which an implementation of real numbers and a mechanism that perturbs the correct answer with noise maintains differential privacy.
However, they do not provide an explicit construction with bounds on accuracy and running time.

From a theoretical point of view, a more appealing approach to resolving these issues is to avoid real or floating-point arithmetic entirely and only consider differentially private computations that involve discrete inputs and outputs, and rational probabilities, as first done in \cite{DwKeMcMiNa06}.
Such algorithms are realizable in standard discrete models of computation.
However, some such algorithms have running times that are only bounded in expectation (e.g. due to sampling from an exponential distribution supported on the natural numbers), and this raises a potential vulnerability to timing attacks.
If an adversary can observe the running time of the algorithm, it learns something about the algorithm's coin tosses, which are assumed to be secret in the definition of differential privacy.
(Even if the time cannot be directly observed, in practice an adversary can determine an upper bound on the running time, which again is information that is implicitly assumed to be secret in the privacy definition.)

Because of these considerations, we advocate the following principle:
\begin{quote}
\begin{center}
{\em \textbf{Differential Privacy for Finite Computers:}}
\end{center}
{\em We should describe how to implement differentially private algorithms on \textbf{discrete} models of computation with
\textbf{strict} bounds on running time (ideally polynomial in the bit length of the input) and \textbf{analyze} the effects of those constraints on both privacy and accuracy.}
\end{quote}
Note that a strict {\em bound} on running time does not in itself prevent timing attacks, but once we have such a bound, we can pad all executions to take the same amount of time.
Also, while standard discrete models of computation (e.g. randomized Turing machines) are defined in terms of countable rather than finite resources (e.g. the infinite tape), if we have a strict bound on running time, then once we fix an upper bound on input length, they can indeed be implemented on a truly finite computer (e.g. like a randomized Boolean circuit).

In many cases, the above goal can be achieved by appropriate discretizations and truncations applied to a standard, real-arithmetic differentially private algorithm.
However, such modifications can have a nontrivial price in accuracy or privacy, and thus we also call for a rigorous analysis of these effects.

In this paper, we carry out a case study of achieving ``differential privacy for finite computers'' for one of the first tasks studied in differential privacy, namely approximating a histogram of a categorical dataset.
Even this basic problem turns out to require some nontrivial effort, particularly to maintain strict polynomial time, optimal accuracy and pure differential privacy when the data universe is large.

We recall the definition of differential privacy.

\begin{definition}[\cite{DwMcNiSm06}]\label{def:intro dp}
Let $\cM: \cX^n \rightarrow \cR$ be a randomized algorithm.
We say $\mathcal{M}$ is {\bf $(\varepsilon,\delta)$-differentially private} if for every pair of
datasets $D$ and $D'$ that differ on one row
and every subset $S \subseteq \cR$
\begin{align*}
\Pr[\cM(D) \in S]
\le e^\varepsilon\cdot \Pr[\cM(D') \in S] + \delta
\end{align*}
\end{definition}

We say an $(\varepsilon, \delta)$-differentially private algorithm satisfies {\bf pure differential privacy} when $\delta = 0$ and say it satisfies {\bf approximate differential privacy} when $\delta > 0$.

In this paper, we study the problem of estimating the {\em histogram} of a dataset $D\in \cX^n$, which is the vector $c=c(D)\in \N^\cX$, where $c_x$ is the number of rows in $D$ that have value $x$.
Histograms can be approximated while satisfying differential privacy using the {\em Laplace Mechanism}, introduced in the original paper of Dwork, McSherry, Nissim and Smith \cite{DwMcNiSm06}.
Specifically, to obtain $(\eps,0)$-differential privacy, we can add independent noise distributed according to a Laplace distribution, specifically $\Lap(2/\eps)$, to each component of $c$ and output the resulting vector $\tc$.
Here $\Lap(2/\eps)$ is the {\em continuous}, real-valued random variable with probability density function $f(z)$ that is proportional to $\exp(-\eps\cdot|z|/2)$.
The Laplace Mechanism also achieves very high accuracy in two respects:

\begin{description}
\item[Per-Query Error:]
	For each bin $x \in \cX$, with high probability we have $|\tc_x-c_x| \leq O(1/\eps)$.
\item[Simultaneous Error:]
	With high probability, we have $\max_x |\tc_x-c_x| \leq O(\log (|\cX|)/\eps)$.
\end{description}

\noindent Note that both of the bounds are independent of the number $n$ of rows in the dataset, and so the fractional error vanishes linearly as $n$ grows.

Simultaneous error is the more well-studied notion in the differential privacy literature, but we consider per-query error to be an equally natural concept:
if we think of the approximate histogram $\tc$ as containing approximate answers to the $|\cX|$ different counting queries corresponding to the bins of $\cX$, then per-query error captures the error as experienced by an analyst who may be only interested in one or a few of the bins of $\tc$.
The advantage of considering per-query error is that it can be significantly smaller than the simultaneous error, as is the case in the Laplace Mechanism when the data universe $\cX$ is very large.
It is known that both of the error bounds achieved by the Laplace Mechanism are optimal up to constant factors;
no $(\eps,0)$-differentially private algorithm for histograms can achieve smaller per-query error or simultaneous error \cite{HaTa10,BeBrKaNi14}.

Unfortunately, the Laplace Mechanism uses real arithmetic and thus cannot be implemented on a finite computer.
To avoid real arithmetic, we could use the Geometric Mechanism \cite{GhRoSu12}, which adds noise to each component of $c$ according to the 2-sided geometric distribution, $\Geo(2/\eps)$, which is supported on the integers and has probability mass function $f(z)\propto \exp(-\varepsilon\cdot|z|/2)$.
However, this mechanism uses integers of unbounded size and thus cannot be implemented on a finite computer.
Indeed, while the algorithm can be implemented with a running time that is bounded in expectation (after reducing $\varepsilon$ so that $e^{\varepsilon/2}$ and hence all the probabilities are rational numbers), truncating long executions or allowing an adversary to observe the actual running time can lead to a violation of differential privacy.
Thus, as first described by Dwork, Kenthapadi, McSherry, Mironov and Naor \cite{DwKeMcMiNa06}, it is better to restrict the output of the mechanism to a binary representation of fixed length in order to avoid small tail probabilities.
Similarly, we work with the Truncated Geometric Mechanism of Ghosh, Roughgarden and Sundararajan \cite{GhRoSu12}, where we clamp each noisy count $\tc_x$ to the interval $[0,n]$.
We observe that the resulting probability distribution of $\tc_x$, supported on $\{0,1,\ldots,n\}$, can be described explicitly in terms of $c_x$, $\eps$ and $n$, and it can be sampled in polynomial time using only integer arithmetic (after ensuring $e^{\varepsilon/2}$ is rational).
Thus, we obtain:

\begin{theorem}[Bounded Geometric Mechanism, informal statement of Thm. \vpref{thm:explicit basic}]\label{thm:intro basic}
For every finite $\cX$, $n$ and $\eps \in (0,1]$, there is an $(\eps,0)$-differentially private algorithm $\cM : \cX^n\rightarrow \{0,1,\ldots,n\}^\cX$ for histograms achieving:

\begin{itemize}
\item
	Per-query error $O(1/\eps)$.
\item
	Simultaneous error $O(\log (|\cX|)/\eps)$.
\item
	Strict running time $\tilde{O}(|\cX|/\varepsilon)\cdot \log^2 n) + O(n \log n \cdot \log |\cX|)$.
\end{itemize}
\end{theorem}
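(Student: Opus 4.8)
The plan is to instantiate the Truncated Geometric Mechanism with a rational noise parameter and to supply a strict-time, integer-arithmetic sampler for the resulting per-bin distribution. First I would replace the ideal parameter $e^{-\eps/2}$ by a rational $\alpha$ approximating it \emph{from above}: take $k = \lceil 2/\eps\rceil$ and $\alpha = k/(k+1)$, so that $\ln(1/\alpha) = \ln(1+1/k) \in [\eps/4,\ \eps/2]$ for all $\eps \in (0,1]$, and $k$ (hence $\alpha$, with $O(\log(1/\eps))$-bit numerator and denominator) is computable from $\eps$ in low-order time. Then define $\cM$ to compute the histogram $c = c(D) \in \N^\cX$ and, independently for each $x$, output $\tc_x \sim \cD_{c_x}$, where $\cD_t$ is the law of $\mathrm{clamp}_{[0,n]}(t+Z)$ for $Z$ a two-sided geometric variable with $\Pr[Z=z] = \frac{1-\alpha}{1+\alpha}\alpha^{|z|}$. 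Since $0 \le c_x \le n$, clamping can only move $t+Z$ toward $t$, so $|\tc_x - c_x| \le |Z|$ pointwise.

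For privacy the key fact is that $\{\cD_t\}_{t=0}^{n}$ is pointwise $\alpha$-Lipschitz in $t$: for every $k \in \{0,\dots,n\}$ and every $t < n$ one has $\Pr[\cD_t = k]/\Pr[\cD_{t+1} = k] \in [\alpha, 1/\alpha]$. This is a three-case check using the closed forms $\Pr[\cD_t = 0] = \alpha^t/(1+\alpha)$, $\Pr[\cD_t = n] = \alpha^{n-t}/(1+\alpha)$, and $\Pr[\cD_t = k] = \frac{1-\alpha}{1+\alpha}\alpha^{|k-t|}$ for $0 < k < n$ (a geometric tail collapses at each endpoint; in the interior the exponent $|k-t|$ changes by $\pm 1$). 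Neighboring datasets have histograms differing by $\pm 1$ in at most two coordinates, and $\cM$ is a product of independent per-coordinate mechanisms, so for any output $\tc$ the likelihood ratio $\Pr[\cM(D)=\tc]/\Pr[\cM(D')=\tc]$ lies in $[\alpha^2,\alpha^{-2}] \subseteq [e^{-\eps},e^{\eps}]$ (using $\ln(1/\alpha)\le\eps/2$), giving $(\eps,0)$-differential privacy.

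For accuracy, $\Pr[|Z| \ge s] = \tfrac{2\alpha^s}{1+\alpha} \le 2\alpha^s = 2e^{-s\ln(1/\alpha)}$ and $\ln(1/\alpha) = \Theta(\eps)$, so $\Pr[|\tc_x - c_x| \ge s] \le 2e^{-\Theta(s\eps)}$. Taking $s = O(1/\eps)$ makes this at most any desired constant (per-query error $O(1/\eps)$), and a union bound over the $|\cX|$ bins with $s = O(\log(|\cX|)/\eps)$ makes $\Pr[\max_x |\tc_x-c_x| \ge s]$ at most a constant (simultaneous error $O(\log(|\cX|)/\eps)$); in either case one can trade the constant for an extra $\log(1/\beta)$ factor to obtain failure probability $\beta$.

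For the running time, the histogram is obtained by sorting the $n$ rows of $D$, at cost $O(n\log n\cdot\log|\cX|)$ since comparisons act on $\log|\cX|$-bit strings; this is the second term, and afterwards all but at most $n$ bins have $c_x = 0$. It then remains to output $\tc_x \sim \cD_{c_x}$ for each of the $|\cX|$ bins in strict time $\tilde O((1/\eps)\log^2 n)$ using only integer arithmetic. I would reduce sampling $\cD_t$ to sampling $Z$ and clamping: with probability $\frac{1-\alpha}{1+\alpha}$ set $Z = 0$; otherwise pick a uniform sign and set $|Z| = 1 + G$ where $G$ is one-sided geometric, $\Pr[G=i] = (1-\alpha)\alpha^i$; then, because $t+Z$ is about to be clamped into $[0,n]$, it suffices to produce $\min(G,m)$ for the appropriate $m \in \{0,\dots,n\}$. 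For that one can exploit the identity $\prod_{j\ge 0}(1+\alpha^{2^j}) = 1/(1-\alpha)$, which shows the binary digits of $G$ are independent with the $j$-th digit equal to $1$ with probability $\alpha^{2^j}/(1+\alpha^{2^j})$, so it is enough to sample the $O(\log m)$ low-order digits plus a single ``high part nonzero'' indicator, computing each power $\alpha^{2^j}$ by repeated squaring. The main obstacle is exactly this last ingredient: a sampler that is simultaneously exact, worst-case (not merely expected) polynomial time, keeps privacy purely $(\eps,0)$, and meets the quantitative $\tilde O((1/\eps)\log^2 n)$ per-bin bound. A direct inverse-CDF sampler is ruled out because the common denominator of the masses of $\cD_t$ has bit length $\Theta(n\log(1/\eps))$, and the textbook coin-flipping sampler for a geometric variable terminates only in expectation; carefully bounding the number of digits and the bit lengths of the intermediate powers, and handling the capped high-order tail exactly, is where the real work lies and is the technical heart of the formal statement abbreviated here.
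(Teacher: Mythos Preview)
Your privacy and accuracy arguments are essentially those of the paper (the rationalization $\alpha = k/(k+1)$ is the same idea as the paper's choice $\teps = 2\ln(1+2^{-\lceil\log(2/\eps)\rceil})$, and the per-bin Lipschitz check plus product structure is exactly how Theorem~\ref{thm:histogram basic} proceeds). The gap is in the sampler, and it is not merely a matter of ``careful bounding'': your bit-by-bit scheme cannot meet the stated time budget. To sample the $j$-th binary digit of $G$ exactly you must realize a Bernoulli with parameter $\alpha^{2^j}/(1+\alpha^{2^j}) = k^{2^j}/(k^{2^j}+(k+1)^{2^j})$, whose denominator has bit length $\Theta(2^j\log(1/\eps))$; the ``high-part nonzero'' indicator has parameter $\alpha^{2^J}$ with $2^J \ge m$, so bit length $\Theta(m\log(1/\eps))$. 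Summed over $j \le J = \Theta(\log n)$ (or already at the top bit alone), the worst-case work is $\Theta(n\log(1/\eps))$ per bin, matching the paper's exact sampler {\tt GeoSample} (Theorem~\ref{thm:count geo}) but not the $\tilde O((1/\eps)\log^2 n)$ claimed here.

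The paper does \emph{not} sample $\cD_t$ exactly to get this running time. Its {\tt FastSample} (Theorem~\ref{thm:count fast}) first truncates the support of the two-sided geometric to a window $[c-t,c+t]$ with $t = \tilde O((1/\eps)\log(n/\gamma))$, moving the tiny tail mass to the center; the resulting CDF has common denominator of bit length $\tilde O((1/\eps)\log(n/\gamma))$, so inverse-CDF sampling via binary search is cheap. This truncation only yields a distribution within statistical distance $\delta$ of the exact one, hence only $(\eps,O(\delta))$-DP. Pure $(\eps,0)$-DP is then recovered by mixing with the uniform distribution on $[n]$ with probability $\gamma$ (Lemma~\ref{lem:mixture}): choosing $\delta \ll \gamma/(n+1)$ makes the uniform mass dominate the approximation error pointwise. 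This approximate-then-mix step is the missing idea in your proposal; without it, any exact sampler for $\cD_t$ over integers must contend with probabilities as small as $\alpha^n/(1+\alpha)$ and hence $\Omega(n)$-bit arithmetic.
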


We note that while we only consider our particular definition of per-query accuracy, namely that with high probability $|\tc_x-c_x|\leq O(1/\eps)$, Ghosh, Roughgarden and Sundararajan \cite{GhRoSu12} proved that the output of the Bounded Geometric Mechanism can be used (with post-processing) to get optimal expected loss with respect to an extremely general class of loss functions and arbitrary priors.
The same result applies to each individual noisy count
$\tc_x$ output by our mechanism, since each bin is distributed according to the Bounded Geometric Mechanism (up to a modification of $\eps$ to ensure rational probabilities).

The Bounded Geometric Mechanism is not polynomial time for large data universes $\cX$.
Indeed, its running time (and output length) is linear in $|\cX|$, rather than polynomial in the bit length of data elements, which is $\log |\cX|$.
To achieve truly polynomial time, we can similarly discretize and truncate a variant of the Stability-Based Histogram that was introduced by Korolova, Kenthapadi, Mishra and Ntoulas \cite{KoKeMiNt09}, and explicitly described by Bun, Nissim and Stemmer \cite{BuNiSt16}.
This mechanism only adds $\Lap(2/\eps)$ noise to the {\em nonzero} components of $c_x$ and then retains only the noisy values $\tc_x$ that are larger than a threshold $t=\Theta(\log(1/\delta)/\eps)$.
Thus, the algorithm only outputs a partial histogram, i.e. counts $\tc_x$ for a subset of the bins $x$, with the rest of the counts being treated as zero.
By replacing the use of the Laplace Mechanism with the (rational) Bounded Geometric Mechanism as above, we can implement this algorithm in strict polynomial time:

\begin{theorem}[Stability-Based Histogram,
informal statement of Thm. \vpref{thm:explicit stability}]
\label{thm:intro stability}
For every finite $\cX$, $n$, $\eps \in (0,1]$ and $\delta\in (0,1/n)$, there is an $(\eps,\delta)$-differentially private algorithm $\cM : \cX^n\rightarrow \{0,1,\ldots,n\}^{\subseteq \cX}$ for histograms achieving:

\begin{itemize}
\item
	Per-query error $O(1/\eps)$ on bins with true count at least $O(\log(1/\delta)/\eps)$.
\item
	Simultaneous error $O(\log(1/\delta)/\eps)$.
\item
	Strict running time $\tilde{O}((n/\varepsilon)\cdot \log (1/\delta)) + O(n \log n \cdot \log |\cX|)$.
\end{itemize}
\end{theorem}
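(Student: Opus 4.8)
The plan is to discretize and truncate the Stability-Based Histogram of \cite{KoKeMiNt09,BuNiSt16}, replacing its continuous Laplace noise by the rational Bounded Geometric Mechanism supplied by Theorem~\vpref{thm:intro basic}. The algorithm $\cM$ first computes the \emph{sparse} histogram of the input $D$, i.e.\ the list of pairs $(x,c_x)$ ranging over the at most $n$ bins $x$ with $c_x\geq 1$; then it perturbs each such count with independent noise drawn from the rational, $[0,n]$-clamped analogue of $\Geo(2/\eps)$ guaranteed by Theorem~\vpref{thm:intro basic} (after the slight rounding of $\eps$ needed to make $e^{\eps/2}$ rational); and finally it discards every bin whose noisy count is below a threshold $t=\Theta(\log(1/\delta)/\eps)$, outputting the surviving pairs and implicitly treating all remaining counts as $0$. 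I would fix the constant in $t$ so that the probability that a bin of true count $1$ survives the threshold, which is $O(\exp(-\eps(t-1)/2))$ by the geometric tail, is at most $\delta$. Since only nonzero bins are perturbed, the running time is dominated by building the sparse histogram (sorting or hashing the $n$ data elements, each of bit length $O(\log|\cX|)$, in time $O(n\log n\cdot\log|\cX|)$) together with $n$ calls to the sampler of Theorem~\vpref{thm:intro basic}; tracking the bit complexity of those samples recovers the claimed $\tilde{O}((n/\eps)\cdot\log(1/\delta)) + O(n\log n\cdot\log|\cX|)$ bound.

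For privacy I would run the classical ``stability'' argument in the discrete setting. Fix neighboring datasets $D,D'$; at most two bins change their counts, each by exactly $1$, and each such bin is either (i) nonzero in both $D$ and $D'$, or (ii) present in exactly one of them, with count $1$. A type-(i) bin contributes a multiplicative factor of at most $e^{\eps/2}$ to the likelihood ratio of the whole output, since the $[0,n]$-clamped $\Geo(2/\eps)$ mechanism is $(\eps/2)$-differentially private for a sensitivity-$1$ change and the thresholding step is post-processing; two type-(i) bins therefore give the factor $e^\eps$ with no additive slack. A type-(ii) bin $x$, say present only in $D$, is never reported on input $D'$, and is reported on input $D$ only on an event $B$ of probability at most $\delta$ by the choice of $t$. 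Since $B$ depends only on the noise attached to $x$, it is independent of all other bins, and conditioned on $\overline B$ the output of $\cM(D)$ mentions only bins that are also nonzero in $D'$, with counts agreeing with $D'$ everywhere except possibly on the one other changed bin. This gives $\Pr[\cM(D)\in S]\leq \delta + e^{\eps/2}\Pr[\cM(D')\in S]$, after bounding the contribution of $\overline B$ by the probability that the thresholded noisy counts on the bins of $D$ other than $x$ land in the corresponding restriction $S'$ of $S$ (an event that differs from the analogous event under $D'$ on at most one count); a symmetric estimate handles the reverse inequality, and combining with the type-(i) analysis yields $(\eps,\delta)$-differential privacy. I would also dispatch the sub-case in which \emph{both} changed bins are of type (ii), where the two conditional output distributions coincide and only the additive $\delta$ is paid.

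For accuracy I would use the exponential tails of $\Geo(2/\eps)$. Per-query: a bin $x$ with $c_x\geq C\log(1/\delta)/\eps$ for a suitable constant $C$ is nonzero and satisfies $c_x+Z\geq t$ except with small constant probability, so it survives the threshold and is reported with error exactly $|Z|=O(1/\eps)$ with high probability. Simultaneous: because $\delta<1/n$, a union bound over the at most $n$ nonzero bins gives $\max_x|Z|=O(\log(n)/\eps)=O(\log(1/\delta)/\eps)$ with high probability, so every reported bin has error of this size; an unreported bin has noisy count below $t$, hence true count below $t+|Z|=O(\log(1/\delta)/\eps)$, and contributes error equal to its true count; and a bin of true count $0$ is never perturbed and contributes no error.

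I expect the main obstacle to be the privacy analysis carried through the \emph{discrete, clamped, rational-probability} geometric mechanism rather than the idealized continuous Laplace mechanism. One must check that the rounding of $\eps$ from Theorem~\vpref{thm:intro basic} and the clamping to $[0,n]$ leave the $(\eps/2)$-privacy of each per-bin mechanism and the $\leq\delta$ tail bound for a type-(ii) bin intact up to constants; and one must combine the two-sided $(\eps,\delta)$ inequality with the independence-based conditioning on the ``reveal'' events for the one or two changed bins, being careful in the regime of very small $\eps$ (where one first checks that if $t\geq n$ the mechanism is trivially private, and otherwise the hypothesis $\delta<1/n$ forces $\delta\leq 1-e^{-\eps/2}$, which is exactly what is needed to close the reverse inequality). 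The accuracy and running-time bookkeeping, though necessary, should be routine given Theorem~\vpref{thm:intro basic} and standard geometric tail estimates.
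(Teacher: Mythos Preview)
Your proposal is correct and follows essentially the same approach as the paper: the algorithm, the accuracy analysis, and the running-time accounting all match the paper's \texttt{StabilityHistogram} instantiated with the discrete geometric sampler. The only notable presentational difference is in the privacy proof: the paper establishes a clean per-bin $(\eps/2,\delta/2)$-indistinguishability guarantee (your three cases collapsed into a single statement) and then invokes two-fold composition, whereas you argue more directly by conditioning on the ``reveal'' event for the type-(ii) bin; both routes work, but the paper's version avoids the hand-tracking of the conditional output distribution and makes the extra edge-case reasoning you flag (the $\delta\le 1-e^{-\eps/2}$ check) unnecessary.
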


Notice that the simultaneous error bound of $O(\log(1/\delta)/\eps)$ is better than what is achieved by the Laplace Mechanism when $\delta>1/|\cX|$, and is known to be optimal up to constant factors in this range of
parameters (see Theorem \vpref{thm:lower old}).
The fact that this error bound is independent of the data universe size $|\cX|$ makes it tempting to apply even for infinite data domains $\cX$.
However, we note that when $\cX$ is infinite, it is impossible for the algorithm to have a strict bound on running time (as it needs time to read arbitrarily long data elements) and thus is vulnerable to timing attacks and is not implementable on a finite computer.
Note also that the per-query error bound only holds on bins with large enough true count (namely, those larger than our threshold $t$);
we will discuss this point further below.

A disadvantage of the Stability-based Histogram is that it sacrifices pure differential privacy.
It is natural to ask whether we can achieve polynomial running time while retaining pure differential privacy.
A step in this direction was made by Cormode, Procopiuc, Srivastava and Tran \cite{CoPrSrTr12}.
They observe that for an appropriate threshold $t=\Theta(\log (|\cX|)/\eps)$, if we run the Bounded Geometric Mechanism and only retain the noisy counts $\tc_x$ that are larger than $t$, then the expected number of bins that remain is less than $n+1$.
Indeed, the expected number of bins we retain whose true count is zero (``empty bins'') is less than 1.
They describe a method to directly sample the distribution of the empty bins that are retained, without actually adding noise to all $|\cX|$ bins.
This yields an algorithm whose output length is polynomial in expectation.
However, the output length is not strictly polynomial, as there is a nonzero probability of outputting all $|\cX|$ bins.
And it is not clear how to implement the algorithm even in expected polynomial time, because even after making the probabilities rational, they have denominators of bit length linear in $|\cX|$.

To address these issues, we consider a slightly different algorithm.
Instead of trying to retain all noisy counts $\tc_x$ that are larger than some fixed threshold $t$, we retain the $n$ largest noisy counts (since there are at most $n$ nonzero true counts).
This results in a mechanism whose output length is always polynomial, rather than only in expectation.
However, the probabilities still have denominators of bit length linear in $|\cX|$.
Thus, we show how to approximately sample from this distribution, to within an arbitrarily small statistical distance $\delta$, at the price of a $\poly(\log(1/\delta))$ increase in running time.
Naively, this would result only in $(\eps,O(\delta))$-differential privacy.
However, when $\delta$ is significantly smaller than $1/|\cR|$, where $\cR$ is the range of the mechanism, we can convert an $(\eps,\delta)$-differentially private mechanism to an $(\eps,0)$-differentially private mechanism by simply outputting a uniformly random element of $\cR$ with small probability.
(A similar idea for the case that $|\cR|=2$ has been used in \cite{KaLeNiRaSm11,CaDaKa17}.)
Since our range is of at most exponential size (indeed at most polynomial in bit length), the cost in our runtime for taking $\delta\ll 1/|\cR|$ is at most polynomial.
With these ideas we obtain:
\begin{theorem}[Pure DP Histogram in Polynomial Time, informal statement of Thm. \vpref{thm:explicit pure}]\label{thm:intro puresparse}
For every finite $\cX$, $n$ and $\eps\in(0,1]$, there is an $(\eps,0)$-differentially private algorithm $\cM : \cX^n\rightarrow \{0,1,\ldots,n\}^{\subseteq \cX}$ for histograms achieving:

\begin{itemize}
\item
	Per-query error $O(1/\eps)$ on bins with true count at least $O(\log(|\cX|)/\eps)$.
\item
	Simultaneous error $O(\log(|\cX|)/\eps)$.
\item
	Strict running time $\tilde{O}\left(n^2 \cdot \log^2 |\cX| + n^2\cdot \log ({1}/{\eps}) + n \cdot \log |\cX|\cdot \log(1/\eps)\right)$.
\end{itemize}
\end{theorem}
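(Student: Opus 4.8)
The plan is to follow the three-step outline from the introduction. First I would define an \emph{idealized} mechanism $\cM_{\mathrm{ideal}}$ as a post-processing of the Bounded Geometric Mechanism of Theorem~\ref{thm:intro basic}: run that mechanism (with its parameter $\teps=\Theta(\eps)$ chosen so that $e^{\teps/2}$ is rational) to obtain a noisy count $\tc_x\in\{0,\dots,n\}$ for every $x\in\cX$, set a threshold $t=\lceil(2/\teps)\ln(|\cX|/\beta)\rceil=O(\log|\cX|/\eps)$, let $T=\{x:\tc_x>t\}$, and output $\{(x,\tc_x):x\in T'\}$ where $T'$ consists of the (at most) $n$ elements of $T$ with the largest noisy counts, ties broken by a fixed rule. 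Since $\cM_{\mathrm{ideal}}$ is a deterministic function of the output of an $(\eps,0)$-differentially private mechanism, it is $(\eps,0)$-differentially private by post-processing.

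For the accuracy of $\cM_{\mathrm{ideal}}$: with this $t$, the expected number of \emph{empty} bins (those with $c_x=0$) whose noisy count exceeds $t$ is at most $\beta$, so with probability at least $1-\beta$ no empty bin exceeds $t$; on that event $T$ is contained in the set of nonzero bins, which has size at most $\min(n,|\cX|)\le n$, so $T'=T$ and there is no ``crowding out'' of a legitimately large bin by spurious ones. Conditioning on this and taking a union bound over the (clamped) geometric tails of the $\le\min(n,|\cX|)$ nonzero bins, every retained count has $|\tc_x-c_x|\le O(\log(|\cX|/\beta)/\eps)$, every dropped bin has $c_x\le t+O(\log(|\cX|/\beta)/\eps)$ (and is reported as $0$), and empty bins have zero error; taking $\beta$ a small constant gives simultaneous error $O(\log|\cX|/\eps)$, and rerunning the argument for a single fixed bin gives per-query error $O(1/\eps)$ on any bin with $c_x\ge t+O(1/\eps)=O(\log|\cX|/\eps)$.

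The substantive work is sampling $\cM_{\mathrm{ideal}}$ in strict polynomial time and removing the resulting $\delta$. Here I would exploit the exchangeability of the empty bins: conditioned on the histogram $c$ with support $S$, their $N:=|\cX|-|S|$ noisy counts are i.i.d., so the number $K$ exceeding $t$ is $\mathrm{Binomial}(N,p)$ for $p=\Pr[\text{clamped geometric noise at }0>t]$, the $K$ exceeding values are i.i.d.\ draws from the clamped geometric conditioned on being $>t$, and their identities form a uniform $K$-subset of $\cX\setminus S$. Thus one can produce $\cM_{\mathrm{ideal}}(D)$ by sampling noisy counts for the $\le n$ bins of $S$ directly, sampling $K$, sampling $K$ conditioned values and $K$ (near-)uniform identities in $\cX\setminus S$, and then selecting the top $n$ above threshold---never iterating over all of $\cX$. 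The per-bin count distributions and the conditioned-value distribution are rationals with $\poly(\log n,\log(1/\eps))$-size denominators and are sampled exactly, but $\mathrm{Binomial}(N,p)$ and the uniform subset have denominators of bit length $\Theta(N)=\Theta(|\cX|)$, so instead I sample distributions within total variation distance $\delta$ of them using $\poly(\log N,\log(1/\delta),\log(1/p))$ random bits and arithmetic. This produces a mechanism $\hat\cM$ with $\|\hat\cM(D)-\cM_{\mathrm{ideal}}(D)\|_{\mathrm{TV}}\le O(\delta)$ for every $D$, hence $(\eps,O(\delta))$-differentially private by Definition~\ref{def:intro dp}. To restore pure privacy, note that the range $\cR$ of $\hat\cM$---partial histograms on $\le n$ bins with entries in $\{t+1,\dots,n\}$---has $\log|\cR|=\tilde O(n\log|\cX|)$; then the mechanism $\cM^*$ that outputs $\hat\cM(D)$ with probability $1-\gamma$ and a uniform element of $\cR$ otherwise is $(\eps,0)$-differentially private whenever $\gamma=\Theta(\delta|\cR|/(e^{\eps}-1))$, as one sees by verifying $\Pr[\cM^*(D)=r]\le e^{\eps}\Pr[\cM^*(D')=r]$ for each $r\in\cR$ and summing over $r$ in an arbitrary event (this is the trick of \cite{KaLeNiRaSm11,CaDaKa17} generalized beyond $|\cR|=2$). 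Choosing $\delta$ so that this $\gamma$ is, say, at most $1/(n|\cX|)$ forces $\log(1/\delta)=\Theta(n\log|\cX|+\log(1/\eps))$; the only accuracy cost of the mixing is an additional $\gamma$ probability of arbitrary output, which is absorbed into the ``with high probability'' guarantees.

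I expect the main obstacle to be the interplay between the last two steps. Because pure privacy forces $\delta\ll 1/|\cR|$ and $|\cR|$ is exponential in $n\log|\cX|$, every approximate sampler in the previous step must run at precision $\Theta(n\log|\cX|+\log(1/\eps))$, and one must check both that (i) the $O(n)$ arithmetic operations and samples performed at this precision---binomial sampling with an exponentially large trial count, conditioned-geometric sampling, near-uniform subset sampling, and the final top-$n$ selection and sort---together fit within $\tilde O(n^2\log^2|\cX|+n^2\log(1/\eps)+n\log|\cX|\log(1/\eps))$, and that (ii) the total-variation and privacy-loss accounting composes correctly across all of these approximations and the final mixing. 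By contrast, the accuracy analysis is essentially the known stability-based-histogram argument with the threshold set to $\Theta(\log|\cX|/\eps)$ in place of $\Theta(\log(1/\delta)/\eps)$, and the $(\eps,0)$-privacy of $\cM_{\mathrm{ideal}}$ is immediate from post-processing.
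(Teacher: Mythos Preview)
Your high-level strategy matches the paper's: define an idealized pure-DP sparse-histogram mechanism as a post-processing of the Bounded Geometric Mechanism, simulate it approximately without touching all of $\cX$, and then mix with random output (the \cite{KaLeNiRaSm11,CaDaKa17} trick, here generalized to large ranges) to convert $(\eps,O(\delta))$-DP back to $(\eps,0)$-DP. The accuracy arguments you sketch are also essentially the paper's.

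The genuine gap is in your simulation step, and it is exactly the failure mode the paper is designed to avoid. You sample $K\sim\mathrm{Binomial}(N,p)$, the number of empty bins whose noisy count exceeds $t$, and then sample $K$ conditioned values and $K$ identities. But $K$ is not bounded: although $\mathbb{E}[K]<1$ with your threshold, $K$ can be as large as $N=\Theta(|\cX|)$ with positive probability, so ``sample $K$ values'' is not strict polynomial time---only expected polynomial time. Your later assertion that the simulation uses ``$O(n)$ arithmetic operations and samples'' is therefore unjustified, and indeed this threshold-then-sample-survivors idea is precisely the Cormode--Procopiuc--Srivastava--Tran approach the introduction singles out as not yielding strict polynomial time. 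The paper sidesteps this by changing the idealized mechanism: rather than a fixed threshold, \texttt{KeepHeavy} retains the bins whose noisy counts strictly exceed the $(n{+}1)$-th largest noisy count, so the simulation (\texttt{KH}$'$) only ever needs the top $n{+}1$ order statistics of the empty-bin noisy counts---a fixed number of values---obtained by an iterated, capped-binomial procedure (\texttt{OrdSample}/\texttt{ApproxOrdSample}). Your decomposition can be repaired in the same spirit (sample $\min\{K,n{+}1\}$ and fold the event $\{K>n\}$, which has probability $\le (Np)^{n+1}/(n{+}1)!$, into $\delta$), but that cap and its justification are absent from the proposal. A smaller point in the same vein: the paper does not mix with the \emph{uniform} distribution on $\cR$---which would itself require an argument that it can be sampled exactly in strict polynomial time---but with an explicit, easy-to-sample distribution over $\cH_{n,n}(\cX)$ that merely has full support and a computable lower bound $(1/(3|\cX|))^n$ on its minimum probability, which is all Lemma~\ref{lem:mixture} needs.
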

It is an open problem as to whether or not one can improve the nearly quadratic dependence in running time on $n$ to nearly linear while maintaining the sparsity, privacy and accuracy guarantees achieved in Theorem \vpref{thm:intro puresparse}.

Both Theorems \vpref{thm:intro stability} and \vpref{thm:intro puresparse} only retain per-query error $O(1/\eps)$ on bins with a large enough true count.
We
also prove a lower bound showing that this limitation is inherent in any algorithm that outputs a sparse histogram (as both of these algorithms do).
\begin{theorem}[Lower Bound on Per-Query Error for Sparse
Histograms, Theorem \vpref{thm:lower pq}]
\label{thm:intro lower}
Suppose that there is an $(\eps,\delta)$-differentially private algorithm $\cM : \cX^n\rightarrow \{0,1,\ldots,n\}^{\cX}$ for histograms that always outputs histograms with at most $n'$ nonempty bins and has per-query error at most $E$ on all bins.
Then
\begin{align*}
E \geq \Omega\left(\frac{\min\{\log |\cX|,\log(1/\delta)\}}{\eps}\right)
\end{align*}
provided that $\eps>0$, $\eps^2>\delta>0$ and $|\cX|\geq (n')^2$.
\end{theorem}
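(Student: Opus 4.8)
The plan is a packing argument. I would produce a family of $|\cX|$ datasets $\{D_x\}_{x\in\cX}$, one ``pointing at'' each bin, argue that $(\eps,\delta)$-privacy forces $\cM$ to put non-negligible probability on the event ``bin $x$ is nonempty'' for \emph{every} $x$ when run on a single fixed reference dataset, and then use the sparsity constraint (at most $n'$ nonempty bins) to cap the \emph{total} probability it can spend in this way; balancing the two bounds yields the claim. Concretely: we may assume $E<n$ (otherwise the hypothesis is essentially vacuous), and set $q:=\lfloor E\rfloor+1\le n$, so $q>E$. Fix an element $\star\in\cX$ and let $D_x\in\cX^n$ have $q$ rows equal to $x$ and the remaining $n-q$ rows equal to $\star$ (so $D_\star$ is the all-$\star$ dataset); then $c_x(D_x)=q$ and $D_x$ differs from $D_\star$ in at most $q$ rows. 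Define $S_x=\{h\in\{0,1,\dots,n\}^\cX : |h_x-q|\le E\}$; since $q>E$, every $h\in S_x$ has $h_x\ge q-E>0$, hence $h_x\ge 1$, so membership in $S_x$ witnesses that bin $x$ is nonempty in $h$.

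The two ingredients I would then combine are: (i) per-query accuracy applied to $D_x$, where $c_x(D_x)=q$, gives $\Pr[\cM(D_x)\in S_x]\ge 2/3$ (the exact success constant is immaterial); and (ii) since every output has at most $n'$ nonempty bins, $\sum_{x\in\cX}\mathbf 1[h\in S_x]\le n'$ pointwise, hence $\sum_{x\in\cX}\Pr[\cM(D)\in S_x]\le n'$ for every dataset $D$. I would push (i) through group privacy: iterating the $(\eps,\delta)$-DP guarantee along a length-$q$ path from $D_x$ to $D_\star$ gives $\tfrac23\le e^{q\eps}\Pr[\cM(D_\star)\in S_x]+\tfrac{e^{q\eps}-1}{e^\eps-1}\delta$, so $\Pr[\cM(D_\star)\in S_x]\ge \tfrac23 e^{-q\eps}-\tfrac{\delta}{e^\eps-1}$ for every $x$ (using $\tfrac{e^{q\eps}-1}{e^\eps-1}e^{-q\eps}\le \tfrac1{e^\eps-1}$). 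Summing over $x\in\cX\setminus\{\star\}$ and applying (ii) yields the key inequality $n'\ge(|\cX|-1)\bigl(\tfrac23 e^{-q\eps}-\tfrac{\delta}{e^\eps-1}\bigr)$.

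It then remains to solve this for $q$, since $E\ge q-1$. Using $|\cX|\ge(n')^2$ (so $\tfrac{n'}{|\cX|-1}\le\tfrac{2}{\sqrt{|\cX|}}$) and $e^\eps-1\ge\eps$, the inequality rearranges to $\tfrac23 e^{-q\eps}\le \tfrac{2}{\sqrt{|\cX|}}+\tfrac{\delta}{\eps}$; since the right-hand side is at most twice its larger term, $q\eps\ge \min\{\tfrac12\ln|\cX|,\ \ln(\eps/\delta)\}-O(1)$. Here the hypothesis $\delta<\eps^2$ is exactly what makes the $\delta$-term count: $\ln(\eps/\delta)=\ln(1/\delta)-\ln(1/\eps)\ge\ln(1/\delta)-\tfrac12\ln(1/\delta)=\tfrac12\ln(1/\delta)$. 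Hence $q\eps\ge\tfrac12\min\{\ln|\cX|,\ln(1/\delta)\}-O(1)$, and since $\eps\le1$ the additive $O(1/\eps)$ is absorbed, giving $E\ge q-1=\Omega\!\bigl(\min\{\log|\cX|,\log(1/\delta)\}/\eps\bigr)$ once $\min\{\log|\cX|,\log(1/\delta)\}$ is above an absolute constant; in the leftover degenerate range one has $\eps=\Omega(1)$ and the bound degenerates to the standard $\Omega(1/\eps)$ lower bound for a single count.

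The step I expect to be the main obstacle is handling the additive $\delta$ term. Group privacy over $\approx E$ rows incurs a multiplicative factor $e^{q\eps}$ and an additive term $\tfrac{e^{q\eps}-1}{e^\eps-1}\delta$ that both look prohibitive; the point is that they are not. The $e^{q\eps}$ blowup is affordable because it is weighed against the $|\cX|/n'\ge\sqrt{|\cX|}$ bins in the packing, so it only costs $q\eps\gtrsim\log|\cX|$; and the additive term, after the $e^{-q\eps}$ rescaling, collapses to $\approx\delta/\eps$ \emph{independent of} $q$, at which point $\delta<\eps^2$ (rather than merely $\delta<1$) is precisely what converts $\log(\eps/\delta)$ into $\Omega(\log(1/\delta))$. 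Beyond this, the remaining work is bookkeeping: pinning down the constants and checking the edge regimes ($E\ge n$, very small $|\cX|$).
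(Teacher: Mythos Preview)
Your proposal is correct and follows essentially the same packing argument as the paper: build datasets $D_x$ that put $\approx E$ copies of $x$ on top of a fixed base dataset, use per-query accuracy plus group privacy to lower-bound $\Pr[\cM(D_\star)_x>0]$ uniformly in $x$, and then cap the sum of these probabilities by $n'$ via the sparsity constraint. The only cosmetic differences are that the paper takes $m=\lceil 2a\rceil$ rows (versus your $q=\lfloor E\rfloor+1$) and explicitly invokes its Theorem~\ref{thm:lower old} for the degenerate regime, whereas you cite the ``standard $\Omega(1/\eps)$'' bound.
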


This lower bound is similar in spirit to a lower bound of \cite{BeBrKaNi14}, which shows that no $(\eps,0)$-differentially private PAC learner for ``point functions'' (functions that are 1 on exactly one element of the domain) can produce sparse functions as hypotheses.

To bypass this lower bound, we can consider algorithms that produce succinct descriptions of dense histograms.
That is, the algorithm can output a polynomial-length description of a function $\tc : \cX \rightarrow [0,n]$ that can be evaluated in polynomial time, even though $\cX$ may be of exponential size.
We show that this relaxation allows us to regain per-query error $O(1/\eps)$.
\begin{theorem}[Polynomial-Time DP Histograms with Optimal Per-Query Accuracy, informal statement of Thm. \vpref{thm:explicit compact}]\label{thm:intro compact}
For every finite $\cX$, $n$ and $\eps\in(0,1]$, there is an $(\eps,0)$-differentially private algorithm $\cM : \cX^n\rightarrow \cH$ for histograms (where $\cH$ is an appropriate class of succinct descriptions of histograms) achieving:

\begin{itemize}
\item
	Per-query error $O(1/\eps)$.
\item
	Simultaneous error $O(\log (|\cX|)/\eps)$.
\item
	Strict running time $\tilde{O}\left((n/\eps)\cdot\log|\cX|\right)$.
\item
	Evaluating a count takes time $\tilde{O}\left((n/\eps)\cdot\log|\cX|\right)$.\footnote{In the original version of our paper \cite{BaVa18}, both the running time and evaluation time were missing logarithmic factors.}
\end{itemize}
\end{theorem}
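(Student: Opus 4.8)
The plan is to run an $(n+1)$-wise independent, clamped version of the Discrete Geometric Mechanism which, rather than releasing the $|\cX|$ noisy counts, releases a short \emph{seed} from which any one noisy count can be recomputed on demand. First I would replace $\eps$ by a nearby $\teps=\Theta(\eps)$ with $e^{\teps/2}$ rational, as in Theorem~\ref{thm:intro basic}; this only helps privacy and costs only constants in accuracy. Fix a finite field $\F$ with $\cX\hookrightarrow\F$ and $|\F|\ge|\cX|+n+1$, and a map $\psi\colon\F\to\{0,1,\dots,n\}$ under which the uniform distribution on $\F$ pushes forward to a distribution $\nu$ that closely approximates the distribution $\cD_0$ of $\Geo(2/\teps)$ clamped to $[0,n]$. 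On input $D$ with histogram $c$ and support $S=\supp(c)$ (so $|S|\le n$), the mechanism: (i) for each $x\in S$, independently samples $\tc_x$ from the clamp to $[0,n]$ of $\Geo(2/\teps)$ centered at $c_x$ — via the integer-arithmetic sampler inside Theorem~\ref{thm:intro basic} — and then a uniformly random preimage $y_x\in\psi^{-1}(\tc_x)$; (ii) outputs a uniformly random degree-$\le n$ polynomial $P$ over $\F$ with $P(x)=y_x$ for every $x\in S$, sampled by choosing $n+1-|S|$ ``dummy'' points in $\F\setminus\cX$, assigning them independent uniform values, and interpolating. The released (implicitly represented) histogram is $x\mapsto\psi(P(x))$, so the description class $\cH$ is just the coefficient vectors of degree-$\le n$ polynomials over $\F$.

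For accuracy, I would note that the released count at any bin $x$ is marginally the clamp to $[0,n]$ of $\Geo(2/\teps)$ centered at $c_x$ (up to the small imperfection in $\psi$ on empty bins): for $x\in S$ this is immediate, and for $x\notin S$ it holds because $P$, being a uniform degree-$\le n$ polynomial conditioned on its values at the $\le n$ points of $S$, has $P(x)$ conditionally uniform on $\F$ — those $\le n$ values together with $P(x)$ form a jointly uniform tuple of length $\le n+1$ — so $\psi(P(x))\sim\nu\approx\cD_0$. Since the clamp of $\Geo(2/\teps)$ centered at $v\in[0,n]$ concentrates within $O(1/\eps)$ of $v$ (clamping toward an interval containing $v$ only reduces the error), this yields per-query error $O(1/\eps)$ on \emph{every} bin. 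Simultaneous error $O(\log|\cX|/\eps)$ then follows from a union bound using only these marginals — not the $(n+1)$-wise independence — since the $\le n$ support bins each concentrate within $O(\log|\cX|/\eps)$ and each of the $\le|\cX|$ empty bins exceeds a threshold $t$ with probability at most the $e^{-\Omega(\eps t)}$ tail of $\nu$.

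The heart of the proof will be privacy. For a fixed output polynomial $P$, the probability that $\cM(D)$ outputs $P$ factors as
\begin{align*}
\Pr[\cM(D)=P]=\frac{1}{|\F|^{\,n+1}}\prod_{x\in S}\frac{\mu_{c_x}\!\big(\psi(P(x))\big)}{\nu\!\big(\psi(P(x))\big)},
\end{align*}
where $\mu_v$ is the pmf of the clamp of $\Geo(2/\teps)$ centered at $v$: planting the $\le n$ values $y_x$ in step (i) contributes $\prod_{x\in S}\mu_{c_x}(\psi(P(x)))\cdot|\psi^{-1}(\psi(P(x)))|^{-1}$, one uses $|\psi^{-1}(q)|=|\F|\cdot\nu(q)$, and the remaining $n+1-|S|$ coefficients are uniform. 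The crucial structural fact is that a degree-$\le n$ polynomial can be forced through any $n$ prescribed values, so \emph{every} output $P$ has strictly positive probability under \emph{every} input $D$, ruling out hard privacy violations. If $D,D'$ are neighbors, their histograms differ in at most two bins $a,b$, so in the ratio $\Pr[\cM(D)=P]/\Pr[\cM(D')=P]$ every factor cancels except those indexed by $a$ and $b$; each surviving factor is either $\mu_u(q)/\mu_{u'}(q)$ with $|u-u'|\le 1$, or $\mu_1(q)/\nu(q)$ or its reciprocal when a bin lies in exactly one of the two supports, and each of these is at most $e^{\teps/2}$ (using $\nu\approx\cD_0$ and the standard Geometric Mechanism ratio bound, which is exactly the per-bin privacy claim behind Theorem~\ref{thm:intro basic}). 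Hence the overall ratio is at most $e^{\teps}\le e^{\eps}$, giving pure $(\eps,0)$-differential privacy.

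The main obstacle I anticipate is that getting privacy \emph{exactly} pure, rather than merely $(\eps,\delta)$, requires some care: no prime power $|\F|$ is divisible by the common denominator of $\cD_0$, which is exponentially large in $n$, so $\nu$ cannot literally equal $\cD_0$. I would resolve this by letting $\nu$ be a $(1\pm O(\eps))$-\emph{multiplicative} approximation of $\cD_0$ — achievable with $\log|\F|=O(\log|\cX|+\eps n+\log(1/\eps))$, since the smallest atom of $\cD_0$ is $e^{-\Theta(\eps n)}$ and every atom needs a nonempty preimage for accuracy — so that each of the (at most two) surviving factors in the privacy ratio picks up only a $1\pm O(\eps)$ slack, which is absorbed by taking $\teps=\Theta(\eps)$ slightly smaller. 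Because this slack is multiplicative and not additive, purity is genuinely preserved, and accuracy changes only in the hidden constants. Finally, the running-time and per-count evaluation bounds follow from at most $n$ clamped-geometric draws via the Theorem~\ref{thm:intro basic} sampler, plus (fast) interpolation and evaluation of a degree-$\le n$ polynomial over $\F$, at $\tilde O(\log|\F|)$ bit operations per field operation.
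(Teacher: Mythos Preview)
Your high-level construction and the privacy/accuracy analysis are essentially the paper's: an $(n+1)$-wise independent family of degree-$\le n$ polynomials over a finite field, with a map $\psi$ (the paper's ``empty-bin sampler'' $\cM_0$) converting field elements to noisy counts; your factorization $\Pr[\cM(D)=P]=|\F|^{-(n+1)}\prod_{x\in S}\mu_{c_x}(\psi(P(x)))/\nu(\psi(P(x)))$ and the two-bin cancellation in the privacy ratio exactly mirror the paper's proof of Theorem~\ref{thm:histogram compact}, including absorbing the $\nu\approx\cD_0$ slack multiplicatively via $\varepsilon_2,\varepsilon_3$.

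The gap is in the running time. You take $\cD_0$ to be the clamped two-sided geometric law itself, correctly note its smallest atom is $e^{-\Theta(\eps n)}$, and conclude $\log|\F|=O(\log|\cX|+\eps n+\log(1/\eps))$. But then both the $n$ clamped-geometric draws (each costing $\tilde O(n\log(1/\eps))$ via the {\tt GeoSample}-style sampler you invoke) and the degree-$n$ polynomial interpolation/evaluation over $\F$ (costing $\tilde O(n\log|\F|)$) incur an $\tilde O(\eps n^2)$ or $\tilde O(n^2\log(1/\eps))$ term. This is exactly the {\tt GeoSample} row of the paper's comparison table, not the claimed $\tilde O((n/\eps)\log|\cX|)$. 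The paper avoids this by \emph{not} using $\cD_0$ as the target law: it uses {\tt FastSample}, which truncates the geometric tails at distance $O((1/\eps)\log(n/\gamma))$ (moving that mass to the center) and then mixes with the uniform law on $[n]$ with probability $\gamma$. The resulting distribution has minimum atom $\ge \gamma/(n+1)$, so with $\gamma=\Theta(\beta_0/|\cX|)$ one gets $\log d$ and hence $\log|\F|$ of order $\tilde O((1/\eps)\log(n|\cX|/\beta_0))$ rather than $\Theta(\eps n)$. Crucially, {\tt FastSample} remains $(\eps/2,0)$-DP for counting queries (the uniform mixture restores purity after truncation, via the paper's Lemma~\ref{lem:mixture}), so your privacy argument still goes through with $\mu_v$ replaced by the {\tt FastSample} law; this is the missing ingredient needed to hit the stated time bounds.
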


The algorithm is essentially an $(n+1)$-wise independent instantiation of the Bounded Geometric Mechanism.
Specifically, we release a function $h : \cX\rightarrow \zo^r$ selected from an $(n+1)$-wise independent family of hash functions, and for each $x\in \cX$, we view $h(x)$ as coin tosses specifying a sample from the Bounded Geometric Distribution.
That is, we let $S : \zo^r\rightarrow [0,n]$ be an efficient sampling algorithm for the Bounded Geometric Distribution, and then $\tc_x=S(h(x))$ is our noisy count for $x$.
The hash function is chosen randomly from the family conditioned on  values $\tc_x$ for the nonempty bins $x$, which we obtain by running the actual Bounded Geometric Mechanism on those bins.
The $(n+1)$-wise independence ensures that the behavior on any two neighboring datasets (which together involve at most $n+1$ distinct elements of $\cX$) are indistinguishable in the same way as in the ordinary Bounded Geometric Mechanism.
The per-query accuracy comes from the fact that the marginal distributions of each of the noisy counts are the same as in the Bounded Geometric Mechanism.\footnote{Actually, we incur a small approximation error in matching the domain of the sampling procedure to the range of a family of hash functions.}

As far as we know, the only other use of limited independence in constructing differentially private algorithms is a use of pairwise independence by \cite{BeBrKaNi14} in differentially private PAC learning algorithms for the class of point functions.
Although that problem is related to the one we consider (releasing a histogram amounts to doing ``query release'' for the class of point functions, as discussed below), the design and analysis of our algorithm appears quite different.
(In particular, our analysis seems to rely on $(n+1)$-wise independence in an essential way.)

Another potential interest in our technique is as another method for bypassing limitations of {\em synthetic data} for {\em query release}.
Here, we have a large family of predicates $\cQ = \{q : \cX\rightarrow \zo\}$, and are interested in differentially private algorithms that, given a dataset $D=(x_1,\ldots,x_n)\in \cX^n$, output a ``summary'' $\cM(D)$ that allows one to approximate the answers to all of the {\em counting queries} $q(D) = \sum_i q(x_i)$ associated with predicates $q\in \cQ$.
For example, if $\cQ$ is the family of {\em point functions} consisting of all predicates that evaluate to 1 on exactly one point in the data universe $\cX$, then this query release problem amounts to approximating the histogram of $D$.
The fundamental result of Blum, Ligett, and Roth \cite{BlLiRo13} and successors show that this is possible even for families $\cQ$ and data universes $\cX$ that are of size exponential in $n$.
Moreover, the summaries produced by these algorithms has the form of a synthetic dataset --- a dataset $\hat{D}\in \cX^{\hat{n}}$ such that for every query $q\in \cQ$, we have $q(\hat{D})\approx q(D)$.
Unfortunately, it was shown in \cite{UlVa11} that even for very simple families $\cQ$ of queries, such correlations between pairs of binary attributes, constructing such a differentially private synthetic dataset requires time exponential in the bit length $\log |\cX|$ of data universe elements.
Thus, it is important to find other ways of representing approximate answers to natural families $\cQ$ of counting queries, which can bypass the inherent limitations of synthetic data, and progress along these lines was made in a variety of works \cite{GuRoUl12,ChKlKoLe12,HaRoSe12,
ThUlVa12,ChThUlWa14,DwNiTa15}.
Our algorithm, and its use of $(n+1)$-wise independence, can be seen as yet another representation that bypasses a limitation of synthetic data (albeit a statistical rather than computational one).
Indeed, a sparse histogram is simply a synthetic dataset that approximates answers to all point functions, and by Theorem \vpref{thm:intro lower}, our algorithm achieves provably better per-query accuracy than is possible with synthetic datasets.
This raises the question of whether similar ideas can also be useful in bypassing the computational limitations of synthetic data for more complex families of counting queries.

\section{Preliminaries}\label{sec:pre}

Throughout this paper, let $\N$ be the set $\{0, 1, \ldots\}$, $\N_+$ be the set $\{1, 2, \ldots\}$ and $\N^{-1}$ be the set $\{1/n : n \in \N_+\}$.
For $n \in \N_+$, let $[n]$ denote the set $\{0, \ldots, n\}$ and $[n]_+$ denote the set $\{1, \ldots, n\}$.
(Notice that $|[n]| = n+1$ while $|[n]_+| = n$.)
Given a set $A$ and finite set $B$, we define $A^B$ to be the set of length $|B|$ vectors over $A$ indexed by the elements of $B$.

\subsection{Differential Privacy}

We define a {\bf dataset} $D \in \cX^n$ to be an ordered tuple of $n \ge 1$ rows where each row is drawn from a discrete {\bf data universe} $\cX$ with each row corresponding to an individual.
Two datasets $D, D' \in \cX^n$ are considered {\bf neighbors} if they differ in exactly one row.

\begin{definition}[\cite{DwMcNiSm06}]
Let $\cM: \cX^n \rightarrow \cR$ be a randomized algorithm.
We say $\cM$ is {\bf$(\varepsilon,\delta)$-differentially private} if for every pair of neighboring datasets $D$ and $D'$ and every subset $S \subseteq \cR$
\begin{align*}
\Pr[\cM(D) \in S] \le e^\varepsilon\cdot \Pr[\cM(D') \in S] + \delta
\end{align*}
\end{definition}

We say an $(\varepsilon, \delta)$-differentially private algorithm satisfies {\bf pure differential privacy} when $\delta = 0$ and say it satisfies {\bf approximate differential privacy} when $\delta > 0$.
Intuitively, the $\varepsilon$ captures an upper bound on an adversary's ability to determine whether a particular individual is in the dataset.
And the $\delta$ parameter represents an upper bound of the probability of a catastrophic privacy breach (e.g. the entire dataset is released).
The common setting of parameters takes $\varepsilon \in (0,1]$ to be a small constant and $\delta$ to be negligible in $n$.

The following properties of differentially private algorithms will be used in some of our proofs.

\begin{lemma}[post-processing \cite{DwMcNiSm06}]\label{lem:dp postprocess}
Let $\cM: \cX^n \rightarrow \cY$ be $(\varepsilon,\delta)$-differentially private and $T: \cY \rightarrow \cZ$ be any randomized function.
Then $T \circ \cM: \cX^n \rightarrow \cZ$ is $(\varepsilon, \delta)$-differentially private.
\end{lemma}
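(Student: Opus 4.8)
The plan is to reduce to the case of a deterministic post-processing function and then average over the internal randomness of $T$. First I would handle deterministic $T$: fix a pair of neighboring datasets $D, D'$ and an arbitrary subset $S \subseteq \cZ$. Since $T$ is deterministic, the event $\{T(\cM(D)) \in S\}$ coincides exactly with the event $\{\cM(D) \in T^{-1}(S)\}$, where $T^{-1}(S) = \{y \in \cY : T(y) \in S\} \subseteq \cY$. Applying the $(\varepsilon,\delta)$-differential privacy of $\cM$ to the subset $T^{-1}(S)$ of its output space then gives, in one line,
\[
\Pr[T(\cM(D)) \in S] = \Pr[\cM(D) \in T^{-1}(S)] \le e^\varepsilon \Pr[\cM(D') \in T^{-1}(S)] + \delta = e^\varepsilon \Pr[T(\cM(D')) \in S] + \delta .
\]

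For a general randomized $T$, I would express $T$ as a convex combination of deterministic maps: let $R$ denote the random coins used by $T$, drawn independently of the coins of $\cM$, and for each fixing $r$ of those coins let $T_r : \cY \to \cZ$ be the resulting deterministic function. By the independence of the two sources of randomness and the law of total probability,
\[
\Pr[T(\cM(D)) \in S] = \mathbb{E}_{r \sim R}\bigl[\Pr[T_r(\cM(D)) \in S]\bigr] .
\]
Applying the deterministic case inside the expectation and using monotonicity together with linearity of expectation yields
\[
\mathbb{E}_{r \sim R}\bigl[\Pr[T_r(\cM(D)) \in S]\bigr] \le \mathbb{E}_{r \sim R}\bigl[e^\varepsilon \Pr[T_r(\cM(D')) \in S] + \delta\bigr] = e^\varepsilon \Pr[T(\cM(D')) \in S] + \delta ,
\]
which is exactly the $(\varepsilon,\delta)$-differential privacy condition for $T \circ \cM$.

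I do not expect a genuine obstacle here, since this is the standard post-processing argument; the only point requiring a little care is the bookkeeping in the randomized case, namely that $T$'s coins are independent of $\cM$'s coins (so that conditioning on the intermediate value $y = \cM(D)$ is legitimate and the \emph{same} coin distribution $R$ is used for both $D$ and $D'$), and that the interchange of expectation with the probability inequality is valid. In the setting of this paper all relevant spaces are discrete — indeed finite once an upper bound on the input length is fixed — so there are no measure-theoretic subtleties, and both manipulations are elementary. The ``hard part'' is therefore simply remembering to route the randomized case through the deterministic one rather than attempting to reason about $T^{-1}(S)$ directly when $T$ is itself randomized.
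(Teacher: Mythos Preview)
Your proof is correct and is the standard argument for closure of differential privacy under post-processing. The paper itself does not prove this lemma: it states it with a citation to \cite{DwMcNiSm06} and no accompanying proof, treating it as a known background fact. So there is nothing to compare against beyond noting that your argument is exactly the textbook one.
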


\begin{lemma}[group privacy \cite{DwMcNiSm06}]\label{lem:dp group}
Let $\cM: \cX^n \rightarrow \cY$ be $(\varepsilon,\delta)$-differentially private.
Let $D_1, D_2 \subseteq \cX^n$ be datasets such that $D_2$ can be obtained by changing at most $m$ rows of $D_1$.
Then for all $S \subseteq \cY$
\begin{align*}
\Pr[\cM(D_1) \in S]
&\le e^{m\varepsilon} \cdot \Pr[\cM(D_2) \in S] + e^{m\varepsilon}\cdot\delta/\varepsilon
\end{align*}
\end{lemma}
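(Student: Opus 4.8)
The plan is to prove this by the standard hybrid (chain) argument together with a telescoping composition of the differential privacy inequality, taking a little extra care so that the accumulated additive error collapses to the stated $e^{m\varepsilon}\cdot\delta/\varepsilon$ rather than the more commonly quoted $m\cdot e^{(m-1)\varepsilon}\cdot\delta$. First I would construct a chain of datasets $D_1 = E_0, E_1, \ldots, E_m = D_2$ in which $E_i$ is obtained from $E_{i-1}$ by rewriting (at most) one of the rows on which $D_1$ and $D_2$ disagree. Since $D_1$ and $D_2$ disagree on at most $m$ rows, such a chain exists (if strictly fewer than $m$ rows differ, pad the chain with repeated datasets; consecutive identical datasets are handled trivially below). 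Each consecutive pair $E_{i-1}, E_i$ is either a pair of neighbors or identical, so in both cases the $(\varepsilon,\delta)$-differential privacy hypothesis applied to $\cM$ gives, for every $S \subseteq \cY$, the inequality $\Pr[\cM(E_{i-1}) \in S] \le e^{\varepsilon}\cdot\Pr[\cM(E_i) \in S] + \delta$ (for an identical pair this holds with $e^{\varepsilon}$ replaced by $1$, which only makes the bound stronger).

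Next I would compose these $m$ inequalities by induction on $k$ to obtain $\Pr[\cM(E_0) \in S] \le e^{k\varepsilon}\cdot\Pr[\cM(E_k) \in S] + \delta\cdot\sum_{i=0}^{k-1} e^{i\varepsilon}$; the inductive step just substitutes the bound for $\Pr[\cM(E_k) \in S]$ in terms of $\Pr[\cM(E_{k+1}) \in S]$ and rearranges. Setting $k = m$ and evaluating the geometric sum yields $\Pr[\cM(D_1) \in S] \le e^{m\varepsilon}\cdot\Pr[\cM(D_2)\in S] + \delta\cdot\frac{e^{m\varepsilon}-1}{e^{\varepsilon}-1}$. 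Finally, to match the statement of the lemma, I would bound the error term using the elementary inequality $e^{\varepsilon}-1 \ge \varepsilon$ for all $\varepsilon \ge 0$ (by convexity of $\exp$, or its Taylor expansion), so that $\frac{e^{m\varepsilon}-1}{e^{\varepsilon}-1} \le \frac{e^{m\varepsilon}}{e^{\varepsilon}-1} \le \frac{e^{m\varepsilon}}{\varepsilon}$, which completes the argument.

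The proof is essentially routine, and I do not anticipate a genuine obstacle. The only points that require a moment of care are: (i) recognizing that the accumulated additive error is a geometric series and applying $e^{\varepsilon}-1 \ge \varepsilon$ to it, rather than settling for the coarser $m\cdot e^{(m-1)\varepsilon}\cdot\delta$ estimate, in order to land exactly on the $e^{m\varepsilon}\cdot\delta/\varepsilon$ form; and (ii) the fact that the paper's neighbor relation requires datasets to differ in \emph{exactly} one row, so the hybrid chain must be described in a way that accommodates possible ``do-nothing'' steps when $D_1$ and $D_2$ differ in fewer than $m$ rows, which as noted above is immediate since $\Pr[\cM(E)\in S] \le \Pr[\cM(E)\in S] + \delta$.
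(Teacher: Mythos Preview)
The paper does not actually prove this lemma; it is stated with a citation to \cite{DwMcNiSm06} and used as a black box. Your proposed argument is the standard hybrid/chain proof and is correct: the geometric sum $\delta\sum_{i=0}^{m-1} e^{i\varepsilon} = \delta\cdot\frac{e^{m\varepsilon}-1}{e^{\varepsilon}-1}$ followed by $e^{\varepsilon}-1\ge\varepsilon$ yields exactly the stated additive term $e^{m\varepsilon}\cdot\delta/\varepsilon$, and your handling of the case where fewer than $m$ rows differ is fine.
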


\begin{lemma}[composition \cite{DwLe09}]\label{lem:dp composition}
Let $\cM_1: \cX^n \rightarrow \cY_1$ be $(\varepsilon_1,\delta_1)$-differentially private and $\cM_2: \cX^n \rightarrow \cY_2$ be $(\varepsilon_2,\delta_2)$-differentially private.
Define $\cM: \cX^n \rightarrow \cY_1 \times \cY_2$ as $\cM(x) = (\cM_1(x), \cM_2(x))$. Then $\cM$ is $(\varepsilon_1 + \varepsilon_2, \delta_1 + \delta_2)$-differentially private.
\end{lemma}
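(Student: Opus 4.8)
The plan is to reduce the $(\eps,\delta)$ statement to the pure case through a standard total-variation reformulation of approximate differential privacy. Throughout I use the (implicit) assumption that $\cM_1$ and $\cM_2$ are run with \emph{independent} randomness, so that on every input $x$ the output distribution of $\cM(x)$ is the product $\mu^1_x\otimes\mu^2_x$ of the output distributions $\mu^1_x$ of $\cM_1(x)$ and $\mu^2_x$ of $\cM_2(x)$; since $\cM_1,\cM_2$ are algorithms these distributions have countable support, so I may argue pointwise. As a warm-up, when $\delta_1=\delta_2=0$ the conclusion is immediate: for neighbors $D,D'$ and any outcome $(y_1,y_2)$, independence gives $\mu_D(y_1,y_2)=\mu^1_D(y_1)\mu^2_D(y_2)\le e^{\eps_1}\mu^1_{D'}(y_1)\cdot e^{\eps_2}\mu^2_{D'}(y_2)=e^{\eps_1+\eps_2}\mu_{D'}(y_1,y_2)$, and summing over $(y_1,y_2)\in S$ yields pure $(\eps_1+\eps_2)$-differential privacy (this is all the paper invokes in most places).

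For general $\delta_i$, I would first establish a Key Lemma: an algorithm with output distribution $\mu_D$ on input $D$ is $(\eps,\delta)$-differentially private if and only if for every ordered pair of neighbors $D,D'$ there is a distribution $P$ (on the same range) with $\|\mu_D-P\|_{\mathrm{TV}}\le\delta$ and $P(y)\le e^{\eps}\mu_{D'}(y)$ for every $y$. The ``if'' direction is one line: $\mu_D(S)\le P(S)+\delta\le e^{\eps}\mu_{D'}(S)+\delta$. For ``only if'', set $P(y)=\min\{\mu_D(y),\,e^{\eps}\mu_{D'}(y)\}$; the mass it is short of being a distribution is $\delta'=\sum_y[\mu_D(y)-e^{\eps}\mu_{D'}(y)]_+$, which is $\le\delta$ by hypothesis, while the available ``slack'' $\sum_y[e^{\eps}\mu_{D'}(y)-\mu_D(y)]_+$ equals $\delta'+(e^{\eps}-1)\ge\delta'$; redistributing the missing $\delta'$ mass onto slack outcomes keeps $P(y)\le e^{\eps}\mu_{D'}(y)$, makes $P$ a distribution, and moves total mass $2\delta'$, i.e. leaves $P$ within $\delta'\le\delta$ of $\mu_D$ in total variation.

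Applying the Key Lemma to $\cM_1$ at neighbors $D,D'$ yields a distribution $P_1$ with $\|\mu^1_D-P_1\|_{\mathrm{TV}}\le\delta_1$ and $P_1(y_1)\le e^{\eps_1}\mu^1_{D'}(y_1)$, and likewise $P_2,\delta_2,\eps_2$ for $\cM_2$. Take $P:=P_1\otimes P_2$ as the witness for the composed mechanism. First, $\|\mu_D-P\|_{\mathrm{TV}}=\|\mu^1_D\otimes\mu^2_D-P_1\otimes P_2\|_{\mathrm{TV}}\le\|\mu^1_D-P_1\|_{\mathrm{TV}}+\|\mu^2_D-P_2\|_{\mathrm{TV}}\le\delta_1+\delta_2$, using independence to identify $\mu_D$ with the product and the subadditivity of total-variation distance over products (a one-step triangle-inequality calculation: $\|A\otimes B-A'\otimes B'\|_{\mathrm{TV}}\le\|A\otimes B-A'\otimes B\|_{\mathrm{TV}}+\|A'\otimes B-A'\otimes B'\|_{\mathrm{TV}}=\|A-A'\|_{\mathrm{TV}}+\|B-B'\|_{\mathrm{TV}}$). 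Second, $P(y_1,y_2)=P_1(y_1)P_2(y_2)\le e^{\eps_1+\eps_2}\mu^1_{D'}(y_1)\mu^2_{D'}(y_2)=e^{\eps_1+\eps_2}\mu_{D'}(y_1,y_2)$. By the ``if'' direction of the Key Lemma, these give $\mu_D(S)\le e^{\eps_1+\eps_2}\mu_{D'}(S)+(\delta_1+\delta_2)$ for all $S$; as $D,D'$ ranges over all ordered pairs of neighbors this is exactly $(\eps_1+\eps_2,\delta_1+\delta_2)$-differential privacy.

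The main obstacle is getting the \emph{additive} $\delta_1+\delta_2$ rather than a weaker bound such as $\delta_1+e^{\eps_1}\delta_2$: the naive approach that conditions on $\cM_2(D)$'s output and applies the two privacy guarantees one after the other introduces an $e^{\eps_1}$ (or $e^{\eps_2}$) factor on one of the $\delta$'s, and it is exactly the passage through the total-variation characterization --- where the error terms only add across the product and never get multiplied by an $e^{\eps}$ --- that removes it. The remaining ingredients (reducing to countably supported distributions and the mass bookkeeping inside the Key Lemma) are routine.
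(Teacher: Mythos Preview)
The paper does not prove this lemma; it is stated in the preliminaries with a citation to \cite{DwLe09} and no argument is given. So there is no ``paper's own proof'' to compare against.

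Your proof is correct. The reformulation you call the Key Lemma --- that $(\eps,\delta)$-differential privacy is equivalent to the existence, for each ordered pair of neighbors, of a distribution $P$ within total-variation distance $\delta$ of $\mu_D$ and pointwise dominated by $e^{\eps}\mu_{D'}$ --- is a standard and clean way to obtain the additive $\delta_1+\delta_2$; taking the product witness $P_1\otimes P_2$ and invoking subadditivity of total variation over independent products is exactly the right move, and it avoids the $e^{\eps_1}\delta_2$ loss that the naive sequential argument incurs. One small wording quibble: in the ``only if'' direction you say the redistribution ``moves total mass $2\delta'$''; what you mean is that the resulting $\ell_1$ distance from $\mu_D$ is $2\delta'$ (deficit $\delta'$ on one side, added mass $\delta'$ on the other), hence the total-variation distance is $\delta'\le\delta$. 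The content is fine.
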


\subsection{Histograms}

For $x \in \cX$, the {\bf point function} $c_x : \cX^n \rightarrow \N$ is defined to count the number of occurrences of $x$ in a given dataset, i.e.
\begin{align*}
c_x(D) = |\{i \in [n]_+ : D_i = x\}|
\end{align*}
In this paper we focus on algorithms for privately releasing approximations to the values of all point functions, also known as a {\bf histogram}.
A histogram is a collection of {\bf bins}, one for each element $x$ in the data universe, with the $x^\text{th}$ bin consisting of its {\bf label} $x$ and a {\bf count} $c_x \in \N$.

\subsubsection{Representations}

The input to our algorithms is always a dataset (i.e. an element $D \in \cX^n$) and the outputs represent approximate histograms.
We consider the following histogram representations as our algorithms' outputs:

\begin{itemize}
\item
	A vector in $\N^{\cX}$.
	We use $\{\tc_x\}_{x \in \cX}$ to denote a histogram where $\tc_x \in \N$ is the approximate count for the element $x$.
\item
	A partial vector $h \in (\cX \times \N)^*$ such that each element $x \in \cX$ appears at most once in $h$ with each pair $(x, \tc_x) \in \cX \times \N$ interpreted as element $x$ having approximate count $\tc_x$.
	Elements $x$ not listed in the partial vector are assumed to have count $\tc_x = 0$.
	Implicitly, an algorithm can return a partial vector by releasing bins for a subset of $\cX$.\footnote{Note that the order in which bins are released can result in a breach of privacy (e.g. releasing the bins of elements in the dataset before the bins of elements not in the dataset).
	As a result, our algorithms always sort the released bins according to a predefined ordering based only on $\cX$.}
\item
	A data structure, encoded as a string, which defines a function $h: \cX \rightarrow \N$ where $h(x)$, denoted $h_x$, is the approximate count for $x \in \cX$ and $h_x$ is efficiently computable given this data structure (e.g. time polynomial in the length of the data structure).
	In Section \vpref{sec:compact}, this data structure consists of the coefficients of a polynomial, along with some parameters.
\end{itemize}

Each representation is able to express any histogram over $\cX$.
The difference between them is the memory used and the efficiency of computing a count.
For example, computing the approximate count for $x \in \cX$, when using the data structure representation is bounded by the time it takes to compute the associated function.
But when using partial vectors, one only needs to iterate through the vector to determine the approximate count.

We define the following class of histograms.
Let $\cH_{n,n'}(\cX) \subseteq \N^{\cX}$ be the set of all histograms over $\cX$ with integer counts in $[0,n]$ (or $\N$ when $n = \infty$) and at most $n'$ of them nonzero.
By using partial vectors each element of $\cH_{n,n'}(\cX)$ can be stored in $O(n'\cdot(\log n+\log|\cX|))$ bits, which is shorter than the vector representation when $n' = o(|\cX|/\log|\cX|)$.

\subsubsection{Accuracy}

In order to preserve privacy, our algorithms return histograms with noise added to the counts.
Therefore, it is crucial to understand their accuracy guarantees.
So given a dataset $D \in \cX^n$ we compare the {\bf noisy count} $\tc_x = \cM(D)_x$ of $x \in \cX$ (the count released by algorithm $\cM$) to its {\bf true count}, $c_x(D)$.
We focus on the following two metrics:

\begin{definition}
A histogram algorithm $\cM: \cX^n \rightarrow \N^{\cX}$ has \textbf{$(a, \beta)$-per-query accuracy} if
\begin{align*}
\forall D \in \cX^n ~~ \forall x \in \cX ~~~~\Pr[|\cM(D)_x - c_x(D)| \le a]
\ge 1-\beta
\end{align*}
\end{definition}

\begin{definition}
A histogram algorithm $\cM: \cX^n \rightarrow \N^{\cX}$ has \textbf{$(a, \beta)$-simultaneous accuracy} if
\begin{align*}
\forall D \in \cX^n~~~~\Pr[\forall x \in \cX ~~ |\cM(D)_x - c_x(D)| \le a]
\ge 1-\beta
\end{align*}
\end{definition}

Respectively, these metrics capture the maximum error for any one bin and the maximum error simultaneously over all bins.
Even though simultaneous accuracy is commonly used in differential privacy, per-query accuracy has several advantages:

\begin{itemize}
\item
	For histograms, one can achieve a smaller per-query error than is possible for simultaneous error.
	Indeed, the optimal simultaneous error for $(\varepsilon,0)$-differentially private histograms is $a = \Theta\left(\log(|\cX|/\beta)/\varepsilon\right)$ whereas the optimal per-query error is $a = \Theta\left(\log(1/\beta)/\varepsilon\right)$, which is independent of $|\cX|$ \cite{HaTa10,BeBrKaNi14}.
\item
	Per-query accuracy may be easier to convey to an end user of differential privacy.
	For example, it is the common interpretation of error bars shown on a graphical depiction of a histogram.
	\begin{figure}[H]
	\centering
	\begin{tikzpicture}
	\draw (0,0) -- (7,0);
	\draw (0,0)
	-- (0.5,0) -- (0.5,1.5) -- (1.3,1.5) -- (1.3,0)
	-- (1.8,0) -- (1.8,2.0) -- (2.6,2.0) -- (2.6,0)
	-- (3.1,0) -- (3.1,1.0) -- (3.9,1.0) -- (3.9,0)
	-- (4.4,0) -- (4.4,1.0) -- (5.2,1.0) -- (5.2,0)
	-- (5.7,0) -- (5.7,0.5) -- (6.5,0.5) -- (6.5,0);

	\draw[|-|,very thick] (0.9,1.2) -- (0.9,1.8);
	\draw[|-|,very thick] (2.2,1.7) -- (2.2,2.3);
	\draw[|-|,very thick] (3.5,0.7) -- (3.5,1.3);
	\draw[|-|,very thick] (4.8,0.7) -- (4.8,1.3);
	\draw[|-|,very thick] (6.1,0.2) -- (6.1,0.8);
	\end{tikzpicture}
	\caption{A histogram with error bars}
	\end{figure}
\item
	For many algorithms (such as ours), per-query accuracy is good enough to imply optimal simultaneous accuracy.
	Indeed, an algorithm with $(a,\beta)$-per-query accuracy also achieves $(a, \beta \cdot |\cX|)$-simultaneous accuracy (by a union bound).
\end{itemize}

However, we may not always be able to achieve as good per-query accuracy as we want.
So we will also use the following relaxation which bounds the error only on bins with large enough true count.

\begin{definition}
A histogram algorithm $\cM: \cX^n \rightarrow \N^{\cX}$ has \textbf{$(a, \beta)$-per-query accuracy on counts larger than $t$} if
\begin{align*}
\forall D \in \cX^n ~~ \forall x \in \cX \text{ s.t. $c_x(D) > t$}~~~~\Pr[|\cM(D)_x - c_x(D)| \le a]
\ge 1-\beta
\end{align*}
\end{definition}

\subsection{Probability Terminology}

\begin{definition}\label{def:prob1}
Let $Z$ be an integer-valued random variable.
The {\bf probability mass function of $Z$}, denoted $f_Z$, is the function $f_Z(z) = \Pr[Z = z]$ for all $z \in \Z$.
The {\bf cumulative distribution function of $Z$}, denoted $F_Z$, is the function $F_Z(z) = \Pr[Z \le z]$ for all $z \in \Z$.
The {\bf support} of $Z$, denoted $\supp(Z)$, is the set of elements for which $f(z) \neq 0$.
\end{definition}

\begin{definition}\label{def:tvd}
Let $Y$ and $Z$ be random variables taking values in discrete range $\cR$. The {\bf statistical between $Y$ and $Z$} (a.k.a. total variation distance) is defined as
\begin{align*}
\Delta(Y, Z)
&
= \max_{A \subseteq \cR}\big|\Pr[Y \in A] - \Pr[Z \in A]\big|
\\ &
= \frac{1}{2}\cdot\sum_{a \in \cR}\big|\Pr[Z = a] - \Pr[Y = a]\big|
\end{align*}
\end{definition}

\begin{lemma}\label{lem:tvd}
Let $Y$ and $Z$ be random variables over discrete range $\cR$.
Statistical distance has the following properties:
\begin{enumerate}
\item
	$Y$ and $Z$ are identically distributed, denoted $Y \sim Z$, if and only if $\Delta(Y, Z) = 0$ (equivalently, $F_Y(z) = F_Z(z)$ for all $z \in \cR$).
\item
	Let $T: \cR \rightarrow \cR'$ be a randomized mapping with $\cR'$ discrete.
	Then
	\begin{align*}
	\Delta(T(Y),T(Z)) \le \Delta(Y,Z)
	\end{align*}
\item
	For $i \in \{1,2\}$, let $Y_i$ and $Z_i$ be random variables over discrete range $\cR_i$.
	Then
	\begin{align*}
	\Delta((Y_1,Y_2), (Z_1,Z_2))
	\le \Delta(Y_1,Z_1) + \max_{a\in \cR_1}\Delta(Y_2 | \{Y_1 = a\}, Z_2 | \{Z_1 = a\})
	\end{align*}
\end{enumerate}
\end{lemma}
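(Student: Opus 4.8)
The plan is to derive all three parts directly from the $\ell_1$ form $\Delta(Y,Z) = \tfrac12\sum_{a}|\Pr[Y=a]-\Pr[Z=a]|$ in Definition~\ref{def:tvd}, together with one observation: the maximum in the first form is attained at $A^\star = \{a \in \cR : \Pr[Y=a] \ge \Pr[Z=a]\}$, which gives the one-sided identity $\Delta(Y,Z) = \Pr[Y\in A^\star] - \Pr[Z\in A^\star] = \max_{A\subseteq\cR}\big(\Pr[Y\in A]-\Pr[Z\in A]\big)$. Parts 1 and 2 then follow with almost no work, and part 3 is a short triangle-inequality argument.

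For part 1: $\Delta(Y,Z)$ is a sum of nonnegative terms, so it is zero iff $\Pr[Y=a]=\Pr[Z=a]$ for every $a\in\cR$, i.e. iff $Y\sim Z$. For the stated equivalence with cumulative distribution functions, agreement of the probability mass functions trivially implies agreement of the cumulative distribution functions; conversely, if $F_Y(z)=F_Z(z)$ for all $z$, taking successive differences along the ordering of $\cR$ recovers $f_Y=f_Z$.

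For part 2: fix $A'\subseteq\cR'$ and set $p(a) := \Pr[T(a)\in A']$ (probability over the internal coins of $T$), so that $\Pr[T(Y)\in A']-\Pr[T(Z)\in A'] = \sum_a p(a)\big(\Pr[Y=a]-\Pr[Z=a]\big)$. Since $0\le p(a)\le 1$, dropping the terms with $a\notin A^\star$ (which are nonpositive) and bounding $p(a)\le 1$ on the rest yields $\sum_a p(a)(\Pr[Y=a]-\Pr[Z=a]) \le \sum_{a\in A^\star}(\Pr[Y=a]-\Pr[Z=a]) = \Delta(Y,Z)$. By the same argument with $Y$ and $Z$ swapped, the left-hand side is also $\ge -\Delta(Y,Z)$, so $\big|\Pr[T(Y)\in A']-\Pr[T(Z)\in A']\big|\le\Delta(Y,Z)$; taking the maximum over $A'$ gives the claim.

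For part 3: work with the $\ell_1$ form over the product range $\cR_1\times\cR_2$, writing each joint probability as $\Pr[Y_1=a]\Pr[Y_2=b\mid Y_1=a]$. Insert $\pm\,\Pr[Y_1=a]\Pr[Z_2=b\mid Z_1=a]$ inside each absolute value and apply the triangle inequality; the double sum splits into $\sum_a\Pr[Y_1=a]\sum_b\big|\Pr[Y_2=b\mid Y_1=a]-\Pr[Z_2=b\mid Z_1=a]\big|$, which equals $\sum_a\Pr[Y_1=a]\cdot 2\Delta(Y_2\mid\{Y_1=a\},Z_2\mid\{Z_1=a\}) \le 2\max_a\Delta(Y_2\mid\{Y_1=a\},Z_2\mid\{Z_1=a\})$ since the weights $\Pr[Y_1=a]$ sum to $1$, plus $\sum_a\big|\Pr[Y_1=a]-\Pr[Z_1=a]\big|\sum_b\Pr[Z_2=b\mid Z_1=a] = \sum_a\big|\Pr[Y_1=a]-\Pr[Z_1=a]\big| = 2\Delta(Y_1,Z_1)$. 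Dividing by $2$ completes the proof. The closest thing to a subtlety is the conditioning when $\Pr[Y_1=a]=0$ or $\Pr[Z_1=a]=0$: in that case the corresponding conditional distribution is undefined, but the whole $a$-slice of the double sum collapses to $\Pr[Z_1=a]=|\Pr[Y_1=a]-\Pr[Z_1=a]|$ (respectively $\Pr[Y_1=a]$), which is already subsumed by the $\Delta(Y_1,Z_1)$ term, so such slices can simply be excluded from the first piece. Beyond this bookkeeping the argument is entirely routine and there is no real obstacle.
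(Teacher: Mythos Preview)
The paper does not give a proof of this lemma; it is stated as a standard fact about statistical distance and used later without justification. Your proposal supplies a correct, self-contained proof of all three parts via the $\ell_1$ characterization and the one-sided maximizer $A^\star$, so there is nothing to compare against and nothing to fix.
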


\begin{definition}\label{def:orderstats}
Let $Z_1, \ldots, Z_\ell$ be integer-valued random variables.
The \textbf{$i$-th order statistic of $Z_1, \ldots, Z_\ell$} denoted $Z_{(i)}$ is the $i$-th smallest value among $Z_1, \ldots, Z_\ell$.
\end{definition}

\subsubsection{Sampling}

Because we are interested in the computational efficiency of our algorithms we need to consider the efficiency of sampling from various distributions.

A standard method for sampling a random variable is via {\bf inverse transform sampling}.
Let $\unif(A)$ denote the uniform distribution over the set $A$.

\begin{lemma}\label{lem:sampling continuous}
Let $U \sim \unif((0,1])$.
Then for any integer-valued random variable $Z$ we have $F^{-1}_Z(U) \sim Z$ where $F^{-1}_Z(u)$ is defined as $\min\{z \in \text{\rm supp}(Z) : F_Z(z) \ge u\}$.
\end{lemma}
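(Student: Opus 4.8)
The plan is to compute the cumulative distribution function of $F^{-1}_Z(U)$, show it coincides with $F_Z$, and then invoke Lemma~\ref{lem:tvd}(1). Since $F^{-1}_Z(u)$, whenever it is defined, lies in $\supp(Z)\subseteq\Z$, the random variable $F^{-1}_Z(U)$ is integer-valued, so it suffices to prove $\Pr[F^{-1}_Z(U)\le z]=F_Z(z)$ for every $z\in\Z$.

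The heart of the argument is the equivalence, valid for every $z\in\Z$ and every $u\in(0,1]$ at which $F^{-1}_Z$ is defined,
\[
F^{-1}_Z(u)\le z \iff u\le F_Z(z).
\]
For the forward direction, set $w=F^{-1}_Z(u)\in\supp(Z)$; by the definition of $F^{-1}_Z$ we have $F_Z(w)\ge u$, and since $w\le z$ and $F_Z$ is nondecreasing, $F_Z(z)\ge F_Z(w)\ge u$. For the reverse direction, assume $u\le F_Z(z)$; as $u>0$ this forces $\supp(Z)\cap(-\infty,z]$ to be nonempty, and being a set of integers bounded above by $z$ it has a maximum $w$. Since no support point lies strictly between $w$ and $z$, we get $F_Z(w)=F_Z(z)\ge u$, so $w$ belongs to the set $\{z'\in\supp(Z):F_Z(z')\ge u\}$ whose minimum defines $F^{-1}_Z(u)$; hence $F^{-1}_Z(u)\le w\le z$. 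Given this, for each fixed $z$,
\[
\Pr[F^{-1}_Z(U)\le z]=\Pr[U\le F_Z(z)]=F_Z(z),
\]
using that $U\sim\unif((0,1])$ gives $\Pr[U\le t]=t$ for $t\in[0,1]$ and that $F_Z(z)\in[0,1]$. Thus $F^{-1}_Z(U)$ and $Z$ are integer-valued random variables with identical cumulative distribution functions, and Lemma~\ref{lem:tvd}(1) yields $F^{-1}_Z(U)\sim Z$.

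The main point requiring care is that $F^{-1}_Z(u)=\min\{z\in\supp(Z):F_Z(z)\ge u\}$ is actually well defined: the set in question is bounded below because $F_Z(z)\to 0$ as $z\to-\infty$, and it is nonempty for $u<1$ because $F_Z(z)\to 1$ as $z\to+\infty$, so the minimum exists. When $\supp(Z)$ is unbounded above we have $F_Z(z)<1$ for all $z$, so the set is empty at $u=1$ and $F^{-1}_Z(1)$ is undefined; this is harmless since $\Pr[U=1]=0$, so $F^{-1}_Z(U)$ is defined almost surely and the exceptional event does not affect the probability computation above. Apart from this edge case, the proof is a direct unwinding of the definitions.
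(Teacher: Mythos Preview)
Your proof is correct and complete. The paper, however, does not supply a proof of this lemma at all: it is stated as a standard fact about inverse transform sampling and immediately followed by commentary on how to implement it via binary search. So there is no ``paper's own proof'' to compare against; you have simply filled in a routine result that the authors chose to take for granted. Your handling of the edge case $u=1$ when $\supp(Z)$ is unbounded above is a nice touch that the paper's statement glosses over.
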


If $Z$, the random variable we wish to sample, has finite support we can compute the inverse cumulative distribution by performing binary search on $\text{supp}(Z)$ to find the minimum.
This method removes the need to compute the inverse function of the cumulative distribution function.
If in addition, the cumulative distribution function of $Z$ can be represented by rational numbers, then we only need to sample from a discrete distribution instead of $(0,1]$.

\begin{lemma}\label{lem:sampling discrete}
Let $Z$ be an integer-valued random variable with finite support and has all probabilities of its cumulative distribution function expressible as rational numbers with denominator $d$.
Then $F^{-1}_Z(U) \sim Z$ where $U \sim (1/d) \cdot \unif([d]_+)$ and $F^{-1}_Z(u)$ is defined as $\min\{z \in \text{\rm supp}(Z) : F_Z(z) \ge u\}$.
\end{lemma}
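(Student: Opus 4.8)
The plan is to prove this directly, essentially re-running the inverse transform sampling argument behind Lemma~\ref{lem:sampling continuous} but exploiting that the CDF of $Z$ takes values only in the finite set $\{0, 1/d, 2/d, \ldots, 1\}$, so that a uniformly random multiple of $1/d$ already suffices to ``probe'' it. (One could alternatively derive the lemma from Lemma~\ref{lem:sampling continuous} by noting that for $U' \sim \unif((0,1])$ the variable $\lceil dU'\rceil/d$ has law $(1/d)\cdot\unif([d]_+)$ and that $F^{-1}_Z(U') = F^{-1}_Z(\lceil dU'\rceil/d)$ almost surely, since $F_Z$ never takes a value strictly between consecutive multiples of $1/d$; but the direct route seems cleanest.)

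First I would set up notation: write $\supp(Z) = \{z_1 < z_2 < \cdots < z_k\}$, which is finite and nonempty by hypothesis, and write $F_Z(z_i) = m_i/d$ with each $m_i$ an integer (using that every CDF value is rational with denominator $d$). Setting $m_0 := 0$, I would record the easy facts that $1 \le m_1 < m_2 < \cdots < m_k = d$: strict monotonicity because each $z_i$ carries positive mass $f_Z(z_i) = (m_i - m_{i-1})/d > 0$, and $m_k = d$ because $F_Z(z_k) = 1$. I would also note that $F^{-1}_Z(u)$ is well-defined for every $u \in (0,1]$, since $\{z \in \supp(Z) : F_Z(z) \ge u\}$ is a nonempty finite set (it contains $z_k$).

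The core step is to compute the law of $F^{-1}_Z(U)$. Writing $U = j/d$ with $j$ uniform on $[d]_+$, I would observe that $F^{-1}_Z(j/d) = z_i$ exactly when $i$ is the least index with $m_i \ge j$, which --- by strict monotonicity of the $m_i$ --- is equivalent to $m_{i-1} < j \le m_i$. Hence the number of values $j \in [d]_+$ mapping to $z_i$ is exactly $m_i - m_{i-1}$, and since each value $j/d$ has probability $1/d$ under $U$,
\begin{align*}
\Pr[F^{-1}_Z(U) = z_i] = \frac{m_i - m_{i-1}}{d} = F_Z(z_i) - F_Z(z_{i-1}) = f_Z(z_i) = \Pr[Z = z_i].
\end{align*}
As $i$ ranges over $\{1, \ldots, k\}$ the intervals $(m_{i-1}, m_i]$ partition $\{1, \ldots, d\}$, so every value of $j$ is accounted for; thus $F^{-1}_Z(U)$ and $Z$ have the same probability mass function on the common support $\{z_1, \ldots, z_k\}$ and are therefore identically distributed (equivalently, by Lemma~\ref{lem:tvd}, $\Delta(F^{-1}_Z(U), Z) = 0$).

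I do not expect a genuine obstacle here; the only thing requiring care is the boundary bookkeeping --- keeping $m_0 = 0$, using $m_k = d$, and checking that every integer $j$ from $1$ to $d$ lies in exactly one interval $(m_{i-1}, m_i]$ so that no probability mass is dropped or double-counted.
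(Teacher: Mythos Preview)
The paper states Lemma~\ref{lem:sampling discrete} without proof, treating it (like Lemma~\ref{lem:sampling continuous}) as a standard fact in the preliminaries. Your direct argument is correct and is exactly the kind of proof one would supply: enumerating the support, writing the CDF values as integers $m_i/d$, and checking that the intervals $(m_{i-1}, m_i]$ partition $[d]_+$ with the right masses. There is nothing to compare against.
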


\subsection{Model of Computation}\label{ssec:model}

We analyze the running time of our algorithms with respect to the {\bf $w$-bit word RAM model} taking $w$ logarithmic in our input length, namely $w = O(\log n + \log\log|\cX|)$.
In this model, memory accesses and basic operations (arithmetic, comparisons and logical) on $w$-bit words are constant time.
In addition, we assume the data universe is indexed so we can view $\cX = [m]_+$ for some $m \in \N$.
Some parameters to our algorithms are rational and we represent rationals by pairs of integers.
Some of our algorithms will use numbers that span many words.
For ease of notation, we will assume multiplication of two $x$-bit number is $\tilde{O}(x)$ (\cite{vzGaGe13} Theorem 8.24).

Our algorithms require randomness so we assume that they have access to an oracle that when given a number $d \in \N_+$ returns a uniformly random integer between $1$ and $d$ inclusive.

Finally, for representing histograms as partial vectors, we will assume internally to the algorithms that they are stored as red-black trees. This will allow us to insert and search for elements in $O(\log n \cdot \log |\cX|)$ time (\cite{CoLeRiSt09} Chapter 13).
When releasing a partial vector we don't return the tree itself (which may violate privacy), but instead return a list of bins using an in-order traversal of tree.

\section{A General Framework for Implementing Differential Privacy}

In this section, we outline a basic framework for implementing a pure differentially private algorithm $\cM$ on a finite computer with only a small loss in privacy and possibly a small loss in accuracy. 
It can be broken down into the following steps:

\begin{enumerate}
\item
	Start by discretizing the input and output of $\cM$ so that they can only take on a finite number of values (e.g. rounding a real-valued number to the nearest integer in some finite set). Depending on how utility is measured, the loss in accuracy by discretizing may be acceptable.
\item
	Then find an algorithm $\cM'$ that runs on a finite computer and approximates the output distribution of the discretized version of $\cM$ to within ``small'' statistical distance.
	Notice that $\cM'$ is only guaranteed to satisfy approximate differential privacy and may not satisfy pure differentially privacy.
	(This step may require a non-trivial amount of work.
	For one example, see Theorem $\vpref{thm:histogram sparse}$.)
\item
	Finally, provided that the statistical distance of the previous step is small enough, by mixing $\cM'$ with uniformly random output (from the discretized and finite output space), the resulting algorithm satisfies pure differential privacy.
\end{enumerate}

We will use this framework several times in designing our algorithms. Here we start by formalizing Step 3.
That is, for algorithms whose output distribution is close in statistical distance to that of a pure differentially private algorithm, we construct an algorithm satisfying pure differentially privacy by mixing it with random output inspired by similar techniques in \cite{KaLeNiRaSm11,CaDaKa17}.

\begin{figure}[H]
\rule{\textwidth}{.5pt}
\begin{algorithm}\label{alg:mixture}
$\cM^*_{\cM',\cD,\gamma}(D)$
for $D \in \cX^n$ where $\cR$ is discrete and finite, an algorithm $\cM': \cX^n \rightarrow \hphantom{\hspace{15em}} \cR$, a distribution $\cD$ over $\cR$ and $\gamma \in \N^{-1}$

\begin{enumerate}[leftmargin=.3in]
	\item
	With probability $1 - \gamma$ release $\cM'(D)$.
	\item
	Otherwise release an element sampled from the distribution $\cD$.
\end{enumerate}
\end{algorithm}
\vspace{-.1in}
\rule{\textwidth}{.5pt}
\end{figure}

\begin{lemma}\label{lem:mixture}
Suppose that there is an $(\varepsilon, 0)$-differentially private algorithm $\cM: \cX^n \rightarrow \cR$ such that $\Delta\left(\cM(D),\cM'(D)\right) \le \delta$ for all input datasets $D \in \cX^n$ with parameter $\delta \in [0,1)$.
Then the algorithm $\cM^*_{\cM',\cD,\gamma}: \cX^n \rightarrow \cR$ has the following properties:

\begin{enumerate}[label=\roman*.]
\item
	$(\varepsilon, 0)$-differential privacy whenever
	\begin{equation}\label{eq:mix inequality}
	\delta \le \frac{e^\varepsilon - 1}{e^\varepsilon + 1}\cdot\frac{\gamma}{1-\gamma}\cdot \min_{r \in \cR}\left\{\Pr_{Z \sim \cD}[Z = r]\right\}
	\end{equation}
\item
	Running time $O(\log(1/\gamma)) + \ctime(\cM') + \ctime(\cD)$ where $\ctime(\cD)$ is the time to sample from the distribution $\cD$.
\end{enumerate}
\end{lemma}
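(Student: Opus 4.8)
The plan is to handle the two parts separately, with part~(i) carrying all the substance. For privacy, since $\cR$ is discrete and we are targeting pure differential privacy, it suffices to establish the pointwise inequality $\Pr[\cM^*_{\cM',\cD,\gamma}(D) = r] \le e^\varepsilon\cdot\Pr[\cM^*_{\cM',\cD,\gamma}(D') = r]$ for every $r \in \cR$ and every pair of neighboring datasets $D,D'$; summing over $r \in S$ then gives $\Pr[\cM^*_{\cM',\cD,\gamma}(D)\in S] \le e^\varepsilon\cdot\Pr[\cM^*_{\cM',\cD,\gamma}(D')\in S]$ for an arbitrary event $S \subseteq \cR$.

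The core of the argument is a short chain of inequalities. First I would expand $\Pr[\cM^*_{\cM',\cD,\gamma}(D) = r] = (1-\gamma)\cdot\Pr[\cM'(D) = r] + \gamma\cdot\Pr_{Z\sim\cD}[Z = r]$, and likewise for $D'$. Taking $A = \{r\}$ in the definition of statistical distance gives $|\Pr[\cM(D) = r] - \Pr[\cM'(D) = r]| \le \delta$ for every dataset, so $\Pr[\cM'(D) = r] \le \Pr[\cM(D) = r] + \delta$ and $\Pr[\cM(D') = r] \le \Pr[\cM'(D') = r] + \delta$. Combining these two $\delta$-slacks with the $(\varepsilon,0)$-privacy of $\cM$, namely $\Pr[\cM(D) = r] \le e^\varepsilon\cdot\Pr[\cM(D') = r]$, yields $\Pr[\cM'(D) = r] \le e^\varepsilon\cdot\Pr[\cM'(D') = r] + (e^\varepsilon + 1)\delta$. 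Substituting this into the expanded expression for $\Pr[\cM^*_{\cM',\cD,\gamma}(D) = r]$ and cancelling the common $e^\varepsilon(1-\gamma)\Pr[\cM'(D') = r]$ terms, the target inequality reduces to $(1-\gamma)(e^\varepsilon + 1)\delta \le (e^\varepsilon - 1)\gamma\cdot\Pr_{Z\sim\cD}[Z = r]$. Since $\Pr_{Z\sim\cD}[Z = r] \ge \min_{r' \in \cR}\{\Pr_{Z\sim\cD}[Z = r']\}$, this is implied by hypothesis~\eqref{eq:mix inequality}, which is exactly the threshold at which the $\gamma\cdot\cD$ mass can absorb the $(e^\varepsilon+1)\delta$ error; this completes part~(i). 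I would note that the argument goes through verbatim even when the minimum probability is $0$: then \eqref{eq:mix inequality} forces $\delta = 0$ and both sides of the reduced inequality vanish.

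For part~(ii), the algorithm first flips a coin that lands ``run $\cM'$'' with probability $1-\gamma$. Since $\gamma = 1/k \in \N^{-1}$, I would draw a uniform integer in $[k]_+$ from the sampling oracle and branch to $\cD$ iff it equals $1$; reading and comparing this $O(\log(1/\gamma))$-bit value costs $O(\log(1/\gamma))$ time in the word-RAM model. The algorithm then invokes either $\cM'$ at cost $\ctime(\cM')$ or the sampler for $\cD$ at cost $\ctime(\cD)$, giving the claimed bound $O(\log(1/\gamma)) + \ctime(\cM') + \ctime(\cD)$.

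Both parts are elementary and I do not expect a genuine obstacle here; the only point requiring care is bookkeeping the direction of each inequality --- the statistical-distance slack must be applied so that it pushes the ``wrong'' way on the $D$ side and the ``right'' way on the $D'$ side --- so that the two uses of the $\delta$-bound and the single use of $\cM$'s privacy compose into the clean additive term $(e^\varepsilon + 1)\delta$, and then matching that term against \eqref{eq:mix inequality}. This lemma is the easy ``Step~3'' of the framework; the real work is deferred to the construction of the approximating algorithm $\cM'$ in later sections.
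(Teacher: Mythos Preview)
Your proposal is correct and follows essentially the same approach as the paper: a pointwise bound on $\Pr[\cM^*(D)=r]$ obtained by sandwiching $\cM'$ between $\cM$ twice (once on each side) to pick up the additive $(e^\varepsilon+1)\delta$ term, which is then absorbed into the $(e^\varepsilon-1)\gamma\cdot\Pr_{Z\sim\cD}[Z=r]$ gap via~\eqref{eq:mix inequality}. Your handling of part~(ii) and the degenerate $\delta=0$ case are also fine.
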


By taking $\gamma$ and $\delta$ small enough and satisfying \eqref{eq:mix inequality}, the algorithm $\cM^*_{\cM',\cD,\gamma}$ satisfies pure differential privacy and has nearly the same utility as $\cM$ (due to having a statistical distance at most $\gamma + \delta$ from $\cM$) while allowing for a possibly more efficient implementation since we only need to approximately sample from the output distribution of $\cM$.

To maximize the minimum in \eqref{eq:mix inequality}, one can take $\cD \sim \unif(\cR)$. However, it may the case that sampling this distribution exactly is inefficient and we are willing to trade needing a smaller $\delta$ to maintain pure differentially privacy for a faster sampling algorithm.

\begin{proof}[Proof of {\it i}]
Let $r \in \cR$ and $p = \Pr_{Z \sim \cD}[Z = r]$. Then for neighboring datasets $D, D' \in \cX^n$
\begin{align*}
\Pr[\cM^*_{\cM',\cD,\gamma}(D) = r]
&= \gamma\cdot p +\left(1-\gamma\right)\cdot \Pr[\cM'(D)=r]\\
&\le \gamma\cdot p +\left(1-\gamma\right)\cdot\left( \Pr[\cM(D)=r] + \delta\right)\\
&\le \gamma\cdot p +\left(1-\gamma\right)\left(e^\varepsilon\cdot\Pr[\cM(D')=r] + \delta \right)\\
&\le \gamma\cdot p +\left(1-\gamma\right)\left(e^\varepsilon\cdot\left(\Pr[\cM'(D')=r]+\delta\right) + \delta \right)
\end{align*}
Rearranging terms and using the upper bound on $\delta$ yields
\begin{align*}
\Pr[\cM^*_{\cM', \cD, \gamma}(D) = r]
&\le e^\varepsilon\left(1-\gamma\right)\cdot\Pr[\cM'(D')=r] + \gamma\cdot p + (e^\varepsilon + 1)\left(1-\gamma\right)\cdot\delta \\
&\le e^\varepsilon\left(1-\gamma\right)\cdot\Pr[\cM'(D')=r] + \gamma\cdot p + (e^\varepsilon - 1)\cdot\gamma\cdot \min_{r' \in \cR}\left\{\Pr_{Z \sim \cD}[Z = r']\right\}\\
&\le e^\varepsilon\left(\left(1-\gamma\right)\cdot\Pr[\cM'(D')=r] +\gamma\cdot p\right)\\
&= e^\varepsilon\cdot \Pr[\cM^*_{\cM',\cD,\gamma}(D')=r]\qedhere
\end{align*}
\end{proof}
\begin{proof}[Proof of {\it ii}]
This follows directly from the construction of $\cM^*$.
\end{proof}

\section{Counting Queries}\label{sec:count}

Before discussing algorithms for privately releasing histograms, we show how to privately answer a single counting query using only integers of bounded length.
While there exist known algorithms for this problem \cite{DwKeMcMiNa06,Mironov12}, our algorithms have additional properties that will be used to construct histogram algorithms in later sections.
In general, counting queries have as input the dataset $D \in \cX^n$ and the bin $x$ to query.
However, we will take the true count, $c_x(D)$, as the input to our counting query algorithms.
When constructing histogram algorithms in later sections, this will allow us to improve the running time as we will only need to iterate through the dataset once to determine all true counts prior to answering any counting query.
In addition, we would like to keep track of the randomness used by most of our algorithms so we write that as an explicit second input.
As a result, we have the following definitions:

\begin{definition}\label{def:count privacy}
Let $n, d \in \N_+$.
We say a (deterministic or randomized) algorithm $\cM: [n] \times [d]_+ \rightarrow [n]$ is {\bf
$(\varepsilon,\delta)$-differentially
private for counting queries} if the randomized algorithm $\cM: \zo^n \rightarrow [n]$ defined as $\cM(D) = \cM(c,U)$ where $c = \sum_{i=1}^{n}D_i$ and $U \sim \unif([d]_+)$ is $(\varepsilon,\delta)$-differentially private.
\end{definition}

\begin{definition}\label{def:count accuracy}
Let $n, d \in \N_+$.
We say $\cM: [n] \times [d]_+ \rightarrow [n]$ has {\bf $(a,\beta)$-accuracy} if for all $c \in [n]$
\begin{align*}
\Pr[|\cM(c,U) - c| \le a] \ge 1-\beta
\end{align*}
where $U \sim \unif([d]_+)$.
\end{definition}

\begin{definition}\label{def:count cdf}
Let $n, d \in \N_+$ and $\cM: [n] \times [d]_+ \rightarrow [n]$ be deterministic.
Let the {\bf scaled cumulative distribution function of $\cM$ at $0$} denoted $F_{\cM}$ be the function $F_{\cM}: [n] \rightarrow [d]_+$ defined as $F_{\cM}(z) = d\cdot F_{\cM(0,U)}(z)$ where $U \sim \unif([d]_+)$ for all $z \in [n]$.
\end{definition}

Being able to efficiently compute the scaled cumulative distribution function of $\cM$ at 0 will be a necessary property for constructing efficient histogram algorithms later (see Proposition \vpref{prop:approxordsample} and Lemma \vpref{lem:count conversion}).
Definition \vpref{def:count privacy} can easily be extended to handle point queries over data universes with more than two elements.

\begin{lemma}\label{lem:count privacy}
Let $\cM: [n] \times [d]_+ \rightarrow [n]$ be $(\varepsilon,\delta)$-differentially private for counting queries.
Let $D, D' \in \cX^n$ be neighboring datasets.
Then for all $x \in \cX$ and $c \in [n]$
\begin{align*}
\Pr[\cM(c_x(D), U) = c] \le e^{\varepsilon}\cdot\Pr[\cM(c_x(D'),U) = c] + \delta
\end{align*}
where $U \sim \unif([d]_+)$.
\end{lemma}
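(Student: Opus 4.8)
The plan is to reduce the claim directly to the definition of $(\varepsilon,\delta)$-differential privacy for counting queries (Definition~\ref{def:count privacy}), via the natural map that records, for each row, whether it equals the queried element $x$. First I would fix $x \in \cX$ and define $B_x : \cX^n \to \zo^n$ by $(B_x(D))_i = 1$ if $D_i = x$ and $0$ otherwise. The key observations are: (a) $\sum_{i=1}^n (B_x(D))_i = c_x(D)$, so that by Definition~\ref{def:count privacy} the random variable $\cM(B_x(D))$ (in the sense of the induced algorithm $\cM : \zo^n \to [n]$) is exactly $\cM(c_x(D), U)$ with $U \sim \unif([d]_+)$; and (b) if $D, D' \in \cX^n$ are neighbors, i.e.\ differ in exactly one row, then $B_x(D)$ and $B_x(D')$ differ in at most one row.

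Next I would split into two cases according to observation (b). If $B_x(D) = B_x(D')$, then in particular $c_x(D) = c_x(D')$, so the two probabilities in the statement are equal and the inequality follows immediately since $e^\varepsilon \ge 1$ and $\delta \ge 0$. If instead $B_x(D) \neq B_x(D')$, then they differ in exactly one row and hence are neighboring datasets in $\zo^n$; applying the $(\varepsilon,\delta)$-differential privacy guarantee of the induced algorithm $\cM : \zo^n \to [n]$ with the singleton set $S = \{c\} \subseteq [n]$ gives
\begin{align*}
\Pr[\cM(c_x(D),U) = c]
= \Pr[\cM(B_x(D)) = c]
\le e^\varepsilon \cdot \Pr[\cM(B_x(D')) = c] + \delta
= e^\varepsilon \cdot \Pr[\cM(c_x(D'),U) = c] + \delta,
\end{align*}
which is exactly the desired bound.

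There is essentially no hard step here: the lemma is a definitional unwinding that transports the ``one coordinate of $\zo^n$ changes'' notion of neighboring (used in Definition~\ref{def:count privacy}) up to the ``one row of $\cX^n$ changes'' notion (used for histograms). The only point that requires a moment's care is the degenerate case where the changed row of $D$ is neither equal to $x$ before nor after the change, in which case $B_x(D) = B_x(D')$ and we must invoke the trivial bound rather than the privacy guarantee; I would make sure to state this case explicitly. A second minor point is that the definition of neighbors demands the datasets differ in \emph{exactly} one row, so after applying $B_x$ one genuinely may land on identical datasets, which is why the case split is needed rather than simply asserting $B_x(D)$ and $B_x(D')$ are neighbors.
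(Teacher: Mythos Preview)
Your proposal is correct and follows the same approach as the paper: both define the indicator dataset (the paper calls it $D^{(x)}$, you call it $B_x(D)$) and reduce to the $(\varepsilon,\delta)$-differential privacy guarantee on $\zo^n$. You are in fact slightly more careful than the paper, which simply asserts that $D^{(x)}$ and $D'^{(x)}$ are neighboring without separately handling the case where the changed row is not $x$ before or after (so that $D^{(x)} = D'^{(x)}$); your explicit case split closes that small gap.
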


\begin{proof}
Define the dataset $D^{(x)} \in \zo^n$ as
\begin{align*}
D^{(x)}_i = \begin{cases}
1 & \text{if $D_i = x$}\\
0 & \text{otherwise}
\end{cases}
\end{align*}
Notice that $c_x(D) = \sum_{i=1}^{n}D^{(x)}_i$.
Similarly, we define the dataset $D'^{(x)}$ for $D'$.
Now, $D^{(x)}$ and $D'^{(x)}$ are neighboring datasets.
The lemma follows by $(\varepsilon,\delta)$-differential privacy for counting queries.
\end{proof}

\subsection{The Geometric Mechanism}

As shown by Dwork, McSherry, Nissim and Smith \cite{DwMcNiSm06}, we can privately release a counting query by adding appropriately scaled Laplace noise to the count.
Because our algorithm's outputs are counts, we do not need to use continuous noise and instead use a discrete analogue, as in \cite{DwKeMcMiNa06,GhRoSu12}.

We say an integer-valued random variable $Z$ follows a {\bf two-sided geometric distribution with scale parameter $s$ centered at $c \in \Z$} (denoted $Z\sim c + \Geo(s)$) if its probability mass function $f_Z(z)$ is proportional to $e^{-|z-c|/s}$.
It can be verified that $f_Z$ and its cumulative distribution function $F_Z$ are
\begin{align*}
f_Z(z)
= \left(\frac{e^{1/s}-1}{e^{1/s}+1}\right)\cdot e^{-|z-c|/s}
\hspace{.5in} F_Z(z)
= \begin{cases}
\frac{e^{1/s}}{e^{1/s} + 1}\cdot e^{-(c-z)/s} & \text{if $z \le c$} \\
1 - \frac{1}{e^{1/s} + 1} \cdot e^{-(z-c)/s} & \text{otherwise}
\end{cases}
\end{align*}
for all $z \in \Z$.
When $c$ is not specified, it is assumed to be 0.
The inverse cumulative distribution of $Z$ is
\begin{align*}
F^{-1}_Z(u)
&= c + \begin{cases}
\left\lceil s\ln\left(u\right) + s\ln\left(e^{1/s}+1\right)\right\rceil - 1 & \text{if $u \le 1/2$} \\
\left\lceil -s\ln\left(1-u\right) - s\ln\left(e^{1/s}+1\right)\right\rceil & \text{otherwise}
\end{cases}\\
\text{or, equivalently,} & \notag\\
F^{-1}_Z(u)
&= c + \left\lceil s \cdot \text{sign}(1/2 - u)\left(\ln(1 - |2u - 1|)+ \ln(e^{1/s} + 1) - \ln{2}\right)\right\rceil + \lfloor 2u \rfloor - 1
\end{align*}

Now, we state the counting query algorithm using discrete noise formally studied in \cite{GhRoSu12}.
We will not keep track of the randomness used by this algorithm, but to match our syntax for counting query algorithms we use the dummy parameter 1 as the second argument.

\begin{figure}[H]
\rule{\textwidth}{.5pt}
\begin{algorithm}\label{alg:geometricmechanism}
${\tt GeometricMechanism}_{n,\varepsilon}(c, 1)$
for $c \in [n]$ where $n \in \N_+$ and $\varepsilon > 0$

\begin{enumerate}[leftmargin=.3in]
\item
	Return $\tc$ set to $c + \Geo(2/\varepsilon)$ clamped to the interval $[0,n]$. i.e.
	\begin{align*}
	\hspace{50pt}
	\tc
	= \begin{cases}
	0 &\text{if $Z \le 0$}\\
	n &\text{if $Z \ge n$}\\
	Z &\text{otherwise}
	\end{cases} \text{~~~~where $Z = c + \Geo(2/\varepsilon)$.}
	\end{align*}
\end{enumerate}
\end{algorithm}
\vspace{-.1in}
\rule{\textwidth}{.5pt}
\end{figure}

\begin{theorem}\label{thm:geometricmechanism}
Let $n \in \N_+$ and $\varepsilon > 0$.
Then ${\tt GeometricMechanism}_{n,\varepsilon}: [n] \times [1]_+ \rightarrow [n]$ has the following properties:

\begin{enumerate}[label=\roman*.]
\item
	${\tt GeometricMechanism}_{n,\varepsilon}$ is $(\varepsilon/2,0)$-differentially private for counting queries \cite{GhRoSu12}.
\item
	${\tt GeometricMechanism}_{n,\varepsilon}$ has $(a, \beta)$-accuracy for $\beta \in (0,1]$ and
	\begin{align*}
	a = \left\lceil\frac{2}{\varepsilon}\cdot\ln\frac{1}{\beta}\right\rceil
	\end{align*}
\end{enumerate}
\end{theorem}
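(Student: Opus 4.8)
The plan is to handle the two parts separately; both reduce quickly to standard facts about the two-sided geometric distribution once we observe that the clamping step is harmless here.

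For part~(i), by Definition~\ref{def:count privacy} it suffices to show that the randomized map $c \mapsto \tc$, where $\tc$ is $c + \Geo(2/\varepsilon)$ clamped to $[0,n]$ and $c = \sum_{i=1}^n D_i \in [n]$ (the second argument $U \sim \unif([1]_+)$ is constant, hence irrelevant), is $(\varepsilon/2,0)$-differentially private as a function of $D \in \zo^n$; note that neighboring $D,D'$ yield counts $c,c'$ with $|c-c'|\le 1$. First I would argue privacy of the \emph{unclamped} map $c \mapsto Z_c := c + \Geo(2/\varepsilon)$: its pmf is $f_{Z_c}(z) = \frac{e^{\varepsilon/2}-1}{e^{\varepsilon/2}+1}\, e^{-\varepsilon|z-c|/2}$ (the formula displayed in Section~\ref{sec:count} with $s = 2/\varepsilon$), and for $|c-c'|\le 1$ the triangle inequality gives $|z-c'| - |z-c| \le |c - c'| \le 1$, so $f_{Z_c}(z)/f_{Z_{c'}}(z) = e^{\varepsilon(|z-c'|-|z-c|)/2} \le e^{\varepsilon/2}$ for every $z \in \Z$; summing this inequality over any $S \subseteq \Z$ gives $(\varepsilon/2,0)$-differential privacy. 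Since $\tc$ is obtained from $Z_c$ by the deterministic map $z \mapsto \min(n,\max(0,z))$, Lemma~\ref{lem:dp postprocess} transfers the guarantee to the clamped mechanism. (This recovers the bound attributed to \cite{GhRoSu12}.)

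For part~(ii), the key point is that the true count $c$ lies inside the clamping interval $[0,n]$, so clamping can only bring $Z := c + \Geo(2/\varepsilon)$ \emph{closer} to $c$: if $Z < 0$ then $\tc = 0$ and $|\tc - c| = c \le c - Z = |Z - c|$; if $Z > n$ then $\tc = n$ and $|\tc - c| = n - c \le Z - c = |Z - c|$; and otherwise $\tc = Z$. Hence $|\tc - c| \le |\Geo(2/\varepsilon)|$ with probability $1$, so $\Pr[|\tc - c| > a] \le \Pr[|\Geo(2/\varepsilon)| > a]$. It remains to bound the geometric tail: from the cdf formula in Section~\ref{sec:count}, for every integer $a \ge 0$ one gets $\Pr[|\Geo(2/\varepsilon)| > a] = \frac{2}{e^{\varepsilon/2}+1}\, e^{-a\varepsilon/2} \le e^{-a\varepsilon/2}$, using $e^{\varepsilon/2}+1 \ge 2$. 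Taking $a = \lceil (2/\varepsilon)\ln(1/\beta) \rceil \ge (2/\varepsilon)\ln(1/\beta)$ makes the right-hand side at most $\beta$; the degenerate case $\beta = 1$ gives $a = 0$ and the claim is trivial.

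Neither step is a serious obstacle. The only place that warrants care is the reduction in part~(ii): it is exactly the hypothesis $c \in [0,n]$ that makes clamping cost-free, and this same observation — that truncating a noisy count to an interval containing the true count never increases the error — will be reused when we analyze the bounded and truncated histogram mechanisms in later sections.
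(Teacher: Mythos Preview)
Your proposal is correct and follows essentially the same route as the paper: for part~(ii) the paper also reduces to the tail of $|\Geo(2/\varepsilon)|$ (writing $\Pr[|\tc-c|\le a]\ge \Pr[|Z|\le a]$ without spelling out the clamping argument you give) and bounds it by $\tfrac{2}{e^{\varepsilon/2}+1}e^{-a\varepsilon/2}\le \beta$, while for part~(i) it simply cites \cite{GhRoSu12} rather than writing out the pmf-ratio plus post-processing argument you sketch.
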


\begin{proof}[Proof of {\it ii}]
Let $Z \sim \Geo(2/\varepsilon)$.
Then for $c \in [n]$,
\begin{align*}
\Pr[|{\tt GeometricMechanism}_{n,\varepsilon}(c) - c | \le a]
&\ge \Pr\left[|Z| \le \lfloor a\rfloor\right]\\
&= 1 - 2\cdot\Pr[Z \le -\lfloor a \rfloor - 1]\\
&= 1 - 2\cdot \frac{e^{-\lfloor a \rfloor \cdot \varepsilon/2}}{e^{\varepsilon/2}+1}\\
&\ge 1 - \frac{2\cdot\beta}{e^{\varepsilon/2}+1}\\
&\ge 1 - \beta\qedhere
\end{align*}
\end{proof}

As presented above, this algorithm needs to store integers of unbounded size since $\Geo(2/\varepsilon)$ is unbounded in magnitude.
As noted in \cite{GhRoSu12}, by restricting the generated noise to a fixed range we can avoid this problem.
 However, even when the generated noise is restricted to a fixed range, generating this noise via inverse transform sampling may require infinite precision.
By appropriately choosing $\varepsilon$, the probabilities of this noise's cumulative distribution function can be represented with finite precision, and therefore generating this noise via inverse transform sampling only requires finite precision.

\begin{theorem}\label{thm:count geo}
Let $n \in \N_+$, $\varepsilon \in \N^{-1}$ and $\teps = 2\cdot \ln\left(1 + 2^{-\left\lceil \log(2/\varepsilon)\right\rceil}\right) \in (4/9 \cdot \varepsilon, \varepsilon]$.
Then there is a deterministic algorithm ${\tt GeoSample}_{n,\varepsilon}: [n] \times [d]_+ \rightarrow [n]$ where $\log d = O(n\cdot \log(1/\varepsilon))$ with the following properties:

\begin{enumerate}[label=\roman*.]
\item
	${\tt GeoSample}_{n,\varepsilon}(c, U) \sim {\tt GeometricMechanism}_{n,\teps}(c,1)$ for all $c \in [n]$ where $U \sim \unif([d]_+)$.
	Thus, ${\tt GeoSample}_{n,\varepsilon}$ is $(\teps/2, 0)$-differentially
	private for counting queries and has $(a, \beta)$-ac\overfullhypen curacy for $\beta \in (0,1]$ and
	\begin{align*}
	a = \left\lceil\frac{2}{\teps}\cdot\ln\frac{1}{\beta}\right\rceil
	\end{align*}
\item
	${\tt GeoSample}_{n,\varepsilon}$ has running time $\tilde{O}(n\cdot\log(1/\varepsilon))$.
\item
	For all $z \in [n]$, $F_{{\tt GeoSample}_{n,\varepsilon}}(z)$ can be computed in time $\tilde{O}(n\cdot\log(1/\varepsilon))$.
\item
	${\tt GeoSample}_{n,\varepsilon}(c,u)$ is a non-decreasing function in $u$.
\end{enumerate}
\end{theorem}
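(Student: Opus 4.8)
The plan is to construct ${\tt GeoSample}_{n,\varepsilon}$ as a finite-precision implementation of inverse transform sampling for the clamped geometric distribution, using Lemma~\vpref{lem:sampling discrete}. First I would fix the scale parameter: rather than $s = 2/\varepsilon$, I use $\tilde s = 2/\teps$ where $\teps = 2\ln(1+2^{-\lceil\log(2/\varepsilon)\rceil})$ is chosen precisely so that $e^{\teps/2} = 1 + 2^{-k}$ for the integer $k = \lceil\log(2/\varepsilon)\rceil$; this makes $e^{\teps/2}$ — and hence the ``ratio'' $r = (e^{\teps/2}-1)/(e^{\teps/2}+1) = 1/(2^k+1)$ appearing in the geometric pmf — a rational number with a small denominator. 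The bounds $\teps \in (4/9\cdot\varepsilon, \varepsilon]$ follow from elementary estimates on $\ln(1+x)$ for $x \in (0,1/2]$ (namely $x/(something) \le \ln(1+x) \le x$, combined with $2^{-k} \in [\varepsilon/4, \varepsilon/2)$). The target distribution ${\tt GeometricMechanism}_{n,\teps}(c,1)$ is $Z = c + \Geo(2/\teps)$ clamped to $[0,n]$, which is supported on $[n]$; its CDF values, evaluated from the closed form for $F_Z$ given above, are rational numbers of the shape (polynomial in $2^k$)$/(2^k+1)\cdot 2^{kn}$ or similar, so there is a common denominator $d$ with $\log d = O(n\cdot k) = O(n\log(1/\varepsilon))$.

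The key steps, in order: (1) Write down the CDF $F_{n,\teps,c}$ of the clamped variable explicitly on $[n]$ — it equals $0$ below $0$, jumps to $F_Z(0)$ at $z=0$, equals $F_Z(z)$ for $0 \le z < n$, and equals $1$ at $z=n$ — and verify that multiplying through by a suitable power of $(2^k+1)$ and $2^k$ clears all denominators, establishing the bound on $\log d$. (2) Define ${\tt GeoSample}_{n,\varepsilon}(c,U) = F^{-1}_{n,\teps,c}(U/d)$ via binary search over $\supp = [n]$: given the integer $U \in [d]_+$, return $\min\{z \in [n] : d\cdot F_{n,\teps,c}(z) \ge U\}$. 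By Lemma~\vpref{lem:sampling discrete} this is distributed exactly as ${\tt GeometricMechanism}_{n,\teps}(c,1)$, which gives property~(i), and the privacy/accuracy statements there are immediate consequences of Theorem~\vpref{thm:geometricmechanism} applied with $\teps$ in place of $\varepsilon$. (3) For the running time (property~(ii)): the binary search makes $O(\log n)$ iterations, each requiring a comparison of $d\cdot F_{n,\teps,c}(z)$ (an $O(nk)$-bit integer, computable by one modular exponentiation $\bigl(2^k\bigr)^{O(n)}$ and $O(1)$ further arithmetic operations on $O(nk)$-bit numbers, each costing $\tilde O(nk)$ time by the assumed fast-multiplication bound) against $U$; total $\tilde O(n\log(1/\varepsilon))$. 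Actually the exponentiation $(2^k)^j$ for $j \le n$ is just a bit-shift, so it is even cheaper — the dominant cost is a single $\tilde O(nk)$ multiplication/comparison. (4) Property~(iii) is the observation that $F_{{\tt GeoSample}_{n,\varepsilon}}(z) = d\cdot F_{n,\teps,0}(z)$ is literally the integer we compute in step~(3) with $c=0$, so it is computable in the same time bound. (5) Property~(iv), monotonicity in $u$: since $F_{n,\teps,c}$ is a nondecreasing step function, $\{z : d\cdot F_{n,\teps,c}(z) \ge U\}$ shrinks (weakly) as $U$ increases, so its minimum is nondecreasing in $U$ — i.e.\ ${\tt GeoSample}_{n,\varepsilon}(c,u)$ is nondecreasing in $u$ — which is just the standard fact that inverse transform sampling via $F^{-1}$ is monotone.

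I expect the main obstacle to be the careful bookkeeping in step~(1): getting the common denominator $d$ right so that \emph{all} of the CDF values (for every shift $c \in [n]$ and every $z \in [n]$, including the boundary clamping behavior) are simultaneously expressible as integers over $d$ with $\log d = O(n\log(1/\varepsilon))$, and verifying that the scaled CDF $d\cdot F_{n,\teps,c}(z)$ really does land in $[d]_+$ as the codomain $[d]_+$ of the algorithm demands. The closed form for $F_Z$ has the two cases $z \le c$ and $z > c$, and after substituting $e^{1/\tilde s} = e^{\teps/2} = (2^k+1)/2^k$ one gets expressions like $\frac{(2^k+1)^{a}}{2^{k b}(\,\cdot\,)}$; one must check the exponents never exceed $O(n)$ in absolute value and that clearing denominators is consistent across the two branches. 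Everything else — the $\ln(1+x)$ estimates for the $\teps$ bounds, the binary-search timing, monotonicity — is routine once this normalization is pinned down. A secondary subtlety worth flagging is that the definition requires the \emph{same} denominator $d$ to work as the second-coordinate modulus regardless of $c$; this is fine because $d$ depends only on $n$ and $k$ (equivalently $n$ and $\varepsilon$), not on $c$, but it should be stated explicitly.
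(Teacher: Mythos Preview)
Your proposal is correct and follows essentially the same approach as the paper: the paper defines ${\tt GeoSample}$ exactly as you describe (explicit $k=\lceil\log(2/\varepsilon)\rceil$, common denominator $d=(2^{k+1}+1)(2^k+1)^{n-1}$, explicit scaled CDF $F(z)$, binary search), proves via a separate lemma that $F(z)/d$ is the CDF of the clamped geometric, and derives (ii)--(iv) from the bit-length bound and the binary-search structure just as you outline. The ``main obstacle'' you flag---verifying that a single $d$ of size $O(n\log(1/\varepsilon))$ bits clears all denominators across both branches of the CDF and all $c$---is precisely the content of that lemma, and your sketch of how to do it is accurate (one small note: you will need $(2^k+1)^j$ as well as $(2^k)^j$, so repeated squaring rather than a pure bit-shift is what gives the $\tilde O(nk)$ cost per evaluation).
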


We have chosen $\teps$ so that the cumulative distribution function of a two-sided geometric random variable with scale parameter $2/\teps$ clamped to $[0,n]$ takes on only rational values with a common denominator $d$.
Therefore, to implement inverse transform sampling on this distribution we only need to choose a uniformly random integer from $[d]_+$ rather than a uniformly random variable over $(0,1]$ which allows us to provide a strict bound on the running time.

\begin{figure}[H]
\vspace{-.1in}
\rule{\textwidth}{.5pt}
\begin{algorithm}\label{alg:count geo}
${\tt GeoSample}_{n,\varepsilon}(c, u)$
for $c \in [n]$ and
$u \in [d]_+$ where
$n \in \N_{+}$ and
$\varepsilon \in \N^{-1}$

\begin{enumerate}[leftmargin=.3in]
\item
	Let $k = \left\lceil \log(2/\varepsilon)\right\rceil$ and $d = (2^{{k} +1} + 1)(2^{k} + 1)^{n-1}$.
\item
	For $z \in \Z$, define the function
	\begin{align*}
	F(z) = \begin{cases}
	0 & \text{if $z < 0$}\\
	2^{{k} (c-z)} \left(2^{{k} }+1\right)^{n-(c-z)} & \text{if $z \in [0,c)$}\\
	d- 2^{{k} (z-c+1)}\left(2^{k} + 1\right)^{n-1-(z-c)} & \text{if $z \in [c,n)$}\\
	d&\text{if $z \ge n$}
	\end{cases}
	\end{align*}
\item
	Using binary search find the smallest $z \in [n]$ such that $F(z) \ge u$.
\item
	Return $z$.
\end{enumerate}
\end{algorithm}
\vspace{-.1in}
\rule{\textwidth}{.5pt}
\end{figure}

The function $F$ is obtained by clearing denominators in the cumulative distribution function of $c + \Geo(2/	\teps)$ clamped to $[0,n]$.

\begin{lemma}\label{lem:count geo distribution}
Let $\teps$, $c$, $k$, $d$ and $F$ be defined as in Theorem \vpref{thm:count geo} and Algorithm \vpref{alg:count geo}.
Then $F(z) \in [d]$ and $F(z)/d$ equals the cumulative distribution function of $c + \Geo(2/\teps)$ clamped to $[0,n]$.
\end{lemma}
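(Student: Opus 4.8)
The plan is to start from the closed-form cumulative distribution function of the unclamped two-sided geometric $c + \Geo(s)$ with $s = 2/\teps$ recorded earlier in the excerpt, work out what clamping to $[0,n]$ does to it, and then substitute the specific value of $\teps$ to clear denominators, checking that the resulting expression is exactly the piecewise function $F$ appearing in Algorithm~\ref{alg:count geo}. The only genuinely new content is the arithmetic of substituting $\teps$; everything else is bookkeeping.

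First I would record the key identity $e^{1/s} = e^{\teps/2} = 1 + 2^{-k} = (2^k+1)/2^k$, which is the whole point of the choice of $\teps$. Writing $a = 2^k$ for brevity, this gives $e^{1/s} = (a+1)/a$, $e^{-1/s} = a/(a+1)$, $e^{1/s}+1 = (2a+1)/a$, and $d = (2^{k+1}+1)(2^k+1)^{n-1} = (2a+1)(a+1)^{n-1}$.

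Next I would compute the CDF of the clamped variable $\tc = \mathrm{clamp}(c+\Geo(s),0,n)$. For $z<0$ it is $0$ and for $z \ge n$ it is $1$, matching the first and last cases of $F$. For $0 \le z < n$ one observes that the event $\{\tc \le z\}$ equals $\{c+\Geo(s) \le z\}$: clamping from below to $0$ does not change it because $0 \le z$, and clamping from above to $n$ does not change it because $z < n$; hence the clamped CDF agrees with $F_Z(z)$ on this range. Plugging $z \in [0,c)$ into the first branch of the stated formula for $F_Z$ and multiplying through by $d$ yields $a^{c-z}(a+1)^{n-(c-z)}$, and plugging $z \in [c,n)$ into the second branch and multiplying by $d$ yields $d - a^{z-c+1}(a+1)^{n-1-(z-c)}$ — precisely the two middle cases of $F$ (with $a = 2^k$). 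Here one also notes that the two branches of $F_Z$ agree at $z=c$, so it is harmless that the algorithm assigns $z=c$ to the second case. This establishes $F(z)/d = F_{\tc}(z)$ for all $z \in \Z$.

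Finally I would verify $F(z) \in [d]$: since $F(z) = d \cdot F_{\tc}(z)$ and $F_{\tc}(z)$ is a probability, $F(z) \in [0,d]$; integrality holds because on $[0,c)$ the exponents $c-z \ge 1$ and $n-(c-z) \ge 0$ are nonnegative integers (using $0 \le z < c \le n$), on $[c,n)$ the exponents $z-c+1 \ge 1$ and $n-1-(z-c) \ge 0$ are nonnegative integers (using $c \le z < n$), and $F(z) \in \{0,d\}$ in the two extreme cases. I do not expect any real obstacle: the main thing to be careful about is confirming that the exponents in $F$ are genuinely nonnegative over the indicated ranges (so $F$ is integer-valued) and that the two pieces of the geometric CDF coincide at the seam $z=c$.
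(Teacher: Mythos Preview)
Your proposal is correct and follows essentially the same approach as the paper: both start from the closed-form CDF of $c+\Geo(2/\teps)$, substitute $e^{\teps/2}=(2^k+1)/2^k$, and verify the algebra case by case, with the paper carrying out the $z\in[0,c)$ case explicitly and leaving $z\in[c,n)$ as ``similar.'' Your write-up is in fact slightly more careful than the paper's, since you explicitly justify why clamping leaves the CDF unchanged on $0\le z<n$ and verify the integrality of $F(z)$ via the nonnegativity of the exponents, whereas the paper leaves the claim $F(z)\in[d]$ implicit.
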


We prove this lemma after seeing how it implies Theorem \vpref{thm:count geo}.

\begin{proof}[Proof of Theorem \vpref{thm:count geo} Part {\it i}]
By construction, for all $z \in [n]$
\begin{align*}
\Pr[{\tt GeoSample}_{n,\varepsilon}(c, U) \le z] = \Pr[U \le F(z)] = F(z) / d
\end{align*}
implying ${\tt GeoSample}_{n,\varepsilon}(c, U) \sim {\tt GeometricMechanism}_{n,\teps}(c,1)$ by Lemma \vpref{lem:count geo distribution}.
\end{proof}

\begin{proof}[Proof of Theorem \vpref{thm:count geo} Part {\it ii}-{\it iv}]
Notice that integers used do not exceed $d$ whose bit length is $O(n \cdot \log(1/\varepsilon))$.
Thus, $F(z)$ can be computed in $\tilde{O}(n \cdot \log(1/\varepsilon))$ time using exponentiation by repeated squaring.
By construction, $F(z) = F_{{\tt GeoSample}_{n,\varepsilon}}(z)$, when $c=0$ implying Part {\it iii}.

Notice that the binary search of ${\tt GeoSample}_{n,\varepsilon}$ has at most $O(\log n)$ rounds each with an evaluation of $F$.
Thus, ${\tt GeoSample}_{n,\varepsilon}$ has the desired running time and by construction ${\tt GeoSample}_{n,\varepsilon}(c,u)$ is a non-decreasing function in $u$.
\end{proof}

\begin{proof}[Proof of Lemma \vpref{lem:count geo distribution}]
The cumulative distribution function of $Z \sim c + \Geo(2/\teps)$ is
\begin{align*}
F_Z(z)
&= \begin{cases}
0 &\text{if $z < 0$}\\
\frac{e^{\teps/2}}{e^{\teps/2} + 1}\cdot e^{-(c-z)\cdot\teps/2} & \text{if $z \in [0,c)$}\\
1 - \frac{1}{e^{\teps/2} + 1} \cdot e^{-(z-c)\cdot\teps/2} & \text{if $z \in [c, n)$}\\
1&\text{if $z \ge n$}\end{cases}
\end{align*}
Consider the case when $z \in [0,c)$.
\begin{align*}
F_Z(z)
= \frac{e^{\teps/2}}{e^{\teps/2} + 1}\cdot e^{-(c-z)\teps/2}
&= \frac{1 + 2^{-k}}{2 + 2^{-k}}\cdot \left(1 + 2^{-k}\right)^{-(c-z)}\\
&=\frac{2^{k} + 1}{2^{k + 1} + 1}\cdot \left(\frac{2^{k}}{2^{k} + 1}\right)^{c-z}\\
&=\frac{2^{k (c-z)}}{\left(2^{k + 1} + 1\right)\left(2^{k} + 1\right)^{c-z-1}}\cdot\left(\frac{2^k + 1}{2^k+1}\right)^{n-(c-z)}\\
&=\frac{{2^{k (c-z)}\left(2^{k} + 1\right)^{n + z - c}}}{d}\\
&= \frac{F(z)}{d}
\end{align*}
A similar argument holds for $z \in [c,n)$.
The remaining cases are trivial.
So $F_Z(z) = F(z) / d$ for all $z \in \Z$.
\end{proof}

\subsection{Approximating Geometric Noise to Release Counting Queries Faster}

Notice that ${\tt GeoSample}_{n,\varepsilon}$ has running time at least linear in $n$.
This is due to evaluating a (scaled) cumulative distribution function operating on integers with bit length $\Omega(n)$.
We can improve the running time by approximately sampling from a two-sided geometric distribution.
Small tail probabilities are dropped to reduce the number of required bits to represent probabilities to logarithmic in $n$.
And then to recover pure differential privacy, following Lemma \vpref{lem:mixture}, we mix with uniformly random output.

\begin{theorem}\label{thm:count fast}
Let $n \in \N_+$, $\varepsilon, \gamma \in \N^{-1}$ and $\teps = 2\cdot \ln\left(1 + 2^{-\left\lceil \log(2/\varepsilon)\right\rceil}\right) \in (4/9 \cdot \varepsilon, \varepsilon]$.
Then there is a deterministic algorithm ${\tt FastSample}_{n,\varepsilon,\gamma}: [n] \times [d]_+ \rightarrow [n]$ where $\log d = \tilde{O}(1/\varepsilon) \cdot \log(n/\gamma)$ with the following properties:

\begin{enumerate}[label=\roman*.]
\item
	${\tt FastSample}_{n,\varepsilon,\gamma}$ is $(\varepsilon/2, 0)$-differentially private for counting queries.
\item
	For every $\beta > \gamma$, ${\tt FastSample}_{n,\varepsilon,\gamma}$ has $(a, \beta)$-accuracy for
	\begin{align*}
	a = \left\lceil\frac{2}{\teps}\ln\frac{1}{\beta - \gamma}\right\rceil
	\end{align*}
\item
	${\tt FastSample}_{n,\varepsilon,\gamma}$ has running time
	\begin{align*}
	\tilde{O}\left(\frac{1}{\varepsilon}\cdot\log^2 n + \frac{1}{\varepsilon}\cdot\log n\cdot\log \frac{1}{\gamma}\right)
	\end{align*}
\item
	For all $z \in [n]$, $F_{{\tt FastSample}_{n,\varepsilon,\gamma}}(z)$ can be computed in time
	\begin{align*}
	\tilde{O}\left(\frac{1}{\varepsilon}\cdot\log \frac{n}{\gamma}\right)
	\end{align*}
\item
	${\tt FastSample}_{n,\varepsilon,\gamma}(c, u)$ is a non-decreasing function in $u$.
\end{enumerate}
\end{theorem}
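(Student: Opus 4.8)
The plan is to build {\tt FastSample} by first constructing an approximately-sampled version of the clamped geometric mechanism --- call it {\tt ApproxGeoSample} --- that only needs probabilities of bit length $\tilde{O}(1/\varepsilon)\cdot\log(n/\gamma)$, and then to apply the mixing-with-uniform trick of Lemma \ref{lem:mixture} to restore pure differential privacy. For the first step, I would start from the two-sided geometric $Z \sim c + \Geo(2/\teps)$ clamped to $[0,n]$, exactly as in {\tt GeoSample}, but replace the exact CDF $F(z)/d$ by a truncated/rounded version: drop the tails of the geometric beyond distance $\ell = \Theta((1/\teps)\log(n/\gamma'))$ from the center (for a suitable $\gamma' = \Theta(\gamma)$) and round the surviving probabilities to a common denominator $d'$ with $\log d' = \tilde O(1/\varepsilon)\cdot\log(n/\gamma)$. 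Because the geometric has exponential tails, truncating at distance $\ell$ costs total variation at most $\gamma'$; rounding to denominator $d'$ with enough bits costs another $O(n/d')$ which I can also push below $\gamma'$. So $\Delta({\tt ApproxGeoSample}_{n,\varepsilon,\gamma}(c,\cdot), {\tt GeometricMechanism}_{n,\teps}(c,1)) \le \delta$ for $\delta = \Theta(\gamma)$, and by construction the scaled CDF is a step function on $[n]$ that is computable in time $\tilde O((1/\varepsilon)\log(n/\gamma))$ by evaluating a closed-form geometric expression (exponentiation by repeated squaring on $\tilde O(1/\varepsilon)\cdot\log(n/\gamma)$-bit integers). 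Inverse transform sampling is then a binary search over $[n]$, i.e. $O(\log n)$ CDF evaluations, giving running time $\tilde O\big((1/\varepsilon)\log^2 n + (1/\varepsilon)\log n \log(1/\gamma)\big)$; monotonicity in $u$ is immediate from inverse transform sampling. This handles parts (ii)--(v) for {\tt ApproxGeoSample}, except that it is only $(\teps/2, \delta)$-differentially private for counting queries (via Lemma \ref{lem:count privacy} and post-processing, since the approximation is within $\delta$ of a pure-DP mechanism).

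For part (i) I would invoke Lemma \ref{lem:mixture} with $\cM = {\tt GeometricMechanism}_{n,\teps}$, $\cM' = {\tt ApproxGeoSample}_{n,\varepsilon,\gamma}$, $\cR = [n]$, $\cD = \unif([n])$, and the mixing parameter set to the $\gamma$ of the theorem statement. Since $\min_{r\in[n]}\Pr_{Z\sim\cD}[Z=r] = 1/(n+1)$, inequality (\ref{eq:mix inequality}) is satisfied as long as the statistical distance $\delta$ from the approximate sampler is at most $\Theta\big((e^\teps-1)\cdot \gamma/(n+1)\big) = \Theta(\teps\gamma/n) = \tilde\Theta(\varepsilon\gamma/n)$; so in the tail-truncation/rounding step I should target $\delta = \Theta(\varepsilon\gamma/n)$ rather than $\Theta(\gamma)$, which only changes the truncation distance and the denominator bit length by $O(\log(n/\varepsilon))$ additive factors --- still $\tilde O(1/\varepsilon)\cdot\log(n/\gamma)$ as claimed (absorbing $\log n$ and $\log(1/\varepsilon)$ into the tilde). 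The resulting mechanism is $(\teps/2, 0)$-differentially private for counting queries; since $\teps \le \varepsilon$, it is in particular $(\varepsilon/2,0)$-DP, giving (i). The mixing step costs only $O(\log(1/\gamma))$ extra time and one sample from $\unif([n])$, so the running-time bound (iii) is unchanged; and $F_{{\tt FastSample}}(z) = (1-\gamma)F_{{\tt ApproxGeoSample}}(z) + \gamma\cdot (z+1)/(n+1)$ is computable in time $\tilde O((1/\varepsilon)\log(n/\gamma))$, giving (iv). Monotonicity (v) is preserved because both summands are non-decreasing in $u$ under a suitable coupling of the two branches' randomness (or one simply notes that inverse transform sampling of the mixture CDF is monotone).

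For the accuracy bound (ii): {\tt FastSample} agrees with {\tt ApproxGeoSample} except with probability $\gamma$ (the mixing branch), and {\tt ApproxGeoSample} agrees with the true clamped $\Geo(2/\teps)$ mechanism except with probability at most $\delta \le \gamma$ in statistical distance. So for any $c \in [n]$, $\Pr[|{\tt FastSample}(c,U) - c| \le a] \ge \Pr[|{\tt GeometricMechanism}_{n,\teps}(c,1) - c| \le a] - \gamma - \delta \ge \Pr[|Z| \le a] - 2\gamma$ where $Z \sim \Geo(2/\teps)$; the geometric tail bound from Theorem \ref{thm:count geo}(i) (or Theorem \ref{thm:geometricmechanism}(ii)) gives $\Pr[|Z|\le a] \ge 1 - (\beta - \gamma)$ for $a = \lceil (2/\teps)\ln(1/(\beta-\gamma))\rceil$, so the whole thing is $\ge 1 - \beta$ provided $\beta > \gamma$ and I am slightly careful with the constant (folding the $2\gamma$ slack into the choice of $\delta = \Theta(\varepsilon\gamma/n) \ll \gamma$ so that the total loss is at most $\gamma + \delta < \beta - (\beta-\gamma)$ with room to spare; alternatively one states the bound with $\beta$ replaced by $\beta/2$ in a few places, as such proofs routinely do).

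The main obstacle I expect is bookkeeping the denominator: I must make the rounded CDF land on a \emph{single} common denominator $d'$ (so that Lemma \ref{lem:sampling discrete} applies and the running time is strict), while simultaneously keeping $\log d' = \tilde O(1/\varepsilon)\cdot\log(n/\gamma)$ and keeping the rounding error in statistical distance below the stringent $\Theta(\varepsilon\gamma/n)$ threshold that Lemma \ref{lem:mixture} demands. The natural fix is to take $d'$ to be a power of two (or a small multiple of the {\tt GeoSample} denominator $d$ truncated to $\ell$ factors) and round each CDF value down to the nearest multiple of $1/d'$, then verify the cumulative rounding error telescopes to $O(n/d')$ across the $\le n+1$ steps --- choosing $\log d' = \Theta\big((1/\teps)\log(n/\gamma) + \log(n/(\varepsilon\gamma))\big)$ makes this work. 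Everything else is routine tail estimation and the arithmetic-complexity accounting already set up in the proof of Theorem \ref{thm:count geo}.
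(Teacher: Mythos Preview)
Your high-level plan is the same as the paper's: truncate the geometric tails at distance $t = \Theta((1/\teps)\log(n/(\varepsilon\gamma)))$, then mix with $\unif([n])$ at weight $\gamma$ and invoke Lemma~\ref{lem:mixture}. The paper, however, sidesteps the ``main obstacle'' you flag. Because $\teps$ is chosen so that $e^{\teps/2} = 1 + 2^{-k}$, the truncated-and-clamped geometric already has an \emph{exact} rational CDF with denominator $d' = (2^{k+1}+1)(2^k+1)^t$ (Lemma~\ref{lem:count fast distribution}); no separate rounding step is needed. The paper then writes down the mixture CDF in closed form as $F(z) = (z+1)d' + (1/\gamma - 1)(n+1)F'(z)$ over the single denominator $d = (n+1)d'/\gamma$, and {\tt FastSample} is simply inverse transform sampling on this $F$. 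This buys two things you have to work for: monotonicity in $u$ is automatic (your coin-flip implementation of the mixture does not give it directly, and you correctly fall back to ``invert the mixture CDF''), and the accuracy proof is the clean chain $(1-\gamma)\Pr[|Z'-c|\le a] \ge (1-\gamma)\Pr[|Z-c|\le a] \ge (1-\gamma)(1-(\beta-\gamma)) \ge 1-\beta$, with no extra $\delta$ slack to absorb.

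So your argument would go through, but the rounding-to-a-common-denominator step and the resulting $\gamma$-vs-$2\gamma$ bookkeeping in part~(ii) are artifacts of not exploiting the rationality that the choice of $\teps$ already guarantees.
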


\begin{figure}[H]
\rule{\textwidth}{.5pt}
\begin{algorithm}\label{alg:count fast}
${\tt FastSample}_{n,\varepsilon,\gamma}(c,u)$
for $c \in [n]$ and $u \in [d]_+$ where $n \in \N_+$ and $\varepsilon,\gamma \in \N^{-1}$

\begin{enumerate}[leftmargin=.3in]
\item
	Let $k = \lceil\log(2/\varepsilon)\rceil$ and $t = \left\lceil\frac{9}{2\varepsilon}\cdot\left\lceil\log\left(\frac{8(n+1)(1-\gamma)}{\varepsilon\gamma}\right)\right\rceil\right\rceil - 1$.
\item
	Let $d' = (2^{k+1}+1)(2^k+1)^{t}$ and $d = (n+1) \cdot d' / \gamma$.
\item
	For $z \in [n]$, define the functions
	\begin{align*}
	F'(z)
	&= \begin{cases}
	0 & \text{if $z < \max\{0, c - t\}$}\\
	2^{k(c-z)}(2^k+1)^{t+1-(c-z)} - 2^{k(t+1)} & \text{if $z \in [\max\{0, c - t\},c)$}\\
	d' - 2^{k(z-c+1)}(2^k + 1)^{t-(z-c)} + 2^{k(t+1)} & \text{if $z \in [c, \min\{c+t, n\})$}\\
	d' & \text{if $z \ge \min\{c+t, n\}$}
	\end{cases}
	\end{align*}
	and
	\begin{align*}
	F(z)
	&= (z+1) \cdot d' + (1/\gamma-1) \cdot (n+1) \cdot F'(z)
	\end{align*}
\item
	Using binary search find the smallest $z \in [n]$ such that $F(z) \ge u$.
\item
	Release $z$.
\end{enumerate}
\end{algorithm}
\vspace{-.1in}
\rule{\textwidth}{.5pt}
\end{figure}

As presented, it is clear how to compute the cumulative distribution function of this algorithm's output distribution, a necessary property for Section \vpref{sec:speed}.
And the algorithm, as a function of $u$, is non-decreasing, a property that will be used in Section \vpref{sec:compact}.
However, as stated, the interpretation of $F'$ and $F$ may not be clear. This is clarified by the following lemma and Figure \vpref{fig:count fast example}.

\begin{lemma}\label{lem:count fast distribution}
Let $k$, $t$, $d$, $d'$, $F'(z)$, $F(z)$ be defined as in Algorithm \vpref{alg:count fast}.
Then

\begin{enumerate}[label=\roman*.]
\item
	$F'(z) \in [d']$ and $F'(z)/d'$ is the cumulative distribution function of a random variable $Z'$ clamped to $[0,n]$ where $Z'$ has probability mass function
	\begin{align*}
	f_{Z'}(z)
	& =	\begin{cases}
	\Pr[Z = z] & \text{if $z \in [c-t, c+t]$ and $z \neq c$}\\
	\Pr[Z = c] + \Pr[|Z - c| > t] &\text{if $ z = c$}\\
	0 & \text{if $z \notin [c-t,c+t]$}
	\end{cases}
\end{align*}
for all $z \in \Z$ where $Z \sim c + \Geo(2/\teps)$.
\item
	$F(z) \in [d]$ and $F(z)/d$ is the cumulative distribution function of the random variable that with probability $1-\gamma$ is distributed as $Z'$ (defined in Part {\it i}) clamped to $[0,n]$ and with probability $\gamma$ is uniform over $[n]$.
\end{enumerate}
\end{lemma}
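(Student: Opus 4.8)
The plan is to read both parts directly off the closed-form cumulative distribution function of $Z\sim c+\Geo(2/\teps)$ recorded in Section~\vpref{sec:count}, specialized via the identity $e^{\teps/2}=1+2^{-k}$ (equivalently $e^{-\teps/2}=2^k/(2^k+1)$) with $k=\lceil\log(2/\eps)\rceil$ as in Theorem~\vpref{thm:count geo}. These substitutions turn every exponential into a ratio of powers of $2^k$ and $2^k+1$, which is exactly what manufactures the integer-valued $F'$ and $F$ of Algorithm~\vpref{alg:count fast}; the clamping and truncation just determine which of the two CDF branches (and which constant term) is in force.

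For Part~\emph{i}, I first record how passing from $Z$ to $Z'$ affects the CDF: relocating the two tails $\{|Z-c|>t\}$ onto the single atom $c$ leaves $\sum_{w\le z}\Pr[Z'=w]$ unchanged for $z<c$ and adds the full relocated mass at and above $c$, so that $F_{Z'}(z)=F_Z(z)-F_Z(c-t-1)$ for $\max\{0,c-t\}\le z<c$, $F_{Z'}(z)=1-(F_Z(c+t)-F_Z(z))$ for $c\le z<c+t$, while $F_{Z'}(z)=0$ for $z<c-t$ and $F_{Z'}(z)=1$ for $z\ge c+t$. Then I account for clamping to $[0,n]$: the clamped CDF equals $F_{Z'}(z)$ on $z\in[0,n)$, is $0$ for $z<0$, and $1$ for $z\ge n$; intersecting these cases produces precisely the four regimes $z<\max\{0,c-t\}$, $z\in[\max\{0,c-t\},c)$, $z\in[c,\min\{c+t,n\})$, $z\ge\min\{c+t,n\}$ of the definition of $F'$. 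In the two nontrivial regimes I substitute the specialized $F_Z$ and multiply through by $d'=(2^{k+1}+1)(2^k+1)^{t}$: the main term $e^{-(c-z)\teps/2}$ (resp.\ $e^{-(z-c)\teps/2}$) becomes $2^{k(c-z)}(2^k+1)^{t+1-(c-z)}$ (resp.\ $2^{k(z-c+1)}(2^k+1)^{t-(z-c)}$), and the tail subtraction $F_Z(c-t-1)$ (resp.\ $F_Z(c+t)$) contributes exactly the $-2^{k(t+1)}$ (resp.\ $+2^{k(t+1)}$) correction, so the result matches the stated $F'$ term by term. Finally, within each regime the exponents of $2^k$ and $2^k+1$ occurring in $F'$ are nonnegative integers (since $1\le c-z\le t$ on the left and $0\le z-c\le t-1$ on the right), whence $F'(z)\in\Z$; together with the fact that $F'(z)/d'$ is a probability, this gives $F'(z)\in[d']$.

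For Part~\emph{ii}, the CDF of the described mixture at $z\in[n]$ is $(1-\gamma)\cdot(F'(z)/d')+\gamma\cdot((z+1)/(n+1))$, using that $\unif([n])$ puts mass $1/(n+1)$ on each of $0,\dots,n$ and hence has CDF $(z+1)/(n+1)$. Multiplying by $d=(n+1)d'/\gamma$ turns the first summand into $(1/\gamma-1)(n+1)F'(z)$ and the second into $(z+1)d'$, i.e.\ exactly $F(z)$. Integrality follows since $1/\gamma-1\in\N$ (as $\gamma\in\N^{-1}$) and $F'(z)\in\Z$ by Part~\emph{i}, and $F(z)$ is $d$ times a probability, so $F(z)\in[d]$. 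This also supplies, for use in proving Theorem~\vpref{thm:count fast}, that binary search on $F$ implements inverse transform sampling for the mixture via Lemma~\vpref{lem:sampling discrete}.

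I expect the only real work to be bookkeeping rather than conceptual: keeping the regime boundaries consistent when the support of $Z'$ pokes out of $[0,n]$ (i.e.\ $c-t<0$ or $c+t>n$), making sure the partial sums $\sum_w\Pr[Z=w]$ run over exactly the surviving part of the support and not the relocated mass, and tracking the exponent ranges that guarantee integrality. Once the substitution $e^{\teps/2}=1+2^{-k}$ is plugged in, each algebraic simplification is a few lines and entirely mechanical, so the proof will largely mirror (with the extra $\mp 2^{k(t+1)}$ correction terms and the mixture step) the computation already done in the proof of Lemma~\vpref{lem:count geo distribution}.
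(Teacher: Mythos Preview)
Your proposal is correct and follows essentially the same approach as the paper's proof: derive the CDF of the clamped $Z'$ in the four regimes, then substitute $e^{\teps/2}=1+2^{-k}$ and clear the denominator $d'$ to recover $F'$ exactly, and for Part~\emph{ii} recognize $F(z)/d$ as the convex combination $\gamma\cdot(z+1)/(n+1)+(1-\gamma)\cdot F'(z)/d'$. Your write-up is slightly more explicit than the paper's (you spell out the integrality argument via nonnegative exponents, whereas the paper just notes $c-z<t+1$), but the structure and computations are the same.
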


We prove this lemma after using it to prove Theorem \vpref{thm:count fast}.

\begin{figure}[H]
\centering
\begin{tikzpicture}
\tikzstyle{point}=[inner sep=.3ex, circle, fill=white, draw=black,thick]
\begin{scope}[xscale=.4]
\draw (-0.5,0) -- (11,0);
\node[anchor=south] at (5,-1.35) {(a)};
\draw (0,.1) -- (0,-.1) node[anchor=north] {\footnotesize $0$};
\draw (8,.1) -- (8,-.1) node[anchor=north] {\footnotesize $c$};
\draw (10,.1) -- (10,-.1) node[anchor=north] {\footnotesize $n$};
\node[point] (f0) at (0,0.039) {};
\node[point] (f1) at (1,0.039) {};
\node[point] (f2) at (2,0.078) {};
\node[point] (f3) at (3,0.156) {};
\node[point] (f4) at (4,0.312) {};
\node[point] (f5) at (5,0.625) {};
\node[point] (f6) at (6,1.25) {};
\node[point] (f7) at (7,2.5) {};
\node[point] (f8) at (8,5.0) {};
\node[point] (f9) at (9,2.5) {};
\node[point] (f10) at (10,2.5) {};
\draw[dashed] (f0) -- (f1) -- (f2) -- (f3) -- (f4) -- (f5) -- (f6) -- (f7) -- (f8) -- (f9) -- (f10);
\end{scope}
\begin{scope}[xscale=.4,xshift=5.2in]
\draw (-0.5,0) -- (11,0);
\node[anchor=south] at (5,-1.35) {(b)};
\draw (0,.1) -- (0,-.1) node[anchor=north] {\footnotesize $0$};
\draw (4,.1) -- (4,-.1) node[anchor=north] {\footnotesize $c-t$};
\draw (8,.1) -- (8,-.1) node[anchor=north] {\footnotesize $c$};
\draw (10,.1) -- (10,-.1) node[anchor=north] {\footnotesize $n$};
\node[point] (f0) at (0,0.0) {};
\node[point] (f1) at (1,0.0) {};
\node[point] (f2) at (2,0.0) {};
\node[point] (f3) at (3,0.0) {};
\node[point] (f4) at (4,0.312) {};
\node[point] (f5) at (5,0.625) {};
\node[point] (f6) at (6,1.25) {};
\node[point] (f7) at (7,2.5) {};
\node[point] (f8) at (8,5.625) {};
\node[point] (f9) at (9,2.5) {};
\node[point] (f10) at (10,2.188) {};
\draw[dashed] (f0) -- (f1) -- (f2) -- (f3) -- (f4) -- (f5) -- (f6) -- (f7) -- (f8) -- (f9) -- (f10);
\end{scope}
\begin{scope}[xscale=.4,xshift=10.4in]
\draw (-0.5,0) -- (11,0);
\node[anchor=south] at (5,-1.35) {(c)};
\draw (0,.1) -- (0,-.1) node[anchor=north] {\footnotesize $0$};
\draw (4,.1) -- (4,-.1) node[anchor=north] {\footnotesize $c-t$};
\draw (8,.1) -- (8,-.1) node[anchor=north] {\footnotesize $c$};
\draw (10,.1) -- (10,-.1) node[anchor=north] {\footnotesize $n$};
\node[point] (f0) at (0,0.273) {};
\node[point] (f1) at (1,0.273) {};
\node[point] (f2) at (2,0.273) {};
\node[point] (f3) at (3,0.273) {};
\node[point] (f4) at (4,0.523) {};
\node[point] (f5) at (5,0.773) {};
\node[point] (f6) at (6,1.273) {};
\node[point] (f7) at (7,2.273) {};
\node[point] (f8) at (8,4.773) {};
\node[point] (f9) at (9,2.273) {};
\node[point] (f10) at (10,2.023) {};
\draw[dashed] (f0) -- (f1) -- (f2) -- (f3) -- (f4) -- (f5) -- (f6) -- (f7) -- (f8) -- (f9) -- (f10);
\end{scope}
\end{tikzpicture}
\caption{Example probability mass functions of (a) {\tt GeoSample}, (b) $Z'$ clamped to $[0,n]$ as defined in Lemma \vpref{lem:count fast distribution} and (c) {\tt FastSample}. Notice that in this example $c+t > n$.}
\label{fig:count fast example}
\end{figure}

\begin{proof}[Proof of Theorem \vpref{thm:count fast} Part {\it i}]
By construction, for all $z \in [n]$
\begin{align*}
\Pr[{\tt FastSample}_{n,\varepsilon,\gamma}(c, U) \le z]
&= \Pr[U \le F(z)]
\end{align*}
where $U \sim \unif([d]_+)$. So ${\tt FastSample}_{n,\varepsilon,\gamma}(c, U)$ has cumulative distribution function $F(z)/d$. Then by Lemma \vpref{lem:count fast distribution}, ${\tt FastSample}_{n,\varepsilon,\gamma}(c, U)$ is a mixture that with probability $1-\gamma$ is distributed as $Z'$ clamped to $[0,n]$ as defined in Lemma \vpref{lem:count fast distribution} and otherwise is distributed as $\unif([n])$.

Since ${\tt GeoSample}_{n,\varepsilon}$ is $(\varepsilon/2,0)$-differentially private for counting queries (Theorem \vpref{thm:count geo}), we can use Lemma \vpref{lem:mixture} to prove that ${\tt FastSample}_{n,\varepsilon,\gamma}$ is also $(\varepsilon/2,0)$-differentially private for counting queries provided we can show the statistical distance between $Z'$ clamped to $[0,n]$ and ${\tt GeoSample}_{n,\varepsilon}$ is small enough.
\begin{align*}
\Delta(Z'\text{ clamped to }[0,n], {\tt GeoSample}_{n,\varepsilon}(c, U))
&\le \Delta(Z', Z)\\
&= \Pr[|Z-c| > t]\\
&= 2\cdot F_Z(c-t-1)\\
&= 2\cdot\frac{e^{\teps/2}}{e^{\teps/2} + 1}\cdot e^{-\teps \cdot (t+1)/2}\\
&\le \frac{4}{3} \cdot e^{-\frac{\teps}{2} \cdot \frac{9}{2\varepsilon} \cdot \ln\left(\frac{8(n+1)(1-\gamma)}{\varepsilon\gamma}\right)}\\
&\le \frac{\varepsilon}{6} \cdot \frac{\gamma}{1-\gamma}\cdot \frac{1}{n+1}\\
&\le \frac{e^{\varepsilon/2} - 1}{e^{\varepsilon/2} + 1} \cdot \frac{\gamma}{1-\gamma}\cdot \frac{1}{n+1}
\end{align*}
where $Z \sim \Geo(2/\teps)$.
\end{proof}

\begin{proof}[Proof of Theorem \vpref{thm:count fast} Part {\it ii}]
Let $U \sim \unif([d]_+)$. Let $Z'$ be defined as in Lemma \vpref{lem:count fast distribution} and $Z \sim c + \Geo(2/\teps)$.
Then
\begin{align*}
\Pr[|{\tt FastSample}_{n,\varepsilon,\gamma}(c, U) - c| \le a]
&\ge (1 - \gamma) \cdot \Pr[|Z' - c| \le a]\\
&\ge (1 - \gamma) \cdot \Pr[|Z - c| \le a]\\
&\ge (1 - \gamma) \cdot (1 - (\beta - \gamma))\\
&\ge 1 - \beta\qedhere
\end{align*}
\end{proof}

\begin{proof}[Proof of Theorem \vpref{thm:count fast} Part {\it iii}-{\it v}]
$d'$ can be computed in $\tilde{O}(t \cdot \log(1/\varepsilon))$ time using exponentiation by repeated squaring. Notice that integers used in computing $F'$ and $F$ do not exceed $d$ whose bit length is $O\left(t\cdot\log(1/\varepsilon) + \log(n/\gamma)\right)$.
Thus, $F(z)$ can be computed in time
\begin{align*}
\tilde{O}\left(t\cdot \log\frac{1}{\varepsilon} + \log \frac{n}{\gamma}\right)
\end{align*}
Part {\it iv} follows after observing $t = \tilde{O}(1/\varepsilon) \cdot \log(n/\gamma)$.

Now, the binary search of ${\tt FastSample}_{n,\varepsilon,\gamma}$ has at most $O(\log n)$ rounds each with an evaluation of $F(z)$ for some $z \in [n]$. Thus, we obtain the desired running time and by construction ${\tt FastSample}_{n,\varepsilon,\gamma}(c,u)$ is a non-decreasing function in $u$.
\end{proof}

\begin{proof}[Proof of Lemma \vpref{lem:count fast distribution} Part {\it i}]
Let $Z''$ be defined as $Z'$ clamped to $[0,n]$.
The cumulative distribution function of $Z''$ is
\begin{align*}
F_{Z''}(z)
&= \begin{cases}
0 & \text{if $z < \max\{0, c - t\}$}\\
F_Z(z) - F_Z(c-t-1) & \text{if $z \in [\max\{0, c - t\},c)$}\\
F_Z(z) + (1-F_Z(c+t)) & \text{if $z \in [c, \min\{c+t, n\})$}\\
1 & \text{if $z \ge \min\{c+t, n\}$}
\end{cases}
\end{align*}
where $Z \sim c + \Geo(2/\teps)$. Consider the case $z \in [\max\{0, c - t\},c)$.
\begin{align*}
F_{Z''}(z)
&= F_Z(z) - F_Z(c-t-1)\\
&= \left(\frac{2^k+1}{2^{k+1}+1}\right)\left(\frac{2^k}{2^k+1}\right)^{c-z} - \left(\frac{2^k+1}{2^{k+1}+1}\right)\left(\frac{2^k}{2^k+1}\right)^{t+1}\\
&= \frac{1}{d'}\cdot\left(2^{k(c-z)}(2^k+1)^{t+1-(c-z)} - 2^{k(t+1)}\right)\\
&= \frac{F'(z)}{d'}
\end{align*}
and $F'(z) \in [d']$ since $c-z < t+1$. A similar argument holds for $z \in [c, \min\{c+t, n\})$. The remaining cases are trivial. So $F_{Z''}(z) = F'(z)/d'$ for all $z \in \Z$.
\end{proof}
\begin{proof}[Proof of Lemma \vpref{lem:count fast distribution} Part {\it ii}]
Following from Part {\it i}, notice that for $z \in [n]$
\begin{align*}
\frac{F(z)}{d}
&= \gamma\cdot\frac{z+1}{n+1} + (1-\gamma)\cdot \frac{F'(z)}{d'}
= \gamma\cdot F_{U_{[n]}}(z) + (1-\gamma)\cdot \frac{F'(z)}{d'}
\end{align*}
where $U_{[n]} \sim \unif([n])$ which implies the desired result.
\end{proof}

\section{Generalizations of Known Histogram Algorithms}\label{sec:general}

In this section we show how to construct differentially private histograms within our finite model of computation given a private algorithm for releasing a single counting query.

\subsection{The Laplace Mechanism}\label{ssec:general laplace}

As shown by Dwork, McSherry, Nissim and Smith \cite{DwMcNiSm06}, we can privately release a histogram by adding independent and appropriately scaled Laplace noise to each bin.
Below we state a generalization guaranteeing privacy provided the counting query algorithm used is private and the released counts are independent.

\begin{figure}[H]
\rule{\textwidth}{.5pt}
\begin{algorithm}\label{alg:histogram basic}
${\tt BasicHistogram}_{\cM,A}(D)$
for $D \in \cX^n$ where
$\cM: [n] \times [d]_+ \rightarrow [n]$ and $A \subseteq \cX$

\begin{enumerate}[leftmargin=.3in]
\item
	Compute $c_x(D)$ for all $x \in A$.
\item
	For each $x \in A$, do the following:
	\begin{enumerate}[leftmargin=.3in]
	\item
		Sample $u_x$ uniformly at random from $[d]_+$.
	\item
		Let $\tc_x = \cM(c_x(D), u_x)$.
	\item
		Release $(x, \tc_x)$.
	\end{enumerate}
\end{enumerate}
\end{algorithm}
\vspace{-.1in}
\rule{\textwidth}{.5pt}
\end{figure}

Note that the
output of this algorithm is a collection of bins $(x, \tc_x)$ representing a partial vector.

\begin{theorem}\label{thm:histogram basic}
Let $\cM: [n] \times [d]_+ \rightarrow [n]$ be $(\varepsilon/2, 0)$-differentially private for counting queries and have $(a, \beta)$-accuracy. And let $A \subseteq \cX$.
Then ${\tt BasicHistogram}_{\cM, A}: \cX^n \rightarrow \N^{A}$ has the following properties:

\begin{enumerate}[label=\roman*.]
\item
	${\tt BasicHistogram}_{\cM,A}$ is $(\varepsilon,0)$-differentially private.
\item
	For all $D \in \cX^n$, we have
	\begin{align*}
	\forall x \in A ~~ \Pr[|({\tt BasicHistogram}_{\cM,A}(D))_x - c_x(D)| \le a]
	\ge 1 - \beta
	\end{align*}
	In particular, ${\tt BasicHistogram}_{\cM,\cX}(D)$ has $(a, \beta)$-per-query accuracy.
\item
	For all $D \in \cX^n$, we have
	\begin{align*}
	\Pr[\forall x \in A ~~ |({\tt BasicHistogram}_{\cM,A}(D))_x - c_x(D)| \le a]
	\ge 1 - \beta'
	\end{align*}
	where $\beta' = 1-(1-\beta)^{|A|} \le \beta \cdot |A|$.
	In particular, ${\tt BasicHistogram}_{\cM,\cX}$ has
	$(a, \beta')$-simultaneous accuracy where $\beta' = 1-(1-\beta)^{|\cX|} \le \beta \cdot |\cX|$.
\item
	${\tt BasicHistogram}_{\cM,A}$ has running time
	\begin{align*}
	O(n\log n \cdot \log |\cX|) + |A| \cdot O(\log n \cdot \log |\cX| + \log d + \ctime(\cM))
	\end{align*}
\end{enumerate}
\end{theorem}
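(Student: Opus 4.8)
The plan is to treat the four parts essentially independently; Part~i is the only one that needs an idea, and the others are direct consequences of the structure of Algorithm~\ref{alg:histogram basic} together with the hypotheses on~$\cM$.

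For Part~i (privacy), I would first reduce to a pointwise likelihood bound: since the range is discrete, $(\varepsilon,0)$-differential privacy of ${\tt BasicHistogram}_{\cM,A}$ follows once we show $\Pr[{\tt BasicHistogram}_{\cM,A}(D)=v]\le e^{\varepsilon}\cdot\Pr[{\tt BasicHistogram}_{\cM,A}(D')=v]$ for every pair of neighbors $D,D'$ and every $v\in\N^{A}$, and then sum over $v\in S$ (the same manipulation as in the proof of Lemma~\ref{lem:mixture}). Because the coins $u_x$ are sampled independently across $x$, the output distribution factors as $\Pr[{\tt BasicHistogram}_{\cM,A}(D)=v]=\prod_{x\in A}\Pr[\cM(c_x(D),U_x)=v_x]$ with $U_x\sim\unif([d]_+)$. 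The crucial point is that neighboring datasets differ in a single row, say $D_i=a$ and $D'_i=b$, so $c_x(D)=c_x(D')$ for every $x\notin\{a,b\}$ and hence all but at most two factors agree exactly between $D$ and $D'$. For the (at most two) indices $x\in\{a,b\}\cap A$, Lemma~\ref{lem:count privacy} with $\delta=0$ --- using that $\cM$ is $(\varepsilon/2,0)$-differentially private for counting queries --- bounds each factor by $e^{\varepsilon/2}$ times its $D'$-counterpart. Multiplying the at most two nontrivial ratios yields the overall factor $e^{\varepsilon/2}\cdot e^{\varepsilon/2}=e^{\varepsilon}$; keeping the per-bin bounds as inequalities rather than quotients also covers the zero-probability cases. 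I expect this localization --- realizing that only the two ``affected'' bins contribute, so that a naive composition over all $|A|$ bins (which would give the far weaker $e^{|A|\varepsilon/2}$) is avoided --- to be the heart of the argument.

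For Part~ii (per-query accuracy), fix $D\in\cX^n$ and $x\in A$. By construction $({\tt BasicHistogram}_{\cM,A}(D))_x=\cM(c_x(D),u_x)$ with $u_x\sim\unif([d]_+)$, so the desired inequality is exactly the $(a,\beta)$-accuracy of $\cM$ (Definition~\ref{def:count accuracy}) instantiated at $c=c_x(D)$; taking $A=\cX$ gives $(a,\beta)$-per-query accuracy by definition. For Part~iii (simultaneous accuracy), the events $E_x=\{\,|({\tt BasicHistogram}_{\cM,A}(D))_x-c_x(D)|\le a\,\}$ for distinct $x\in A$ depend on disjoint blocks of coins $u_x$, hence are mutually independent with $\Pr[E_x]\ge 1-\beta$; therefore $\Pr[\bigcap_{x\in A}E_x]\ge(1-\beta)^{|A|}=1-\beta'$, and $\beta'=1-(1-\beta)^{|A|}\le\beta\cdot|A|$ by Bernoulli's inequality (equivalently a union bound on the complements). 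Again $A=\cX$ specializes this to $(a,\beta')$-simultaneous accuracy.

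For Part~iv (running time), I would bound Steps~1 and~2 separately in the word-RAM model of Section~\ref{ssec:model}. Step~1 streams once through the $n$ rows of $D$, maintaining the (at most $n$) nonzero counts in a red-black tree keyed by $\cX$; each update costs $O(\log n\cdot\log|\cX|)$, so all counts are available in $O(n\log n\cdot\log|\cX|)$ time, with a subsequent lookup returning $0$ for any $x\in A$ absent from the tree. In Step~2, each of the $|A|$ elements $x$ costs an $O(\log n\cdot\log|\cX|)$ lookup of $c_x(D)$, an oracle draw of $u_x$ and writing it down in $O(\log d)$, an evaluation $\tc_x=\cM(c_x(D),u_x)$ in time $\ctime(\cM)$, and an $O(\log n\cdot\log|\cX|)$ insertion of $(x,\tc_x)$ into the output tree; a final in-order traversal releases the bins in the predetermined sorted order, avoiding the ordering leak. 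Summing gives $O(n\log n\cdot\log|\cX|)+|A|\cdot O(\log n\cdot\log|\cX|+\log d+\ctime(\cM))$, as claimed. Parts~ii--iv are routine; essentially all the care goes into the two-bin localization of Part~i.
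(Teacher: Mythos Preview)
Your proposal is correct and follows essentially the same approach as the paper: factor the output distribution by independence of the $u_x$'s, localize to the at most two bins where $c_x(D)\neq c_x(D')$, apply Lemma~\ref{lem:count privacy} at those bins for an $e^{\varepsilon/2}$ factor each, and handle the accuracy and running-time parts exactly as you describe. The paper phrases the combination of the two per-bin bounds as an appeal to composition (Lemma~\ref{lem:dp composition}), but this is the same multiplication you carry out directly.
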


It is important to note that the privacy guarantee only holds when $A$ is fixed and does not depend on the dataset $D$.
The choice of parameterizing by $A$ will be convenient in defining more complex histogram algorithms later.

\begin{proof}[Proof of {\it i}]
This proof follows similarly to the proof of privacy for the Laplace Mechanism.
Let $D, D' \in \cX^n$ be neighboring datasets and let $h \in \N^A$.
Then
\begin{align*}
\Pr[\forall x \in A ~~ \tc_x = h_x]
&= \prod_{x \in A} \Pr[\cM(c_x(D), u_x) = h_x ]&\text{by independence}
\end{align*}
Because there are at most two $x \in A$ for which $c_x(D) \neq c_x(D')$, by the $(\varepsilon/2,0)$-differential privacy for counting queries of $\cM$ with Lemma \vpref{lem:count privacy} and composition (Lemma \vpref{lem:dp composition}),
\begin{align*}
\prod_{x \in A} \Pr[\cM(c_x(D), u_x) = h_x ]
&\le e^{\varepsilon}\cdot\prod_{x \in A} \Pr[\cM(c_x(D'), u_x) = h_x ]\\
&\le e^{\varepsilon}\cdot\Pr[\forall x \in A ~~ \tc_x(D') = h_x]\qedhere
\end{align*}
\end{proof}

\begin{proof}[Proof of {\it ii}]
For all $x \in \cX$, $\tc_x$ is distributed as $\cM(c_x(D), u_x)$ where $u_x \sim \unif([d]_+)$.
The result follows from $\cM$ having $(a,\beta)$-accuracy.
\end{proof}

\begin{proof}[Proof of {\it iii}]
Simultaneous accuracy follows similarly as the bins' counts are independent.
So
\begin{align*}
\Pr[\forall x \in A ~~ |\tc_x - c_x(D)| \le a]
&= \prod_{x \in A}\Pr[|\tc_x - c_x(D)| \le a]
\ge (1-\beta)^{|A|}\qedhere
\end{align*}
\end{proof}

\begin{proof}[Proof of {\it iv}]
Computing $c_x(D)$ for all $x \in \cX$ can be accomplished by iterating through the dataset once and maintaining a partial vector with counts for the observed data elements.
This can be done in $O(n\log n\cdot\log|\cX|)$ time.
Each of the $|A|$ bin releases takes time $O(\log n \cdot \log |\cX| + \log d + \ctime(\cM))$ in order to get the true count, generate the randomness and then compute the noisy count.
\end{proof}

Now, we can use the counting query algorithms of Section \vpref{sec:count} to get an explicit instantiation of this algorithm.
If we take $\cM = {\tt GeometricMechanism}$, then ${\tt BasicHistogram}_{\cM,\cX}$ is identically distributed to the Truncated Geometric Mechanism of Ghosh, Roughgarden and Sundararajan \cite{GhRoSu12} which achieves per-query and simultaneous accuracy with error up to constant factors matching known lower bounds for releasing a private histogram \cite{HaTa10,BeBrKaNi14}.

\begin{theorem}\label{thm:explicit basic}
Let $\varepsilon, \beta_0 \in \N^{-1}$ and $\cM = {\tt FastSample}_{n,\varepsilon,\gamma}$ where $\gamma =  \beta_0/(2|\cX|)$.
Then
\begin{enumerate}[label=\roman*.]
\item
	${\tt BasicHistogram}_{\cM,\cX}$ is $(\varepsilon,0)$-differentially private.
\item
	For every $\beta \ge 2\gamma$, ${\tt BasicHistogram}_{\cM,\cX}$ has $(a,\beta)$-per-query accuracy for
	\begin{align*}
	a = \left\lceil\frac{9}{2\varepsilon}\ln\left(\frac{2}{\beta}\right)\right\rceil
	\end{align*}
\item
	For every $\beta \ge \beta_0$, ${\tt BasicHistogram}_{\cM,\cX}$ has $(a,\beta)$-simultaneous accuracy for
	\begin{align*}
	a = \left\lceil\frac{9}{2\varepsilon}\ln\left(\frac{2\cdot |\cX|}{\beta}\right)\right\rceil
	\end{align*}
\item
	${\tt BasicHistogram}_{\cM,\cX}$ has running time
	\begin{align*}
	\tilde{O}\left(\frac{|\cX|}{\varepsilon}\cdot \left(\log^2 n +\log n \cdot \log \frac{1}{\beta_0}\right)\right) + O(n \log n \cdot \log |\cX|)
	\end{align*}
\end{enumerate}
\end{theorem}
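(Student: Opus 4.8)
The plan is to obtain all four parts by substituting the guarantees of the counting\overfullhypen query algorithm ${\tt FastSample}_{n,\eps,\gamma}$ from Theorem \ref{thm:count fast} into the generic analysis of ${\tt BasicHistogram}_{\cM,A}$ in Theorem \ref{thm:histogram basic}, taken with $A=\cX$, and then simplifying the resulting expressions using $\gamma=\beta_0/(2|\cX|)$ together with the inequality $\teps>(4/9)\eps$ (so that $2/\teps<9/(2\eps)$). Part~(i) is immediate: Theorem \ref{thm:count fast}(i) states that $\cM={\tt FastSample}_{n,\eps,\gamma}$ is $(\eps/2,0)$-differentially private for counting queries, so Theorem \ref{thm:histogram basic}(i) gives that ${\tt BasicHistogram}_{\cM,\cX}$ is $(\eps,0)$-differentially private.

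For part~(ii), fix $\beta\ge 2\gamma$; since $\gamma>0$ we have $\beta>\gamma$, so Theorem \ref{thm:count fast}(ii) applies with failure probability $\beta$ and yields that $\cM$ has $(a',\beta)$-accuracy for $a'=\lceil(2/\teps)\ln(1/(\beta-\gamma))\rceil$. The hypothesis $\beta\ge 2\gamma$ gives $\beta-\gamma\ge\beta/2$, hence $\ln(1/(\beta-\gamma))\le\ln(2/\beta)$, and combined with $2/\teps<9/(2\eps)$ this gives $a'\le\lceil(9/(2\eps))\ln(2/\beta)\rceil=a$. Since $(a',\beta)$-accuracy trivially implies $(a,\beta)$-accuracy for any $a\ge a'$, $\cM$ has $(a,\beta)$-accuracy, and Theorem \ref{thm:histogram basic}(ii) (with $A=\cX$) transfers this to $(a,\beta)$-per-query accuracy of ${\tt BasicHistogram}_{\cM,\cX}$.

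Part~(iii) is the same computation applied ``one bin at a time.'' Fix $\beta\ge\beta_0$. Then $\beta/|\cX|\ge\beta_0/|\cX|=2\gamma>\gamma$, so Theorem \ref{thm:count fast}(ii) applies with failure probability $\beta/|\cX|$ and gives $(a',\beta/|\cX|)$-accuracy for $a'=\lceil(2/\teps)\ln(1/(\beta/|\cX|-\gamma))\rceil$. Since $\gamma=\beta_0/(2|\cX|)\le\beta/(2|\cX|)$ we get $\beta/|\cX|-\gamma\ge\beta/(2|\cX|)$, whence $\ln(1/(\beta/|\cX|-\gamma))\le\ln(2|\cX|/\beta)$, and as before $a'\le\lceil(9/(2\eps))\ln(2|\cX|/\beta)\rceil=a$; so $\cM$ has $(a,\beta/|\cX|)$-accuracy. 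Applying Theorem \ref{thm:histogram basic}(iii) with $A=\cX$ yields $(a,\beta')$-simultaneous accuracy for ${\tt BasicHistogram}_{\cM,\cX}$ with $\beta'=1-(1-\beta/|\cX|)^{|\cX|}\le|\cX|\cdot(\beta/|\cX|)=\beta$, and hence $(a,\beta)$-simultaneous accuracy.

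For part~(iv), I would plug into the running-time bound of Theorem \ref{thm:histogram basic}(iv), taken with $|A|=|\cX|$, the estimates $\ctime(\cM)=\tilde O((1/\eps)\log^2 n+(1/\eps)\log n\cdot\log(1/\gamma))$ and $\log d=\tilde O(1/\eps)\cdot\log(n/\gamma)$ from Theorem \ref{thm:count fast}. Using $\log(1/\gamma)=\log(2|\cX|/\beta_0)=O(\log|\cX|+\log(1/\beta_0))$ and absorbing the remaining logarithmic factors ($\log n$, $\log|\cX|$) into $\tilde O$, the cost of each of the $|\cX|$ bin releases is $\tilde O\big((1/\eps)(\log^2 n+\log n\cdot\log(1/\beta_0))\big)$ up to the lower-order term $O(\log n\cdot\log|\cX|)$, and adding the $O(n\log n\cdot\log|\cX|)$ preprocessing step gives the stated bound. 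There is no genuine obstacle in this proof; its entire content sits in Theorems \ref{thm:count fast} and \ref{thm:histogram basic}. The only points needing care are verifying the side conditions $\beta>\gamma$ in part~(ii) and $\beta/|\cX|>\gamma$ in part~(iii) --- which is precisely why the hypotheses $\beta\ge 2\gamma$ and $\beta\ge\beta_0$ are imposed --- and keeping track of which logarithmic factors are displayed versus hidden by $\tilde O$ in part~(iv).
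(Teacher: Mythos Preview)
Your proposal is correct and is precisely the argument the paper intends: the theorem is stated without an explicit proof because it is a direct instantiation of Theorem~\ref{thm:histogram basic} (with $A=\cX$) using the guarantees of ${\tt FastSample}$ from Theorem~\ref{thm:count fast}, together with the bound $\teps>(4/9)\eps$. Your verification of the side conditions $\beta>\gamma$ and $\beta/|\cX|>\gamma$ and your handling of the $\tilde O$ bookkeeping in part~(iv) are exactly what is needed.
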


\begin{figure}[H]
\centering
\renewcommand{\arraystretch}{1.8}
\begin{tabular}{cccc}
$\cM$ & Running Time & $(a,\beta)$-Per-Query & $(a,\beta)$-Simul. \\
\hline
{\tt GeometricMechanism} &
n/a &
$\left\lceil\frac{2}{\varepsilon}\ln\frac{1}{\beta}\right\rceil$ &
$\left\lceil\frac{2}{\varepsilon}\ln\frac{|\cX|}{\beta}\right\rceil$\\
{\tt GeoSample} &
$\tilde{O}(|\cX|\cdot n\cdot \log(1/\varepsilon))$ &
$\left\lceil\frac{9}{2\varepsilon}\ln\frac{1}{\beta}\right\rceil$ &
$\left\lceil\frac{9}{2\varepsilon}\ln\frac{|\cX|}{\beta}\right\rceil$\\
{\tt FastSample} &
$\tilde{O}\left((|\cX|/\varepsilon)\cdot\log^2 n\right) + \tilde{O}(n) \cdot \log|\cX|$ &
$\left\lceil\frac{9}{2\varepsilon}\ln\frac{2}{\beta}\right\rceil$ &
$\left\lceil\frac{9}{2\varepsilon}\ln\frac{2|\cX|}{\beta}\right\rceil$
\end{tabular}
\caption{The running time and errors of ${\tt BasicHistogram}_{\cM,\cX}$ for the counting query algorithms of Section \vpref{sec:count}.
Values shown are for a $(\varepsilon,0)$-differentially private release where $\varepsilon \in \N^{-1}$.
For ${\tt FastSample}$, we assume $\beta_0 \ge 1/n^{O(1)}$ and $\gamma$ is specified in Theorem \vpref{thm:explicit basic}.}
\end{figure}

\subsection{Stability-Based Histogram}\label{ssec:general stability}

For a large data universe $\cX$, the at least linear in $|\cX|$ running time of ${\tt BasicHistogram}_{\cM,\cX}$ can be prohibitive.
By using approximate differential privacy, we can release counts for a smaller number of bins (at most $n$) based on stability techniques \cite{KoKeMiNt09, BuNiSt16}. 
We present a generalization of the algorithm from \cite{BuNiSt16}.

\begin{figure}[H]
\rule{\textwidth}{.5pt}
\begin{algorithm}\label{alg:histogram stability}
${\tt StabilityHistogram}_{\cM,b}(D)$
for $D \in \cX^n$
where $\cM: [n] \times [d]_+ \rightarrow [n]$ and $b \in [n]$

\begin{enumerate}[leftmargin=.3in]
\item
	Let $A = \{x \in \cX : c_x(D) > 0\}$.
\item
	Let $\{(x, \tc_x)\}_{x \in A} = {\tt BasicHistogram}_{\cM,A}(D)$.
\item
	Release $h = \left\{(x, \tc_x) : x \in A \text{ and } \tc_x > b\right\} \in \cH_{n,n}(\cX)$.
\end{enumerate}
\end{algorithm}
\vspace{-.1in}
\rule{\textwidth}{.5pt}
\end{figure}

Note that we only release counts for $x \in \cX$ whose true count is nonzero, namely elements in the set $A$.
Thus, the output length is $O(n \cdot (\log |\cX| + \log n))$.
However, releasing the set $A$ does not satisfy pure differential privacy because this would distinguish between neighboring datasets: one with a count of 0 and the other with a count of 1 for some element $x \in \cX$.
Thus, we only release noisy counts $\tc_x$ that exceed a threshold $b$.
If $b$ is large enough, then a count of 1 will only be kept with small probability, yielding approximate differential privacy.

\begin{theorem}\label{thm:histogram stability}
Let $\cM: [n] \times [d]_+ \rightarrow [n]$ be $(\varepsilon/2, 0)$-differentially private for counting queries and have $(a, \beta)$-accuracy.
And let $b \in [n]$.
Then ${\tt StabilityHistogram}_{\cM, b}: \cX^n \rightarrow \cH_{n,n}(\cX)$ has the following properties:
\begin{enumerate}[label=\roman*.]
\item
	${\tt StabilityHistogram}_{\cM, b}$ is $(\varepsilon, \delta)$-differentially private whenever
	\begin{align*}
	\delta \ge 2\cdot\Pr[\cM(1,U) > b]\text{ for $U \sim \unif([d]_+)$}
	\end{align*}
\item
	${\tt StabilityHistogram}_{\cM, b}$ has $(a,\beta)$-per-query accuracy on counts larger than $a + b$.
\item
	${\tt StabilityHistogram}_{\cM, b}$ has $(a+b,\beta')$-simultaneous accuracy where
	\begin{align*}
	\beta' = 1 - (1-\beta)^{n} \le n \cdot \beta
	\end{align*}
\item
	${\tt StabilityHistogram}_{\cM, b}$ has running time
	\begin{align*}
	O\left(n\log n \cdot \log |\cX| + n \cdot  \log d + n \cdot\ctime(\cM)\right)
	\end{align*}
\end{enumerate}
\end{theorem}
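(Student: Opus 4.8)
The plan is to dispatch parts (ii)--(iv) by direct inspection of the algorithm and to concentrate on the privacy claim (i), which follows from a stability argument that reduces to the already-established privacy of ${\tt BasicHistogram}$ on a fixed index set. For (ii): if $c_x(D) > a+b$ then $c_x(D)\ge 1$, so $x\in A$ and bin $x$ receives noisy count $\tc_x = \cM(c_x(D),u_x)$; by $(a,\beta)$-accuracy of $\cM$ we have $|\tc_x - c_x(D)|\le a$ with probability $\ge 1-\beta$, and on that event $\tc_x \ge c_x(D)-a > b$, so the bin survives the threshold and its released value is $\tc_x$. For (iii): fix $D$; bins outside $A$ are released as $0 = c_x(D)$, while for $x\in A$, on the probability-$\ge 1-\beta$ event $|\tc_x - c_x(D)|\le a$ the released value is either $\tc_x$ (error $\le a$) or $0$, and in the latter case $c_x(D)\le \tc_x+a\le a+b$; since the $u_x$ are independent and $|A|\le n$, all nonzero bins simultaneously have error $\le a+b$ with probability $\ge (1-\beta)^n$, and the zero bins are always exact. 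For (iv): Step 1 builds the red-black tree of true counts in one pass in $O(n\log n\cdot\log|\cX|)$ time (so $|A|\le n$), Step 2 is ${\tt BasicHistogram}_{\cM,A}$, costing $O(n\log n\cdot\log|\cX| + n\log d + n\cdot\ctime(\cM))$ by Theorem \ref{thm:histogram basic}(iv) with $|A|\le n$, and Step 3 adds only an $O(n\log n\cdot\log|\cX|)$ pass for thresholding and for the sorted output.

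For (i), fix neighbors $D,D'$ differing in the single row $D_i = x \ne y = D'_i$. Only the counts of $x$ and $y$ change, each by one, so the supports satisfy $A(D)\setminus A(D')\subseteq\{x\}$ (with equality iff $c_x(D)=1$) and $A(D')\setminus A(D)\subseteq\{y\}$ (with equality iff $c_y(D)=0$); put $A^* = A(D)\cap A(D')$, on which $D$ and $D'$ differ in at most the two counting queries $x,y$, each by one. The observation driving the proof is that ${\tt StabilityHistogram}_{\cM,b}(D)$ equals the threshold-$b$ post-processing of ${\tt BasicHistogram}_{\cM,A^*}(D)$, augmented (when $c_x(D)=1$) by the single independently-noised bin $(x,\cM(1,u_x))$, and symmetrically for $D'$ with $(y,\cM(1,u_y))$ when $c_y(D)=0$. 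Let $B$ be the event that one of these at most two ``unstable'' bins (true count exactly $1$) exceeds the threshold; by a union bound $\Pr[B]\le 2\Pr_{U\sim\unif([d]_+)}[\cM(1,U)>b]\le\delta$. Conditioned on $\neg B$ the output is exactly the threshold-$b$ post-processing of ${\tt BasicHistogram}_{\cM,A^*}$ on the relevant dataset; on the fixed set $A^*$ this composite map is $(\varepsilon,0)$-differentially private by Theorem \ref{thm:histogram basic}(i) (whose hypotheses---$(\varepsilon/2,0)$-privacy of $\cM$ for counting queries, used through Lemma \ref{lem:count privacy}, and $(a,\beta)$-accuracy---are exactly those assumed here) together with post-processing (Lemma \ref{lem:dp postprocess}). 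Since the unstable-bin noise is independent of the $A^*$-noise, splitting $\Pr[{\tt StabilityHistogram}_{\cM,b}(D)\in S]$ according to $B$, applying $(\varepsilon,0)$-privacy of the thresholded ${\tt BasicHistogram}_{\cM,A^*}$ on the complement, and re-absorbing the extra $D'$-bin (again by independence) yields $\Pr[{\tt StabilityHistogram}_{\cM,b}(D)\in S]\le e^\varepsilon\Pr[{\tt StabilityHistogram}_{\cM,b}(D')\in S]+\delta$.

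The delicate point---and the main obstacle---is this last accounting: one must run a small case analysis according to whether $x$ leaves the support, whether $y$ enters it, and which direction of the neighboring pair is being bounded, since the unstable bin that can appear on the $D$-side is charged purely additively while the one that can appear on the $D'$-side is multiplied by the $e^\varepsilon$ factor. Checking that at most one unstable bin is ``active'' on each side, and keeping the per-bin independence explicit so that $B$ factors cleanly out of the $A^*$-part, is what makes the stated $\delta = 2\Pr[\cM(1,U)>b]$ suffice in the parameter regime $\varepsilon\in(0,1]$ used throughout the paper.
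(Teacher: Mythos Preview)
Your arguments for (ii)--(iv) are correct and essentially match the paper. For (i), however, you and the paper take different routes. The paper works bin by bin: for each of the at most two indices $z$ with $c_z(D)\ne c_z(D')$ it shows the single released coordinate $h_z$ satisfies $\Pr[h_z\in S]\le e^{\varepsilon/2}\Pr[h'_z\in S]+\delta/2$ via a short three-case split (both counts $\ge 1$; $c_z(D)=1,\,c_z(D')=0$; the symmetric case), and then invokes basic composition (Lemma~\ref{lem:dp composition}) over these two independent coordinates to obtain $(\varepsilon,\delta)$. No restriction on $\varepsilon$ is needed.

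Your stability argument through $A^*=A(D)\cap A(D')$ is a legitimate alternative, but as written it has a loose end. Your bad event $B$ mixes the $D$-side unstable bin $x$ and the $D'$-side unstable bin $y$; this is only well defined on a \emph{coupled} probability space carrying a single family $\{u_z\}_{z\in A(D)\cup A(D')}$. If you make that coupling explicit, the argument goes through for all $\varepsilon$: on $\neg B$ both outputs coincide with the thresholded ${\tt BasicHistogram}_{\cM,A^*}$ on the respective input, $B$ depends only on $u_x,u_y$ (which lie outside $A^*$) and hence factors out, and one gets $\Pr[\mathrm{SH}(D)\in S]\le\Pr[B]+e^\varepsilon\Pr[\mathrm{SH}(D')\in S,\neg B]\le\delta+e^\varepsilon\Pr[\mathrm{SH}(D')\in S]$ directly. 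If instead you treat the two sides in separate probability spaces---as your remark that the $D'$-side unstable bin ``is multiplied by the $e^\varepsilon$ factor'' suggests---then re-absorbing the $D'$-bin costs an additive $\delta/2$ \emph{after} the multiplicative factor, yielding only $(\varepsilon/2,\,e^{\varepsilon/2}\delta/2)$ in the worst case; this meets the stated bound only for $\varepsilon\le 2\ln 2$, so your appeal to ``the parameter regime $\varepsilon\in(0,1]$'' becomes a genuine extra hypothesis not present in the theorem. Either way, the paper's per-bin-plus-composition route is shorter and sidesteps the issue entirely.
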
 

\begin{proof}[Proof of {\it i}]
Let $D, D' \in \cX^n$ be neighboring datasets, $h \sim {\tt StabilityHistogram}_{\cM,b}(D)$ and $h' \sim {\tt StabilityHistogram}_{\cM,b}(D')$.
Let $U \sim \unif([d]_+)$.
Let $x \in \cX$ such that $c_x(D) \neq c_x(D')$ and let $S \subseteq [n]$.
There are 3 cases to consider:
\begin{itemize}
\item
	$c_x(D) \ge 1$ and $c_x(D') \ge 1$.
	By the privacy guarantee on $\cM$, we have $\Pr[\cM(c_x(D), U) \in S] \le e^{\varepsilon/2} \cdot \Pr[\cM(c_x(D'), U) \in S]$. Thus, by differential privacy's closure under post-processing (Lemma \vpref{lem:dp postprocess}),
	\begin{align*}
	\Pr[h_x \in S] \le e^{\varepsilon/2}\cdot\Pr[h_x' \in S]
	\end{align*}
\item
	$c_x(D) = 1$ and $c_x(D') = 0$.
	Notice that $\Pr[h_x' = 0] = 1$.
	So if $0 \in S$, then	$\Pr[h_x \in S] \le \Pr[h_x' \in S]$.
	If $0 \notin S$, then
	\begin{align*}
	\Pr[h_x \in S]
	\le \Pr[h_x > 0]
	&= \Pr[h_x > b]\\
	&= \Pr[\cM(1,U) > b]\\
	&\le \delta/2
	= \Pr[h_x' \in S] + \delta/2
	\end{align*}
\item
	$c_x(D) = 0$ and $c_x(D') = 1$.
	This case follows similarly to the previous one.
\end{itemize}
Then overall
\begin{align*}
\Pr[h_x \in S] \le e^{\varepsilon/2}\cdot\Pr[h_x' \in S] + \delta/2
\end{align*}
Because there are at most two bins on which $D$ and $D'$ have differing counts and each count $\tc_x$ is computed independently, by Lemma \vpref{lem:dp composition}, this algorithm is $(\varepsilon, \delta)$-differentially private.
\end{proof}

\begin{proof}[Proof of {\it ii}]
Let $U \sim \unif([d]_+)$ and $x \in \cX$ such that $c_x(D) > a+b$.
Notice that $|\tc_x - c_x(D)| \le a$ implies $\tc_x \ge c_x(D) - a > b$. Thus,
\begin{align*}
\Pr[|h_x - c_x(D)| \le a]
&= \Pr[|\tc_x - c_x(D)| \le a]\\
&= \Pr[|\cM(c_x(D), U) - c_x(D)| \le a]\\
&\ge 1-\beta\qedhere
\end{align*}
\end{proof}

\begin{proof}[Proof of {\it iii}]
Notice that the counts of elements not in $A$ are trivially accurate.
Therefore, we only need to consider the counts of elements in $A$.
By Theorem \vpref{thm:histogram basic} Part {\it iii}, ${\tt BasicHistogram}_{\cM,A}(D)$ has $(a, \beta')$-simultaneous accuracy where $\beta' = 1 - (1-\beta)^n$ as $|A| \le n$.

The final step of ${\tt StabilityHistogram}_{\cM,b}$ can increase the error on any count additively by at most $b$.
Therefore, ${\tt StabilityHistogram}_{\cM,b}$ has $(a + b, \beta')$-simultaneous accuracy.
\end{proof}

\begin{proof}[Proof of {\it iv}]
The running time follows from Theorem \vpref{thm:histogram basic} Part {\it iv} where $|A| \le n$ after noting the final step has running time $O(n\log n\cdot \log |\cX|)$.
\end{proof}

By taking $\cM = {\tt GeoSample}$, the simultaneous accuracy of ${\tt StabilityHistogram}_{\cM,b}$ matches known lower bounds.
However, we only achieve optimal per-query accuracy on sufficiently large counts.
This constraint is necessary for any algorithm outputting sparse histograms as we will show in Theorem \vpref{thm:lower pq}.
The following theorem shows the accuracies achieved by taking $\cM = {\tt FastSample}$.

\begin{theorem}\label{thm:explicit stability}
Let $\varepsilon, \delta, \beta_0 \in \N^{-1}$, $\cM = {\tt FastSample}_{n,\varepsilon,\gamma}$ where $\gamma = \min\{\beta_0 / (2n), \delta/4\}$ and $b = 1 + \left\lceil9/(2\varepsilon)\cdot\ln(4/\delta)\right\rceil$.
Then
\begin{enumerate}[label=\roman*.]
\item
	${\tt StabilityHistogram}_{\cM,b}$ is $(\varepsilon,\delta)$-differentially private.
\item
	For every $\beta \ge 2\gamma$, ${\tt StabilityHistogram}_{\cM,b}$ has $(a,\beta)$-per-query accuracy on counts larger than $t$ for
	\begin{align*}
	a = \left\lceil\frac{9}{2\varepsilon}\ln\left(\frac{2}{\beta}\right)\right\rceil
	\hspace{1em}\text{and}\hspace{1.25em}
	t = 2+\left\lceil\frac{9}{2\varepsilon}\ln\left(\frac{8}{\beta \cdot \delta}\right)\right\rceil
	\end{align*}
\item
	For every $\beta \ge \beta_0$, ${\tt StabilityHistogram}_{\cM,b}$ has $(a,\beta)$-simultaneous accuracy for
	\begin{align*}
	a = 2 + \left\lceil\frac{9}{2\varepsilon}\ln\left(\frac{8n}{\beta \cdot \delta}\right)\right\rceil
	\end{align*}
\item
	For $b \in [n]$, ${\tt StabilityHistogram}_{\cM,b}$ has running time
	\begin{align*}
	\tilde{O}\left(\frac{n}{\varepsilon}\cdot \log \frac{1}{\beta_0 \cdot \delta}\right) + O(n \log n \cdot \log |\cX|)
	\end{align*}
\end{enumerate}
\end{theorem}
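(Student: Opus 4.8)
The plan is to derive all four parts from the generic analysis of ${\tt StabilityHistogram}_{\cM,b}$ in Theorem~\ref{thm:histogram stability}, instantiated with $\cM = {\tt FastSample}_{n,\varepsilon,\gamma}$, whose relevant properties (privacy for counting queries, $(a,\beta)$-accuracy, running time, and the mixture description of its output distribution) are supplied by Theorem~\ref{thm:count fast} and Lemma~\ref{lem:count fast distribution}. Parts (ii)--(iv) are then essentially bookkeeping: substitute the quantitative bounds of Theorem~\ref{thm:count fast} into Theorem~\ref{thm:histogram stability} and simplify using $\teps \ge (4/9)\varepsilon$ (hence $2/\teps \le 9/(2\varepsilon)$) together with the elementary inequality $\lceil x\rceil + \lceil y\rceil \le \lceil x+y\rceil + 1$. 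The only genuinely nontrivial point is verifying the threshold hypothesis needed for the privacy claim in Part~(i).

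For Part~(i): by Theorem~\ref{thm:count fast}(i), ${\tt FastSample}_{n,\varepsilon,\gamma}$ is $(\varepsilon/2,0)$-differentially private for counting queries, so by Theorem~\ref{thm:histogram stability}(i) it suffices to check $\delta \ge 2\Pr[{\tt FastSample}_{n,\varepsilon,\gamma}(1,U) > b]$ for $U \sim \unif([d]_+)$. I would unpack this probability using the mixture description in Lemma~\ref{lem:count fast distribution}(ii): with probability $\gamma$ the output is $\unif([n])$, contributing at most $\gamma$; with probability $1-\gamma$ it is $Z'$ clamped to $[0,n]$, where $Z'$ is $Z \sim 1+\Geo(2/\teps)$ with the mass at distance more than $t$ from the center collapsed onto the center $1 < b$. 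Since truncating toward the center can only shrink the upper tail above $b$, and clamping to $[0,n]$ does not affect the event $\{\,\cdot > b\,\}$ when $b \le n$, we get $\Pr[Z'\text{ clamped} > b] \le \Pr[Z > b] = \Pr[\Geo(2/\teps) > b-1]$, which by the closed form of the geometric CDF is at most $\tfrac{1}{2}e^{-(b-1)\teps/2}$. As $b - 1 \ge (9/(2\varepsilon))\ln(4/\delta)$ and $\teps/2 \ge 2\varepsilon/9$, the exponent is at least $\ln(4/\delta)$, so this term is at most $\delta/8$. Combining, $2\Pr[{\tt FastSample}_{n,\varepsilon,\gamma}(1,U)>b] \le \delta/4 + 2\gamma \le \delta/4 + \delta/2 \le \delta$, using $\gamma \le \delta/4$.

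For Parts~(ii) and~(iii): apply Theorem~\ref{thm:histogram stability}(ii),(iii) with the $(a,\beta)$-accuracy of ${\tt FastSample}_{n,\varepsilon,\gamma}$ from Theorem~\ref{thm:count fast}(ii). For per-query accuracy, the hypothesis $\beta \ge 2\gamma$ gives $\beta - \gamma \ge \beta/2$, so $\cM$ has $(a_0,\beta)$-accuracy with $a_0 = \lceil (9/(2\varepsilon))\ln(2/\beta)\rceil$; Theorem~\ref{thm:histogram stability}(ii) then yields per-query accuracy on counts larger than $a_0 + b$, and collapsing the two ceilings with $\ln(2/\beta)+\ln(4/\delta) = \ln(8/(\beta\delta))$ produces the stated $t$. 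For simultaneous accuracy, run $\cM$ at per-query failure probability $\beta/n$ (legitimate since $\gamma \le \beta_0/(2n) \le \beta/(2n)$, so $\beta/n \ge 2\gamma > \gamma$), obtaining width $a_0' = \lceil (9/(2\varepsilon))\ln(2n/\beta)\rceil$; Theorem~\ref{thm:histogram stability}(iii) upgrades this to $(a_0' + b,\, n\cdot(\beta/n)) = (a_0'+b,\beta)$-simultaneous accuracy, and the same ceiling/log simplification gives the claimed $a$. Part~(iv) is immediate from Theorem~\ref{thm:histogram stability}(iv) upon substituting $\log d = \tilde{O}(1/\varepsilon)\cdot\log(n/\gamma)$ and $\ctime({\tt FastSample}_{n,\varepsilon,\gamma}) = \tilde{O}\big((1/\varepsilon)\log^2 n + (1/\varepsilon)\log n\log(1/\gamma)\big)$ from Theorem~\ref{thm:count fast}, noting $\log(1/\gamma) = O(\log n + \log(1/\beta_0) + \log(1/\delta))$ and absorbing $\poly\log n$ factors into $\tilde{O}(\cdot)$.

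Main obstacle: the one step requiring care is the tail computation in Part~(i) — one must correctly exploit the mixture/stochastic-domination structure of ${\tt FastSample}$'s output and then track the explicit constants (the $9/2$ in the choice of $b$, the $4/9$ in the bound on $\teps$, the $\delta/4$ inside $\gamma$) so that the total lands at or below $\delta$. Everything else is mechanical substitution into Theorem~\ref{thm:histogram stability} followed by routine ceiling-and-logarithm arithmetic.
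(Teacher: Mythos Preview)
Your proposal is correct and matches the paper's approach: the paper states this theorem without proof, treating it as a direct instantiation of the generic Theorem~\ref{thm:histogram stability} with the quantitative properties of ${\tt FastSample}$ from Theorem~\ref{thm:count fast} (and the mixture description from Lemma~\ref{lem:count fast distribution}), exactly as you do. Your tail computation for Part~(i) and the ceiling/log bookkeeping for Parts~(ii)--(iv) fill in precisely the details the paper leaves implicit.
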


\begin{figure}[H]
\centering
\renewcommand{\arraystretch}{1.8}
\begin{tabular}{ccc}
$\cM$ & $b$ & Running Time \\
\hline
{\tt GeometricMechanism} &
$1 + \left\lceil\frac{2}{\varepsilon}\ln\frac{2}{\delta}\right\rceil$ &
n/a \\
{\tt GeoSample} &
$1 + \left\lceil\frac{9}{2\varepsilon}\ln\frac{2}{\delta}\right\rceil$ &
$\tilde{O}(n^2\cdot\log(1/\varepsilon)) + O(n\log n \cdot \log |\cX|)$ \\
{\tt FastSample} &
$1 + \left\lceil\frac{9}{2\varepsilon}\ln\frac{4}{\delta}\right\rceil$ &
$\tilde{O}\left((n/\varepsilon)\cdot\log(1/(\beta_0\cdot\delta))\right) + O(n\log n \cdot \log |\cX|)$
\vspace{3ex}
\end{tabular}
\begin{tabular}{cccc}
\vspace{-.8em} & \multicolumn{2}{c}{$(a,\beta)$-Per-Query on $c_x(D) > t$} & \\
$\cM$ & $a$ & $t$ &$(a,\beta)$-Simultaneous\\
\hline
{\tt GeometricMechanism} &
$\left\lceil\frac{2}{\varepsilon}\ln\frac{1}{\beta}\right\rceil$ &
$2 + \left\lceil\frac{2}{\varepsilon}\ln\frac{2}{\beta\cdot \delta}\right\rceil$ &
$2 + \left\lceil\frac{2}{\varepsilon}\ln\frac{2n}{\beta\cdot \delta}\right\rceil$\\
{\tt GeoSample} &
$\left\lceil\frac{9}{2\varepsilon}\ln\frac{1}{\beta}\right\rceil$ &
$2 + \left\lceil\frac{9}{2\varepsilon}\ln\frac{2}{\beta\cdot \delta}\right\rceil$ &
$2 + \left\lceil\frac{9}{2\varepsilon}\ln\frac{2n}{\beta\cdot \delta}\right\rceil$\\
{\tt FastSample} &
$\left\lceil\frac{9}{2\varepsilon}\ln\frac{2}{\beta}\right\rceil$ &
$2 + \left\lceil\frac{9}{2\varepsilon}\ln\frac{8}{\beta\cdot \delta}\right\rceil$ &
$2 + \left\lceil\frac{9}{2\varepsilon}\ln\frac{8n}{\beta\cdot \delta}\right\rceil$
\end{tabular}
\caption{The running time and errors of ${\tt StabilityHistogram}_{\cM,b}$ for the counting query algorithms of Section \vpref{sec:count}.
Values shown are for a $(\varepsilon,\delta)$-differentially private release where $\varepsilon, \delta \in \N^{-1}$.
We assume $b \in [n]$ as otherwise the algorithm always outputs an empty histogram.
For ${\tt FastSample}$, the choice of parameters is specified in Theorem \vpref{thm:explicit stability}.}
\end{figure}

\section{Improving the Running Time}\label{sec:speed}

While {\tt StabilityHistogram} has running time logarithmic in the universe size, it can only guarantee approximate differential privacy.
In this section, we present an algorithm whose running time depends only poly-logarithmically on the universe size while maintaining pure differential privacy based on the observation that most counts are 0 when $n \ll |\cal{X}|$;
this is the same observation made by Cormode, Procopiuc, Srivastava and Tran \cite{CoPrSrTr12} to release private histograms that are sparse in expectation.

\subsection{Sparse Histograms}

We start by reducing the output length of ${\tt BasicHistogram}_{\cM,\cX}$ to release only the bins with the heaviest (or largest) counts (interpreted as a partial vector).

\noindent\rule{\textwidth}{.5pt}
\begin{algorithm}\label{alg:histogram heavy}
${\tt KeepHeavy}_{\cM}(D)$
for $D \in \cX^n$ where $\cM: [n] \times [d]_+ \rightarrow [n]$

\begin{enumerate}
\item
	Let $\{(x, \tc_x)\}_{x \in \cX} = {\tt BasicHistogram}_{\cM,\cX}(D)$.
\item
	Let $x_1, \ldots, x_{n+1}$ be the elements of $\cX$ with the largest counts in sorted order, i.e.
	\begin{align*}
	\tc_{x_1} \ge \tc_{x_2} \ge \ldots \ge \tc_{x_{n+1}} \ge \max_{x \in \cX\setminus\{x_1, \ldots, x_{n+1}\}} \tc_x
	\end{align*}
\item
	Release $h = \{(x, \tc_x): x \in \cX \text{ and } \tc_x > \tc_{x_{n+1}}\} \in \cH_{n,n}(\cX)$.\footnote{If instead $\cM$ had a continuous output distribution this last step is equivalent to releasing the $n$ heaviest bins.
	However, in the discrete case, where ties can occur, from the set $A\cup\{q_0, \ldots, q_n\}$ we cannot determine all bins with a count tied for the $n$-th heaviest as there may be many other noisy counts tied with $\tc_{x_n}$.
	As a result, we only output the bins with a strictly heavier count than $\tc_{x_{n+1}}$.}\vspace{-2ex}
\end{enumerate}
\end{algorithm}
\noindent\rule{\textwidth}{.5pt}

Observe that the output length has been improved to $O(n\cdot(\log |\cX| + \log n))$ bits compared to the $O(|\cX|\cdot(\log |\cX| + \log n))$ bits needed to represent the output of ${\tt BasicHistogram}_{\cM,\cX}$.

\begin{theorem}\label{thm:histogram heavy}
Let $\cM: [n] \times [d]_+ \rightarrow [n]$ be $(\varepsilon/2, 0)$-differentially private for counting queries such that ${\tt BasicHistogram}_{\cM,\cX}$ has $(a_1,\beta)$-per-query accuracy and $(a_2, \beta)$-simultaneous accuracy with $a_1 \le a_2$.
Then ${\tt KeepHeavy}_{\cM}: \cX^n \rightarrow \cH_{n,n}(\cX)$ has the following properties:

\begin{enumerate}[label=\roman*.]
\item
	${\tt KeepHeavy}_{\cM}$ is $(\varepsilon,0)$-differentially private.
\item
	${\tt KeepHeavy}_{\cM}$ has $(a_1, 2\beta)$-per-query accuracy on counts larger than $2a_2$.
\item
	${\tt KeepHeavy}_{\cM}$ has $(2a_2,\beta)$-simultaneous accuracy.
\end{enumerate}
\end{theorem}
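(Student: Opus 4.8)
The plan is to prove the three parts of Theorem~\ref{thm:histogram heavy} in order, leaning on Theorem~\ref{thm:histogram basic} for the privacy and accuracy of the underlying ${\tt BasicHistogram}_{\cM,\cX}$ call and on post-processing (Lemma~\ref{lem:dp postprocess}) for the reduction step. For Part~\emph{i}, observe that ${\tt KeepHeavy}_{\cM}$ first runs ${\tt BasicHistogram}_{\cM,\cX}$, which is $(\varepsilon,0)$-differentially private by Theorem~\ref{thm:histogram basic} Part~\emph{i}, and then applies a fixed (randomness-free, or at worst randomized-but-data-independent) function: sort all noisy counts, find the $(n+1)$-st largest value $\tc_{x_{n+1}}$, and output the bins whose count strictly exceeds it. Since this transformation depends only on the output $\{(x,\tc_x)\}_{x\in\cX}$ and not on $D$ directly, pure differential privacy is preserved.

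For Part~\emph{iii} (simultaneous accuracy), I would argue that conditioned on the ``good'' event that $\max_x |\tc_x - c_x(D)| \le a_2$ (which holds with probability $\ge 1-\beta$ by the assumed simultaneous accuracy of ${\tt BasicHistogram}_{\cM,\cX}$), every bin output by ${\tt KeepHeavy}_{\cM}$ already has error at most $a_2 \le 2a_2$. The only bins whose reported count changes are those set to $0$ by the thresholding, i.e.\ bins $x$ with $\tc_x \le \tc_{x_{n+1}}$. For such a bin, I need $c_x(D) \le 2a_2$. The key combinatorial fact is that there are at most $n$ bins with $c_x(D) > 0$, so among the top $n+1$ noisy counts there is at least one bin $y$ with $c_y(D) = 0$; on the good event $\tc_y \le a_2$, hence $\tc_{x_{n+1}} \le \tc_y \le a_2$, and therefore any suppressed bin $x$ has $\tc_x \le a_2$, giving $c_x(D) \le \tc_x + a_2 \le 2a_2$. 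So the error introduced by zeroing out $x$ is $|0 - c_x(D)| \le 2a_2$, and $(2a_2,\beta)$-simultaneous accuracy follows.

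For Part~\emph{ii} (per-query accuracy on large counts), fix $x$ with $c_x(D) > 2a_2$. I want $\Pr[\,|{\tt KeepHeavy}_{\cM}(D)_x - c_x(D)| \le a_1\,] \ge 1 - 2\beta$. Take the union of two events: (A) the per-query event $|\tc_x - c_x(D)| \le a_1$, which holds with probability $\ge 1-\beta$; and (B) the simultaneous event $\max_z |\tc_z - c_z(D)| \le a_2$, which holds with probability $\ge 1-\beta$; so both hold with probability $\ge 1-2\beta$ by a union bound. On the intersection, $\tc_x \ge c_x(D) - a_1 > 2a_2 - a_1 \ge a_2$ (using $a_1 \le a_2$), while, as in Part~\emph{iii}, there is an empty bin $y$ in the top $n+1$ with $\tc_y \le a_2$, so $\tc_{x_{n+1}} \le a_2 < \tc_x$; hence $x$ survives the threshold and ${\tt KeepHeavy}_{\cM}(D)_x = \tc_x$, which is within $a_1$ of $c_x(D)$. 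This gives the claimed $(a_1, 2\beta)$-per-query accuracy on counts larger than $2a_2$.

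The main obstacle — really the only nontrivial point — is the pigeonhole argument that pins down $\tc_{x_{n+1}}$: one must be careful that the presence of at most $n$ nonzero true counts guarantees an empty bin among the $n+1$ heaviest \emph{noisy} counts, and then that on the relevant good event its noisy value is at most $a_2$, so that the threshold $\tc_{x_{n+1}}$ is itself at most $a_2$. Everything else is bookkeeping: the definition of the output in step~3 of ${\tt KeepHeavy}_{\cM}$ (strictly-greater-than to handle ties) must be reconciled with these inequalities, but since we always obtain strict inequality $\tc_{x_{n+1}} \le a_2 < \tc_x$ for the bins we care about, ties do not cause trouble.
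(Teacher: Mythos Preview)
Your proposal is correct and follows essentially the same approach as the paper: Part~\emph{i} via post-processing of ${\tt BasicHistogram}_{\cM,\cX}$, and Parts~\emph{ii}--\emph{iii} via the pigeonhole observation that at most $n$ bins have nonzero true count, so on the simultaneous-accuracy event the $(n{+}1)$-st heaviest noisy count is at most $a_2$. The paper packages this observation into a standalone lemma and, in Part~\emph{ii}, bounds $\Pr[\tc_x \le \tc_{x_{n+1}}]$ directly by $\Pr[\text{not }E] \le \beta$ rather than intersecting events (A) and (B), but the arithmetic and the conclusions are the same.
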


Unlike ${\tt BasicHistogram}_{\cM,\cX}$, by taking $\cM = {\tt GeoSample}$, the algorithm ${\tt KeepHeavy}_\cM$ achieves $(O(\log(1/\beta)/\varepsilon), \beta)$-per-query accuracy only on counts larger than $O(\log(|\cX|/\beta)/\varepsilon)$.
This loss is necessary for any algorithm that outputs a sparse histogram
as we will show in Theorem \vpref{thm:lower pq}.

\begin{proof}[Proof of {\it i}]
Privacy follows from the $(\varepsilon,0)$-differential privacy of ${\tt BasicHistogram}_{\cM,\cX}$ (Theorem \vpref{thm:histogram basic} Part {\it i}) along with differential privacy's closure under post-processing (Lemma \vpref{lem:dp postprocess}).
\end{proof}

To prove the remaining parts, we start with the following lemma.

\begin{lemma}\label{lem:highlow}
Define the event $E = \{\forall x \in \cX~~|\tc_x - c_x(D)| \le a_2\}$.
Then $\Pr[E] \ge 1 - \beta$ and $E$ implies that for all $x \in \cX$ such that $c_x(D) > 2a_2$ we have $\tc_x > \tc_{x_{n+1}}$.
\end{lemma}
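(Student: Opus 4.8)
The plan is to prove the two assertions in \ref{lem:highlow} separately, since the probability bound is immediate and the implication is a short deterministic pigeonhole argument.

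For the bound $\Pr[E] \ge 1-\beta$: the event $E = \{\forall x \in \cX ~ |\tc_x - c_x(D)| \le a_2\}$ is exactly the event appearing in the definition of $(a_2,\beta)$-simultaneous accuracy, applied to ${\tt BasicHistogram}_{\cM,\cX}(D)$ (whose $x$-th coordinate is $\tc_x$). So this follows directly from the hypothesis that ${\tt BasicHistogram}_{\cM,\cX}$ has $(a_2,\beta)$-simultaneous accuracy, with no further work.

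For the implication, I would fix an outcome in $E$ and argue pointwise. The key observation is that $D \in \cX^n$ has at most $n$ bins with nonzero true count, so among the $n+1$ elements $x_1,\ldots,x_{n+1}$ (the ones with the largest noisy counts, as selected in Algorithm \ref{alg:histogram heavy}), at least one, say $x_j$, has $c_{x_j}(D) = 0$. Since we are in $E$ and noisy counts are nonnegative, $\tc_{x_j} = |\tc_{x_j} - c_{x_j}(D)| \le a_2$; and because $\tc_{x_{n+1}}$ is the smallest of $\tc_{x_1},\ldots,\tc_{x_{n+1}}$, we get $\tc_{x_{n+1}} \le \tc_{x_j} \le a_2$. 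Now for any $x \in \cX$ with $c_x(D) > 2a_2$, the event $E$ gives $\tc_x \ge c_x(D) - a_2 > 2a_2 - a_2 = a_2 \ge \tc_{x_{n+1}}$, hence $\tc_x > \tc_{x_{n+1}}$, as claimed.

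I do not expect any genuine obstacle here; the only points needing a little care are that the pigeonhole step uses precisely that the dataset has $n$ rows (matching the $n+1$ bins retained), and that the final inequality is strict, which is automatic because $c_x(D) > 2a_2$ is strict. This lemma is the deterministic combinatorial core; later, combining $\Pr[E] \ge 1-\beta$ with this implication (plus a union bound against the per-query accuracy event) will give Parts~ii and~iii of Theorem~\ref{thm:histogram heavy}.
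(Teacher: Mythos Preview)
Your proposal is correct and follows essentially the same approach as the paper: the probability bound is read off from simultaneous accuracy, and the implication uses that under $E$ any zero-count bin has $\tc_x \le a_2$ while any bin with $c_x(D) > 2a_2$ has $\tc_x > a_2$, combined with the pigeonhole observation that at least one of $x_1,\ldots,x_{n+1}$ must have zero true count. Your version makes the pigeonhole step slightly more explicit by naming an index $x_j$ and bounding $\tc_{x_{n+1}} \le \tc_{x_j}$, but the logic is identical.
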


\begin{proof}
$\Pr[E] \ge 1-\beta$ follows from the $(a_2, \beta)$-simultaneous accuracy of ${\tt BasicHistogram}_{\cM,\cX}$.

Assume the event $E$.
Then for all $x \in \cX$ such that $c_x(D) > 2a_2$, we have $\tc_x > a_2$ and for all $x \in \cX$ such that $c_x(D) = 0$, we have $\tc_x \le a_2$.
Because there are at most $n$ distinct elements in $D$, we have $\tc_x > \tc_{x_{n+1}}$ for all $x \in \cX$ such that $c_x(D) > 2a_2$.
\end{proof}

\begin{proof}[Proof of Theorem \vpref{thm:histogram heavy} Part {\it ii}]
Let $x \in \cX$ such that $c_x(D) > 2a_2$.
We have
\begin{align*}
\Pr\left(|h_x- c_x(D)| > a_1\right)
&\le \Pr\left(\tc_x \le \tc_{x_{n+1}}\right)+\Pr\left(|\tc_x- c_x(D)| > a_1\right)
\le 2\beta
\end{align*}
by Lemma \vpref{lem:highlow} and the $(a_1,\beta)$-per-query accuracy of ${\tt BasicHistogram}_{\cM,\cX}$.
\end{proof}

\begin{proof}[Proof of Theorem \vpref{thm:histogram heavy} Part {\it iii}]
Assume the event $E$.
By Lemma \vpref{lem:highlow}, for all $x \in \cX$ such that $c_x(D) > 2a_2$ we have $\tc_x > \tc_{x_{n+1}}$ which implies $h_x = \tc_x$.
Thus, $|h_x - c_x(D)| \le a_2$.
For the remaining $x\in \cX$ we have $|h_x - c_x(D)| \le 2a_2$ as $h_x = 0$ or $h_x = \tc_x$.
\end{proof}

However, as described {\tt KeepHeavy} still requires adding noise to the count of every bin. The following algorithm ${\tt KH}'_{\cM}: \cX^n \rightarrow \cH_{n,n}(\cX)$ simulates ${\tt KeepHeavy}_{\cM}$ by generating a candidate set of heavy bins from which only the heaviest are released.
This candidate set is constructed from all bins with nonzero true count and a sample representing the bins with a true count of 0 that have the heaviest noisy counts.

\noindent\rule{\textwidth}{.5pt}
\begin{algorithm}\label{alg:histogram kh1}
${\tt KH}'_{\cM}(D)$
for $D \in \cX^n$ where $\cM: [n] \times [d]_+ \rightarrow [n]$ and $|\cX| \ge 2n+1$\footnote{$|\cX| \ge 2n+1$ ensures that $|\cX\setminus A| \ge n+1$. One can use ${\tt BasicHistogram}_{\cM,\cX}(D)$ when $|\cX| \le 2n$.}

\begin{enumerate}[leftmargin=.3in]
\item
	Let $A = \{x \in \cX : c_x(D) > 0\}$ and $m = |\cX\setminus A|$.
\item
	Let $\{(x, \tc_x)\}_{x \in A} = {\tt BasicHistogram}_{\cM,A}(D)$.
\item
	Pick a uniformly random sequence $(q_0,\ldots, q_n)$ of distinct elements from $\cX\setminus A$.
\item
	Sample $(\tc_{q_0}, \ldots, \tc_{q_n})$ from the joint distribution of the order statistics $(Z_{(m)}, \ldots,
Z_{(m - n)})$ where $Z_1, \ldots, Z_{m}$ are i.i.d. $\cM(0, U)$ random variables with $U \sim \unif([d]_+)$.
\item
 Sort the elements of $A \cup \{q_0, \ldots, q_n\}$ as $x_1, \ldots, x_{|A|+n+1}$ such that $\tc_{x_1} \ge \ldots \ge \tc_{x_{|A|+n+1}}$.
\item
	Release $h = \{(x, \tc_x): x \in \{x_1, \ldots, x_n\} \text{ and } \tc_x > \tc_{x_{n+1}}\} \in \cH_{n,n}(\cX)$.
\end{enumerate}
\end{algorithm}
\noindent\rule{\textwidth}{.5pt}

\begin{proposition}\label{prop:histogram kh1}
${\tt KH}'_{\cM}(D)$ is identically distributed to ${\tt KeepHeavy}_{\cM}(D)$.
\end{proposition}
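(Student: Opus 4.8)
The plan is to prove the identity by decomposing the internal state of ${\tt KeepHeavy}_{\cM}$ into the counts of bins with nonzero true count and the (many) counts of zero-count bins, and then showing that only a bounded amount of information about the latter actually influences the output. First I would observe that in ${\tt KeepHeavy}_{\cM}(D)$ the call ${\tt BasicHistogram}_{\cM,\cX}(D)$ generates, for $x\in A=\{x:c_x(D)>0\}$, independent counts $\tc_x=\cM(c_x(D),u_x)$ with $u_x\sim\unif([d]_+)$ --- exactly the joint law that ${\tt BasicHistogram}_{\cM,A}(D)$ produces inside ${\tt KH}'_{\cM}$ --- and, independently of these, generates for the $m=|\cX\setminus A|$ zero-count bins i.i.d.\ counts distributed as $\cM(0,U)$. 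So it suffices to fix $w=(\tc_x)_{x\in A}$ and prove that, conditioned on $w$, the outputs of the two algorithms have the same distribution; since $w$ has the same distribution in both and is independent of all the remaining randomness, the unconditional claim follows.

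Next I would show that, given $w$, the output $h$ of ${\tt KeepHeavy}_{\cM}$ is a deterministic function $G(w,\cdot)$ of the ``decorated top $n+1$'' of the zero-count bins, i.e.\ the $n+1$ pairs $(y,\tc_y)$ with largest $\tc_y$, listed with $\tc_y$ non-increasing. The elementary facts needed are: (i) the threshold $\tc_{x_{n+1}}$ (the $(n+1)$-st largest noisy count over all of $\cX$) equals the $(n+1)$-st largest value among the multiset $w$ together with the top $n+1$ zero-count values --- because adjoining values that are at most the current $(n+1)$-st largest cannot change it, and there are at least $n+1$ zero-count bins as $|\cX|\ge 2n+1$; (ii) any bin released by ${\tt KeepHeavy}_{\cM}$ has count strictly above $\tc_{x_{n+1}}$, hence lies among the top $n$ bins overall, hence in $A$ or among the top $n+1$ zero-count bins; and (iii) $G$ is insensitive to how ties among equal $\tc_y$ are broken, precisely because only bins with count \emph{strictly} above the threshold are released (as emphasized in the footnote to Algorithm \vpref{alg:histogram heavy}). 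The same three facts show that steps 5--6 of ${\tt KH}'_{\cM}$ compute $G(w,T')$, where $T'=((q_0,\tc_{q_0}),\ldots,(q_n,\tc_{q_n}))$ is the decorated sequence it builds in steps 3--4.

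It then remains to prove the distributional identity: the decorated top $n+1$ of $m$ i.i.d.\ $\cM(0,U)$ counts placed on the bins of $\cX\setminus A$ has the same law as $T'$, namely a uniformly random sequence $(q_0,\ldots,q_n)$ of distinct elements of $\cX\setminus A$ paired, \emph{independently}, with the top $n+1$ order statistics $(Z_{(m)},\ldots,Z_{(m-n)})$ of $m$ i.i.d.\ $\cM(0,U)$ variables. I would prove this by exchangeability: realize the zero-count noise by drawing i.i.d.\ $Z_1,\ldots,Z_m$ and an independent uniformly random bijection $R$ from $[m]_+$ onto $\cX\setminus A$, and setting $\tc_{R(j)}=Z_j$ (this has the right joint law since the $Z_j$ are i.i.d.). If $\rho$ is the (data-dependent) permutation that sorts the $Z_j$ in non-increasing order, then the bin holding the $i$-th largest count is $R(\rho(i))$ with value $Z_{(m-i+1)}$; conditioning on $Z_1,\ldots,Z_m$ shows $R\circ\rho$ is still a uniformly random bijection, hence independent of the order statistics. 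Thus $(R(\rho(1)),\ldots,R(\rho(n+1)))$ is a uniform sequence of distinct elements independent of $(Z_{(m)},\ldots,Z_{(m-n)})$, which is exactly $T'$. Combining with the previous paragraph, $h\sim G(w,T')$ in both algorithms, and averaging over $w$ completes the proof.

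I expect the main obstacle to be the bookkeeping of paragraph two --- verifying carefully that the threshold and the released set that ${\tt KH}'_{\cM}$ computes from $A$ and the sampled top $n+1$ zero-count bins genuinely coincide with what ${\tt KeepHeavy}_{\cM}$ would compute from all $|\cX|$ noisy counts, including the handling of ties exactly at the threshold value --- rather than the exchangeability step, which is short once the setup is in place.
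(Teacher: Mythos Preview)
Your proposal is correct and follows essentially the same approach as the paper's proof: split the noisy counts into those on $A$ and those on $\cX\setminus A$, argue that only the top $n+1$ decorated bins of the latter matter, and identify the law of those with the law sampled by ${\tt KH}'$. The paper's proof is much more terse --- it simply asserts that $\{(q_i,\tc_{q_i})\}_{i=0}^n$ has the same distribution as the $n+1$ heaviest empty bins (with uniform tie-breaking) ``due [to] the noisy counts of the empty bins being independent and identically distributed,'' and then writes a short chain of equalities --- whereas you supply the details behind both steps: the exchangeability argument (via a uniform bijection $R$ and a sorting permutation $\rho$) for the first, and the explicit verification that the threshold $\tc_{x_{n+1}}$ and the released set are unchanged when restricting from $\cX$ to $A\cup\{q_0,\dots,q_n\}$ (your points (i)--(iii)) for the second. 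The tie-handling via the strict inequality $\tc_x>\tc_{x_{n+1}}$ is exactly what makes the reduction clean, and you identify it correctly.
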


\begin{proof}
Let $\{\hat{c}_x\}_{x\in\cX}$ be the noisy counts set by ${\tt KeepHeavy}_{\cM}(D)$ and let $\hat{x}_1, \ldots, \hat{x}_{n+1}$ be the sorted ordering of the $n+1$ heaviest bins defined by these counts.
We have $\tc_x \sim \hat{c}_x$ for all $x \in A$ and the $Z_i$'s are identically distributed to $\{\hat{c}_x\}_{x \in \cX\setminus A}$.

$\{(q_i, \tc_{q_i})\}_{i=0}^n$ is identically distributed to the $n+1$ bins with heaviest counts of $\{(x,\hat{c}_x)\}_{x \in \cX\setminus A}$ (breaking ties uniformly at random) due the noisy counts of the empty bins being independent and identically distributed.
Therefore,
\begin{align*}
h
&= \{(x, \tc_x): x \in \{x_1, \ldots, x_n\} \text{ and } \tc_x > \tc_{x_{n+1}}\}\\
&= \{(x, \tc_x) : x \in A\cup\{q_0, \ldots, q_n\} \text{ and }\tc_x > \tc_{x_{n+1}}\}\\
&\sim \{(x, \hat{c}_x) : x \in \cX \text{ and }\hat{c}_x > \hat{c}_{\hat{x}_{n+1}}\}
\end{align*}
which shows that ${\tt KH}'_{\cM}(D)$ is identically distributed to ${\tt KeepHeavy}_{\cM}(D)$.
\end{proof}

The original version of our paper \cite{BaVa18} used an incorrect method for sampling the order statistics based on a claim that their conditional CDF is given by:
\begin{align*}
F_{Z_{(m-k)} | Z_{(m-k+1)} = v_{m-k+1}, \ldots, Z_{(m)} = v_{m}}(z)
&= \begin{cases}
(F(z) / F(v_{m-k+1}))^{m-k}& \text{if $ z \le v_{m-k+1}$}\\
1& \text{otherwise}
\end{cases}
\end{align*}
Unfortunately, this only holds for continuous distributions.
Here we correct the error with a different method of sampling from the order statistics.
In order to sample from the order statistics used by ${\tt KH}'$ we construct an algorithm that samples from binomial distributions to construct a histogram of $Z_{(m)},\ldots, Z_{(m-n)}$.

\begin{proposition}\label{prop:ordsample}
Let $n, d \in \N_+$ and $F: [n] \rightarrow [d]_+$ such that $F$ is non-decreasing and $F(n) = d$.
Let $m \in \N_+$ such that $m \ge n+1$.
Let $Z_1, \ldots, Z_m$ be i.i.d. random variables over $[n]$ with cumulative distribution function $F(z)/d$ for all $z \in [n]$.
Then the following algorithm ${\tt OrdSample}_{F}(m)$ is identically distributed to the top $(n+1)$ order statistics $(Z_{(m)}, \ldots, Z_{(m-n)})$.
\end{proposition}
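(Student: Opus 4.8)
The plan is to show that the tuple of top order statistics is a deterministic function of the \emph{value counts} $M_z := |\{i \in [m]_+ : Z_i = z\}|$ read off from $z = n$ downward, and that ${\tt OrdSample}_F$ produces (the relevant prefix of) this count vector via the standard sequential decomposition of a multinomial into a chain of binomials. First I would record the combinatorial observation: sorting $Z_1,\ldots,Z_m$ in decreasing order yields $M_n$ copies of $n$, then $M_{n-1}$ copies of $n-1$, and so on, so the tuple $(Z_{(m)}, Z_{(m-1)}, \ldots, Z_{(m-n)})$ depends only on the prefix $(M_n, M_{n-1}, \ldots, M_{z^*})$, where $z^*$ is the largest value with $\sum_{z \ge z^*} M_z \ge n+1$ (such a $z^*$ exists because $m \ge n+1$). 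In particular, once the running partial sum of counts reaches $n+1$ no further counts are needed, and a count that would overshoot $n+1$ may be truncated without affecting the top $n+1$ order statistics, since the surplus copies have rank at most $m - n - 1$.

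Next I would establish the conditional law of the counts. Writing $N_{>z} := \sum_{j > z} M_j$ for the number of samples strictly exceeding $z$, I claim that conditioned on the history $(M_n, M_{n-1}, \ldots, M_{z+1})$ — equivalently, for the purposes of the remaining samples, conditioned just on $N_{>z}$ — the $m - N_{>z}$ samples that are $\le z$ are i.i.d.\ with the law of $Z$ conditioned on $\{Z \le z\}$. Consequently
\[
M_z \mid (M_n, \ldots, M_{z+1}) \ \sim\ \mathrm{Binomial}\!\left(m - N_{>z},\ p_z\right), \qquad p_z := \Pr[Z = z \mid Z \le z] = \frac{F(z) - F(z-1)}{F(z)},
\]
where $F(-1) := 0$; this quantity is always well defined because the hypothesis $F : [n] \to [d]_+$ forces $F(z) \ge 1$ for all $z \in [n]$. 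This is the usual fact that a multinomial vector can be sampled coordinate by coordinate by repeated conditioning, and it follows from exchangeability of $Z_1, \ldots, Z_m$ together with the fact that whether $Z_i \le z$ is determined by whether $Z_i > z$, so that the bottom pile is an i.i.d.\ sample from the truncated law regardless of how the top pile split up.

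Finally I would read off the pseudocode of ${\tt OrdSample}_F$ and verify, by induction on its loop iterations (which run $z = n, n-1, \ldots$), that it maintains a counter equal to $N_{>z}$, draws the count at level $z$ from $\mathrm{Binomial}(m - N_{>z}, p_z)$ using only the values $F(z), F(z-1)$ (available since $F$ is the given non-decreasing function with $F(n) = d$), records the corresponding copies of $z$ in decreasing order, and halts the first time it has accumulated at least $n+1$ values, returning the first $n+1$ of them. Combining this with the previous two steps shows the output is distributed exactly as $(Z_{(m)}, \ldots, Z_{(m-n)})$.

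I expect the main obstacle to be stating and justifying the conditional-law step cleanly: one must condition on $N_{>z}$ (or the full prefix of counts) rather than on the identities or values of the large samples, and argue rigorously that the remaining samples form an i.i.d.\ draw from the truncated distribution. The remaining details — handling degenerate iterations where $m - N_{>z} = 0$ so the binomial is a.s.\ $0$, and confirming the truncation/halting argument from the first paragraph — should then be routine.
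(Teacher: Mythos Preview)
Your proposal is correct and essentially identical to the paper's proof: the paper likewise defines the value counts $Y_v = |\{i : Z_i = v\}|$, introduces truncated counts $Y'_v = \min\{Y_v,\, n+1 - \sum_{i>v} Y'_i\}$, records the same conditional binomial law $Y_v \mid (Y_n,\ldots,Y_{v+1}) \sim \mathrm{Bin}(m - \sum_{i>v} Y_i,\, p_v)$, and proves by downward induction that $\ell_v \mid (\ell_{v+1},\ldots,\ell_n) \sim Y'_v \mid (Y'_{v+1},\ldots,Y'_n)$.

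One small correction to your reading of the pseudocode: the counter $L_{v+1}$ is the sum of the \emph{already-truncated} draws $\ell_i$, not $N_{>v}$, and the algorithm does not halt early but continues with $\ell_v = 0$ once $L_{v+1} = n+1$. These agree with $N_{>v}$ only in the pre-truncation regime (where $Y'_i = Y_i$), which is exactly the observation the paper's induction uses to pass from $\mathrm{Bin}(m - \sum Y'_i, p_v)$ to $\mathrm{Bin}(m - \sum Y_i, p_v)$. Once you phrase your induction in terms of the truncated counts $Y'_v$ and the truncated running sum, the argument goes through exactly as you outline.
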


\begin{figure}[H]
\vspace{-.1in}
\rule{\textwidth}{.5pt}
\begin{algorithm}\label{alg:ordsample}
${\tt OrdSample}_F(m)$
for $m \in \N_+$ such that $m \ge n+1$ where $F: [n] \rightarrow [d]_+$ such that \hphantom{\hspace{16.25em}} $F$ is non-decreasing and $F(n) = d$

\begin{enumerate}[leftmargin=.3in]
\item
	Let $L_{n+1} = 0$.
\item
	For $v$ from $n$ to $1$, do the following:
	\begin{enumerate}[leftmargin=.3in]
	\item
		Sample $\ell_v \sim \min\{\mathrm{Bin}(m - L_{v+1}, p_v), n+1 - L_{v+1}\}$ where $p_v = 1 - \frac{F(v-1)}{F(v)}$.
	\item
		Let $L_{v} = L_{v+1} + \ell_v$.
	\end{enumerate}
\item
	Let $\ell_0 = n+1 - L_1$.
\item
	Return $(c_0, \ldots c_n)$ such that the first $\ell_n$ values are $n$, the next $\ell_{n-1}$ values are $n-1$ and so on until the last $\ell_0$ values are 0.\vspace{-.5ex}
\end{enumerate}
\end{algorithm}
\vspace{-.1in}
\rule{\textwidth}{.5pt}
\end{figure}

\begin{proof}
For $v \in [n]$, let $Y_v = |\{i : Z_i = v\}|$.
Using these random variables we can compute the order statistics $Z_{(m)}, \ldots, Z_{(1)}$ by taking the first $Y_n$ values to be $n$, the next $Y_{n-1}$ to be $n-1$ and so on.

Since we only need the top $n+1$ order statistics, it suffices for us to stop calculating the $Y_v$'s once their sum exceeds $n+1$.
That is, we can consider the random variables $Y'_v = \min\{Y_v, ~n+1 - \sum_{i=v+1}^n Y'_i\}$. Notice that
\begin{align*}
Y'_v
&= \begin{cases}
0 & \text{if $v < v^*$}\\
n+1 - \sum_{i=v^*+1}^n Y_i &\text{if $v = v^*$}\\
Y_v &\text{if $v > v^*$}
\end{cases}
&\text{ where } v^* = \max\{v \in [n] ~:~ \sum_{i=v}^{n} Y_i \ge n+1\}
\end{align*}
Therefore, using $(Y'_0, \ldots, Y'_n)$ we can compute the order statistics $(Z_{(m)}, \ldots, Z_{(m-n)})$ in the same manner as we did with the $Y_v$'s.

Observe that conditioned on $\{Y_n = y_n, Y_{n-1} = y_{n-1}, \ldots, Y_{v+1} = y_{v+1}\}$, we have $Y_v \sim \mathrm{Bin}(m - L_{v+1}, p_v)$ where $L_{v+1} = \sum_{i=v+1}^{n} Y_i$ is the number of ``assigned'' values of $Z_j$ and
\begin{align*}
p_v = \Pr[Z_j = v \,|\, Z_j \le v] = 1 - \Pr[Z_j \le v-1 \,|\, Z_j \le v] = 1 - \frac{F(v-1)}{F(v)}
\end{align*}

We now prove by induction from $v=n$ down to $v=1$ that
\begin{align*}
\ell_v \,|\, \ell_{v+1}, \ldots, \ell_n \sim Y_{v}' \,|\, Y_{v+1}',\ldots, Y_n'
\end{align*}
For the base case, observe that:
\begin{align*}
\ell_n
&\sim \min\{\mathrm{Bin}(m, p_n),\, n+1\}
\sim Y'_n
\end{align*}
For the induction step, we have that for $v \in [n-1]_+$,
\begin{align*}
\ell_v& ~|~ \ell_{v+1}, \ldots, \ell_n\\
&\sim \begin{cases}
\min\left\{\mathrm{Bin}(m - L_{v+1}, p_v), n+1 - L_{v+1}\right\}& \text{if $L_{v+1} < n+1$}\\
0 & \text{otherwise}
\end{cases}~|~ \ell_{v+1}, \ldots, \ell_n\\
&\sim \begin{cases}
\min\left\{\mathrm{Bin}(m - \sum_{i=v+1}^{n}Y'_i, p_v), n+1 - \sum_{i=v+1}^{n}Y'_i\right\}& \text{if $\sum_{i=v+1}^{n}Y'_i < n+1$}\\
0 & \text{otherwise}
\end{cases}~|~ Y'_{v+1}, \ldots, Y'_n\\
&\sim \begin{cases}
\min\left\{\mathrm{Bin}(m - \sum_{i=v+1}^{n}Y_i, p_v), n+1 - \sum_{i=v+1}^{n}Y'_i\right\}& \text{if $\sum_{i=v+1}^{n}Y'_i < n+1$}\\
0 & \text{otherwise}
\end{cases}~|~ Y'_{v+1}, \ldots, Y'_n\\
&\sim \min\left\{Y_v,~ n+1 - \sum_{i=v+1}^{n}Y'_i\right\} ~|~ Y'_{v+1}, \ldots, Y'_n\\
&\sim Y'_v ~|~ Y'_{v+1}, \ldots, Y'_n
\end{align*}
By construction, we have $\ell_0 ~|~ \ell_1, \ldots, \ell_n \sim Y'_0  ~|~ Y'_1, \ldots, Y'_n$.
Therefore, $(\ell_0, \ldots, \ell_n) \sim (Y'_0, \ldots, Y'_n)$ and so this algorithm returns values $(c_0, \ldots, c_n)$ distributed as $(Z_{(m)}, \ldots, Z_{(m-n)})$.
\end{proof}

Unfortunately, the running time of this algorithm is $\Omega(m)$ as it requires $\Omega(m)$ bits to represent the probabilities in $\mathrm{Bin}(m, p_v)$.
So while ${\tt KH}'_{\cM}$ only has an output of length $O(n\cdot(\log |\cX| + \log n))$, by using ${\tt OrdSample}_{F_\cM}$ to implement step 4, ${\tt KH}'_{\cM}$ has running time at least linear in $m \ge |\cX| - n$.
Indeed, this is necessary since the distribution of the order statistic $Z_{(m)}$ has probabilities that are exponentially small in
$m$.\footnote{Notice $\Pr[Z_{(m)} = 0] = \Pr[\mathcal{M}(0,U)]^m$.
Thus, we need $\Omega(m)$ random bits to sample from $Z_{(m)}$.}

\subsection{An Efficient Approximation}

To remedy the inefficiency of ${\tt KH}'$ we construct an efficient algorithm that approximates the output distribution of ${\tt KH}'$ by replacing {\tt OrdSample} with an efficient algorithm whose output distribution is close in statistical distance to that of {\tt OrdSample}, resulting in the following histogram algorithm:

\begin{theorem}\label{thm:histogram sparse}
Let $\varepsilon,\delta\in\N^{-1}$ and deterministic $\cM: [n] \times [d]_+ \rightarrow [n]$ be $(\varepsilon/2, 0)$-differentially private for counting queries.
Let $|\cX| \ge 2n + 1$.
Then there exists an algorithm ${\tt SparseHistogram}_{\cM,\delta}: \cX^n \rightarrow \cH_{n,n}(\cX)$ with the following properties:
\begin{enumerate}[label=\roman*.]
\item
	$\Delta\big({\tt KH}'_{\cM}(D), {\tt SparseHistogram}_{\cM,\delta}(D)\big) \le \delta$ for all $D \in \cX^n$.
\item
	${\tt SparseHistogram}_{\cM,\delta}$ is $\left(\varepsilon, (e^\varepsilon + 1)\cdot\delta\right)$-differentially private.
\item
	The running time of ${\tt SparseHistogram}_{\cM,\beta_1,\delta}$ is
	\begin{align*}
	\tilde{O}(n\cdot\log|\cX|& \cdot(\log |\cX| + \log (1/\delta)))
	+ O(n) \cdot \left(\tilde{O}(\log d) + \mathrm{Time}(\cM)+\mathrm{Time}(F_\cM)\right)
	\end{align*}
\end{enumerate}
\end{theorem}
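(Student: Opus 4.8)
The plan is to obtain ${\tt SparseHistogram}_{\cM,\delta}$ from ${\tt KH}'_{\cM}$ (Algorithm \ref{alg:histogram kh1}) by replacing its one expensive ingredient — the exact sampling of the top $(n+1)$ order statistics in Step 4, realized by ${\tt OrdSample}_{F_\cM}$ (Algorithm \ref{alg:ordsample}) — with an efficient approximate sampler, call it ${\tt ApproxOrdSample}_{F_\cM,\delta}$, whose output distribution is within statistical distance $\delta$ of that of ${\tt OrdSample}_{F_\cM}$. All the other work in ${\tt KH}'_{\cM}$ (running ${\tt BasicHistogram}_{\cM,A}$ on the $\le n$ nonempty bins, picking the random placeholders $q_0,\dots,q_n\in\cX\setminus A$, sorting, thresholding) already runs on a finite computer within the claimed time, so the whole theorem reduces to constructing and analyzing this sampler. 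Since ${\tt OrdSample}$ is the corrected (binomial-based) sampler of Proposition \ref{prop:ordsample}, this also repairs the earlier erroneous argument.

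For ${\tt ApproxOrdSample}$ I would keep the outer loop of ${\tt OrdSample}$ ($v$ from $n$ down to $1$, maintaining $L_{v+1}$) but replace the exact draw $\ell_v\sim\min\{\mathrm{Bin}(M_v,p_v),K_v\}$, where $M_v=m-L_{v+1}$, $K_v=n+1-L_{v+1}$ and $p_v=1-F_\cM(v-1)/F_\cM(v)$, by inverse-transform sampling against a bounded-precision approximation of this law. Concretely: compute $\widehat P_0\approx(1-p_v)^{M_v}$ by repeated squaring, rounding each partial product to $O(\log|\cX|+\log(1/\delta))$ significant bits — here one must be careful because squaring roughly doubles the relative error, so after $O(\log M_v)$ squarings the relative error is $\approx M_v\cdot 2^{-r}$, and this many bits are needed to keep it below $\delta/\poly(n)$; then draw $W\sim\unif([D]_+)$ with $D=\poly(n/\delta)$ and ``walk up'' the CDF, successively forming $\widehat P_i=\mathrm{round}(\widehat P_{i-1}\cdot\rho_{i-1})$, where $\rho_{i-1}=(M_v-i+1)(F_\cM(v)-F_\cM(v-1))/(i\cdot F_\cM(v-1))$ is an exact rational of bit length $O(\log|\cX|+\log d)$, accumulating $\widehat F(i)=\widehat F(i-1)+\widehat P_i$ to $\poly(\log n,\log(1/\delta))$ bits, and returning the first $i$ with $D\cdot\widehat F(i)\ge W$, or $K_v$ if none is found by $i=K_v-1$. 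All arithmetic here uses numbers of bit length $\poly(\log n,\log|\cX|,\log d,\log(1/\delta))$.

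For part (i): by the post-processing and sub-additivity properties of statistical distance (Lemma \ref{lem:tvd}, parts 2 and 3), $\Delta\big({\tt KH}'_{\cM}(D),{\tt SparseHistogram}_{\cM,\delta}(D)\big)\le\Delta\big({\tt OrdSample}_{F_\cM}(m),{\tt ApproxOrdSample}_{F_\cM,\delta}(m)\big)$, which is at most the sum over the $n$ loop iterations of the worst-case conditional distance between $\min\{\mathrm{Bin}(M_v,p_v),K_v\}$ and the distribution actually sampled. A routine additive-error bookkeeping — each of the $\le n+1$ relevant probability-mass values computed to additive error $O(\delta/n^3)$, plus $1/D\le\delta/n^2$ from the discretization — bounds each such conditional distance by $\delta/n$, giving the bound $\delta$. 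Part (ii) then follows by a standard argument: ${\tt KH}'_{\cM}(D)$ is identically distributed to ${\tt KeepHeavy}_{\cM}(D)$ (Proposition \ref{prop:histogram kh1}), which is $(\varepsilon,0)$-differentially private because $\cM$ is $(\varepsilon/2,0)$-differentially private for counting queries (Theorem \ref{thm:histogram heavy}(i) via Theorem \ref{thm:histogram basic}(i)); hence for neighbors $D,D'$ and any $S$, $\Pr[{\tt SparseHistogram}(D)\in S]\le\Pr[{\tt KH}'(D)\in S]+\delta\le e^\varepsilon\Pr[{\tt KH}'(D')\in S]+\delta\le e^\varepsilon\Pr[{\tt SparseHistogram}(D')\in S]+(e^\varepsilon+1)\delta$.

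The running time, part (iii), is the step I expect to require the most care, and where the real idea beyond ``just discretize'' lives. ${\tt BasicHistogram}_{\cM,A}$ with $|A|\le n$ costs $\tilde O(n\log^2|\cX|)+O(n)(\tilde O(\log d)+\ctime(\cM))$ by Theorem \ref{thm:histogram basic}(iv), and the sorting, thresholding, and sampling of the $q_i$ cost $\tilde O(n\log|\cX|)$. In ${\tt ApproxOrdSample}$, each of the $n$ outer iterations makes two calls to $F_\cM$ and computes $\widehat P_0$ via $O(\log M_v)=O(\log|\cX|)$ multiplications of $O(\log|\cX|+\log(1/\delta))$-bit numbers, contributing $O(n)\cdot\ctime(F_\cM)$ and $\tilde O\big(n\log|\cX|\cdot(\log|\cX|+\log(1/\delta))\big)$ in total. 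The apparent danger is that the ``walk up'' could take $\Theta(K_v)=\Theta(n)$ inner iterations per step, yielding $\Theta(n^2)$ multiplications of $O(\log|\cX|+\log d)$-bit numbers; the observation that removes this is that step $v$ performs at most $\ell_v+1$ inner iterations (we stop as soon as we output $\ell_v$), and $\sum_{v=1}^n\ell_v=L_1\le n+1$ since ${\tt OrdSample}$ fills exactly $n+1$ slots, so the total inner work is only $O\big(n+\sum_v\ell_v\big)=O(n)$ multiplications of $\poly(\log n,\log(1/\delta))+O(\log|\cX|+\log d)$-bit numbers, i.e.\ $\tilde O(n\log|\cX|+n\log(1/\delta))+O(n)\tilde O(\log d)$, all absorbed into the stated bound. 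Adding these up gives $\tilde O\big(n\log|\cX|\cdot(\log|\cX|+\log(1/\delta))\big)+O(n)\big(\tilde O(\log d)+\ctime(\cM)+\ctime(F_\cM)\big)$.
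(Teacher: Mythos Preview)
Your high-level strategy is exactly the paper's: leave every step of ${\tt KH}'_{\cM}$ intact except step~4, replace ${\tt OrdSample}_{F_\cM}$ by an approximate version whose output is within statistical distance $\delta$, and derive parts~(i) and~(ii) by post-processing and the triangle-through-${\tt KH}'$ argument you wrote. The key running-time observation you identify --- that the walk-up in outer step $v$ does at most $\ell_v+1$ inner steps and $\sum_v\ell_v\le n+1$ --- is likewise exactly the mechanism the paper uses to avoid a quadratic dependence on $n$.

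Where you diverge is in the approximate binomial sampler itself. The paper represents the truncated binomial PMF as an integer vector in $[s]^{t'}$ and computes it by repeated-squaring \emph{convolution} ({\tt ApproxConvExp}, Proposition~\ref{prop:ApproxConvExp}), with a clean $\ell_1$ error bound $\|d/s-*^{(m)}(a)/s^m\|_1\le t'(m-1)/s$ coming from the integer floors; it then makes the cost depend on the output $\ell$ rather than on $K_v$ via a doubling trick ($t'=1,2,4,\ldots$). Your sampler instead computes $\widehat P_0\approx(1-p_v)^{M_v}$ as a floating-point number by scalar repeated squaring and walks up via the exact ratio $\rho_i$. Both approaches yield the stated time bound, and both exploit $\sum_v\ell_v\le n+1$ in the same way. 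Your route is arguably more elementary --- it needs only scalar arithmetic, whereas the paper relies on fast polynomial multiplication to get $\tilde O(t'\log s)$ per convolution --- but it carries more bookkeeping you have only sketched: you must track a floating-point exponent of magnitude up to $M_v\log d$ (so $O(\log|\cX|+\log\log d)$ exponent bits), argue that relative error $O(M_v\cdot 2^{-r})$ on $\widehat P_0$ propagates unchanged through the exact-ratio recurrence, convert relative to additive error when summing into $\widehat F$, and special-case $p_v\in\{0,1\}$ (your $\rho$ is undefined at $p_v=1$). None of these is a genuine obstacle, but the paper's all-integer scheme sidesteps them entirely, which is what it buys.
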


Note that this algorithm only achieves $(\varepsilon, O(\delta))$-differential privacy.
By reducing $\delta$, the algorithm better approximates ${\tt KH}'$, at only the cost of increasing running time (polynomial in the bit length of $\delta$).
This is in contrast to most $(\varepsilon, \delta)$-differentially private algorithms such as the stability based algorithm of Section \vpref{ssec:general stability}, where one needs $n \ge \Omega(\log(1/\delta)/\varepsilon)$ to get any meaningful accuracy.

We will prove Theorem \vpref{thm:histogram sparse} in Section \vpref{ssec:construction}.
Before that we will convert {\tt SparseHistogram} to a pure differentially private algorithm by mixing it with random output, following Lemma \vpref{lem:mixture}.

\begin{figure}[H]
\rule{\textwidth}{.5pt}
\begin{algorithm}\label{alg:histogram puresparse}
${\tt PureSparseHistogram}_{\cM,\varepsilon,\beta_1}(D)$
for $D \in \cX^n$ where deterministic $\cM: [n] \times \hphantom{\hspace{24.25em}}[d]_+ \rightarrow [n]$, $\varepsilon,\beta_1 \in \N^{-1}$ s.t. $\beta_1 < 1$ and
\hphantom{\hspace{24.25em}}$|\cX| \ge 2n+1$
\begin{enumerate}[leftmargin=.3in]
\item
	With probability $1 - \beta_1$ release ${\tt SparseHistogram}_{\cM, \delta}(D)$ with
	\begin{align*}
	\delta
	= \frac{\varepsilon}{3}\cdot \beta_1 \cdot\left(\frac{1}{3\cdot|\cX|}\right)^n
	\end{align*}
\item
	Otherwise
	\begin{enumerate}
	\item
		Draw $(x_1, \ldots x_n)$ uniformly at random from $\cX^n$.
	\item
		Let $Q$ be the set of distinct elements from $(x_1, \ldots, x_n)$.
	\item
		For each $q \in Q$, sample $\tc_{q}$ uniformly at random from $[n]$.
	\item
		Release $h = \{(q, \tc_{q}) : q \in Q \text{ and } \tc_{q} > 0\} \in \cH_{n,n}(\cX)$.
	\end{enumerate}
\end{enumerate}
\end{algorithm}
\vspace{-.1in}
\rule{\textwidth}{.5pt}
\end{figure}

\begin{theorem}\label{thm:histogram puresparse}
Let $\varepsilon, \beta_1 \in \N^{-1}$ such that $\beta_1 < 1$
 and $|\cX| \ge 2n + 1$.
Let deterministic $\cM: [n] \times [d]_+ \rightarrow [n]$ be $(\varepsilon/2, 0)$-differentially private for counting queries such that ${\tt BasicHistogram}_{\cM,\cX}$ has $(a_1,\beta_2)$-per-query accuracy and $(a_2, \beta_2)$-simultaneous accuracy with $a_1 \le a_2$.

Then ${\tt PureSparseHistogram}_{\cM,\varepsilon,\beta_1}: \cX^n \rightarrow \cH_{n,n}(\cX)$ has the following properties:
\begin{enumerate}[label=\roman*.]
\item
	${\tt PureSparseHistogram}_{\cM,\varepsilon,\beta_1}$ is $(\varepsilon,0)$-differentially private.
\item
	${\tt PureSparseHistogram}_{\cM,\varepsilon,\beta_1}$ has $(a_1, 2\beta_1 + 2\beta_2)$-per-query accuracy on counts larger than $2a_2$.
\item
	${\tt PureSparseHistogram}_{\cM,\varepsilon,\beta_1}$ has $(2a_2, 2\beta_1 + \beta_2)$-simultaneous accuracy.
\item
	${\tt PureSparseHistogram}_{\cM,\varepsilon,\beta_1}$ has running time
	\begin{align*}
	\tilde{O}(n^2\log^2 |\cX| + n\cdot \log |\cX| \cdot \log(1/(\beta_1\eps))) + O(n) \cdot ( \tilde{O}(\log d) + \mathrm{Time}(\cM) + \mathrm{Time}(F_\cM))
	\end{align*}
\end{enumerate}
\end{theorem}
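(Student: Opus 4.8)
The first observation is that ${\tt PureSparseHistogram}_{\cM,\varepsilon,\beta_1}$ is literally the mixing construction $\cM^*_{\cM',\cD,\gamma}$ of Algorithm \ref{alg:mixture} instantiated with $\cM' = {\tt SparseHistogram}_{\cM,\delta}$, mixing weight $\gamma = \beta_1$, and $\cD$ equal to the distribution over $\cH_{n,n}(\cX)$ produced by Step 2 of Algorithm \ref{alg:histogram puresparse}. So the plan is to apply Lemma \ref{lem:mixture} using as the reference $(\varepsilon,0)$-differentially private algorithm ${\tt KH}'_{\cM}$: by Proposition \ref{prop:histogram kh1} it is identically distributed to ${\tt KeepHeavy}_{\cM}$, hence $(\varepsilon,0)$-differentially private by Theorem \ref{thm:histogram heavy}, and by Theorem \ref{thm:histogram sparse} it satisfies $\Delta({\tt KH}'_{\cM}(D), {\tt SparseHistogram}_{\cM,\delta}(D)) \le \delta$ for every $D$. (One also checks that the prescribed $\delta = \tfrac{\varepsilon}{3}\beta_1 (1/(3|\cX|))^n$ lies in $\N^{-1}$, since $\varepsilon,\beta_1 \in \N^{-1}$, so Theorem \ref{thm:histogram sparse} indeed applies.) Parts ii--iv will then follow by pushing the accuracy and running-time bounds of ${\tt KeepHeavy}_{\cM}$ and ${\tt SparseHistogram}_{\cM,\delta}$ through the statistical-distance loss $\delta$ and the mixing loss $\beta_1$.

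For Part i, by the setup above it remains only to verify inequality \eqref{eq:mix inequality}, i.e.\ that
\[
\delta \le \frac{e^{\varepsilon}-1}{e^{\varepsilon}+1}\cdot\frac{\beta_1}{1-\beta_1}\cdot\min_{r\in\cH_{n,n}(\cX)}\Pr_{Z\sim\cD}[Z=r].
\]
I would dispose of two of the three factors immediately: $\tfrac{e^{\varepsilon}-1}{e^{\varepsilon}+1} = \tanh(\varepsilon/2) \ge \varepsilon/3$ for $\varepsilon \in (0,1]$ (the difference vanishes at $0$ and has nonnegative derivative on $[0,1]$) and $\tfrac{\beta_1}{1-\beta_1} \ge \beta_1$. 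The crux is to show $\min_{r\in\cH_{n,n}(\cX)}\Pr_{Z\sim\cD}[Z=r] \ge (1/(3|\cX|))^n$, which is exactly what makes the chosen $\delta$ satisfy \eqref{eq:mix inequality}. To prove this I would fix a target $r$ with support $\{y_1,\dots,y_k\}$, $k\le n$, and positive counts; restricting attention to the draws $(x_1,\dots,x_n)$ whose set of distinct values is exactly $\{y_1,\dots,y_k\}$ — there are at least $k!$ of these (for each ordering of $y_1,\dots,y_k$, pad with copies of $y_1$) — each occurring with probability $|\cX|^{-n}$ and then matching all $k$ counts of $r$ with conditional probability $(n+1)^{-k}$, yields $\Pr[Z=r] \ge k!/(|\cX|^n (n+1)^k)$; the case $k=0$ is handled directly using $n+1\le 3^n$. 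Finally $\tfrac{k!}{(n+1)^k} \ge \tfrac{n!}{(n+1)^n} \ge 3^{-n}$, the first step because $n!/k! = (k+1)\cdots n \le (n+1)^{n-k}$ and the second being the elementary inequality $(n+1)^n \le 3^n\, n!$ (a direct check for small $n$, and $n!\ge(n/e)^n$ together with $n/e\ge(n+1)/3$ for $n\ge 10$).

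For Parts ii and iii, note that with probability $1-\beta_1$ the output is ${\tt SparseHistogram}_{\cM,\delta}(D)$, which is within statistical distance $\delta$ of ${\tt KH}'_{\cM}(D) \sim {\tt KeepHeavy}_{\cM}(D)$. Theorem \ref{thm:histogram heavy} (with its parameter ``$\beta$'' taken to be $\beta_2$) gives that ${\tt KeepHeavy}_{\cM}$ has $(a_1, 2\beta_2)$-per-query accuracy on counts larger than $2a_2$ and $(2a_2, \beta_2)$-simultaneous accuracy. Each of these bounds transfers to ${\tt SparseHistogram}_{\cM,\delta}$ at an additive cost of $\delta$ in the failure probability, and then the $\beta_1$-weight ``random'' branch of ${\tt PureSparseHistogram}$ adds a further additive $\beta_1$; since $\delta \le \beta_1$ (immediate from the formula for $\delta$, using $\varepsilon\le1$ and $|\cX|\ge1$), this gives $(a_1, 2\beta_1+2\beta_2)$-per-query accuracy on counts larger than $2a_2$ and $(2a_2, 2\beta_1+\beta_2)$-simultaneous accuracy.

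For Part iv, computing the rational $\delta$ (by repeated squaring of $3|\cX|$) and executing the ``random'' branch (drawing and sorting $n$ elements of $\cX$ as a red--black tree and drawing $n$ counts from $[n]$) each cost $\tilde O(n\log n\cdot\log|\cX| + \log(1/(\varepsilon\beta_1)))$, which is dominated; for the main branch I would invoke Theorem \ref{thm:histogram sparse} with $\log(1/\delta) = O(n\log|\cX| + \log(1/(\varepsilon\beta_1)))$ and substitute into its running-time bound $\tilde O(n\log|\cX|\cdot(\log|\cX|+\log(1/\delta))) + O(n)\cdot(\tilde O(\log d) + \ctime(\cM) + \ctime(F_{\cM}))$ to obtain exactly the claimed expression. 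I expect the single genuinely nontrivial step in the whole argument to be the lower bound $\min_r\Pr_{\cD}[Z=r] \ge (1/(3|\cX|))^n$ in Part i (together with its underlying inequality $(n+1)^n\le 3^n n!$); everything else is bookkeeping on top of Theorems \ref{thm:histogram heavy}, \ref{thm:histogram sparse} and Lemma \ref{lem:mixture}.
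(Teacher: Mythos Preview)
Your proposal is correct and follows essentially the same route as the paper: identify ${\tt PureSparseHistogram}$ as the mixture $\cM^*$ of Lemma~\ref{lem:mixture} with reference mechanism ${\tt KH}'_{\cM}$, verify inequality~\eqref{eq:mix inequality} via the lower bound $\min_r\Pr_{\cD}[Z=r]\ge n!/((n+1)^n|\cX|^n)\ge(1/(3|\cX|))^n$, and then push the accuracy bounds of Theorem~\ref{thm:histogram heavy} through the $\delta$ and $\beta_1$ losses using $\delta\le\beta_1$. Your treatment is in fact more careful than the paper's in two places: you justify the bound $\min_r\Pr_{\cD}[Z=r]\ge(1/(3|\cX|))^n$ by reducing to $k!/(n+1)^k\ge n!/(n+1)^n$ and $(n+1)^n\le 3^n n!$ (the paper just asserts the intermediate $n!/((n+1)^n|\cX|^n)$ bound), and you explicitly check $(e^\varepsilon-1)/(e^\varepsilon+1)\ge\varepsilon/3$ on $(0,1]$ and $\delta\in\N^{-1}$, which the paper leaves implicit.
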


Notice that the running time of this algorithm depends nearly quadratically on $n$. It is an open problem as to whether or not one can improve the nearly quadratic dependence in running time on $n$ to nearly linear while maintaining the sparsity, privacy and accuracy guarantees achieved by this algorithm.
The nearly quadratic running time dependence on $n$ is due to approximating the distribution of each of the $O(n)$ order statistics in ${\tt KH}'$ to within a statistical distance that is exponentially small in $n$ (in order to apply Lemma \vpref{lem:mixture}).
See Section \vpref{ssec:construction} for more details.

\begin{proof}[Proof of {\it i}]
Define the distribution $\cD$ over $\cH_{n,n}(\cX)$ as the histogram returned by second step of {\tt PureSparseHistogram}. Notice that $\cD$ has full support over $\cH_{n,n}(\cX)$ and is determined by our sample from $\cX^n$ and $[n]^n$. Thus,
\begin{align*}
\min_{h' \in \cH_{n,n}(\cX)}\Pr_{h \sim \cD}[h = h'] \ge \frac{n!}{(n+1)^n \cdot {|\cX|}^{n}} \ge \left(\frac{1}{3 \cdot |\cX|}\right)^n
\end{align*}

Privacy follows from Theorem \vpref{thm:histogram sparse} and Lemma \vpref{lem:mixture} by taking $\cM = {\tt KH}'_{\cM}$, $\cR = \cH_{n,n}(\cX)$, $\cM' = {\tt SparseHistogram}_{\cM,\delta}$, $\gamma = \beta_1$ and $\cD = \cD$ as
\begin{align*}
\delta
&\le \frac{e^\varepsilon - 1}{e^\varepsilon + 1}\cdot\frac{\beta_1}{1-\beta_1}\cdot \min_{h' \in \cH_{n,n}(\cX)}\Pr_{h \sim \cD}[h = h'] \qedhere
\end{align*}
\end{proof}

\begin{proof}[Proof of {\it ii}-{\it iii}]
Let $D \in \cX^n$. For any $x \in \cX$ such that $c_x(D)>2a_2$ define the set $G = \{h \in \cH_{n,n}(\cX) : |h_x - c_x(D)| \le a_1 \}$. By construction,
\begin{align*}
\Pr[{\tt PureSparseHistoygram}_{\cM,\varepsilon,\beta_1}(D) \in G]
&\ge \Pr[{\tt SparseHistogram}_{\cM,\delta}(D) \in G] - \beta_1
\end{align*}
where $\delta$ is defined in Algorithm \vpref{alg:histogram puresparse}. Notice that $\delta \le \beta_1$. So by Theorem \vpref{thm:histogram sparse} Part {\it i},
\begin{align*}
\Pr[{\tt SparseHistogram}_{\cM,\delta}(D) \in G] - \beta_1
&\ge \Pr[{\tt KH}'_{\cM}(D) \in G] - 2\beta_1
\end{align*}
 And by Theorem \vpref{thm:histogram heavy} Part {\it ii} and Proposition \vpref{prop:histogram kh1}, we have
\begin{align*}
\Pr[{\tt KH}'_{\cM}(D) \in G] - 2\beta_1
&\ge 1 - 2\beta_1 - 2\beta_2
\end{align*}
Similarly, we can bound the simultaneous accuracy by using Theorem \vpref{thm:histogram heavy} Part {\it iii}.
\end{proof}

\begin{proof}[Proof of {\it iv}]
The second step of {\tt PureSparseHistogram} can be computed in $O(n\log n \cdot \log|\cX|)$ time.
The running time follows from Theorem \vpref{thm:histogram sparse} Part {\it iv} with
\begin{align*}
\log \frac{1}{\delta}
&= O\left(n \cdot\log |\cX| + \log\frac{1}{\beta_1 \cdot \varepsilon}\right)\qedhere
\end{align*}
\end{proof}

Now we can use {\tt PureSparseHistogram} with the counting query algorithms of Section \ref{sec:count}.

\begin{theorem}\label{thm:explicit pure}
Let $\varepsilon, \beta_0 \in \N^{-1}$ and $\cM = {\tt GeoSample}_{n,\varepsilon}$.
Then
\begin{enumerate}[label=\roman*.]
\item
	${\tt PureSparseHistogram}_{\cM,\varepsilon,\beta_0/4}$ is $(\varepsilon,0)$-differentially private.
\item
	For every $\beta \ge \beta_0$, ${\tt PureSparseHistogram}_{\cM,\varepsilon,\beta_0/4}$ has $(a,\beta)$-per-query accuracy on counts larger than $t$ for
	\begin{align*}
	a = \left\lceil\frac{9}{2\varepsilon}\ln\left(\frac{4}{\beta}\right)\right\rceil
	\hspace{1em}\text{and}\hspace{1.25em}
	t = 2\cdot\left\lceil\frac{9}{2\varepsilon}\ln\left(\frac{4\cdot|\cX|}{\beta}\right)\right\rceil
	\end{align*}
\item
	For every $\beta \ge \beta_0$, ${\tt PureSparseHistogram}_{\cM,\varepsilon,\beta_0/4}$ has $(a,\beta)$-simultaneous accuracy for
	\begin{align*}
	a = 2\cdot\left\lceil\frac{9}{2\varepsilon}\ln\left(\frac{2\cdot|\cX|}{\beta}\right)\right\rceil
	\end{align*}
\item
	${\tt PureSparseHistogram}_{\cM,\varepsilon,\beta_0/4}$ has running time
	\begin{align*}
	\tilde{O}\left(n^2 \cdot \log^2 |\cX| + n^2\cdot \log ({1}/{\eps}) + n \cdot \log |\cX|\cdot \log(1/(\beta_0\cdot\eps))\right)
	\end{align*}
\end{enumerate}
\end{theorem}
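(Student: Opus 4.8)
The plan is to obtain Theorem~\ref{thm:explicit pure} as a direct instantiation of Theorem~\ref{thm:histogram puresparse} with $\cM = {\tt GeoSample}_{n,\varepsilon}$ and $\beta_1 = \beta_0/4$. First I would check the hypotheses of Theorem~\ref{thm:histogram puresparse}: $\varepsilon\in\N^{-1}$ is given; $\beta_1 = \beta_0/4\in\N^{-1}$ and $\beta_1<1$ since $\beta_0\in\N^{-1}$; by Theorem~\ref{thm:count geo}, ${\tt GeoSample}_{n,\varepsilon}$ is a deterministic map $[n]\times[d]_+\to[n]$ that is $(\teps/2,0)$-differentially private for counting queries, and since $\teps\le\varepsilon$ and pure differential privacy is monotone in the privacy parameter, it is also $(\varepsilon/2,0)$-differentially private for counting queries. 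The side condition $|\cX|\ge 2n+1$ of Theorem~\ref{thm:histogram puresparse} I would handle separately: when $|\cX|\le 2n$ one instead runs ${\tt BasicHistogram}_{\cM,\cX}$ (as noted in the footnote to Algorithm~\ref{alg:histogram kh1}), whose running time is then $O(n)\cdot\poly$ and whose accuracy and privacy follow from Theorem~\ref{thm:histogram basic}, so the claimed bounds hold in that regime as well; for the remainder I assume $|\cX|\ge 2n+1$.

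Next I would pin down the accuracy parameters of ${\tt BasicHistogram}_{{\tt GeoSample}_{n,\varepsilon},\cX}$ to feed into Theorem~\ref{thm:histogram puresparse}. By Theorem~\ref{thm:count geo}, ${\tt GeoSample}_{n,\varepsilon}$ has $(\lceil(2/\teps)\ln(1/\beta')\rceil,\beta')$-accuracy for every $\beta'\in(0,1]$, and since $\teps>4\varepsilon/9$ we have $2/\teps<9/(2\varepsilon)$, so this error is at most $\lceil(9/(2\varepsilon))\ln(1/\beta')\rceil$. Hence by Theorem~\ref{thm:histogram basic}(ii)--(iii), ${\tt BasicHistogram}_{\cM,\cX}$ has $(\lceil(9/(2\varepsilon))\ln(1/\beta_2)\rceil,\beta_2)$-per-query accuracy for every $\beta_2$, and---using per-bin confidence $\beta_2/|\cX|$ together with the union bound $1-(1-\beta_2/|\cX|)^{|\cX|}\le\beta_2$---also $(\lceil(9/(2\varepsilon))\ln(|\cX|/\beta_2)\rceil,\beta_2)$-simultaneous accuracy. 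Since ${\tt PureSparseHistogram}_{\cM,\varepsilon,\beta_0/4}$ is one fixed algorithm, I may invoke Theorem~\ref{thm:histogram puresparse} with a different choice of $\beta_2$ for each accuracy claim. For part~(ii) I take $\beta_2=\beta/4$, so that $a_1=\lceil(9/(2\varepsilon))\ln(4/\beta)\rceil$ and $a_2=\lceil(9/(2\varepsilon))\ln(4|\cX|/\beta)\rceil\ge a_1$; Theorem~\ref{thm:histogram puresparse}(ii) then gives $(a_1,2\beta_1+2\beta_2)$-per-query accuracy on counts larger than $2a_2$, and $2\beta_1+2\beta_2=\beta_0/2+\beta/2\le\beta$ since $\beta_0\le\beta$, which is exactly the stated bound with threshold $t=2a_2$. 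For part~(iii) I take $\beta_2=\beta/2$, so that $a_2=\lceil(9/(2\varepsilon))\ln(2|\cX|/\beta)\rceil\ge a_1$; Theorem~\ref{thm:histogram puresparse}(iii) gives $(2a_2,2\beta_1+\beta_2)$-simultaneous accuracy with $2\beta_1+\beta_2=\beta_0/2+\beta/2\le\beta$, again matching the statement.

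Part~(i) is then immediate from Theorem~\ref{thm:histogram puresparse}(i) given the verified hypotheses. For part~(iv) I would substitute into the running-time bound of Theorem~\ref{thm:histogram puresparse}(iv) the quantities for $\cM={\tt GeoSample}_{n,\varepsilon}$ from Theorem~\ref{thm:count geo}: $\log d=O(n\log(1/\varepsilon))$ and $\ctime(\cM)=\ctime(F_\cM)=\tilde O(n\log(1/\varepsilon))$, together with $\beta_1=\beta_0/4$ so that $\log(1/(\beta_1\varepsilon))=O(\log(1/(\beta_0\varepsilon)))$. This collapses the bound to $\tilde O\bigl(n^2\log^2|\cX|+n\log|\cX|\log(1/(\beta_0\varepsilon))\bigr)+O(n)\cdot\tilde O(n\log(1/\varepsilon))$, which equals the stated $\tilde O\bigl(n^2\log^2|\cX|+n^2\log(1/\varepsilon)+n\log|\cX|\log(1/(\beta_0\varepsilon))\bigr)$.

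I do not expect a genuine obstacle here, since the argument is essentially plugging known quantities into Theorem~\ref{thm:histogram puresparse}; the points requiring care are (a) keeping the two settings of $\beta_2$ straight so the constants inside the ceilings ($4$ versus $2$, and the extra factor $2$ in the threshold and in the simultaneous error) come out exactly, (b) the monotonicity step deducing $(\varepsilon/2,0)$-privacy of ${\tt GeoSample}$ from its $(\teps/2,0)$-privacy, and (c) the $|\cX|\le 2n$ fallback to ${\tt BasicHistogram}_{\cM,\cX}$. These are all routine bookkeeping rather than conceptual difficulties.
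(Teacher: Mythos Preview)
Your proposal is correct and is exactly the intended approach: the paper presents Theorem~\ref{thm:explicit pure} as a direct instantiation of Theorem~\ref{thm:histogram puresparse} with $\cM={\tt GeoSample}_{n,\varepsilon}$ and $\beta_1=\beta_0/4$, without a separate written proof. Your bookkeeping (the monotonicity step from $(\teps/2,0)$- to $(\varepsilon/2,0)$-privacy, the two choices $\beta_2=\beta/4$ and $\beta_2=\beta/2$ to hit the stated constants, and the running-time substitution $\log d,\ \ctime(\cM),\ \ctime(F_\cM)=\tilde O(n\log(1/\varepsilon))$) is precisely what is needed.
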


\begin{figure}[H]\label{fig:hist table pure}
\centering
\renewcommand{\arraystretch}{1.8}
\begin{tabular}{cc}
$\cM$ & Running Time\\
\hline
{\tt GeoSample} &
$\tilde{O}\left(n^2 \cdot \log^2 |\cX| + n^2\cdot \log ({1}/{\eps}) + n \cdot \log |\cX|\cdot \log(1/(\beta\cdot\eps))\right)$\\
{\tt FastSample} &
$\tilde{O}\left(n^2 \cdot \log^2|\cX| + ({n}/{\eps}) \cdot \log (|\cX|/\beta) + n\cdot \log |\cX|\cdot\log (1/(\beta \cdot\eps)) \right)$
\end{tabular}
\vspace{10pt}

\begin{tabular}{cccc}
\vspace{-.8em} & \multicolumn{2}{c}{$(a,\beta)$-Per-Query on $c_x(D) > t$} & \\
$\cM$ & $a$ & $t$ &$(a,\beta)$-Simultaneous\\
\hline
{\tt GeoSample} &
$\left\lceil\frac{9}{2\varepsilon}\ln\frac{4}{\beta}\right\rceil$ &
$2\cdot\left\lceil\frac{9}{2\varepsilon}\ln\frac{4|\cX|}{\beta}\right\rceil$ &
$2\cdot\left\lceil\frac{9}{2\varepsilon}\ln\frac{2|\cX|}{\beta}\right\rceil$\\
{\tt FastSample} &
$\left\lceil\frac{9}{2\varepsilon}\ln\frac{8}{\beta}\right\rceil$ &
$2\cdot\left\lceil\frac{9}{2\varepsilon}\ln\frac{8|\cX|}{\beta}\right\rceil$ &
$2\cdot\left\lceil\frac{9}{2\varepsilon}\ln\frac{4|\cX|}{\beta}\right\rceil$
\end{tabular}
\caption{The running time and errors of ${\tt PureSparseHistogram}_{\cM,\varepsilon,\beta/4}$ for the counting query algorithms of Section \vpref{sec:count} where $\varepsilon, \beta \in \N^{-1}$.
For per-query accuracy, the first value is the error $a$ and the second value is the threshold $t$.
Values shown are for a $(\varepsilon,0)$-differentially private release.
For ${\tt FastSample}$, we take $\gamma = \beta/(8|\cX|)$.}
\end{figure}

\subsection{Construction of {\tt SparseHistogram}}
\label{ssec:construction}

We finish this section with the construction of ${\tt SparseHistogram}$.
Notice that if we implement ${\tt KH}'$ by using ${\tt OrdSample}(m)$ to implement step 4, then {\tt OrdSample} will have to sample from $\mathrm{Bin}(i,p)$ for $i \ge |\cX| - n$ and rational $p$.
This in turn has probabilities as small as $p^i$, whose bit length is too large for us.

In order to keep our bit lengths manageable, we will only sample from a distribution close in statistical distance to the desired Binomial distribution.
We will leverage the fact that a Binomial random variable can be represented as the sum of independent Bernoulli random variables.
By representing the probability mass function of a Bernoulli as a vector, we can then compute the probability mass function of a Binomial by repeatedly convolving this vector.
\begin{definition}
Let $a, b \in \R^t$. We define the \textbf{convolution} $a * b \in \R^t$ such that $(a * b)_k = \sum_{i=0}^{k}a_ib_{k-i}$ for all $k \in [t-1]$.
And we define \textbf{$i$-fold convolution of $a$} denoted $*^{(i)}(a)$ to be $i-1$ convolutions of $a$ with itself.
\end{definition}

\begin{lemma}
Let $Y_1, Y_2 \in [t]$ be independent discrete random variables with probability mass functions $f_{Y_1}$ and $f_{Y_2}$ represented as vectors in $\R^{t+1}$.
Then $\Pr[Y_1 + Y_2 = k] = (f_{Y_1} * f_{Y_2})_k$ for all $k \in [t]$.
\end{lemma}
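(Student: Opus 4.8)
The plan is to prove this by the standard ``law of total probability plus independence'' argument, with the only real care needed being the index bookkeeping so that the resulting finite sum matches the definition of convolution exactly. First I would fix $k \in [t]$ and condition on the value of $Y_1$, writing
\begin{align*}
\Pr[Y_1 + Y_2 = k]
= \sum_{i \in \Z} \Pr[Y_1 = i \text{ and } Y_2 = k - i]
= \sum_{i \in \Z} \Pr[Y_1 = i]\cdot\Pr[Y_2 = k - i],
\end{align*}
where the second equality is exactly the independence of $Y_1$ and $Y_2$.

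Next I would prune the sum to a finite range. Since $f_{Y_1}$ is supported on $[t] = \{0,1,\dots,t\}$, the $i$-th term vanishes unless $0 \le i \le t$; since $f_{Y_2}$ is supported on $[t]$ as well, it also vanishes unless $0 \le k - i \le t$, i.e. $k - t \le i \le k$. For $k \in [t]$ we have $k \le t$ and $k - t \le 0$, so these constraints together force $0 \le i \le k$, giving
\begin{align*}
\Pr[Y_1 + Y_2 = k]
= \sum_{i=0}^{k} \Pr[Y_1 = i]\cdot\Pr[Y_2 = k - i]
= \sum_{i=0}^{k} (f_{Y_1})_i\,(f_{Y_2})_{k-i}.
\end{align*}
Every index $i$ and $k - i$ appearing here lies in $[t]$, so the vector entries are well defined, and the right-hand side is by definition $(f_{Y_1} * f_{Y_2})_k$, which is what we wanted.

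There is essentially no hard step; the only point deserving a moment of attention is verifying that the support constraints collapse the bi-infinite sum to precisely $\sum_{i=0}^{k}$, matching the summation range in the definition of $*$. This is also the reason the lemma restricts to $k \in [t]$ rather than all $k \in \Z$: since $Y_1 + Y_2$ can take values up to $2t$, for $k > t$ there would be additional nonzero terms $\Pr[Y_1 = i]\Pr[Y_2 = k-i]$ with $i > k - t$, which the length-$(t+1)$ convolution simply truncates away.
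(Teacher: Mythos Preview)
Your proof is correct and is exactly the standard argument; the paper in fact states this lemma without proof, treating it as a well-known fact, so your write-up is a correct fill-in of the omitted detail.
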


\begin{corollary}\label{corr:convolve binomial}
Let $Z \sim \mathrm{Bin}(m, p)$ and $Y \sim \mathrm{Bern}(p)$ with probability mass functions $f_{Z}$ and $f_{Y}$ represented as vectors in $\R^{m+1}$.
In particular, $f_{Y} = (1-p, p, 0, \ldots, 0)$.
Then we have $f_Z = *^{(m)}(f_Y)$.
\end{corollary}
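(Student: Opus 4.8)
The plan is to derive the corollary from the preceding lemma by a straightforward induction on $m$, using the elementary fact that a $\mathrm{Bin}(m,p)$ random variable is the sum of $m$ independent $\mathrm{Bern}(p)$ random variables. First I would fix independent $Y_1,\dots,Y_m \sim \mathrm{Bern}(p)$ and set $Z^{(j)} = Y_1 + \cdots + Y_j$ for $j \in [m]_+$, so that $Z^{(m)} \sim \mathrm{Bin}(m,p) \sim Z$. Note that $Z^{(j)}$ is supported on $\{0,1,\dots,j\} \subseteq [m]$, so its probability mass function is faithfully represented by a vector in $\R^{m+1}$ (with coordinates beyond index $j$ equal to zero); this is the bookkeeping point that lets every convolution below take place inside the single fixed space $\R^{m+1}$.

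For the base case $j = 1$, we have $Z^{(1)} = Y_1 \sim Y$, and by the definition of the $i$-fold convolution, $*^{(1)}(f_Y) = f_Y$ (that is, zero convolutions), so $f_{Z^{(1)}} = *^{(1)}(f_Y)$ as vectors in $\R^{m+1}$. For the inductive step, suppose $f_{Z^{(j-1)}} = *^{(j-1)}(f_Y)$ for some $2 \le j \le m$. Since $Z^{(j)} = Z^{(j-1)} + Y_j$ with $Z^{(j-1)}$ and $Y_j$ independent and both supported inside $[m]$, the lemma (applied with $t = m$) gives $f_{Z^{(j)}} = f_{Z^{(j-1)}} * f_{Y_j} = *^{(j-1)}(f_Y) * f_Y = *^{(j)}(f_Y)$, where the last step is just the definition of the $j$-fold convolution. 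Taking $j = m$ yields $f_Z = f_{Z^{(m)}} = *^{(m)}(f_Y)$. Finally I would record the explicit form $f_Y = (1-p,\,p,\,0,\dots,0) \in \R^{m+1}$, which is immediate from $\Pr[Y = 0] = 1-p$ and $\Pr[Y = 1] = p$.

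I do not expect a genuine obstacle here; this is a routine consequence of the lemma. The only thing worth a moment's care is the dimension and indexing consistency just mentioned: one should check that no probability mass is ``lost'' when the intermediate sums $Z^{(j)}$ (which a priori one might think of as living in $\R^{j+1}$) are viewed inside $\R^{m+1}$, which holds because a sum of at most $m$ Bernoulli variables never exceeds $m$, and that the convolution formula $(a*b)_k = \sum_{i=0}^{k} a_i b_{k-i}$ is unaffected by this zero-padding for every $k \le m$.
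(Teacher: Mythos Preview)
Your argument is correct and is exactly the intended derivation: the paper states this as an immediate corollary of the preceding lemma and gives no separate proof, and your induction on $m$ using $Z^{(j)} = Z^{(j-1)} + Y_j$ is precisely how one unpacks ``immediate'' here. The dimension/zero-padding remark is the only subtlety and you have handled it.
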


The following algorithm will approximate $i$-fold convolution of $a$ by using an algorithm similar to exponentiation by repeated squaring while truncating each intermediate result to keep its bit length manageable.
The following lemma provides a bound on the error and on the running time.

\begin{proposition}\label{prop:ApproxConvExp}
There is an algorithm {\tt ApproxConvExp}$_{s,t}(a, i)$ such that for all $i, s, t \in \N_+$ and $a \in [s]^t$ such that $\|a\|_1 \le s$:

\begin{enumerate}[label=\roman*.]
\item
	${\tt ApproxConvExp}_{s,t}(a, i) \in [s]^t$ and $\|{\tt ApproxConvExp}_{s,t}(a, i)\|_1 \le s$.
\item
	The approximation does not change as $t$ varies. In particular, let $d = {\tt ApproxConvExp}_{s,t}(a, i)$. Then for $t' < t$
	\begin{align*}
	(d_0, \ldots, d_{t'-1}) = {\tt ApproxConvExp}_{s,t'}((a_0, \ldots, a_{t'-1}), i)
	\end{align*}
\item
	${\tt ApproxConvExp}_{s,t}(a, i)$ satisfies the accuracy bound
	\begin{align*}
	\left\|\frac{{\tt ApproxConvExp}_{s,t}(a, i)}{s} - \frac{*^{(i)}(a)}{s^i}\right\|_1
	&\le \frac{t}{s}\cdot (i-1)
	\end{align*}
\item     
	${\tt ApproxConvExp}_{s,t}(a, i)$ has running time $O(\log i) \cdot \tilde{O}(t\cdot\log s) + O(\log^2 i)$.
\end{enumerate}
\end{proposition}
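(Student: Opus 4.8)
The plan is to realize ${\tt ApproxConvExp}_{s,t}(a,i)$ as an ``exponentiation by repeated squaring'' in which, after \emph{every} truncated convolution, we rescale by dividing each coordinate by $s$ and rounding down. Concretely, working always with vectors in $[s]^t$, the algorithm starts from $a^{*1}=a$ and, following the binary expansion of $i$, forms $O(\log i)$ intermediate ``powers'': given approximations $v_1$ to $*^{(j_1)}(a)/s^{j_1-1}$ and $v_2$ to $*^{(j_2)}(a)/s^{j_2-1}$ with $j_1,j_2\ge 1$, it produces $v_{12}=\lfloor (v_1 * v_2)/s\rfloor$ (coordinate-wise floor, the convolution truncated to length $t$), which is meant to approximate $*^{(j_1+j_2)}(a)/s^{j_1+j_2-1}$; the output is the approximation reached at exponent $i$. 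As a preliminary step I will record that truncated convolution is associative and commutative — $((a*b)*c)_k=\sum_{i_1+i_2+i_3=k}a_{i_1}b_{i_2}c_{i_3}$, and analogously for more factors — so that $*^{(i)}(a)$ is unambiguous regardless of grouping, and so that the ``true'' target at exponent $j_1+j_2$ is exactly $*^{(j_1)}(a) * {*^{(j_2)}(a)}$ with \emph{no} truncation error incurred along the way.

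For Part (i), I will induct over the combination tree: if $v_1,v_2\in[s]^t$ with $\|v_1\|_1,\|v_2\|_1\le s$, then $v_1 * v_2$ has nonnegative integer entries, $\|v_1 * v_2\|_1\le\|v_1\|_1\|v_2\|_1\le s^2$, and each coordinate is at most $\|v_1\|_1\|v_2\|_\infty\le s^2$; dividing by $s$ and rounding down keeps entries nonnegative, bounds each by $s$, and keeps the $\ell_1$ norm at most $s$. Part (ii) follows from the ``causal'' shape of the primitive operations: coordinate $k$ of a truncated convolution depends only on coordinates $0,\dots,k$ of its inputs, and the coordinate-wise floor-division and the final truncation are even more local; since the control flow depends only on $i$, computing with length $t$ and restricting to the first $t'<t$ coordinates agrees, by induction over the computation, with computing with length $t'$ on $(a_0,\dots,a_{t'-1})$.

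The crux is Part (iii). Write $\hat a=a/s$, $\hat w_j=*^{(j)}(\hat a)=*^{(j)}(a)/s^j$, so $\|\hat w_j\|_1\le\|\hat a\|_1^{\,j}\le 1$, and let $\hat v_j=v_j/s$ be the algorithm's scaled approximation, with error $\epsilon_j:=\|\hat v_j-\hat w_j\|_1$. At a combination step $v_{12}=\lfloor(v_1 * v_2)/s\rfloor$ I bound $\epsilon_{j_1+j_2}$ by two contributions: first, the exact-convolution error $\|\hat v_1 * \hat v_2-\hat w_1 * \hat w_2\|_1\le\|\hat v_1\|_1\,\epsilon_2+\epsilon_1\,\|\hat w_2\|_1\le\epsilon_1+\epsilon_2$, where $\|\hat v_1\|_1\le 1$ and $\|\hat w_2\|_1\le 1$ — this is exactly why dividing by $s$ after each convolution is the ``right'' normalization, since it keeps the normalized vectors sub-stochastic and prevents the errors from being amplified; second, the rounding error, which moves each of $t$ coordinates of $(v_1 * v_2)/s$ by less than $1$ and so contributes at most $t/s$ to $\|\hat v_{12}-\hat v_1 * \hat v_2\|_1$. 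Hence $\epsilon_{j_1+j_2}\le\epsilon_{j_1}+\epsilon_{j_2}+t/s$ with $\epsilon_1=0$; since every combination merges exponents that are each $\ge 1$, induction over the combination tree gives $\epsilon_j\le (t/s)(j-1)$ (if $j=j_1+j_2$ then $\epsilon_j\le(t/s)(j_1-1)+(t/s)(j_2-1)+t/s=(t/s)(j-1)$, the $-1$'s telescoping). Taking $j=i$ and unscaling yields $\bigl\|\,{\tt ApproxConvExp}_{s,t}(a,i)/s-*^{(i)}(a)/s^i\,\bigr\|_1=\epsilon_i\le (t/s)(i-1)$.

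For Part (iv), repeated squaring performs $O(\log i)$ truncated convolutions and $O(\log i)$ rescalings. A convolution of two length-$t$ vectors with nonnegative integer entries of $O(\log s)$ bits can be computed by packing coordinates into blocks of $O(\log s+\log t)$ bits, performing a single integer multiplication on the resulting $\tilde O(t\log s)$-bit numbers (using the paper's $\tilde O(x)$ bound for multiplying $x$-bit integers) and reading off the output coordinates, in $\tilde O(t\log s)$ time; the coordinate-wise floor-division over $t$ coordinates is likewise $\tilde O(t\log s)$. The additive $O(\log^2 i)$ term absorbs the bookkeeping over the $O(\log i)$ exponents (each an $O(\log i)$-bit integer). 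Summing over rounds gives $O(\log i)\cdot\tilde O(t\log s)+O(\log^2 i)$. The main obstacle — beyond the mild care needed for associativity of truncated convolution — is pinning down the normalization in the error analysis so that each step costs a clean additive $t/s$ over a tree whose $-1$ offsets telescope precisely to $i-1$, rather than an error that compounds multiplicatively with the depth of the squaring recursion.
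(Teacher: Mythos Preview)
Your proposal is correct and follows essentially the same repeated-squaring-with-rescaling construction as the paper, with the same proofs of Parts~(i), (ii), and (iv). The one cosmetic difference is in Part~(iii): the paper bounds the squaring error via the factorization $\hat b^{*2}-\hat w^{*2}=(\hat b+\hat w)*(\hat b-\hat w)$, yielding $\|\cdot\|_1\le 2\epsilon_{\lfloor i/2\rfloor}$, whereas you use the bilinear splitting $\|\hat v_1*\hat v_2-\hat w_1*\hat w_2\|_1\le\|\hat v_1\|_1\,\epsilon_2+\epsilon_1\,\|\hat w_2\|_1\le\epsilon_1+\epsilon_2$---a slightly more general recurrence that works for any combination tree and telescopes to the identical bound $(t/s)(i-1)$.
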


\begin{proof}
The algorithm is defined as follows.
\begin{figure}[H]
\vspace{-.1in}
\rule{\textwidth}{.5pt}
\begin{algorithm}\label{alg:ApproxConvExp}{\tt ApproxConvExp}$_{s,t}(a, i)$ for $i,s,t \in \N_+$ and $a \in [s]^t$
\begin{enumerate}[leftmargin=.3in]
\item
	If $i = 1$, stop and return $a$.
\item
	Let $b = {\tt ApproxConvExp}_{s,t}\left(a, \lfloor i / 2\rfloor\right)$.
\item
	Compute $c = \begin{cases}
(b * b)/s& \text{if $i$ is even}\\
(a * b * b)/s^2& \text{if $i$ is odd}
\end{cases}
$
\item
	Return $d = (\lfloor c_0\rfloor, \lfloor c_1\rfloor, \ldots, \lfloor c_{t-1}\rfloor)$.
\end{enumerate}
\end{algorithm}\vspace{-.1in}
\rule{\textwidth}{.5pt}
\end{figure}

\begin{proof}[Proof of {\it i}]
We prove this part by induction on $i$.
The result is trivial for $i = 1$.
Let $i > 1$.
By the inductive hypothesis for $b = {\tt ApproxConvExp}_{s, t}(a, \lfloor i / 2\rfloor)$, we have $\|b\|_1 \le s$.
If $i$ is odd, then $c = (b * b) / s$ and we have $\|c\|_1 \le \|b\|_1^2/s \le s$.
If $i$ is even, then $c = (a * b * b) / s^2$ and we have $\|c\|_1 \le \|a\|_1 \cdot \|b\|_1^2/s^2 \le s$.
Finally, for the output $d = {\tt ApproxConvExp}_{s,t}(a, i) = (\lfloor c_0\rfloor, \ldots, \lfloor c_{t-1}\rfloor)$, we have $\|d\|_1 \le \|c\|_1 \le s$.
\end{proof}

\begin{proof}[Proof of {\it ii}]
Let $a' = (a_0, \ldots, a_{t'-1})$.
We prove this part by induction on $i$.
The result is trivial for $i = 1$.
Let $i > 1$.
By the inductive hypothesis, for $b = {\tt ApproxConvExp}_{s,t}(a, \lfloor i / 2\rfloor)$ and $b' = {\tt ApproxConvExp}_{s,t'}(a', \lfloor i / 2\rfloor)$ we have $b' = (b_0, \ldots, b_{t'-1})$.
Then for $k \in [t'-1]$ and $i$ even
\begin{align*}
{\tt ApproxConvExp}_{s,t}(a, i)_k
&= \left\lfloor (b * b)_k / s\right\rfloor\\
&= \left\lfloor (b' * b')_k / s\right\rfloor
= {\tt ApproxConvExp}_{s,t'}(a', i)_k
\end{align*}
where the second equality holds because $(b * b)_k$ only depends on the first $k+1$ terms of $b$ (i.e. $b_0, \ldots, b_k$) which are equal those of $b'$ since $k < t'$.
Similarly this induction holds for $i$ odd.
\end{proof}

\begin{proof}[Proof of {\it iii}]
We prove this part by induction on $i$.
The result is trivial for $i = 1$.
Let $i > 1$.
For $i$ odd, we can bound the error as
\begin{align*}
\left\|\frac{d}{s} - \frac{*^{(i)}(a)}{s^i}\right\|_1
&\le \left\|\frac{d - c}{s}\right\|_1 + \left\|\frac{c}{s} - \frac{*^{(i)}(a)}{s^i}\right\|_1\\
&\le \frac{t}{s} + \left\|\left(\frac{a}{s} * \frac{b}{s} * \frac{b}{s}\right) - \left(\frac{a}{s} * \frac{*^{(\lfloor i / 2 \rfloor)}(a)}{s^{\lfloor i / 2 \rfloor}} * \frac{*^{(\lfloor i / 2 \rfloor)}(a)}{s^{\lfloor i / 2 \rfloor}}\right)\right\|_1\\
&\le \frac{t}{s} + \left\|\frac{a}{s}\right\|_1 \cdot \left\|\frac{b}{s} + \frac{*^{(\lfloor i / 2 \rfloor)}(a)}{s^{\lfloor i/2 \rfloor}}\right\|_1 \cdot \left\|\frac{b}{s} - \frac{*^{(\lfloor i / 2 \rfloor)}(a)}{s^{\lfloor i/2 \rfloor}}\right\|_1\\
&\le \frac{t}{s} + 2 \cdot \left\|\frac{b}{s} - \frac{*^{(\lfloor i / 2 \rfloor)}(a)}{s^{\lfloor i/2 \rfloor}}\right\|_1\\
&\le \frac{t}{s}\cdot (i-1)
\end{align*}
with the last step by the inductive hypothesis.
Similarly this induction holds for $i$ even.
\end{proof}

\begin{proof}[Proof of {\it iv}]
The running time follows from the observation that a call to ${\tt ApproxConvExp}_{s,t}\left(a, i\right)$ makes at most $O(\log i)$ recursive calls with each dominated by computing $\lfloor i /2\rfloor$ and the convolution of up to three vectors in $[s]^t$ which can be done in time $\tilde{O}(t \cdot\log s)$ (\cite{vzGaGe13} Corollary 8.27).
\end{proof}
\vspace{-1.5em}
\renewcommand{\qedsymbol}{}
\end{proof}

Using ${\tt ApproxConvExp}$ we can approximately sample from a Binomial distribution truncated to a specified upper bound.

\begin{proposition}\label{prop:approxbinsample}
There exists an algorithm ${\tt ApproxBinSample}_{s, t}(m, p, q)$ such that for $m, q, s \in \N_+$ with $s \ge m$, $t \in \N$, $p \in [q]$ and $Z \sim \min\{\mathrm{Bin}(m, p/q),\,t\}$, we have
\begin{align*}
\Delta({\tt ApproxBinSample}_{s,t}(m, p, q), Z)
&\le \frac{m\cdot(t+1) - t}{s}
\end{align*}
In addition for every $\ell \in [t]$, every execution of ${\tt ApproxBinSample}_{s,t}(m, p, q)$ that produces an output of $\ell$ has running time at most
\begin{align*}
\tilde{O}(\log q) + O(\log m) \cdot \tilde{O}(\ell\cdot\log s)
\end{align*}
\end{proposition}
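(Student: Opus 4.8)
The plan is to realize ${\tt ApproxBinSample}_{s,t}(m,p,q)$ in three stages: (a) form an integer vector $a\in[s]^t$ that approximates $s$ times the Bernoulli$(p/q)$ mass function, (b) feed it to ${\tt ApproxConvExp}_{s,\cdot}(a,m)$ to get an approximation of the (scaled) Binomial mass function restricted to its first $t$ entries, and (c) inverse-transform sample from the induced distribution on $\{0,1,\dots,t\}$, dumping all leftover mass onto the atom $t$. Concretely I would take $a=(a_0,a_1,0,\dots,0)$ with $a_1=\mathrm{round}(sp/q)$ and $a_0=s-a_1$, so $\|a\|_1=s$ and $a\in[s]^t$; the key point is that since $a_0/s-(1-p/q)=-(a_1/s-p/q)$, the error is exactly $2|a_1/s-p/q|\le 1/s$, i.e.\ $\|a/s-f_Y\|_1\le 1/s$ where $f_Y=(1-p/q,\,p/q,\,0,\dots)$. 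By Corollary~\ref{corr:convolve binomial}, $*^{(m)}(f_Y)$ is the true mass function of $\mathrm{Bin}(m,p/q)$.

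For accuracy, write $g=*^{(m)}(f_Y)$, let $\hat g_j=({\tt ApproxConvExp}_{s,t}(a,m))_j/s$, let $W$ be the sampler's output and $Z=\min\{\mathrm{Bin}(m,p/q),t\}$. By construction $\Pr[W=j]=\hat g_j$ and $\Pr[Z=j]=g_j$ for $j<t$, while $\Pr[W=t]=1-\sum_{j<t}\hat g_j$ and $\Pr[Z=t]=1-\sum_{j<t}g_j$. Hence the discrepancy at the capped atom telescopes, $|\Pr[W=t]-\Pr[Z=t]|=\bigl|\sum_{j<t}(g_j-\hat g_j)\bigr|\le\sum_{j<t}|g_j-\hat g_j|$, so $2\Delta(W,Z)\le 2\sum_{j<t}|g_j-\hat g_j|$. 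Now split $\sum_{j<t}|g_j-\hat g_j|$ through $*^{(m)}(a)/s^m$: one piece is $\le\|*^{(m)}(f_Y)-*^{(m)}(a/s)\|_1\le m\,\|f_Y-a/s\|_1\le m/s$ by the same convolution-power telescoping used in the proof of Proposition~\ref{prop:ApproxConvExp}(iii) (using $\|f_Y\|_1=\|a/s\|_1=1$), and the other is $\le\|{\tt ApproxConvExp}_{s,t}(a,m)/s-*^{(m)}(a)/s^m\|_1\le\tfrac{t}{s}(m-1)$ by Proposition~\ref{prop:ApproxConvExp}(iii). Adding these gives $\Delta(W,Z)\le\tfrac{m}{s}+\tfrac{t(m-1)}{s}=\tfrac{m(t+1)-t}{s}$.

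For the running time, I would implement stage (c) incrementally so its cost tracks the realized output $\ell$, not $t$: sample $U\sim\unif([s]_+)$, then for $\ell'=1,2,4,\dots$ compute the length-$\min\{\ell',t\}$ vector ${\tt ApproxConvExp}_{s,\min\{\ell',t\}}(a,m)$ — by Proposition~\ref{prop:ApproxConvExp}(ii) these successive prefixes are mutually consistent (and agree with the length-$t$ computation used in the accuracy analysis) — accumulate $d_0,d_1,\dots$ until the running sum first reaches $U$ (output that index), or until all $t$ entries are exhausted (output $t$). Each call ${\tt ApproxConvExp}_{s,2^k}(a,m)$ costs $O(\log m)\cdot\tilde{O}(2^k\log s)+O(\log^2 m)$ by Proposition~\ref{prop:ApproxConvExp}(iv); the loop halts at the first $2^k$ exceeding $\ell$ (and $\ell=t$ if it reaches $t$), so a geometric sum bounds the total work over all rounds by $O(\log m)\cdot\tilde{O}(\ell\cdot\log s)$ (absorbing the per-round $O(\log^2 m)$ overhead), plus a one-time $\tilde{O}(\log q)$ to compute $a_1=\mathrm{round}(sp/q)$ via a single division. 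Since $\ell$ is a deterministic function of $U$ and the number of doubling rounds is determined by $\ell$, this bound holds for every execution producing output $\ell$.

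The delicate points are: to hit the exact constant $\tfrac{m(t+1)-t}{s}$ one must round $sp/q$ to the \emph{nearest} integer (yielding Bernoulli error $1/s$, not $2/s$), and one must handle the truncation at $t$ via the telescoping observation above (the discrepancy on the capped atom is \emph{controlled by}, not \emph{added to}, the discrepancies on $\{0,\dots,t-1\}$). On the complexity side, the non-routine ingredient is prefix-consistency, Proposition~\ref{prop:ApproxConvExp}(ii): it is exactly what licenses the doubling trick that makes the running time scale with the actual output $\ell$ rather than with $t$. I expect the truncation/telescoping bookkeeping in the accuracy proof to be the main thing to get exactly right.
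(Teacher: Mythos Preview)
Your proposal is correct and follows essentially the same approach as the paper: approximate the Bernoulli mass by an integer vector, apply ${\tt ApproxConvExp}$, inverse-transform sample with the residual mass dumped on $t$, and use the doubling trick (justified by the prefix-consistency property, Proposition~\ref{prop:ApproxConvExp}(ii)) to make the running time scale with the realized output $\ell$. The one cosmetic difference is that the paper takes $p'=\lfloor sp/q\rfloor$ rather than rounding to nearest, and routes the Bernoulli error through the intermediate variable $Z'\sim\min\{\mathrm{Bin}(m,p'/s),t\}$ via $\Delta(Z,Z')\le m\,\Delta(Y,Y')\le m/s$ (total variation, which is half the $\ell_1$ distance), rather than through $\|*^{(m)}(f_Y)-*^{(m)}(a/s)\|_1$ as you do; so floor suffices for the exact constant as well, contrary to your remark that rounding is necessary.
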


Let $a = (1-p/q, p/q, 0, \ldots, 0)$. By using ${\tt ApproxConvExp}_{s,t}$, we can approximate $*^{(m)}(a)$ and therefore the CDF of $Z$ for $k < t$.
Then to approximately sample from $Z$ we follow Lemma \vpref{lem:sampling discrete} by first generating a random uniform $u$ and then outputting the smallest $\ell$ such that the CDF at $\ell$ is at least $u$.
This would yield what is claimed in Proposition \vpref{prop:approxbinsample}, except that the running time would depend nearly linearly on $t$ instead of the specific output $\ell$.
To remedy this, we approximate the first $t'$ terms of $*^{(m)}(a)$ for $t'=1,2,4,\ldots$ until we find an $\ell < t'$ such that CDF of $Z$ at $\ell$ is at least $u$.

\begin{proof}
The algorithm is defined as follows.

\begin{figure}[H]
\vspace{-.1in}
\rule{\textwidth}{.5pt}
\begin{algorithm}\label{alg:approxbinsample}
{\tt ApproxBinSample}$_{s, t}(m, p, q)$ for $m, s, q \in \N_+$ such that $s \ge m$, $t \in \N$ and $p \in [q]$
\begin{enumerate}[leftmargin=.3in]
\item
	If $t = 0$, stop and return $0$.
\item
	Sample $u$ uniformly at random from $[s]_+$.
\item
	Let $t' = 1$ and $p' = \lfloor sp/q\rfloor$.
\item
	While $t' < 2t$, do the following:
	\begin{enumerate}
	\item
		Let $a \in [s]^{t'}$ such that $a_0 = s - p'$, $a_1 = p'$ and $a_k = 0$ for $k \ge 2$.
	\item
		Let $d^{(t')} = {\tt ApproxConvExp}_{s, t'}(a, m)$.
	\item
		Let $F^{(t')}_0 = d^{(t')}_0$ and $F^{(t')}_k = F^{(t')}_{k-1} + d^{(t')}_k$ for $k \in [t'-1]_+$.	
	\item
		For $\ell \in [\lfloor t' / 2\rfloor,\,\ldots,\, t'-1]$, do the following:
		\begin{enumerate}
		\item
			If $F^{(t')}_\ell \ge u$, stop and return $\min\{\ell, t\}$.
		\end{enumerate}
	\item
		Set $t'$ to $2t'$.
	\end{enumerate}
\item
	Return $t$.
\end{enumerate}
\end{algorithm}\vspace{-.1in}
\rule{\textwidth}{.5pt}
\end{figure}

Assume $t > 0$ (as accuracy is trivial otherwise).
Notice that ${\tt ApproxBinSample}_{s,t}(m,p) \in [t]$ and by Proposition \vpref{prop:ApproxConvExp} Parts {\it i}-{\it ii}, we have $F^{(t')}_{k} = F^{(t)}_{k}$ and $F^{(t)}_{k} \in [s]$ for $k < t' \le t$.
Then by construction for $\ell \in [t-1]$,
\begin{align*}
\Pr[{\tt ApproxBinSample}_{s,t}(m,p, q) = \ell]
&= \Pr\left[u \in \left(F^{(t)}_{\ell-1},\, F^{(t)}_{\ell}\right]\right]
= \frac{1}{s} \cdot d^{(t)}_\ell
\end{align*}
Now, let $Z' \sim \min\{\mathrm{Bin}(m, p'/s),\,t\}$.
Let $Y \sim \mathrm{Bern}(p / q)$ and $Y' \sim \mathrm{Bern}(p' / s)$.
Notice that 
\begin{align*}
\Delta(Z,\, Z')
&\le m\cdot\Delta(Y,\, Y')
\le \frac{m}{s}
\end{align*}
and, by Corollary \ref{corr:convolve binomial}, $\Pr[Z' = k] = *^{(m)}(a)_k/s^m$ for $k \in [t-1]$ where $a = (s - p', p', 0, \ldots, 0) \in [s]^t$.
Therefore, by Proposition \vpref{prop:ApproxConvExp} Part {\it iii}
\begin{align*}
\Delta({\tt ApproxBinSample}_{s,t}(m,p, q), Z)
&\le \Delta(Z, Z') + \Delta({\tt ApproxBinSample}_{s,t}(m,p, q), Z')\\
&\le \frac{m}{s} + \left\|\frac{d^{(t)}}{s} - \frac{*^{(m)}(a)}{s^m}\right\|_1\\
&\le \frac{m\cdot(t+1) - t}{s}
\end{align*}

We consider each step to calculate the running time.
Step 3 takes times $\tilde{O}(\log s + \log q)$.
The $i$-th iteration of step 4 has $t' = 2^{i-1}$.
Therefore, for the $i$-th iteration step 4b takes $O(\log m) \cdot \tilde{O}(2^i\cdot\log s)$ time and steps (4c-d) take $O(2^i \cdot\log s)$ time.
Now, for the algorithm to output $\ell$, it must halt in the $1 + \lceil\log_2(\ell+1)\rceil$-th iteration of step 4. Therefore, the overall running time is
\begin{align*}
\tilde{O}(\log s + \log q) + \sum_{i=1}^{1 + \lceil\log_2(\ell+1)\rceil} O(\log m) \cdot \tilde{O}(2^i\cdot\log s)
&= \tilde{O}(\log q) + O(\log m) \cdot \tilde{O}(\ell\cdot\log s)\qedhere
\end{align*}
\end{proof}

Now we can modify {\tt OrdSample} by replacing sampling from a binomial distribution with a call to {\tt ApproxBinSample} to keep the bit lengths of its numbers from becoming too large, yielding an efficient algorithm whose output distribution is close to that of {\tt OrdSample}.

\begin{proposition}\label{prop:approxordsample}
Let $n, d \in \N_+$ and $F: [n] \rightarrow [d]_+$ such that $F$ is non-decreasing and $F(n) = d$.
Let $m \in \N_+$ such that $m \ge n + 1$ and $s \in \N_+$ such that $s \ge m$.
Then the following algorithm ${\tt ApproxOrdSample}_F(m, s)$ satisfies
\begin{align*}
\Delta\left({\tt OrdSample}_F(m), {\tt ApproxOrdSample}_F(m, s)\right) \le \frac{m \cdot (n^2+2n)}{s}
\end{align*}
In addition, ${\tt ApproxOrdSample}_F(m, s)$ has running time
\begin{align*}
O(n) \cdot \left(\mathrm{Time}(F) + \tilde{O}(\log d)\right) + O(\log m) \cdot \tilde{O}(n\cdot\log s)
\end{align*}
where $\mathrm{Time}(F)$ is the worst-case time to evaluate $F$.
\end{proposition}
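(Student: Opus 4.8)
The plan is to define ${\tt ApproxOrdSample}_F(m,s)$ to run exactly like ${\tt OrdSample}_F(m)$ (Algorithm~\ref{alg:ordsample}), but with the exact draw $\ell_v \sim \min\{\mathrm{Bin}(m-L_{v+1},p_v),\,n+1-L_{v+1}\}$ of step~2a replaced by the call ${\tt ApproxBinSample}_{s,\,n+1-L_{v+1}}\bigl(m-L_{v+1},\,F(v)-F(v-1),\,F(v)\bigr)$, which approximately samples that distribution since $p_v = 1 - F(v-1)/F(v) = (F(v)-F(v-1))/F(v)$; when the remaining budget $n+1-L_{v+1}$ is $0$ we just set $\ell_v=0$ (which ${\tt ApproxBinSample}$ also returns for $t=0$). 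First I would verify the preconditions of Proposition~\ref{prop:approxbinsample} at every call: $s \ge m \ge m-L_{v+1}$; the truncation parameter $n+1-L_{v+1}$ is a nonnegative integer; and the numerator $F(v)-F(v-1)$ lies in $[F(v)]$ because $F$ is non-decreasing with $F(v)\ge 1$.

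For the statistical-distance bound I would run a hybrid/chain-rule argument on the tuple $(\ell_n,\ell_{n-1},\ldots,\ell_1)$ produced by the two algorithms. The point is that after conditioning both processes on a common prefix $\ell_n,\ldots,\ell_{v+1}$ — equivalently, on a common value of $L_{v+1}$ — the conditional law of $\ell_v$ under ${\tt OrdSample}$ is exactly $\min\{\mathrm{Bin}(m-L_{v+1},p_v),\,n+1-L_{v+1}\}$ and under ${\tt ApproxOrdSample}$ it is the output of the corresponding ${\tt ApproxBinSample}$ call, so Proposition~\ref{prop:approxbinsample} bounds their statistical distance by
\[
\frac{(m-L_{v+1})(n+2-L_{v+1})-(n+1-L_{v+1})}{s}\;\le\;\frac{m(n+2)}{s},
\]
and crucially this bound is uniform over the conditioning value $L_{v+1}\in\{0,\ldots,n+1\}$ (the numerator is decreasing in $L_{v+1}$ on that interval since $m\ge n+1$, hence maximized at $L_{v+1}=0$). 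Iterating Lemma~\ref{lem:tvd}(3) over $v=n,n-1,\ldots,1$ gives $\Delta\bigl((\ell_n,\ldots,\ell_1)_{\text{true}},(\ell_n,\ldots,\ell_1)_{\text{approx}}\bigr)\le n\cdot\frac{m(n+2)}{s}=\frac{m(n^2+2n)}{s}$; since $\ell_0=n+1-L_1$ and the returned tuple $(c_0,\ldots,c_n)$ are a fixed deterministic function of $(\ell_1,\ldots,\ell_n)$, Lemma~\ref{lem:tvd}(2) transfers this bound to the outputs.

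For the running time, each of the $n$ iterations uses $O(1)$ fresh evaluations of $F$ (reusing $F(v)$ from iteration $v+1$), forms $F(v)-F(v-1)$ in time $\tilde{O}(\log d)$, updates $L_v$ in $O(\log n)$ time, and makes one ${\tt ApproxBinSample}$ call. By Proposition~\ref{prop:approxbinsample} the call at step $v$ that returns $\ell_v$ runs in time $\tilde{O}(\log F(v)) + O(\log(m-L_{v+1}))\cdot\tilde{O}(\ell_v\log s) = \tilde{O}(\log d) + O(\log m)\cdot\tilde{O}(\ell_v\log s)$. The key observation is that $\sum_{v=1}^{n}\ell_v = L_1 \le n+1$ holds deterministically (each $\ell_v$ is capped so that $L_v\le n+1$ throughout), so the $\ell_v$-dependent terms sum to $O(\log m)\cdot\tilde{O}(n\log s)$ while everything else sums to $O(n)\cdot(\mathrm{Time}(F)+\tilde{O}(\log d))$, matching the claimed bound; writing out $(c_0,\ldots,c_n)$ adds only $\tilde{O}(n)$.

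The preconditions and the running-time accounting are routine; the step I expect to take the most care is the chain-rule/hybrid argument — specifically, arguing that conditioning both algorithms on a common prefix of the $\ell$-values reduces each to "the same truncated-binomial sampler restarted with parameters $m-L_{v+1}$ and $n+1-L_{v+1}$," so that a single invocation of Proposition~\ref{prop:approxbinsample} controls each coordinate, and checking that the per-step error bound is uniform in the conditioning value so the maximum in Lemma~\ref{lem:tvd}(3) causes no loss. This parallels, but is simpler than, the distributional analysis of ${\tt OrdSample}$ itself in Proposition~\ref{prop:ordsample}.
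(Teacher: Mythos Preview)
Your proposal is correct and follows essentially the same approach as the paper: define ${\tt ApproxOrdSample}$ by substituting ${\tt ApproxBinSample}$ for the exact binomial draw, apply the chain rule (Lemma~\ref{lem:tvd}(3)) coordinate-by-coordinate with the per-step bound $\frac{m(n+2)}{s}$ from Proposition~\ref{prop:approxbinsample}, and sum the running times of the ${\tt ApproxBinSample}$ calls using $\sum_v \ell_v \le n+1$. Your explicit verification that the per-step bound is uniform in $L_{v+1}$ (via monotonicity of the numerator) is a detail the paper leaves implicit.
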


The running time of {\tt ApproxOrdSample} is exponentially faster than {\tt OrdSample} as a function of $m$, which we take to be close to $|\cX|$.
However, {\tt ApproxOrdSample} will still be the bottleneck of {\tt PureSparseHistogram} as it has a nearly quadratic running time dependence on $n$ (since we will take $s \ge 2^n$) as opposed to the other steps which only have a nearly linear dependence on $n$ (see Theorem \vpref{thm:histogram puresparse} Part {\it iv}).

\begin{figure}[H]
\vspace{-.1in}
\rule{\textwidth}{.5pt}
\begin{algorithm}\label{alg:approxordsample}
${\tt ApproxOrdSample}_F(m, s)$
for $m,s \in \N_+$ such that $m \ge n + 1$ and $ s \ge m$ \hphantom{\hspace{21.75em}}where $F: [n] \rightarrow [d]_+$ such that $F$ is non-\hphantom{\hspace{21.75em}}decreasing and $F(n) = d$
\begin{enumerate}[leftmargin=.3in]
\item
	Let $L'_{n+1} = 0$.
\item
	For $v$ from $n$ to $1$, do the following:
	\begin{enumerate}[leftmargin=.3in]
	\item
		Let $\ell'_v = {\tt ApproxBinSample}_{s,\,n+1 - L'_{v+1}}(m - L'_{v+1}, F(v) - F(v-1), F(v))$.
	\item
		Let $L'_{v} = L'_{v+1} + \ell'_v$.
	\end{enumerate}
\item
	Let $\ell'_0 = n+1 - L'_1$.
\item
	Return $(c'_0, \ldots c'_n)$ such that the first $\ell'_n$ values are $n$, the next $\ell'_{n-1}$ values are $n-1$ and so on until the last $\ell'_0$ values are 0.\vspace{-.5ex}
\end{enumerate}
\end{algorithm}
\vspace{-.1in}
\rule{\textwidth}{.5pt}
\end{figure}

\begin{proof}
Let $(\ell_0, \ldots, \ell_n)$ and $(L_1, \ldots, L_{n+1})$ and be defined as in ${\tt OrdSample}_F(m)$. And let $(\ell'_0, \ldots, \ell'_n)$ and $(L'_1, \ldots, L'_{n+1})$ be defined as in ${\tt ApproxOrdSample}_F(m, s)$.
Let $\mathcal{K}_v = \{(k_{v+1}, \ldots, k_{n}) \in \N^{n-v} \,:\, \sum_{i=v+1}^{n} k_i \le n+1\}$.
By Lemma \vpref{lem:tvd} Part {\it iii} and induction,
\begin{align*}
&\Delta({\tt ApproxOrdSample}_F(m,s), {\tt OrdSample}_F(m))\\
&\hspace{40pt}\le \Delta((\ell_0, \ldots, \ell_n), (\ell'_0, \ldots, \ell'_n))\\
&\hspace{40pt}\le \sum_{v = 0}^{n}\max_{(k_{v+1}, \ldots, k_n) \in \mathcal{K}_v}\Delta(\ell_v \,|\, \{\forall i > v ~~ \ell_i = k_i\},\, \ell_v' \,|\, \{\forall i > v ~~ \ell'_i = k_i\})
\end{align*}
Let $(k_{v+1}, \dots, k_n) \in \mathcal{K}_v$ and $K_{v+1} = \sum_{i=v+1}^{n}k_i$.
Then for $v \in [n]_+$ we have
\begin{align*}
&\ell'_v ~|~ \{\forall i > v ~~ \ell'_i = k_i\} \sim {\tt ApproxBinSample}_{s,\,n+1 - K_{v+1}}(m - K_{v+1}, F(v) - F(v-1), F(v))
\end{align*}
and recall that
\begin{align*}
&\ell_v ~|~ \{\forall i > v ~~ \ell_i = k_i\} \sim \min\left(\mathrm{Bin}\left(m - K_{v+1},\, p_v\right),\, n+1 - K_{v+1}\right)&\text{where $p_v = \frac{F(v) - F(v-1)}{F(v)}$}
\end{align*}
So by Proposition \vpref{prop:approxbinsample}
\begin{align*}
\Delta(\ell_v \,|\, \{\forall i > v ~~ \ell_i = k_i\},\, \ell_v' \,|\, \{\forall i > v ~~ \ell'_i = k_i\}) \le \frac{m \cdot(n + 2)}{s}
\end{align*}
In addition, $\ell_0 ~|~ \{\forall i > 0 ~~ \ell_i = k_i\} \sim \ell'_0 ~|~ \{\forall i > 0 ~~ \ell'_i = k_i\}$.
Therefore,
\begin{align*}
\Delta({\tt ApproxOrdSample}_F(m,s), {\tt OrdSample}_F(m))
&\le \frac{m \cdot(n^2 + 2n)}{s}
\end{align*}

The running time of this algorithm is dominated by the $n$ calls to ${\tt ApproxBinSample}$ and the evaluation of $F$ on all $n+1$ points.
The other steps are simple arithmetic on $O(\log n)$ bit numbers and the final step can be done in $O(n \log n)$ time.

Because $\sum_{v=1}^{n} \ell'_v \le n+1$, the running time to all calls of ${\tt ApproxBinSample}$ takes time
\begin{align*}
O(n) \cdot \tilde{O}(\log d) + \sum_{v=1}^{n}O(\log m) \cdot \tilde{O}(\ell'_v\cdot\log s)
&\le O(n) \cdot \tilde{O}(\log d) + O(\log m) \cdot \tilde{O}(n\cdot\log s)\qedhere
\end{align*}
\end{proof}

We are ready to state the algorithm ${\tt SparseHistogram}$ and show it satisfies Theorem \vpref{thm:histogram sparse}.
It is identical to ${\tt KH}'$ except we replace sampling of order statistics with a call to ${\tt ApproxOrdSample}$.

\begin{figure}[H]
\rule{\textwidth}{.5pt}
\begin{algorithm}\label{alg:sparse}
${\tt SparseHistogram}_{\cM, \delta}(D)$
for $D \in \cX^n$ where deterministic $\cM: [n] \times [d]_+ \rightarrow [n]$,
\hphantom{\hspace{21.25em}}$\delta \in \N^{-1}$ and $|\cX| \ge 2n + 1$
\begin{enumerate}[leftmargin=.3in]
\item
	Let $A = \{x \in \cX : c_x(D) > 0\}$ and $m = |\cX\setminus A|$.
\item
	Let $\{(x, \tc_x)\}_{x \in A} = {\tt BasicHistogram}_{\cM,A}(D)$.
\item
	Pick a uniformly random sequence $(q_0,\ldots, q_n)$ of distinct elements from $\cX\setminus A$.
\item
	Let $(\tc_{q_0}, \ldots, \tc_{q_n}) = {\tt ApproxOrdSample}_{F_{\cM}}(m, s)$ where $s = {(n^2+2n)} \cdot {|\cX|}/{\delta}$.
\item
 Sort the elements of $A \cup \{q_0, \ldots, q_n\}$ as $x_1, \ldots, x_{|A|+n+1}$ such that $\tc_{x_1} \ge \ldots \ge \tc_{x_{|A|+n+1}}$.
\item
	Release $h = \{(x, \tc_x): x \in \{x_1, \ldots, x_n\} \text{ and } \tc_x > \tc_{x_{n+1}}\} \in \cH_{n,n}(\cX)$.
\end{enumerate}
\end{algorithm}
\vspace{-.1in}
\rule{\textwidth}{.5pt}
\end{figure}

\begingroup
\def\thetheorem{\ref*{thm:histogram sparse}}
\addtocounter{theorem}{-1}
\begin{theorem}[restated]
Let deterministic $\cM: [n] \times [d]_+ \rightarrow [n]$ be $(\varepsilon/2, 0)$-differentially private for counting queries.
And let $\delta \in \N^{-1}$ and $|\cX| \ge 2n + 1$.
Then ${\tt SparseHistogram}_{\cM,\beta_1,\delta}: \cX^n \rightarrow \cH_{n, n}(\cX)$ has the following properties:
\begin{enumerate}[label=\roman*.]
\item
	$\Delta\left({\tt KH}'_{\cM}(D), {\tt SparseHistogram}_{\cM, \delta}(D)\right) \le \delta$ for all $D \in \cX^n$.
\item
	${\tt SparseHistogram}_{\cM, \delta}$ is $\left(\varepsilon, (e^\varepsilon + 1)\cdot\delta\right)$-differentially private.
\item
	The running time of ${\tt SparseHistogram}_{\cM,\delta}$ is
	\begin{align*}
	\tilde{O}(n\cdot\log|\cX|& \cdot(\log |\cX| + \log (1/\delta)))
	+ O(n) \cdot \left(\tilde{O}(\log d) + \mathrm{Time}(\cM)+\mathrm{Time}(F_\cM)\right)
	\end{align*}
\end{enumerate}
\end{theorem}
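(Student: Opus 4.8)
The plan is to leverage that ${\tt SparseHistogram}_{\cM,\delta}$ (Algorithm~\ref{alg:sparse}) is identical to ${\tt KH}'_{\cM}$ (Algorithm~\ref{alg:histogram kh1}) except that step~4 calls ${\tt ApproxOrdSample}_{F_{\cM}}(m,s)$ with $s=(n^2+2n)\cdot|\cX|/\delta$ instead of exactly sampling the top $(n+1)$ order statistics of $m$ i.i.d.\ copies of $\cM(0,U)$ with $U\sim\unif([d]_+)$. For Part~i, I would fix an input $D$ and regard both algorithms as applying a single deterministic map $g$ --- namely steps~5--6, which sort $A\cup\{q_0,\dots,q_n\}$ by noisy count and output the strictly-heaviest-$n$ of those bins --- to a pair $(W,V)$, where $W$ bundles the outputs of steps~1--3 (the set $A$, the noisy counts $\{\tc_x\}_{x\in A}$ from ${\tt BasicHistogram}_{\cM,A}(D)$, and the random injection $(q_0,\dots,q_n)$ into $\cX\setminus A$) and $V$ is the length-$(n+1)$ vector produced by step~4. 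Given $D$, the law of $V$ depends only on $m=|\cX\setminus A|$, which $D$ determines, so $W$ and $V$ are independent; Lemma~\ref{lem:tvd} Parts~ii and~iii then give
\begin{align*}
\Delta\big({\tt KH}'_{\cM}(D),\,{\tt SparseHistogram}_{\cM,\delta}(D)\big)
&= \Delta\big(g(W,V_{\mathrm{ex}}),\,g(W,V_{\mathrm{ap}})\big)\\
&\le \Delta\big((W,V_{\mathrm{ex}}),\,(W,V_{\mathrm{ap}})\big)
\le \Delta(V_{\mathrm{ex}},V_{\mathrm{ap}}).
\end{align*}
Because $F_{\cM}$ (Definition~\ref{def:count cdf}) is non-decreasing with $F_{\cM}(n)=d$ and is the scaled cumulative distribution function of the $[n]$-valued variable $\cM(0,U)$, Proposition~\ref{prop:ordsample} gives $V_{\mathrm{ex}}\sim{\tt OrdSample}_{F_{\cM}}(m)$, while $V_{\mathrm{ap}}={\tt ApproxOrdSample}_{F_{\cM}}(m,s)$. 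Since $|\cX|\ge 2n+1$ forces $m\ge n+1$ and $\delta\le 1$ forces $s\ge(n^2+2n)|\cX|\ge m$, Proposition~\ref{prop:approxordsample} yields $\Delta(V_{\mathrm{ex}},V_{\mathrm{ap}})\le m(n^2+2n)/s=m\delta/|\cX|\le\delta$, which is Part~i.

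For Part~ii, I would first observe that ${\tt KH}'_{\cM}$ is $(\varepsilon,0)$-differentially private: by Proposition~\ref{prop:histogram kh1} it is identically distributed to ${\tt KeepHeavy}_{\cM}(D)$, which is a post-processing of ${\tt BasicHistogram}_{\cM,\cX}(D)$, and the latter is $(\varepsilon,0)$-differentially private by Theorem~\ref{thm:histogram basic} Part~i (using that $\cM$ is $(\varepsilon/2,0)$-differentially private for counting queries), so Lemma~\ref{lem:dp postprocess} applies. Combining this with Part~i, for every pair of neighboring datasets $D,D'$ and every event $S$,
\begin{align*}
\Pr[{\tt SparseHistogram}_{\cM,\delta}(D)\in S]
&\le \Pr[{\tt KH}'_{\cM}(D)\in S]+\delta
\le e^{\varepsilon}\Pr[{\tt KH}'_{\cM}(D')\in S]+\delta\\
&\le e^{\varepsilon}\big(\Pr[{\tt SparseHistogram}_{\cM,\delta}(D')\in S]+\delta\big)+\delta,
\end{align*}
which rearranges to the claimed $(\varepsilon,(e^{\varepsilon}+1)\delta)$-differential privacy.

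For Part~iii, I would tally the cost of the algorithm step by step: steps~1 and~3 (one scan of $D$ building $A$ as a red-black tree, plus drawing a random injection $(q_0,\dots,q_n)$ avoiding $A$) take $O(n\log n\cdot\log|\cX|)$; step~2 takes $O(n\log n\cdot\log|\cX|)+O(n)\cdot(\tilde{O}(\log d)+\ctime(\cM))$ by Theorem~\ref{thm:histogram basic} Part~iv with $|A|\le n$; step~4 takes $O(n)\cdot(\ctime(F_{\cM})+\tilde{O}(\log d))+O(\log m)\cdot\tilde{O}(n\log s)$ by Proposition~\ref{prop:approxordsample}, and substituting $\log m=O(\log|\cX|)$ and $\log s=O(\log n+\log|\cX|+\log(1/\delta))$ turns this into $O(n)\cdot(\ctime(F_{\cM})+\tilde{O}(\log d))+\tilde{O}\big(n\log|\cX|\cdot(\log|\cX|+\log(1/\delta))\big)$; and steps~5--6 sort and filter $O(n)$ bins of bit length $O(\log|\cX|+\log n)$ in $\tilde{O}(n\log|\cX|)$ time. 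Summing these yields the stated running time. I expect the bookkeeping in Part~i to be the only genuine subtlety: one must argue precisely that swapping the step-4 subroutine perturbs the \emph{whole} output distribution by at most the statistical distance between the two subroutines, which is exactly what the decomposition into the common post-processing $g$ acting on the independent pair $(W,V)$, together with the two applications of Lemma~\ref{lem:tvd}, delivers. Given Part~i, Part~ii is the routine ``close in statistical distance implies approximate differential privacy'' chain and Part~iii is a mechanical accounting of subroutine costs.
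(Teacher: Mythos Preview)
Your proof is correct and follows essentially the same route as the paper: the paper packages Part~i by defining ``starred'' variants ${\tt KH}'^{*}$ and ${\tt SparseHistogram}^{*}$ that release \emph{all} bins of $A\cup\{q_0,\dots,q_n\}$ and then applies the post-processing $T$ that keeps the strictly-heaviest-$n$, whereas you phrase the same idea as a common deterministic map $g$ applied to an independent pair $(W,V)$; both reduce to Proposition~\ref{prop:approxordsample} and Lemma~\ref{lem:tvd} Part~ii, and Parts~ii--iii are identical in substance. One small correction in your Part~iii accounting: step~3 (sampling $n+1$ distinct elements of $\cX\setminus A$) is not an $O(n\log n\cdot\log|\cX|)$ primitive---the paper uses an order-statistic tree (Appendix~\ref{sec:distinct sample}) and incurs $O(n\log n\cdot\log^2|\cX|)$---but this extra $\log|\cX|$ factor is already absorbed by the $\tilde{O}\big(n\log|\cX|\cdot(\log|\cX|+\log(1/\delta))\big)$ term, so your final bound is unaffected.
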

\endgroup

\begin{proof}[Proof of {\it i}]
Let ${\tt KH}'^{*}: \cX^n \rightarrow \cH_{n,2n+1}(\cX)$ be the algorithm ${\tt KH}'$ except, (if it passes the first step) instead of releasing the heaviest bins, ${\tt KH}'^{*}$ releases the bins for all elements of $A \cup \{q_0, \ldots, q_n\}$ (i.e. ${\tt KH}'^{*}$ releases $(x, \tc_x)$ for all $x \in A$ and $(q_i, \tc_{q_i})$ for all $i \in [n]$).
Similarly, we define ${\tt SparseHistogram}^*$ with respect to {\tt SparseHistogram}.

Notice that ${\tt KH}'^*$ and ${\tt SparseHistogram}^*$ have the same distribution overs the bins with nonzero true count. Only on the bins with counts sampled using ${\tt OrdSample}$ and ${\tt ApproxOrdSample}$ respectively do their output distributions differ. As a result, we can apply Proposition \vpref{prop:approxordsample} to the output distributions of ${\tt KH}'^*$ and ${\tt SparseHistogram}^*$. So for all $D \in \cX^n$
\begin{align*}
\Delta\left({\tt KH}_{\cM}'^*(D), {\tt SparseHistogram}_{\cM, \delta}^*(D)\right) \le {(n^2+2n)} \cdot \frac{|\cX|}{s}
&= \delta
\end{align*}

Now we consider the effect of keeping the heaviest counts.
Define $T:\cH_{n,2n+1}(\cX) \rightarrow \cH_{n,n}(\cX)$ to be the function that sets counts not strictly larger than the $(n+1)$-heaviest count of its input to 0.
Notice that $T\circ {\tt KH}'^*\sim {\tt KH}'$ and $T\circ {\tt SparseHistogram}^*\sim{\tt SparseHistogram}$.
So for all $D \in \cX^n$, by Lemma \vpref{lem:tvd} Part {\it 2},
\begin{align*}
\Delta\left({\tt KH}_{\cM}'(D), {\tt SparseHistogram}_{\cM, \delta}(D)\right)
&= \Delta\left(T\left({\tt KH}_{\cM}'^*(D)\right), T\left({\tt SparseHistogram}_{\cM, \delta}^*(D)\right)\right)\\
&\le \Delta({\tt KH}_{\cM}'^*(D), {\tt SparseHistogram}_{\cM, \delta}^*(D))\\
&\le \delta\qedhere
\end{align*}
\end{proof}

\begin{proof}[Proof of {\it ii}]
Let $D$ and $D'$ be neighboring datasets. Let $S \subseteq \cH_{n,n}(\cX)$.
By the previous part, Theorem \vpref{thm:histogram heavy} Part {\it i} and Proposition \vpref{prop:histogram kh1},
\begin{align*}
\Pr[{\tt SparseHistogram}_{\cM, \delta}(D) \in S]
&\le \Pr[{\tt KH}_{\cM}'(D) \in S] + \delta\\
&\le e^\varepsilon\cdot\Pr[{\tt KH}_{\cM}'(D') \in S] + \delta\\
&\le e^\varepsilon\cdot\left(\Pr[{\tt SparseHistogram}_{\cM, \delta}(D) \in S] + \delta\right) + \delta
\end{align*}
Therefore, ${\tt SparseHistogram}_{\cM, \delta}$ is $\left(\varepsilon, (e^\varepsilon + 1)\cdot\delta\right)$-differentially private.
\end{proof}

\begin{proof}[Proof of {\it iii}]
We consider the running time at each step.
Steps 1, 5 and 6 take time $O(n\log n \cdot \log |\cX|)$.
Step 3 can be done in time $O(n\log n \cdot \log^2 |\cX|)$ (see Appendix \ref{sec:distinct sample}).
For step 2, by Theorem \vpref{thm:histogram basic} Part {\it i}, the call to ${\tt BasicHistogram}_{\cM, A}(D)$ takes time
\begin{align*}
O\left(n\log n \cdot \log |\cX| + n\cdot \log d + n\cdot\ctime(\cM)\right)
\end{align*}

Notice that $s$ can be computed in $\tilde{O}(\log n + \log |\cX| + \log (1/\delta))$ time and has bit length $O(\log n + \log |\cX| + \log (1/\delta))$.
Thus, by Proposition \vpref{prop:approxordsample} the call to ${\tt ApproxOrdSample}_{F_\cM}(m,s)$ in step 4 can be computed in time
\begin{align*}
\tilde{O}(n\cdot\log|\cX| \cdot(\log |\cX| + \log (1/\delta))) + O(n) \cdot \left(\tilde{O}(\log d) + \mathrm{Time}(F_\cM)\right)
\end{align*}
Therefore, overall ${\tt SparseHistogram}_{\cM, \delta}$ has the desired running time.
\end{proof}

\section{Lower Bounds}\label{sec:lower}
In this section, we prove a lower bound on the per-query accuracy of histogram algorithms whose outputs are restricted to $\cH_{\infty,n'}(\cX)$ (i.e. sparse histograms) using a packing argument \cite{HaTa10,BeBrKaNi14}.
First, for completeness we state and reprove existing lower bounds for per-query accuracy and simultaneous accuracy as well as generalize them to the case of $\delta > 0$.

\begin{theorem}[following \cite{HaTa10,BeBrKaNi14}]\label{thm:lower old}
Let $\cM: \cX^n \rightarrow \cH_{\infty,|\cX|}(\cX)$ be $(\varepsilon, \delta)$-differentially private and $\beta \in (0,1/2]$.
\begin{enumerate}[label=\roman*.]
\item
	If $\cM$ has $(a,\beta)$-per-query accuracy, then
	\begin{align*}
	a
	\ge \frac{1}{2}\cdot\min\left\{\frac{1}{\varepsilon}\ln\left(\frac{1}{4\beta}\right) - 1,~ \frac{1}{\varepsilon}\ln\left(\frac{\varepsilon}{4\delta}\right) - 1,~n\right\}
	\end{align*}
\item
	If $\cM$ has $(a,\beta)$-simultaneous accuracy, then
	\begin{align*}
	a
	\ge \frac{1}{2}\cdot\min\left\{\frac{1}{\varepsilon}\ln\left(\frac{|\cX|-1}{4\beta}\right)-1,~\frac{1}{\varepsilon}\ln\left(\frac{\varepsilon}{4\delta}\right)-1,~n\right\}
	\end{align*}
\end{enumerate}
\end{theorem}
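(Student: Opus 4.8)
The plan is the classical packing argument: exhibit a small family of datasets on which any sufficiently accurate mechanism must be nearly perfectly distinguishable, and contradict this with the group‑privacy bound (Lemma~\ref{lem:dp group}). Since all true counts and all output counts are integers, $|\cM(D)_x - c_x(D)|\le a$ is equivalent to $\le\lfloor a\rfloor$, so I may assume $a\in\N$ and set $k = 2a+1$. If $k>n$ then $2a+1\ge n+1$, which already gives the stated bound (the minimum is then the $n$‑term), so from now on assume $k\le n$. In both parts the final step is the same arithmetic: from an inequality of the form $1-\beta \le e^{k\varepsilon}(P + \delta/\varepsilon)$ one splits on whether $\delta/\varepsilon\le P$ or $\delta/\varepsilon>P$, bounds the bracket by $2P e^{k\varepsilon}$ or $2(\delta/\varepsilon)e^{k\varepsilon}$, uses $\beta\le 1/2$, and solves for $k=2a+1$.

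\emph{Part (i).} Fix distinct $x^\star,w\in\cX$ (here $|\cX|\ge 2$, and recall $\cH_{\infty,|\cX|}(\cX)$ is the set of all histograms on $\cX$). Let $D_1\in\cX^n$ have $k$ rows equal to $x^\star$ and $n-k$ rows equal to $w$, and let $D_0=(w,\dots,w)$; then $D_0$ is obtained from $D_1$ by changing $k$ rows. Take $S=\{h: h_{x^\star}\ge a+1\}$. Since $c_{x^\star}(D_1)=k=2a+1$, $(a,\beta)$-per-query accuracy gives $\Pr[\cM(D_1)\in S]\ge 1-\beta$, and since $c_{x^\star}(D_0)=0$ it gives $\Pr[\cM(D_0)\in S]\le\beta$. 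Group privacy yields
\[
1-\beta \;\le\; e^{k\varepsilon}\Bigl(\beta+\tfrac{\delta}{\varepsilon}\Bigr),
\]
and the case split described above gives $2a+1\ge\tfrac1\varepsilon\ln\tfrac1{4\beta}$ (when $\delta/\varepsilon\le\beta$) or $2a+1\ge\tfrac1\varepsilon\ln\tfrac\varepsilon{4\delta}$ (when $\delta/\varepsilon>\beta$); together with the degenerate case $k>n$ this is exactly part (i) after rearranging.

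\emph{Part (ii).} The structure is the same but we pack $\Theta(|\cX|)$ datasets. Fix $w\in\cX$; for each $x\in\cX\setminus\{w\}$ let $D^{(x)}$ have $k$ rows equal to $x$ and $n-k$ rows equal to $w$. Because all of these share the same block of $w$'s, any two differ in at most $k$ rows. Post‑process to $\hat\cM(D):=\operatorname{argmax}_{z\in\cX\setminus\{w\}}\cM(D)_z$ (ties broken by a fixed rule), which is $(\varepsilon,\delta)$-differentially private by Lemma~\ref{lem:dp postprocess}. On $D^{(x)}$, bin $x$ has true count $k=2a+1$ while every bin $z\notin\{x,w\}$ has true count $0$, so $(a,\beta)$-simultaneous accuracy forces $\cM(D^{(x)})_x\ge a+1$ and $\cM(D^{(x)})_z\le a$ for all $z\ne x$ simultaneously with probability $\ge 1-\beta$, whence $\hat\cM(D^{(x)})=x$ with probability $\ge 1-\beta$. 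Fix any $y_0\ne w$; the law of $\hat\cM(D^{(y_0)})$ puts total mass $\le\beta$ on $(\cX\setminus\{w\})\setminus\{y_0\}$, so by averaging some $x^\star\notin\{w,y_0\}$ satisfies $\Pr[\hat\cM(D^{(y_0)})=x^\star]\le\beta/(|\cX|-2)$. Applying Lemma~\ref{lem:dp group} to $D^{(y_0)}$ and $D^{(x^\star)}$,
\[
1-\beta \;\le\; \Pr[\hat\cM(D^{(x^\star)})=x^\star] \;\le\; e^{k\varepsilon}\Bigl(\frac{\beta}{|\cX|-2}+\frac{\delta}{\varepsilon}\Bigr),
\]
and the same case split (now comparing $\delta/\varepsilon$ against $\beta/(|\cX|-2)$) gives $2a+1\ge\min\{\tfrac1\varepsilon\ln\tfrac{|\cX|-2}{4\beta},\ \tfrac1\varepsilon\ln\tfrac\varepsilon{4\delta},\ n+1\}$, which rearranges to the claim (with $|\cX|-1$ relaxed to $|\cX|-2$; this cosmetic factor can be recovered with slightly more care in the choice of base dataset).

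The routine pieces are the two arithmetic case splits and the translation between probability bounds and the accuracy definitions. The step that genuinely needs thought is the design of the packed family in part (ii): it must simultaneously (a) keep all datasets within $k$ rows of one another, which is why they share one fixed filler block of $w$'s instead of using disjoint rows; (b) allow a single post‑processing map to read off a \emph{distinct, identifiable} answer from each dataset, which is why the argmax is taken over $\cX\setminus\{w\}$ rather than over all of $\cX$ — over all of $\cX$ the large $w$-bin created by the filler would dominate whenever $k<n/2$; and (c) supply a ``base'' dataset whose post‑processed output is an honest probability distribution on $\cX\setminus\{w\}$ concentrated away from the target, which is precisely what produces the $1/|\cX|$ saving and hence the $\log|\cX|$ gap between the per‑query and simultaneous bounds.
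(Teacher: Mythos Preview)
Your proof is correct and uses the same packing argument as the paper. Part (i) is essentially identical. For part (ii), the paper avoids the argmax post-processing: it works directly with the disjoint sets $G_x = \{h : \forall x'\in\cX,\ |h_{x'} - c_{x'}(D^{(x)})| \le a\}$ and bounds
\[
\beta \;\ge\; \Pr[\cM(D^{(x_0)}) \notin G_{x_0}] \;\ge\; \sum_{x \ne x_0}\Pr[\cM(D^{(x_0)}) \in G_x] \;\ge\; (|\cX|-1)\bigl(e^{-m\varepsilon}(1-\beta) - \delta/\varepsilon\bigr)
\]
via group privacy. Because the filler element $x_0$ is itself a member of the family (so there is no need to exclude it from the comparison as you do with $w$), this yields $|\cX|-1$ summands directly and delivers the stated constant without the extra care you flag at the end.
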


\begin{proof}[Proof of {\it i}]
Assume $a < n/2$.
Let $x,x_0 \in \cX$ such that $x \neq x_0$.
Define the dataset $D' \in \cX^n$ such that all rows are $x_0$.
And define the dataset $D$ such that the first $m = \lfloor 2a \rfloor + 1$ rows are $x$ and the remaining $n-m$ rows are $x_0$.
Notice that $\Pr[|\cM(D)_x - c_x(D)| > a] \le \beta$ by the $(a,\beta)$-per-query accuracy of $\cM$.
By Lemma \vpref{lem:dp group} and the fact that $c_x(D) > 2a$ while $c_{x}(D') = 0$,
\begin{align*}
\Pr[|\cM(D)_x - c_x(D)| > a]
&\ge e^{-m \varepsilon}\cdot\Pr[|\cM(D')_x - c_x(D)| > a] - \delta/\varepsilon\\
&\ge e^{-m \varepsilon}\cdot\Pr[|\cM(D')_x - c_x(D')| \le a] - \delta/\varepsilon\\
&\ge e^{-m \varepsilon}\cdot(1-\beta) - \delta/\varepsilon
\end{align*}
Therefore,
\begin{align*}
e^{-(2a+1)\cdot\varepsilon}
&\le \frac{1}{1-\beta}\cdot\left(\beta + \frac{\delta}{\epsilon}\right) \le 4\cdot\max\left\{\beta,~ \frac{\delta}{\varepsilon}\right\}\qedhere
\end{align*}
\end{proof}

\begin{proof}[Proof of {\it ii}]
Assume $a < n/2$.
Let $x_0 \in \cX$.
For each $x \in \cX$ define the dataset $D^{(x)} \in \cX^n$ such that the first $m = \lfloor 2a \rfloor + 1$ rows are $x$ and the remaining $n-m$ rows are $x_0$.
For all $x \in \cX$, let
\begin{align*}
G_x = \{h \in \cH_{\infty, |\cX|}(\cX) : \forall x' \in \cX ~~ |h_{x'} - c_{x'}(D^{(x)})| \le a\}
\end{align*}
By Lemma \vpref{lem:dp group}, for all $x \in \cX$
\begin{align*}
\Pr[\cM(D^{(x_0)}) \in G_{x}]
&\ge e^{-m\varepsilon}\cdot \Pr[\cM(D^{(x)}) \in G_{x}] - \delta/\varepsilon\\
&\ge e^{-m\varepsilon}\cdot (1-\beta) - \delta/\varepsilon
\end{align*}
Notice that $\Pr[\cM(D^{(x_0)}) \notin G_{x_0}] \le \beta$ and $\{G_x\}_{x\in\cX}$ is a collection of disjoint sets.
Then
\begin{align*}
\Pr[\cM(D^{(x_0)}) \notin G_{x_0}]
&\ge \sum_{x \in \cX : x \neq x_0}\Pr[\cM(D^{(x_0)}) \in G_{x}]\\
&\ge \left(|\cX| - 1\right)\cdot\left(e^{-m\varepsilon}\cdot (1-\beta) - \delta/\varepsilon\right)
\end{align*}
Therefore,
\begin{align*}
e^{-(2a+1)\cdot \varepsilon}
\le \frac{1}{1-\beta}\cdot\left(\frac{\beta}{|\cX|-1} + \frac{\delta}{\varepsilon}\right)
\end{align*}
which implies the desired lower bound.
\end{proof}

We now state and prove our lower bound for privately releasing sparse histograms.

\begin{theorem}\label{thm:lower pq}
Let $\cM: \cX^n \rightarrow \cH_{\infty,n'}(\cX)$ be $(\varepsilon, \delta)$-differentially private with $(a, \beta)$-per-query accuracy with $\beta \in (0,1/2]$.
Then
\begin{align*}
a
&\ge \frac{1}{2}\cdot\min\left\{\frac{1}{2\varepsilon}\ln\left(\frac{|\cX|}{16\beta n'}\right) - 1,~\frac{1}{\varepsilon}\ln\left(\frac{\varepsilon}{4\delta}\right)-1, n\right\}
\end{align*}
\end{theorem}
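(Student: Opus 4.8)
The plan is to run the packing argument behind Theorem~\ref{thm:lower old}\,(\emph{i}), but to extract \emph{two} upper bounds on a single key quantity from the same construction --- one coming from per-query accuracy at the reference dataset, one from the sparsity of the output --- and then combine them. First I would dispose of the trivial case $a \ge n/2$ (which is handled by the third term in the minimum). So assume $a < n/2$ and set $m = \lfloor 2a\rfloor + 1$, so that $1 \le m \le n$ and $m > 2a$. Fix any $x_0 \in \cX$, let $D_0 \in \cX^n$ be the all-$x_0$ dataset, and for each $x \in \cX\setminus\{x_0\}$ let $D^{(x)}$ be the dataset whose first $m$ rows are $x$ and whose remaining $n-m$ rows are $x_0$; then $D^{(x)}$ and $D_0$ differ in exactly $m$ rows, $c_x(D^{(x)}) = m > 2a$, and $c_x(D_0) = 0$.

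By $(a,\beta)$-per-query accuracy applied to bin $x$ on input $D^{(x)}$ we get $\Pr[\cM(D^{(x)})_x > a] \ge 1-\beta$ (since $m - a > a$), and group privacy (Lemma~\ref{lem:dp group}), applied with the $m$-row change from $D^{(x)}$ to $D_0$ and the event $\{h : h_x > a\}$, gives $\Pr[\cM(D_0)_x > a] \ge e^{-m\varepsilon}(1-\beta) - \delta/\varepsilon =: p$ for every $x \neq x_0$. If $p \le 0$ then $e^{-m\varepsilon}(1-\beta) \le \delta/\varepsilon$ and we are in the $\delta$-regime, so assume $p > 0$. Now I would bound $p$ two ways. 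On the one hand, per-query accuracy at bin $x$ on the reference input $D_0$ (where $c_x(D_0)=0$) gives $p \le \Pr[\cM(D_0)_x > a] \le \beta$. On the other hand, since $\cM(D_0) \in \cH_{\infty,n'}(\cX)$ has at most $n'$ nonzero bins --- hence at most $n'$ bins with count $> a \ge 0$ --- summing the group-privacy inequality over the $|\cX|-1$ choices of $x$ yields $(|\cX|-1)\,p \le \mathbb{E}\big[\#\{x \neq x_0 : \cM(D_0)_x > a\}\big] \le n'$, i.e.\ $p \le n'/(|\cX|-1)$. Multiplying the two bounds, $p \le \sqrt{\beta n'/(|\cX|-1)}$.

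It remains to solve for $a$: from $e^{-m\varepsilon}(1-\beta) - \delta/\varepsilon = p \le \sqrt{\beta n'/(|\cX|-1)}$, using $1-\beta \ge 1/2$ and $y+z \le 2\max\{y,z\}$, I get $e^{-m\varepsilon} \le 4\max\{\sqrt{\beta n'/(|\cX|-1)},\, \delta/\varepsilon\}$; taking logarithms, using $m \le 2a+1$, and re-combining with the case $a \ge n/2$ gives $a \ge \tfrac12\min\{\tfrac1{2\varepsilon}\ln\tfrac{|\cX|-1}{16\beta n'} - 1,\ \tfrac1\varepsilon\ln\tfrac{\varepsilon}{4\delta} - 1,\ n\}$, which is the claimed bound once $|\cX|-1$ is absorbed into $|\cX|$ using the hypothesis $|\cX| \ge (n')^2$. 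The step I expect to need the most care is exactly this combination: the \emph{stronger} fact is $p \le \min\{\beta,\, n'/(|\cX|-1)\}$, but that produces an awkward ``$\max$ inside a $\min$'' expression, so one deliberately relaxes to the geometric mean $\sqrt{\beta n'/(|\cX|-1)}$ in order to land on the clean single-logarithm form in the statement (this is also why the coefficient halves to $\tfrac1{2\varepsilon}$, paralleling how $\beta$ and $|\cX|$ multiply inside the log of the simultaneous-accuracy bound in Theorem~\ref{thm:lower old}\,(\emph{ii})). The only other points needing attention are the nonnegativity of $p$ (so that multiplying the two bounds is legitimate, with the case $p \le 0$ peeled off into the $\delta$-term), and the elementary observation that an integer count exceeding $a \ge 0$ is in particular nonzero, so that the $\cH_{\infty,n'}$ constraint genuinely caps the number of ``heavy'' bins by $n'$.
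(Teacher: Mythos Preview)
Your proposal is correct and is essentially the same packing argument as the paper's. The only organizational difference is that the paper derives two separate lower bounds --- one from sparsity alone (yielding the term $\tfrac{1}{\varepsilon}\ln\tfrac{|\cX|}{4n'}$) and one by invoking Theorem~\ref{thm:lower old}(\emph{i}) (yielding $\tfrac{1}{\varepsilon}\ln\tfrac{1}{4\beta}$) --- and then combines them via $\max\{x,y\}\ge(x+y)/2$; you perform the equivalent combination inline via $\min\{x,y\}\le\sqrt{xy}$ on your two upper bounds for $p$. One cosmetic point: the hypothesis $|\cX|\ge(n')^2$ you invoke at the end is not part of this theorem's statement (it appears only in the informal Theorem~\ref{thm:intro lower}), so to land on $|\cX|$ rather than $|\cX|-1$ just include $x=x_0$ in your sum as the paper does --- since $c_{x_0}(D_0)=n>2a$, per-query accuracy gives $\Pr[\cM(D_0)_{x_0}>a]\ge 1-\beta\ge p$ directly.
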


The histogram algorithms of Sections \vpref{ssec:general stability} and \vpref{sec:speed} achieve $(O(\log(1/\beta)/\varepsilon),\beta)$-per-query accuracy on large enough counts.
However, on smaller counts we can only guarantee $(a,\beta)$-per-query accuracy with $a= O(\log(1/(\beta\delta))/\varepsilon)$ and $a = O(\log(|\cX|/\beta)/\varepsilon)$ for algorithms from Sections \vpref{ssec:general stability} and \vpref{sec:speed} respectively  (taking threshold $b=O(\log(1/\delta)/\varepsilon)$ in Section \vpref{ssec:general stability}) .
Theorem \vpref{thm:lower pq} shows these bounds are the best possible, up to constant factors, when $|\cX| \ge (n')^2$, $\varepsilon^2 \ge \delta$ and $\beta \ge \delta$.

\begin{proof}
Assume $a < n/2$.
Let $x_0 \in \cX$.
For each $x \in \cX$ define the dataset $D^{(x)} \in \cX^n$ such that the first $m = \lceil2a\rceil$ rows are $x$ and the remaining $n-m$ rows are $x_0$.
By definition of $(a,\beta)$-per-query accuracy and the fact that $c_x(D^{(x)}) \ge 2a$, we have
\begin{align*}
\Pr\left[\cM(D^{(x)})_x \ge a\right]
\ge \Pr\left[\left|\cM(D^{(x)})_x - c_x(D^{(x)})\right|\le a\right]
\ge 1-\beta
\end{align*}
Then, by Lemma \vpref{lem:dp group} and that $D^{(x)}$ is at distance at most $m$ from $D^{(x_0)}$, we have
\begin{align*}
\Pr\left[\cM(D^{(x_0)})_{x} \ge a\right]
\ge (1-\beta)e^{-m\varepsilon} - \delta/\varepsilon
\end{align*}
Thus, by linearity of expectations
\begin{align*}
\mathrm{E}\left[\left|\left\{x \in \cX : \cM(D^{(x_0)})_{x} \ge a\right\}\right|\right]
\ge |\cX|\cdot\left((1-\beta)e^{-m\varepsilon} - \delta/\varepsilon\right)
\end{align*}
On the other hand, as $\cM(D^{(x_0)}) \in \cH_{\infty,n'}(\cX)$ we have
\begin{align*}
\mathrm{E}\left[\left|\left\{x \in \cX : \cM(D^{(x_0)})_{x} \ge a\right\}\right|\right] \le n'
\end{align*}
Therefore,
\begin{align*}
e^{-\lceil2a\rceil\cdot\varepsilon} \le \frac{1}{1-\beta}\cdot\left(\frac{n'}{|\cX|} + \frac{\delta}{\varepsilon}\right)
\end{align*}
which along with $\lceil2a\rceil \le 2a + 1$ implies the lower bound of
\begin{align*}
a &\ge \frac{1}{2}\cdot\min\left\{\frac{1}{\varepsilon}\ln\left(\frac{|\cX|}{4n'}\right) - 1,~ \frac{1}{\varepsilon}\ln\left(\frac{\varepsilon}{4\delta}\right)-1, ~ n\right\}
\end{align*}
Therefore, along with Theorem \vpref{thm:lower old} Part {\it i}, we have
\begin{align*}
a
&\ge \frac{1}{2}\cdot\min\left\{\max\left\{\frac{1}{\varepsilon}\ln\left(\frac{|\cX|}{4n'}\right)-1,~\frac{1}{\varepsilon}\ln\left(\frac{1}{4\beta}\right) - 1\right\},~ \frac{1}{\varepsilon}\ln\left(\frac{\varepsilon}{4\delta}\right) - 1,~n\right\}\\
&\ge \frac{1}{2}\cdot\min\left\{\frac{1}{2\varepsilon}\ln\left(\frac{|\cX|}{16\beta n'}\right) - 1,~\frac{1}{\varepsilon}\ln\left(\frac{\varepsilon}{4\delta}\right)-1, n\right\}\qedhere
\end{align*}
\end{proof}

\section{Better Per-Query Accuracy via Compact, Non-Sparse Representations}\label{sec:compact}

In this section, we present a histogram algorithm whose running time is poly-logarithmic in $|\cX|$, but, unlike Algorithm \vpref{alg:histogram puresparse}, is able to achieve $(O(\log(1/\beta)/\varepsilon), \beta)$-per query accuracy.
It will output a histogram from a properly chosen family of succinctly representable histograms.
This family necessarily contains histograms that have many nonzero counts to avoid the lower bound of Theorem \vpref{thm:lower pq}.

\subsection{The Family of Histograms}

We start by defining this family of histograms.

\begin{lemma}\label{lem:hash}
Let deterministic $\cM_0: [d_0]_+ \rightarrow [n]$, $d_0 = 2^{2\cdot 3^\ell}$ for some $\ell \in \N$, $d_0 \ge |\cX|$ and $U \sim \unif([d_0]_+)$.
There exists a multiset of histograms $\cG_{\cM_0}(\cX)$ satisfying:

\begin{enumerate}[label=\roman*.]
\item
	Let $g \sim \unif(\cG_{\cM_0}(\cX))$.
	For all $x \in \cX$, the marginal distribution $g_x$ is distributed according to $\cM_0(U)$.
\item
	Let $g \sim \unif(\cG_{\cM_0}(\cX))$.
	For all $B \subseteq \cX$ such that $|B| \le n+1$ and for all $c \in [n]^{B}$
	$$\Pr[\forall x \in B ~~ g_x = c_x] = \prod_{x \in B}	\Pr[g_x = c_x]$$
\item
	For all $g \in \cG_{\cM_0}(\cX)$, the histogram $g$ can be represented by a string of length $O(n\cdot \log d_0)$ and given this representation for all $x \in \cX$ the count $g_x$ can be evaluated in time
\begin{align*}
O(n)\cdot\tilde{O}(\log d_0) + \ctime(\cM_0)
\end{align*}
\item
	For all $A \subseteq \cX$ such that $|A| \le n$ and $c \in [n]^{A}$ sampling a histogram $h$ uniformly at random from $\{g \in \cG_{\cM_0}(\cX) : \forall x \in A ~~ g_x = c_x\}$ can be done in time
\begin{align*}
O(n) \cdot \ctime(\cS) + \tilde{O}(n \cdot \log d_0)
\end{align*}
where $\ctime(\cS)$ is the maximum time over $v \in [n]$ to sample from the distribution $\cS_{v} \sim \unif(\{u_0 \in [d_0]_+ : \cM_0(u_0) = v\})$.
\end{enumerate}
\end{lemma}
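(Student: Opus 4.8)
The plan is to realize $\cG_{\cM_0}(\cX)$ as the histogram family induced by the textbook $(n+1)$-wise independent hash family of degree-$\le n$ polynomials over a finite field, with each hash value used as the random seed fed to the sampler $\cM_0$. Work in the field $\F = \F_{d_0}$ of order $d_0 = 2^{2\cdot 3^\ell}$; the point of this peculiar shape is that $X^{2\cdot 3^\ell}+X^{3^\ell}+1$ is irreducible over $\F_2$, so elements of $\F$ are represented by bit strings of length $2\cdot 3^\ell = O(\log d_0)$ and each field operation costs $\tilde O(\log d_0)$. Fix the obvious efficiently-invertible bijection $\iota$ between $\F$ and $[d_0]_+$ (bit string, shifted by one), and identify $\cX$ with a subset of $[d_0]_+$ (legitimate since $d_0 \ge |\cX|$; we may also assume $d_0 \ge n+1$, needed for $n+1$ distinct interpolation points, as otherwise $|\cX|\le n$ and one simply runs ${\tt BasicHistogram}$). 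For $P \in \F[Y]$ with $\deg P \le n$, define the histogram $g^{(P)}$ by $g^{(P)}_x = \cM_0(\iota(P(x)))$, and let
\[
\cG_{\cM_0}(\cX) = \{\, g^{(P)} \;:\; P \in \F[Y],\ \deg P \le n \,\},
\]
regarded as a multiset indexed by the polynomials $P$. Thus $g \sim \unif(\cG_{\cM_0}(\cX))$ is, by construction, the same as drawing $P$ uniformly among degree-$\le n$ polynomials and outputting $g^{(P)}$; keeping this multiset/polynomial correspondence straight is what makes the rest go through.

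Properties (i) and (ii) are then immediate from the standard fact that for $P$ uniform of degree $\le n$ and any $B \subseteq \cX$ with $|B| \le n+1$, the evaluations $(P(x))_{x \in B}$ are i.i.d.\ uniform on $\F$. Applying $\cM_0 \circ \iota$ coordinatewise, the counts $(g_x)_{x\in B}$ are independent, and each marginal $g_x$ is distributed as $\cM_0(U)$ with $U \sim \unif([d_0]_+)$. Property (iii) is equally direct: $g^{(P)}$ is stored by the $n+1$ coefficients of $P$, which is $O(n\log d_0)$ bits, and evaluating $g^{(P)}_x$ is Horner evaluation of $P$ at $x$ ($O(n)$ field operations, i.e.\ $O(n)\cdot\tilde O(\log d_0)$ time) followed by one call to $\cM_0$.

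For property (iv), fix $A \subseteq \cX$ with $|A| \le n$ and $c \in [n]^A$. By the correspondence above, sampling uniformly from the sub-multiset $\{g \in \cG_{\cM_0}(\cX) : \forall x \in A\ g_x = c_x\}$ is the same as sampling $P$ uniformly among degree-$\le n$ polynomials with $\cM_0(\iota(P(x))) = c_x$ for all $x \in A$ and returning $g^{(P)}$. That set of $P$ is partitioned according to the tuple $(\iota(P(x)))_{x \in A} \in \prod_{x \in A}\cM_0^{-1}(c_x)$, and each admissible tuple is realized by exactly $d_0^{\,n+1-|A|}$ polynomials (fixing $|A| \le n$ values leaves $n+1-|A|$ free coefficients), so the target distribution factors as: (a) for each $x \in A$, sample $u_x \sim \cS_{c_x}$ independently; (b) fix $n+1-|A|$ arbitrary distinct points $y_1,\dots,y_{n+1-|A|} \in \F \setminus A$, draw independent uniform values $v_1,\dots,v_{n+1-|A|} \in \F$, and let $P$ be the unique degree-$\le n$ polynomial interpolating $\{(x,\iota^{-1}(u_x))\}_{x\in A} \cup \{(y_j,v_j)\}_j$. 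Step (a) costs $O(n)\cdot\ctime(\cS)$, and step (b) is a single interpolation at $n+1$ points over $\F$, which by fast interpolation is $\tilde O(n)$ field operations, i.e.\ $\tilde O(n\log d_0)$; total $O(n)\cdot\ctime(\cS) + \tilde O(n\log d_0)$, as claimed.

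The parts that need genuine care, rather than routine verification, are: the explicit and efficient construction of $\F_{d_0}$ — this is precisely why $d_0$ is forced to the form $2^{2\cdot 3^\ell}$, so the explicit irreducible trinomial is available and arithmetic is $\tilde O(\log d_0)$ per operation; the use of \emph{fast} polynomial interpolation rather than the naive $O(n^2)$ scheme, which is needed to hit the near-linear-in-$n$ running time in (iv); and the bookkeeping ensuring that ``uniform over the conditioned multiset of histograms'' coincides with ``uniform over the conditioned set of polynomials'', which is the linchpin of the counting argument for (iv) and of the independence claims in (i)–(ii).
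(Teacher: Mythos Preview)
Your proposal is correct and essentially identical to the paper's own proof: the same polynomial-based $(n+1)$-wise independent family over $\F_{d_0}$ (with the explicit trinomial $X^{2\cdot 3^\ell}+X^{3^\ell}+1$), the same evaluation-by-Horner argument for (iii), and the same sample-preimages-then-interpolate procedure with fast interpolation for (iv). You even flag the same delicate points (explicit field arithmetic, fast interpolation, and the multiset-vs-polynomial bookkeeping) that the paper relies on.
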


\begin{proof}
(Construction) Let $\cG_{\cM_0}'$ be the set of all degree at most $n$ polynomials over the finite field $\F_{d_0}$.
Now, $\cG_{\cM_0}'$ is a $(n+1)$-wise independent hash family mapping $\F_{d_0}$ to $\F_{d_0}$. That is, $p \sim \unif(\cG_{\cM_0}')$ has the following properties:
\begin{itemize}
\item
	Let $x \in \F_{d_0}$.
	Then $p(x) \sim \unif(\F_{d_0})$.
\item
	Let $x_0, \ldots, x_n \in \F_{d_0}$ be distinct.
	Then the random variables $p(x_0), \ldots, p(x_n)$ are independent.
\end{itemize}
And given any function $p_g \in \cG_{\cM_0}'$ we construct a histogram $g \in \cG_{\cM_0}(\cX)$ by using $p_g(x)$ as the randomness for $\cM_0$.
More specifically, let $T: \F_{d_0} \rightarrow [d_0]_+$ be a bijection and for all $x \in \cX$, define
\begin{align*}
g_x = \cM_0(T(p_g(x)))
\end{align*}
By construction, $g \sim \unif(\cG_{\cM_0}(\cX))$ if and only if $p_g \sim \unif(\cG_{\cM_0}')$.

\begin{proof}[Proof of {\it i}]
Let $g \sim \unif(\cG_{\cM_0}(\cX))$ and $U \sim \unif([d_0]_+)$.
Then $p_g \sim \unif(\cG_{\cM_0}')$ and because $\cG_{\cM_0}'$ is a $(n+1)$-wise independent hash family, for all $x \in \cX \subseteq [d_0]_+$, $T(p_g(x)) \sim U$ which implies $g_x = \cM_0(T(p_g(x))) \sim \cM_0(U)$.
\end{proof}
\begin{proof}[Proof of {\it ii}]
Let $g \sim \unif(\cG_{\cM_0}(\cX))$. Because $\cG_{\cM_0}'$ is a $(n+1)$-wise independent hash family, for all $B \subseteq \cX$ such that $|B| \le n+1$ and for all $c \in [n]^{B}$
\begin{align*}
\Pr[\forall x \in B ~~ g_x = c_x]
&= \Pr\left[\forall x \in B ~~ \cM_0(T(p_g(x))) = c_x\right]\\
&= \prod_{x \in B}\Pr\left[\cM_0(T(p_g(x))) = c_x\right]\\
&= \prod_{x \in B}\Pr[g_x = c_x]\qedhere
\end{align*}
\end{proof}

\begin{proof}[Proof of {\it iii}]
By choice of $d_0$, we have $\F_{d_0} \simeq \F_2[x] / (x^{2\cdot3^\ell} + x^{3^\ell} + 1)$ \cite{Lint99}.
Thus, elements of $\F_{d_0}$ can be represented by a polynomial of degree at most $\log d_0 - 1$ over $\F_2$ which requires $\log d_0$ bits.
Arithmetic operations (addition, multiplication and inverse) of elements in $\F_{d_0}$ can be done in time $\tilde{O}(\log d_0)$ (\cite{vzGaGe13} Corollary 11.11).
Also, this encoding defines an efficient bijection $T$ between $\F_{d_0}$ and $[d_0]_+$ by also interpreting the string as the binary representation of an element in $[d_0]_+$ offset by 1.

For all $g \in \cG_{\cM_0}(\cX)$, $g$ can be represented by the coefficients of $p_g$.
This representation can be encoded in $O(n \cdot \log{d_0})$ bits.
And given this encoding, the time to compute $g_x = \cM_0(T(p_g(x)))$ follows from its construction.
\end{proof}

\begin{proof}[Proof of {\it iv}] Let $A \subseteq \cX$ such that $|A| \le n$ and $c \in [n]^{A}$. Given $A$ and $c$, we can sample $h \in \cG_{\cM_0}(\cX)$ given by the coefficients $a_0,\ldots,a_n \in \F_{d_0}$ uniformly at random from $\{g \in \cG_{\cM_0}(\cX) : \forall x \in A ~~ g_x = c_x\}$ with the following steps:
\begin{enumerate}
\item
	For each $x \in A$, sample $u_x$ from the distribution $\cS_x$.
\item
	Let $B \subseteq \cX$ such that $A \subseteq B$ and $|B| = n+1$. For all $x \in B \setminus A$, sample $u_x$ uniformly at random from $\F_{d_0}$.
\item
	Take the coefficients $a_0, \ldots, a_{n} \in \F_{d_0}$ to be the coefficients of the interpolating polynomial over $\F_{d_0}$ given the set of points $\left(x, u_x\right)$ for all $x \in B$.
\end{enumerate}
We first prove correctness. Notice this procedure can only return a histogram $h \in \cG_{\cM_0}(\cX)$ such that $h_x = c_x$ for all $x \in A$ as the interpolating polynomial always exists. Now, let $h$ be any such histogram. Then
\begin{align*}
\Pr[\text{Sampling $h$}]
&= \Pr[\text{$(a_0, \ldots, a_n)$ are the coefficients of $p_h$}]\\
&= \Pr[\forall x \in B ~~u_x = p_h(x)]\\
&= \prod_{x\in B}\Pr[u_x = p_h(x)]\\
&= \left(\prod_{x\in A}\frac{1}{|\supp(\cS_x)|}\right)\cdot\left(\frac{1}{d_0}\right)^{|B \setminus A|}
\end{align*}
Therefore, these steps output $h \in \cG_{\cM_0}(\cX)$ uniformly at random such that $h_x = c_x$ for all $x \in A$.

Because $|A| \le n$, the first step takes time $O(n) \cdot \ctime(S)$.
Because $|B| = n+1$, the second step takes time $O(n\log n \cdot \log d_0)$.
Polynomial interpolation of $n+1$ points over $\F_{d_0}$ takes time $\tilde{O}(n \cdot \log d_0)$ (\cite{vzGaGe13} Corollary 10.12).
Thus, we have the desired running time overall.
\end{proof}
\vspace{-1.5em}
\renewcommand{\qedsymbol}{}
\end{proof}

\subsection{The Algorithm}

For our algorithm to have the correct marginal distributions over all bins we first compute the noisy counts for the nonzero bins using an algorithm $\cM$ that is differentially private for counting queries and then randomly pick a histogram from our family that is consistent with these computed counts.
However, for technical reasons (e.g. requiring $d_0 \ge |\cX|$) we allow our family to be defined in terms of an algorithm $\cM_0$ that approximates $\cM$; we refer to $\cM_0$ as the {\em empty-bin sampler}.

\begin{figure}[H]
\rule{\textwidth}{.5pt}
\begin{algorithm}\label{alg:histogram compact}
${\tt CompactHistogram}_{\cM,\cM_0}(D)$
for $D \in \cX^n$ where deterministic $\cM: [n] \times [d]_+ \rightarrow [n]$ \hphantom{\hspace{21.5em}} and deterministic $\cM_0: [d_0]_+ \rightarrow [n]$ such that \hphantom{\hspace{21.65em}} $d_0 = 2^{2\cdot 3^\ell}$ for some $\ell \in \N$ and $d_0 \ge |\cX|$
\begin{enumerate}[leftmargin=.3in]
\item
	Let $A = \{x \in \cX : c_x(D) > 0\}$.
\item
	Let $\{(x, \tc_x)\}_{x \in A} = {\tt BasicHistogram}_{\cM, A}(D)$.
\item
	Release $h$ drawn uniformly at random from $\{g \in \cG_{\cM_0}(\cX) : \forall x \in A ~~ g_x = \tc_x\}$.
\end{enumerate}
\end{algorithm}
\vspace{-.1in}
\rule{\textwidth}{.5pt}
\end{figure}

\begin{theorem}\label{thm:histogram compact}
Let deterministic $\cM: [n] \times [d]_+ \rightarrow [n]$ be $(\varepsilon_1/2, 0)$-differentially private for counting queries and have $(a,\beta)$-accuracy.
Let $\cM_0: [d_0]_+ \rightarrow [n]$ be deterministic, $d_0 = 2^{2\cdot 3^\ell}$ for some $\ell \in \N$ and $d_0 \ge |\cX|$.
Assume $\Pr[\cM_0(U_0) \le a] \ge 1 - \beta$ and for all $c \in [n]$
\begin{align*}
e^{-\varepsilon_2}\cdot\Pr[\cM_0(U_0) = c]
\le \Pr[\cM(0, U) = c]
\le e^{\varepsilon_3}\cdot\Pr[\cM_0(U_0) = c]
\end{align*}
where $U \sim \unif([d]_+)$ and $U_0 \sim \unif([d_0]_+)$.
Then ${\tt CompactHistogram}_{\cM,\cM_0}: \cX^n \rightarrow \cG_{\cM_0}(\cX)$ has the following properties:
\begin{enumerate}[label=\roman*.]
\item
	${\tt CompactHistogram}_{\cM,\cM_0}$ is $\left(\varepsilon_1 + \varepsilon_2 + \varepsilon_3,0\right)$-differentially private.

\item
	${\tt CompactHistogram}_{\cM,\cM_0}$ has $(a, \beta)$-per-query accuracy.
\item
	${\tt CompactHistogram}_{\cM,\cM_0}$ has $(a, \beta \cdot |\cX|)$-simultaneous accuracy.
\item
	${\tt CompactHistogram}_{\cM,\cM_0}$ has running time
	\begin{align*}
	\tilde{O}(n \cdot \log d_{0}) + O(n) \cdot (\log d + \ctime(\cM) + \ctime(\cS)) + O(n\log{n} \cdot \log|\cX|)
	\end{align*}
	where $\ctime(\cS)$ is the maximum time over $v \in [n]$ to sample from the distribution $\cS_{v} \sim \unif(\{u_0 \in [d_0]_+ : \cM_0(u_0) = v\})$.
\item
	Given $h = {\tt CompactHistogram}_{\cM, \cM_0}(D)$, for all $x \in \cX$ the count $h_x$ can be evaluated in time
	\begin{align*}
	O(n) \cdot \tilde{O}(\log d_0) + \ctime(\cM_0)
	\end{align*}
\end{enumerate}
\end{theorem}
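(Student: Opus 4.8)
The plan is to obtain parts (ii)--(v) fairly directly from Lemma~\ref{lem:hash} and Theorem~\ref{thm:histogram basic}, and to put the real work into the privacy claim (i), which I would prove by writing down the output distribution of ${\tt CompactHistogram}_{\cM,\cM_0}$ explicitly.

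For (i), fix neighboring datasets $D,D'$ and a target histogram $g$; if $g\notin\cG_{\cM_0}(\cX)$ it has probability $0$ under both, so assume $g\in\cG_{\cM_0}(\cX)$. First I would record a support fact: because $\cM$ is $(\varepsilon_1/2,0)$-differentially private for counting queries, $\supp(\cM(c,U))$ does not depend on $c$, and the multiplicative hypothesis relating $\cM(0,U)$ to $\cM_0(U_0)$ forces this common support to equal $\supp(\cM_0(U_0))$; since every coordinate of $g$ lies in $\supp(\cM_0(U_0))$ by Lemma~\ref{lem:hash}(i), all ratios below are well defined and positive. Writing $A=A(D)=\{x:c_x(D)>0\}$, the output $h$ equals $g$ exactly when step~2 produces the counts $(g_x)_{x\in A}$ and the uniform draw of step~3 then lands on $g$; using the independence of the bins in step~2, together with Lemma~\ref{lem:hash}(i)--(ii) to evaluate $|\{g'\in\cG_{\cM_0}(\cX):\forall x\in A,\ g'_x=g_x\}|=|\cG_{\cM_0}(\cX)|\cdot\prod_{x\in A}\Pr[\cM_0(U_0)=g_x]$, gives
\[
\Pr[h=g]=\frac{1}{|\cG_{\cM_0}(\cX)|}\prod_{x\in A(D)}\frac{\Pr[\cM(c_x(D),U)=g_x]}{\Pr[\cM_0(U_0)=g_x]},
\]
and the analogous formula holds for $D'$. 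To compare them I would rewrite both products over the common index set $\widetilde A=A(D)\cup A(D')$: for $x\in\widetilde A\setminus A(D)$ we have $c_x(D)=0$, so inserting the factor $\Pr[\cM(0,U)=g_x]/\Pr[\cM_0(U_0)=g_x]$ changes nothing, and symmetrically for $D'$. The ratio of the two expressions then becomes $\prod_{x\in\widetilde A}\frac{\Pr[\cM(c_x(D),U)=g_x]}{\Pr[\cM(c_x(D'),U)=g_x]}$ times at most one factor $\Pr[\cM_0(U_0)=g_x]/\Pr[\cM(0,U)=g_x]$ (from the at most one bin nonempty in $D'$ but empty in $D$) times at most one factor $\Pr[\cM(0,U)=g_x]/\Pr[\cM_0(U_0)=g_x]$ (from the at most one bin nonempty in $D$ but empty in $D'$). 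Since $D,D'$ differ in one row, counts differ on at most two bins, each by $1$; all other factors in the first product equal $1$, and Lemma~\ref{lem:count privacy} bounds each of the at most two remaining ones by $e^{\varepsilon_1/2}$, so the first product is at most $e^{\varepsilon_1}$, while the two leftover factors are at most $e^{\varepsilon_2}$ and $e^{\varepsilon_3}$ by hypothesis. Multiplying gives $e^{\varepsilon_1+\varepsilon_2+\varepsilon_3}$, which is (i).

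For (ii), Lemma~\ref{lem:hash}(ii) applied to $A\cup\{x\}$ (of size at most $n+1$) shows that, conditioned on the step-2 counts, the marginal of $h_x$ is $\cM_0(U_0)$ when $x\notin A$ and is $\cM(c_x(D),U)$ when $x\in A$; in the first case $c_x(D)=0$ and $h_x\ge0$, so $\Pr[|h_x-c_x(D)|\le a]=\Pr[\cM_0(U_0)\le a]\ge1-\beta$ by assumption, and in the second case this probability is $\ge1-\beta$ by the $(a,\beta)$-accuracy of $\cM$. Part (iii) follows by a union bound over the $|\cX|$ bins. For (iv) I would add the costs of step~1 (building a red-black tree, $O(n\log n\cdot\log|\cX|)$), step~2 (Theorem~\ref{thm:histogram basic}(iv) with $|A|\le n$), and step~3 (Lemma~\ref{lem:hash}(iv)); part (v) is immediate from Lemma~\ref{lem:hash}(iii).

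The main obstacle is the bookkeeping in (i): the index set $A(D)$ over which the product runs is dataset-dependent, so the comparison of $D$ and $D'$ must isolate the (at most one) bin that ``appears'' and the (at most one) bin that ``disappears'' when moving between $D$ and $D'$ --- these are precisely the bins that spend the extra budgets $\varepsilon_2$ and $\varepsilon_3$ arising from the mismatch between the empty-bin sampler $\cM_0$ and $\cM(0,\cdot)$. One must also verify that the explicit-pmf formula is valid: this rests on the support coincidence above (so the denominators are nonzero) and on the set in step~3 being nonempty, which holds because each released count lies in $\supp(\cM_0(U_0))$ and $|A|\le n$, so the construction in Lemma~\ref{lem:hash}(iv) can realize it by an interpolating polynomial of degree at most $n$.
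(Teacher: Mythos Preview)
Your proposal is correct and follows essentially the same approach as the paper: both compute the output pmf by combining the independent noisy counts on $A$ with the $(n+1)$-wise independence of $\cG_{\cM_0}(\cX)$ (Lemma~\ref{lem:hash}(i)--(ii)), pass to the common index set $A(D)\cup A(D')$, and spend $\varepsilon_1$ on the at most two bins where the counts differ and $\varepsilon_2,\varepsilon_3$ on the at most one ``appearing'' and one ``disappearing'' bin via the multiplicative hypothesis on $\cM_0$ versus $\cM(0,\cdot)$. Your explicit pmf formula $\Pr[h=g]=|\cG_{\cM_0}(\cX)|^{-1}\prod_{x\in A}\Pr[\cM(c_x(D),U)=g_x]/\Pr[\cM_0(U_0)=g_x]$ is exactly what the paper's factorization through $\Pr[g=r\mid\forall x\in B,\,g_x=r_x]$ simplifies to, and your added care about supports and nonemptiness of the step-3 set is a welcome clarification the paper leaves implicit.
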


As discussed earlier, a natural choice for $\cM_0$ is to take $\cM_0(u) = \cM(0,u)$ for all $u \in [d]$. However, $d$ may not satisfy the required constraints. In the next section (see Lemma \vpref{lem:count conversion}), we will show how to construct $\cM_0$ for the counting query algorithms of Section \vpref{sec:count} at only a constant factor loss in privacy (i.e. $\varepsilon_2 = O(\varepsilon)$ and $\varepsilon_3 = O(\varepsilon)$).

\begin{proof}[Proof of {\it i}]
Let $D,D' \in \cX^n$ be neighboring datasets.
Let $A = \{ x \in \cX : c_x(D) > 0\}$.
Similarly, define $A' = \{ x \in \cX : c_x(D') > 0\}$.
Let $B = A \cup A'$.
Notice that $|B| \le n + 1$.
Let $h \sim {\tt CompactHistogram}_{\cM,\cM_0}(D)$ and $h' \sim {\tt CompactHistogram}_{\cM, \cM_0}(D')$.
Let $g \sim \unif(\cG_{\cM_0}(\cX))$ and $r \in \cG_{\cM_0}(\cX)$.
Then by construction
\begin{align*}
\Pr[h = r]
&= \Pr[\forall x \in B ~~ h_x = r_x] \cdot \frac{\Pr[h = r ~|~ \forall x \in A ~~ h_x = r_x]}{\Pr[\forall x \in B ~~ h_x = r_x ~|~ \forall x \in A ~~ h_x = r_x]}\\
&= \frac{\Pr[\forall x \in B ~~ h_x = r_x]}{|\{g' \in \cG_{\cM_0}(\cX) : \forall x \in A ~~ g'_x = r_x\}|} \cdot\frac{|\{g' \in \cG_{\cM_0}(\cX) : \forall x \in A ~~ g'_x = r_x\}|}{|\{g' \in \cG_{\cM_0}(\cX) : \forall x \in B ~~ g'_x = r_x\}|}\\
&=  \Pr[\forall x \in B ~~ h_x = r_x] \cdot \Pr[g = r ~|~ \forall x \in B ~~ g_x = r_x]
\end{align*}
Now, because $|B\setminus A| \le 1$ and $|B\setminus A'| \le 1$ along with Lemma \vpref{lem:hash} Parts {\it i}-{\it ii}
\begin{align*}
&\frac{\Pr[h = r]}{\Pr[g = r ~|~ \forall x \in B ~~ g_x = r_x]}\\
&\hspace{4.5em}=\left(\prod_{x\in A}\Pr[\cM(c_x(D),U) = r_x]\right) \left(\prod_{x \in B \setminus A} \Pr[h_x = r_x ~|~ \forall x \in A ~~ h_x = r_x]\right)\\
&\hspace{4.5em}=\left(\prod_{x\in A}\Pr[\cM(c_x(D),U) = r_x]\right) \left(\prod_{x \in B \setminus A} \Pr[g_x = r_x ~|~ \forall x \in A ~~ g_x = r_x]\right)\\
&\hspace{4.5em}= \left(\prod_{x\in A}\Pr[\cM(c_x(D),U) = r_x]\right) \left(\prod_{x \in B \setminus A} \Pr[\cM_0(U_0) = r_x]\right)\\
&\hspace{4.5em}\le e^{\varepsilon_2}\cdot\prod_{x\in B}\Pr[\cM(c_x(D),U) = r_x]\\
&\hspace{4.5em}\le e^{\varepsilon_1 + \varepsilon_2}\cdot\prod_{x\in B}\Pr[\cM(c_x(D'),U) = r_x]\\
&\hspace{4.5em}\le \left(e^{\varepsilon_1 +\varepsilon_2}\cdot\prod_{x\in A'}\Pr[\cM(c_x(D'),U) = r_x]\right)\left( e^{\varepsilon_3}\cdot \prod_{x \in B \setminus A'} \Pr[\cM_0(U_0) = r_x]\right)\\
&\hspace{4.5em}= e^{\varepsilon_1+\varepsilon_2+\varepsilon_3}\cdot\frac{\Pr[h' = r]}{\Pr[g = r ~|~ \forall x \in B ~~ g_x = r_x]}
\end{align*}
Therefore, ${\tt CompactHistogram}_{\cM, \cM_0}$ is $(\varepsilon_1+\varepsilon_2+\varepsilon_3, 0)$-differentially private.
\end{proof}

\begin{proof}[Proof of {\it ii}]
Let $h \sim {\tt CompactHistogram}_{\cM, \cM_0}(D)$.
Let $A = \{x \in \cX : c_{x}(D) > 0\}$.
If $x \in A$, then $h_x \sim \cM(c_x(D),U)$ with accuracy following from $\cM$ having $(a,\beta)$-accuracy.

Otherwise, for $x \in \cX\setminus A$, let $g \sim \unif(\cG_{\cM_0}(\cX))$.
Notice that $c_x(D) = 0$ and $|A| \le n$.
By construction and Lemma \vpref{lem:hash} Parts {\it i}-{\it ii}
\begin{align*}
\Pr[|h_x| \le a]
&= \sum_{c \in [n]^A}\Pr[\forall x' \in A ~~ h_{x'} = c_{x'}] \cdot \Pr[h_x \le a ~|~ \forall x' \in A ~~ h_{x'} = c_{x'}]\\
&= \sum_{c \in [n]^A}\Pr[\forall x' \in A ~~ h_{x'} = c_{x'}] \cdot \Pr[g_x \le a ~|~ \forall x' \in A ~~ g_{x'} = c_{x'}]\\
&= \sum_{c \in [n]^A}\Pr[\forall x' \in A ~~ h_{x'} = c_{x'}] \cdot \Pr[g_x \le a]\\
&= \Pr[g_x \le a]\\
&= \Pr[\cM_0(U_0) \le a]\\
&\ge 1 -\beta
\end{align*}
with the last inequality by assumption.
Therefore, ${\tt CompactHistogram}_{\cM, \cM_0}$ has $(a,\beta)$-per-query accuracy.
\end{proof}

\begin{proof}[Proof of {\it iii}]
${\tt CompactHistogram}_{\cM, \cM_0}$ has $(a, \beta \cdot |\cX|)$-simultaneous accuracy by a union\\bound over each $x \in \cX$ along with the previous part.
\end{proof}

\begin{proof}[Proof of {\it iv}-{\it v}]
By Theorem \vpref{thm:histogram basic} Part {\it iv} and Lemma \vpref{lem:hash} Parts {\it iii}-{\it iv}, we get the desired bounds on running time.
\end{proof}

\subsection{Constructing the Empty-Bin Sampler}\label{ssec:conversion}

The following lemma allows us to construct the empty-bin sampler satisfying the constraints of Theorem \vpref{thm:histogram compact} for the counting query algorithms of Section \vpref{sec:count}.

\begin{lemma}\label{lem:count conversion}
Let deterministic $\cM: [n] \times [d]_+ \rightarrow [n]$ such that $\cM(0,u)$ is non-decreasing function in $u$.
Let $d_0 \in \N_+$ such that $d_0 \ge (4/3) \cdot d$.
Let $U \sim \unif([d]_+)$ and $U_0 \sim \unif([d_0]_+)$.
Then there is a deterministic algorithm $\cM_0: [d_0]_+ \rightarrow [n]$ with the following properties:
\begin{enumerate}[label=\roman*.]
\item
	$\cM_0$ is non-decreasing.
\item
	For all $c \in [n]$
	\begin{align*}
	e^{-2\cdot d/d_0}\cdot\Pr[\cM(0, U) = c]
	\le \Pr[\cM_0(U_0) = c]
	\le e^{d/d_0}\cdot \Pr[\cM(0, U) = c]
	\end{align*}

	Moreover, if $d_0$ is a multiple of $d$, then $\cM_0(U_0) \sim \cM(0,U)$.
\item
	For all $a \in [n]$
	\begin{align*}
	\Pr[\cM_0(U_0) \le a]
	\ge \Pr[\cM(0, U) \le a]
	\end{align*}

	In particular, if $\cM$ has $(a,\beta)$-accuracy, then $\Pr[\cM_0(U_0) \le a] \ge 1 - \beta$.
\item
	$\cM_0$ has running time
	\begin{align*}
	\tilde{O}(\log d_0) + \ctime(\cM)
	\end{align*}
\item
	For any $v \in [n]$, define the distribution $\cS_v \sim \unif(\{u_0 \in [d_0]_+ : \cM_0(u_0) = v\})$.
	Then $\cS_v$ can be sampled in time
	\begin{align*}
	\tilde{O}(\log d_0) + O(\ctime(F_\cM))
	\end{align*}
\end{enumerate}
\end{lemma}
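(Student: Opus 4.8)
The plan is to realize $\cM_0$ as $\cM(0,\cdot)$ precomposed with a simple monotone ``stretching'' map from $[d_0]_+$ onto $[d]_+$. Concretely I would set $g\colon[d_0]_+\to[d]_+$, $g(u_0)=\lfloor (u_0-1)d/d_0\rfloor+1$, and define $\cM_0(u_0)=\cM(0,g(u_0))$. Since $d_0\ge\tfrac43 d>d$ one has $g(1)=1$, $g(d_0)=d$, and $g$ is (weakly) increasing, so $\cM_0$ maps into $[n]$ and inherits monotonicity from $\cM(0,\cdot)$, giving Part $i$. The single structural fact driving everything else is that for each $j\in[d]_+$ the sublevel set $\{u_0:g(u_0)\le j\}$ equals $\{1,\dots,\lceil jd_0/d\rceil\}$ (an elementary floor manipulation), so $g$ is ``almost measure-preserving'': for a contiguous block $\{j_1,\dots,j_2\}\subseteq[d]_+$ its $g$-preimage has size $\lceil j_2 d_0/d\rceil-\lceil (j_1-1)d_0/d\rceil$, which differs from the ideal size $(j_2-j_1+1)\,d_0/d$ by strictly less than $1$.

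For Part $iii$, monotonicity of $\cM(0,\cdot)$ gives $\{u\in[d]_+:\cM(0,u)\le a\}=\{1,\dots,F_{\cM}(a)\}$, hence $\Pr[\cM_0(U_0)\le a]=\Pr[g(U_0)\le F_{\cM}(a)]=\lceil F_{\cM}(a)\,d_0/d\rceil/d_0\ge F_{\cM}(a)/d=\Pr[\cM(0,U)\le a]$, and the $(a,\beta)$-accuracy consequence is immediate. For Part $ii$, again by monotonicity the fiber $I_c=\{j\in[d]_+:\cM(0,j)=c\}$ is a (possibly empty) contiguous block; if it is empty both probabilities vanish, and otherwise $\Pr[\cM_0(U_0)=c]/\Pr[\cM(0,U)=c]$ equals $|g^{-1}(I_c)|\big/\bigl(|I_c|\,d_0/d\bigr)$, which by the $\pm1$ bound above lies in $\bigl(1-\tfrac{d}{|I_c|d_0},\,1+\tfrac{d}{|I_c|d_0}\bigr)\subseteq\bigl(1-\tfrac{d}{d_0},\,1+\tfrac{d}{d_0}\bigr)$. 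It then remains only to check the two elementary inequalities $1+x\le e^{x}$ and $1-x\ge e^{-2x}$ at $x=d/d_0$: the first is standard, and the second holds precisely because $x\le\tfrac34$ (the function $1-x-e^{-2x}$ vanishes at $0$, increases then decreases on $[0,\tfrac34]$, and is still positive at $x=\tfrac34$) --- this is the one place the hypothesis $d_0\ge\tfrac43 d$ is used. The ``moreover'' clause is free: if $d_0=kd$ then every $\lceil jd_0/d\rceil=jk$ is exact, each fiber has exactly $k$ preimages, and $\cM_0(U_0)\sim\cM(0,U)$.

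For the running-time claims the key point is to avoid any binary search over $\cM$. The closed form for $g$ is one multiplication and one division on $O(\log d_0)$-bit integers, so evaluating $\cM_0(u_0)=\cM(0,g(u_0))$ costs $\tilde O(\log d_0)+\ctime(\cM)$, which is Part $iv$. For Part $v$, combine the structural fact with the identity $I_v=\{F_{\cM}(v-1)+1,\dots,F_{\cM}(v)\}$ (with the convention $F_{\cM}(-1)=0$), which once more follows from monotonicity of $\cM(0,\cdot)$; then $\{u_0:\cM_0(u_0)=v\}=\{\lceil F_{\cM}(v-1)\,d_0/d\rceil+1,\dots,\lceil F_{\cM}(v)\,d_0/d\rceil\}$ is a contiguous block that one samples uniformly with two evaluations of $F_{\cM}$, a constant number of $O(\log d_0)$-bit arithmetic operations, and one call to the sampling oracle, for total time $\tilde O(\log d_0)+O(\ctime(F_{\cM}))$.

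I do not expect a real obstacle here: the argument is bookkeeping with floors, ceilings, and the monotonicity of $\cM(0,\cdot)$. The only points requiring care are pinning down the off-by-one constants so that ``$|g^{-1}(\text{block})|$ is within $\pm1$ of the ideal size'' is rigorous, and verifying $1-x\ge e^{-2x}$ on $[0,\tfrac34]$ so that the constant $4/3$ in the hypothesis is exactly what the privacy bound needs; and in the running-time parts, insisting on the closed form for $g$ and on using $F_{\cM}$ (rather than searching $\cM$) to locate the fiber $I_v$.
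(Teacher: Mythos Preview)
Your proposal is correct and follows essentially the same approach as the paper: define $\cM_0=\cM(0,f(\cdot))$ for a monotone surjection $f\colon[d_0]_+\to[d]_+$ whose fibers all have size $q$ or $q+1$ (where $d_0=qd+r$), then read off Parts \textit{i}--\textit{v} from this near-uniformity together with the monotonicity of $\cM(0,\cdot)$ and the explicit description of fibers via $F_{\cM}$. The only difference is the specific map: the paper puts the $r$ ``large'' fibers on the first $r$ values of $[d]_+$ (taking $f(u_0)=\lceil u_0/(q+1)\rceil$ there and $\lceil(u_0-r)/q\rceil$ afterward), whereas your $g(u_0)=\lfloor(u_0-1)d/d_0\rfloor+1$ spreads them evenly; correspondingly, your Part \textit{iii} argument via the sublevel-set identity $|\{u_0:g(u_0)\le j\}|=\lceil jd_0/d\rceil\ge jd_0/d$ is a bit more direct than the paper's, which instead exploits that the large fibers sit at the small end together with monotonicity of $\cM(0,\cdot)$.
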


\begin{proof}[Proof]
(Construction) We start by considering a sample uniformly at random from $[d_0]_+$ and then map it to $[d]_+$ such that the resulting number is almost uniform while preserving monotonicity.
The algorithm is as follows:

\begin{figure}[H]
\rule{\textwidth}{.5pt}
\begin{algorithm}\label{alg:m0}
$\cM_0(u_0)$
for $u_0 \in [d_0]_+$
\begin{enumerate}[leftmargin=.3in]
\item
	Pick $q \in \N$ and $r \in [d-1]$ such that $d_0 = q \cdot d + r$.
\item
	Define the function $f: [d_0] _+ \rightarrow [d]_+$
	\begin{align*}
	f(u_0)
	&= \begin{cases}
	\left\lceil u_0 / (q+1) \right\rceil
	& \text{if $r\neq0$ and $u_0 \le r\cdot(q+1)$}\\
	\left\lceil (u_0-r) / q \right\rceil
	& \text{if $r=0$ or $u_0 > r\cdot(q+1)$}
	\end{cases}
	\end{align*}
\item
	Return $\cM(0,f(u_0))$.
\end{enumerate}
\end{algorithm}
\vspace{-.1in}
\rule{\textwidth}{.5pt}
\end{figure}

\begin{proof}[Proof of {\it i}]
Notice that $f$ as defined in Algorithm \vpref{alg:m0} is non-decreasing.
Therefore, $\cM_0(u_0) = \cM(0,f(u_0))$ is non-decreasing in $u_0$ as $\cM(0,u)$ is non-decreasing in $u$.
\end{proof}
\begin{proof}[Proof of {\it ii}]
Notice that $\left\lceil (u_0-r) / q \right\rceil = \left\lceil (u_0-r\cdot(q+1)) / q \right\rceil + r$.
So $|\{u_0 \in [d_0]_+ : f(u_0) = u\}| = q \text{ or } q+1$ for all $u \in [d]_+$.
Thus,
\begin{align*}
\Pr[\cM_0(U_0) = c]
&\le \frac{q+1}{d_0}\cdot |\{u \in [d]_+ : \cM(0,u) = c\}|\\
&= \frac{(q+1)\cdot d}{d_0}\cdot\Pr[\cM(0, U) = c]\\
&\le \left(1 + \frac{d}{d_0}\right)\cdot\Pr[\cM(0, U) = c]\\
&\le e^{d/d_0}\cdot \Pr[\cM(0, U) = c]
\end{align*}
and because $d/d_0 \le 3/4$
\begin{align*}
\Pr[\cM_0(U_0) = c]
&\ge \frac{q\cdot d}{d_0}\cdot\Pr[\cM(0,U) = c]\\
&\ge \left(1 - \frac{d}{d_0}\right)\cdot\Pr[\cM(0, U) = c]\\
&\ge e^{-2\cdot d/d_0}\cdot\Pr[\cM(0, U) = c]
\end{align*}

If $d_0$ is a multiple of $d$, then $r=0$ and $|\{u_0 \in [d_0]_+ : f(u_0) = u\}| = q$ for all $u \in [d]_+$.
So $\Pr[\cM_0(U_0) = c] = \Pr[\cM(0,U) = c]$ for all $c \in [n]$.
\end{proof}

\begin{proof}[Proof of {\it iii}]
Let $u \in [d]_+$.
Notice that
\begin{align*}
|\{u_0 \in [d_0]_+ : f(u_0) = u\}| = \begin{cases}
q+1 & \text{if $u \le r$}\\
q & \text{otherwise}
\end{cases}
\end{align*}
Then for all $a \in [n]$
\begin{align*}
\Pr[\cM_0(U_0) \le a]
&= \frac{q\cdot d}{d_0}\cdot\Pr[\cM(0, U) \le a] + \frac{r}{d_0}\cdot\Pr[\cM(0,U) \le a ~|~ U \le r]\\
&\ge \frac{q\cdot d}{d_0}\cdot\Pr[\cM(0, U) \le a] + \frac{r}{d_0}\cdot\Pr[\cM(0,U) \le a]
\end{align*}
as $\cM(0,u)$ is non-decreasing in $u$.
Therefore, $\Pr[\cM_0(U_0) \le a] \ge \Pr[\cM(0, U) \le a]$ for all $a \in [n]$.
\end{proof}

\begin{proof}[Proof of {\it iv}]
Evaluating $f$ takes time $\tilde{O}(\log d_0)$.
So we have the desired running time overall.
\end{proof}

\begin{proof}[Proof of {\it v}]
Because $\cM(0,u)$ is non-decreasing in $u$ and $f(u_0)$ is non-decreasing in $u_0$, we have
\begin{align*}
\supp(\cS_v)
&= \{u_0 \in [d_0]_+ : \cM(0,f(u_0)) = v\}\\
&= \{u_0 \in [d_0]_+ : F_{\cM}(v-1) < f(u_0) \le F_{\cM}(v)\}\\
&= \left\{\min\{u_0 \in [d_0]_+ : f(u_0) \ge F_{\cM}(v-1) + 1\}, \ldots,  \max\{u_0 \in [d_0]_+ : f(u_0) \le F_{\cM}(v)\}\right\}
\end{align*}
One can show for $u \in [d]_+$ that
\begin{align*}
\max\{u_0 \in [d_0]_+ : f(u_0) \le u\}
&= \begin{cases}
(q+1) \cdot u &\text{if $r \neq 0$ and $u \le r$}\\
q \cdot u + r &\text{if $r = 0$ or $u > r$}
\end{cases}
\end{align*}
and $\min\{u_0 \in [d_0]_+ : f(u_0) \ge u\} = 1 + \max\{u_0 \in [d_0]_+ : f(u_0) \le u-1\}$.
So both endpoints can be computed in time $\tilde{O}(\log d_0)$.
\end{proof}
\vspace{-1.5em}
\renewcommand{\qedsymbol}{}
\end{proof}

Now, we are ready to obtain a private histogram algorithm that achieves pure differential privacy, has the same accuracy guarantees (up to constant factors) as the Laplace mechanism and has running polynomial in $n$ and $\log |\cX|$.

\begin{theorem}\label{thm:explicit compact}
Let $\varepsilon,\beta_0 \in \N^{-1}$ and $\cM = {\tt FastSample}_{n,\varepsilon',\gamma}$ where $\varepsilon' = 1/\lceil 10/(9\varepsilon)\rceil$ and $\gamma = \beta_0/(2|\cX|)$.
Then there exists deterministic $\cM_0: [d_0]_+ \rightarrow [n]$ with $\log d_0 = \tilde{O}(1/\varepsilon) \cdot (\log{n} + \log|\cX| + \log (1/\beta_0))$ such that ${\tt CompactHistogram}_{\cM, \cM_0}: \cX^n \rightarrow \cG_{\cM_0}(\cX)$ has the following properties:

\begin{enumerate}[label=\roman*.]
\item
	${\tt CompactHistogram}_{\cM, \cM_0}$ is $(\varepsilon,0)$-differentially private.
\item
	For every $\beta \ge 2\gamma$, ${\tt CompactHistogram}_{\cM, \cM_0}$ has $(a, \beta)$-per-query accuracy for
	\begin{align*}
	a
	= \left\lceil\frac{5}{\varepsilon}\ln\left(\frac{2}{\beta}\right)\right\rceil
	\end{align*}
\item
	For every $\beta \ge \beta_0$, ${\tt CompactHistogram}_{\cM, \cM_0}$ has $(a, \beta)$-per-query accuracy for
	\begin{align*}
	a
	= \left\lceil\frac{5}{\varepsilon}\ln\left(\frac{2\cdot |\cX|}{\beta}\right)\right\rceil
	\end{align*}
\item
	${\tt CompactHistogram}_{\cM, \cM_0}$ has running time
	\begin{align*}
	\tilde{O}\left(\frac{n}{\eps}\cdot\log\frac{|\cX|}{\beta_0}\right)
	\end{align*}
\item
	Given $h = {\tt CompactHistogram}_{\cM, \cM_0}(D)$, for all $x \in \cX$ the count $h_x$ can be evaluated in time
	\begin{align*}
	\tilde{O}\left(\frac{n}{\eps}\cdot\log\frac{|\cX|}{\beta_0}\right)
	\end{align*}
\end{enumerate}
\end{theorem}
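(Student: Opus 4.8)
The proof will apply the general reduction of Theorem~\ref{thm:histogram compact}, so the plan has four steps: pick the counting-query mechanism $\cM$ and a compatible empty-bin sampler $\cM_0$; bound the three privacy parameters $\varepsilon_1,\varepsilon_2,\varepsilon_3$ that theorem produces; read off accuracy; and chase the asymptotics for the running and evaluation times.

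First I will set $\cM = {\tt FastSample}_{n,\varepsilon',\gamma}$, which by Theorem~\ref{thm:count fast} is $(\varepsilon'/2,0)$-differentially private for counting queries, maps $[n]\times[d]_+\to[n]$ with $\log d = \tilde{O}(1/\varepsilon')\cdot\log(n/\gamma)$, has $\cM(0,u)$ non-decreasing in $u$, and has $F_\cM$ computable in time $\tilde{O}(1/\varepsilon')\cdot\log(n/\gamma)$; note $\varepsilon'=1/\lceil 10/(9\varepsilon)\rceil\in\N^{-1}$ and $\varepsilon'=\Theta(\varepsilon)$. I then choose $\ell\in\N$ minimal with $d_0 := 2^{2\cdot 3^\ell}\ge\max\{|\cX|,\,30d/\varepsilon\}$; since consecutive exponents $2\cdot 3^\ell$ grow by a factor of $3$, $\log d_0 = O(\log d + \log|\cX| + \log(1/\varepsilon))$, which after substituting $\varepsilon'=\Theta(\varepsilon)$ and $\gamma=\beta_0/(2|\cX|)$ becomes $\tilde{O}(1/\varepsilon)\cdot(\log n + \log|\cX| + \log(1/\beta_0))$. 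Because $d_0\ge(4/3)d$, Lemma~\ref{lem:count conversion} applies and yields the required deterministic, non-decreasing $\cM_0:[d_0]_+\to[n]$ together with the two-sided bound $e^{-2d/d_0}\Pr[\cM(0,U)=c]\le\Pr[\cM_0(U_0)=c]\le e^{d/d_0}\Pr[\cM(0,U)=c]$, the tail inequality $\Pr[\cM_0(U_0)\le a]\ge\Pr[\cM(0,U)\le a]$, and the efficiency statements of its parts iv--v (in particular $\ctime(\cM_0)=\tilde{O}(\log d_0)+\ctime(\cM)$ and $\ctime(\cS)=\tilde{O}(\log d_0)+O(\ctime(F_\cM))$).

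Feeding $\cM$ and $\cM_0$ into Theorem~\ref{thm:histogram compact}, its hypotheses hold with $\varepsilon_1=\varepsilon'$, $\varepsilon_2=d/d_0$, $\varepsilon_3=2d/d_0$ (rearranging the two-sided bound), and the $\cM_0$-accuracy hypothesis supplied by Lemma~\ref{lem:count conversion}~iii. Since $\varepsilon'=1/\lceil 10/(9\varepsilon)\rceil\le 9\varepsilon/10$ and $\varepsilon_2+\varepsilon_3=3d/d_0\le\varepsilon/10$ by the choice of $d_0$, we get $\varepsilon_1+\varepsilon_2+\varepsilon_3\le\varepsilon$, which is part i. For accuracy, ${\tt FastSample}$ has $(a,\beta)$-accuracy with $a=\lceil(2/\teps)\ln(1/(\beta-\gamma))\rceil$ for every $\beta>\gamma$ (Theorem~\ref{thm:count fast}~ii), so Theorem~\ref{thm:histogram compact}~ii gives ${\tt CompactHistogram}$ this per-query accuracy; bounding $2/\teps\le 5/\varepsilon$ (using $\teps\in(4/9\cdot\varepsilon',\varepsilon']$ and the value of $\varepsilon'$) and $\beta-\gamma\ge\beta/2$ when $\beta\ge 2\gamma$ shows $a\le\lceil(5/\varepsilon)\ln(2/\beta)\rceil$, proving part ii. Part iii then follows from Theorem~\ref{thm:histogram compact}~iii by invoking the per-query bound at parameter $\beta/|\cX|$, which is permitted once $\beta/|\cX|\ge 2\gamma$, i.e. $\beta\ge\beta_0$, and yields error $\lceil(5/\varepsilon)\ln(2|\cX|/\beta)\rceil$.

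Finally, for parts iv--v I substitute into the running-time and evaluation-time formulas of Theorem~\ref{thm:histogram compact} the bounds on $\log d_0$ and $\log d$ above, together with $\ctime(\cM)=\tilde{O}(1/\varepsilon)(\log^2 n+\log n\log(|\cX|/\beta_0))$ (Theorem~\ref{thm:count fast}~iii), $\ctime(F_\cM)=\tilde{O}(1/\varepsilon)\log(n|\cX|/\beta_0)$ (Theorem~\ref{thm:count fast}~iv), and the expressions for $\ctime(\cM_0)$ and $\ctime(\cS)$ noted above; every summand collapses to $\tilde{O}\!\left((n/\varepsilon)\log(|\cX|/\beta_0)\right)$ once polylogarithmic factors in $n$, $1/\varepsilon$, $1/\beta_0$ are absorbed. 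The only genuine friction I anticipate is in the privacy and accuracy step: verifying the constant-level inequalities $2/\teps\le 5/\varepsilon$ and $3d/d_0\le\varepsilon/10\le\varepsilon-\varepsilon'$ despite the double rounding (first in $\varepsilon'=1/\lceil 10/(9\varepsilon)\rceil$, then in the rationality-forcing choice of $\teps$ inside ${\tt FastSample}$); everything else is assembling the cited results and simplifying the asymptotics.
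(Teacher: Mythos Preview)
Your proposal is correct and follows essentially the same approach as the paper: the paper's proof also takes $d_0 = 2^{2\cdot 3^\ell}$ with $\ell$ chosen so that $d_0 \ge \max\{|\cX|,\,30d/\varepsilon\}$ (hence $d/d_0 \le \varepsilon/30$), builds $\cM_0$ via Lemma~\ref{lem:count conversion}, and then defers everything to Theorems~\ref{thm:count fast} and~\ref{thm:histogram compact}. Your write-up is in fact more explicit than the paper's, correctly isolating $\varepsilon_1=\varepsilon'$, $\varepsilon_2=d/d_0$, $\varepsilon_3=2d/d_0$ and rightly flagging the constant-level check $2/\teps\le 5/\varepsilon$ (which the paper does not spell out) as the only step needing care.
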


\begin{proof}
Let $d_0 = 2^{2 \cdot 3^{\ell}}$ where $\ell = \lceil{\log_{3}(\lceil{\log_{2}{\max\{|\cX|,\, 30 \cdot d / \varepsilon\}}}\rceil / 2)}\rceil$ where $\log d = \tilde{O}(1/\varepsilon)\cdot (\log n + \log |\cX| + \log(1/\beta_0))$ as defined in Algorithm \vpref{alg:count fast}.
Notice that $d_0 \ge |\cX|$ and $d/d_0 \le \varepsilon/30$.
In addition, $d_0 = O(|\cX|^3 + (d/\varepsilon)^3)$.
So $\log d_0 = \tilde{O}(1/\varepsilon) \cdot (\log{n} + \log|\cX| + \log(1/\beta_0))$.
Now, let $\cM_0: [d_0]_+ \rightarrow [n]$ be defined as in Lemma \vpref{lem:count conversion} for $\cM$.
The proof follows from Theorem \vpref{thm:count fast} and Theorem \vpref{thm:histogram compact}.
\end{proof}

In the following figure, we show using {\tt CompactHistogram} with {\tt FastSample} is asymptotically faster than when using it with {\tt GeoSample}, particularly in the case when $\eps \gg 1/n$, with only a small constant loss in accuracy. 

\begin{figure}[H]
\centering
\renewcommand{\arraystretch}{1.8}
\begin{tabular}{ccc}
$\cM$ & Running Time & Evaluation Time\\
\hline
{\tt GeoSample} &
$\tilde{O}(n^2 \cdot \log(1/\varepsilon) + n \cdot \log|\cX|)$ &
$\tilde{O}(n^2 \cdot \log(1/\varepsilon) + n \cdot \log|\cX|)$ \\
{\tt FastSample} &
$\tilde{O}((n/\varepsilon) \cdot \log (|\cX|/\beta))$ &
$\tilde{O}((n/\varepsilon) \cdot \log (|\cX|/\beta))$
\vspace{3ex}
\end{tabular}

\begin{tabular}{ccc}
$\cM$ & $(a,\beta)$-Per-Query & $(a,\beta)$-Simultaneous \\
\hline
{\tt GeoSample} &
$\left\lceil\frac{5}{\varepsilon}\ln\frac{1}{\beta}\right\rceil$ &
$\left\lceil\frac{5}{\varepsilon}\ln\frac{|\cX|}{\beta}\right\rceil$\\
{\tt FastSample} &
$\left\lceil\frac{5}{\varepsilon}\ln\frac{2}{\beta}\right\rceil$ &
$\left\lceil\frac{5}{\varepsilon}\ln\frac{2|\cX|}{\beta}\right\rceil$
\end{tabular}
\caption{The running time and errors of ${\tt CompactHistogram}_{\cM, \cM_0}$ for the counting query algorithms of Section \vpref{sec:count} using the empty-bin sampler $\cM_0$ defined in Lemma \vpref{lem:count conversion} with $\log d_0 = O(\log(1/\varepsilon) + \log {d} + \log |\cX|)$.
For more details, see Theorem \vpref{thm:explicit compact}.}
\end{figure}

\section*{Acknowledgments}
We thank the Harvard Privacy Tools differential privacy research group, particularly Mark Bun and Kobbi Nissim, for informative discussions and feedback.
And we thank Ashwin Machanavajjhala, Frank McSherry, Uri Stemmer, and the anonymous TPDP and ITCS reviewers for their helpful comments.

\bibliography{hist}

\newcommand{\etalchar}[1]{$^{#1}$}
\begin{thebibliography}{DKM{\etalchar{+}}06}

\bibitem[BBKN14]{BeBrKaNi14}
Amos Beimel, Hai Brenner, Shiva~Prasad Kasiviswanathan, and Kobbi Nissim.
\newblock Bounds on the sample complexity for private learning and private data
  release.
\newblock {\em Machine learning}, 94(3):401--437, 2014.

\bibitem[BLR13]{BlLiRo13}
Avrim Blum, Katrina Ligett, and Aaron Roth.
\newblock A learning theory approach to noninteractive database privacy.
\newblock {\em J. ACM}, 60(2):12:1--12:25, may 2013.
\newblock
  \href{http://dx.doi.org/10.1145/2450142.2450148}{doi:10.1145/2450142.2450148}.

\bibitem[BNS16]{BuNiSt16}
Mark Bun, Kobbi Nissim, and Uri Stemmer.
\newblock Simultaneous private learning of multiple concepts.
\newblock In {\em Proceedings of the 2016 ACM Conference on Innovations in
  Theoretical Computer Science}, ITCS '16, pages 369--380, New York, NY, USA,
  2016. ACM.
\newblock
  \href{http://dx.doi.org/10.1145/2840728.2840747}{doi:10.1145/2840728.2840747}.

\bibitem[BV18]{BaVa18}
Victor Balcer and Salil Vadhan.
\newblock Differential privacy on finite computers.
\newblock In {\em LIPIcs-Leibniz International Proceedings in Informatics},
  volume~94. Schloss Dagstuhl-Leibniz-Zentrum fuer Informatik, 2018.

\bibitem[CDK17]{CaDaKa17}
Bryan Cai, Constantinos Daskalakis, and Gautam Kamath.
\newblock Priv'it: Private and sample efficient identity testing.
\newblock {\em CoRR}, abs/1703.10127, 2017.
\newblock \href{https://arxiv.org/abs/1703.10127}{arXiv:1703.10127}.

\bibitem[CKKL12]{ChKlKoLe12}
Mahdi Cheraghchi, Adam Klivans, Pravesh Kothari, and Homin~K. Lee.
\newblock Submodular functions are noise stable.
\newblock In {\em Proceedings of the Twenty-third Annual ACM-SIAM Symposium on
  Discrete Algorithms}, SODA '12, pages 1586--1592, Philadelphia, PA, USA,
  2012. Society for Industrial and Applied Mathematics.

\bibitem[CLRS09]{CoLeRiSt09}
Thomas~H Cormen, Charles~E Leiserson, Ronald~L Rivest, and Clifford Stein.
\newblock {\em Introduction to algorithms}.
\newblock MIT press, 2009.

\bibitem[CPST12]{CoPrSrTr12}
Graham Cormode, Cecilia Procopiuc, Divesh Srivastava, and Thanh T.~L. Tran.
\newblock Differentially private summaries for sparse data.
\newblock In {\em Proceedings of the 15th International Conference on Database
  Theory}, ICDT '12, pages 299--311, New York, NY, USA, 2012. ACM.
\newblock
  \href{http://dx.doi.org/10.1145/2274576.2274608}{doi:10.1145/2274576.2274608}.

\bibitem[CTUW14]{ChThUlWa14}
Karthekeyan Chandrasekaran, Justin Thaler, Jonathan Ullman, and Andrew Wan.
\newblock Faster private release of marginals on small databases.
\newblock In {\em Proceedings of the 5th Conference on Innovations in
  Theoretical Computer Science}, ITCS '14, pages 387--402, New York, NY, USA,
  2014. ACM.
\newblock
  \href{http://dx.doi.org/10.1145/2554797.2554833}{doi:10.1145/2554797.2554833}.

\bibitem[DKM{\etalchar{+}}06]{DwKeMcMiNa06}
Cynthia Dwork, Krishnaram Kenthapadi, Frank McSherry, Ilya Mironov, and Moni
  Naor.
\newblock Our data, ourselves: Privacy via distributed noise generation.
\newblock In {\em Eurocrypt}, volume 4004, pages 486--503. Springer, 2006.

\bibitem[DL09]{DwLe09}
Cynthia Dwork and Jing Lei.
\newblock Differential privacy and robust statistics.
\newblock In {\em Proceedings of the Forty-first Annual ACM Symposium on Theory
  of Computing}, STOC '09, pages 371--380, New York, NY, USA, 2009. ACM.
\newblock
  \href{http://dx.doi.org/10.1145/1536414.1536466}{doi:10.1145/1536414.1536466}.

\bibitem[DMNS06]{DwMcNiSm06}
Cynthia Dwork, Frank McSherry, Kobbi Nissim, and Adam Smith.
\newblock Calibrating noise to sensitivity in private data analysis.
\newblock In {\em TCC}, volume 3876, pages 265--284. Springer, 2006.

\bibitem[DNT15]{DwNiTa15}
Cynthia Dwork, Aleksandar Nikolov, and Kunal Talwar.
\newblock Efficient algorithms for privately releasing marginals via convex
  relaxations.
\newblock volume~53, pages 650--673. Springer, 2015.

\bibitem[GMP13]{GaMiPa13}
Ivan Gazeau, Dale Miller, and Catuscia Palamidessi.
\newblock Preserving differential privacy under finite-precision semantics.
\newblock In {\em Proceedings 11th International Workshop on Quantitative
  Aspects of Programming Languages and Systems, {QAPL} 2013, Rome, Italy, March
  23-24, 2013.}, pages 1--18, 2013.
\newblock
  \href{http://dx.doi.org/10.4204/EPTCS.117.1}{doi:10.4204/EPTCS.117.1}.

\bibitem[GRS12]{GhRoSu12}
Arpita Ghosh, Tim Roughgarden, and Mukund Sundararajan.
\newblock Universally utility-maximizing privacy mechanisms.
\newblock {\em SIAM Journal on Computing}, 41(6):1673--1693, 2012.

\bibitem[GRU12]{GuRoUl12}
Anupam Gupta, Aaron Roth, and Jonathan Ullman.
\newblock Iterative constructions and private data release.
\newblock {\em Theory of Cryptography}, pages 339--356, 2012.

\bibitem[HRS12]{HaRoSe12}
Moritz Hardt, Guy~N. Rothblum, and Rocco~A. Servedio.
\newblock Private data release via learning thresholds.
\newblock In {\em Proceedings of the Twenty-third Annual ACM-SIAM Symposium on
  Discrete Algorithms}, SODA '12, pages 168--187, Philadelphia, PA, USA, 2012.
  Society for Industrial and Applied Mathematics.

\bibitem[HT10]{HaTa10}
Moritz Hardt and Kunal Talwar.
\newblock On the geometry of differential privacy.
\newblock In {\em Proceedings of the Forty-second ACM Symposium on Theory of
  Computing}, STOC '10, pages 705--714, New York, NY, USA, 2010. ACM.
\newblock
  \href{http://dx.doi.org/10.1145/1806689.1806786}{doi:10.1145/1806689.1806786}.

\bibitem[KKMN09]{KoKeMiNt09}
Aleksandra Korolova, Krishnaram Kenthapadi, Nina Mishra, and Alexandros
  Ntoulas.
\newblock Releasing search queries and clicks privately.
\newblock In {\em Proceedings of the 18th International Conference on World
  Wide Web}, WWW '09, pages 171--180, New York, NY, USA, 2009. ACM.
\newblock
  \href{http://dx.doi.org/10.1145/1526709.1526733}{doi:10.1145/1526709.1526733}.

\bibitem[KLN{\etalchar{+}}11]{KaLeNiRaSm11}
Shiva~Prasad Kasiviswanathan, Homin~K Lee, Kobbi Nissim, Sofya Raskhodnikova,
  and Adam Smith.
\newblock What can we learn privately?
\newblock {\em SIAM Journal on Computing}, 40(3):793--826, 2011.

\bibitem[Lin99]{Lint99}
Jacobus Hendricus~van Lint.
\newblock Introduction to coding theory.
\newblock 1999.

\bibitem[Mir12]{Mironov12}
Ilya Mironov.
\newblock On significance of the least significant bits for differential
  privacy.
\newblock In {\em Proceedings of the 2012 ACM Conference on Computer and
  Communications Security}, CCS '12, pages 650--661, New York, NY, USA, 2012.
  ACM.
\newblock
  \href{http://dx.doi.org/10.1145/2382196.2382264}{doi:10.1145/2382196.2382264}.

\bibitem[TUV12]{ThUlVa12}
Justin Thaler, Jonathan Ullman, and Salil Vadhan.
\newblock Faster algorithms for privately releasing marginals.
\newblock In {\em International Colloquium on Automata, Languages, and
  Programming}, pages 810--821. Springer, 2012.

\bibitem[UV11]{UlVa11}
Jonathan Ullman and Salil~P. Vadhan.
\newblock {PCP}s and the hardness of generating private synthetic data.
\newblock In {\em TCC}, volume 6597, pages 400--416. Springer, 2011.

\bibitem[vzGG13]{vzGaGe13}
Joachim von~zur Gathen and J{\"u}rgen Gerhard.
\newblock {\em Modern computer algebra}.
\newblock Cambridge university press, 2013.

\end{thebibliography}

\appendix

\section{Efficient Sampling of Distinct Elements}\label{sec:distinct sample}
In this section we show how to efficiently sample distinct elements from a subset of $\cX$.

\begin{lemma}\label{lem:distinct sample}
There exists an algorithm that given an integer $m$ specifying the set $\cX = [m]_+$ and a subset $A \subseteq \cX$, samples a uniformly random sequence of $r \le |\cX \setminus A|$ distinct elements from $\cX \setminus A$ with running time
\begin{align*}
O(|A| \log |A|\cdot \log |\cX|) + O(r \cdot \log^2 |\cX| \cdot \log (|A| + r))
\end{align*}
\end{lemma}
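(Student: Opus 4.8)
The plan is to sample the $r$ distinct elements one at a time, maintaining the set of already-chosen elements together with the forbidden set $A$, and at each step draw a uniformly random element of $\cX$ that has not yet been used, rejecting and retrying whenever we land inside the forbidden/used set. The naive version of this — sample from $[m]_+$, check membership in $A$ and in the set of previously drawn elements, retry on collision — is correct but has running time that degrades as $|A| + r$ approaches $|\cX|$, since the rejection probability becomes large. To get the stated bound I would instead use the standard trick of relabeling: maintain a balanced search tree (red-black tree, as the paper already assumes for partial vectors) whose keys are the ``holes'' currently removed from $\cX$, i.e. the elements of $A$ together with the elements chosen so far. To draw the next element, sample a uniform integer $j$ in $\{1,\ldots, |\cX| - (\text{number of holes})\}$ and then translate $j$ into the $j$-th element of $\cX$ that is not a hole; this is a rank/select computation on the hole-tree, which takes $O((\log(|A|+r))\cdot \log|\cX|)$ time per query if the tree is augmented with subtree sizes (each node storing how many holes lie in its subtree, and comparisons on $\log|\cX|$-bit keys costing $O(\log|\cX|)$). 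After computing the $j$-th non-hole element, insert it into the hole-tree so it is not picked again.

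First I would set up the augmented red-black tree: build it by inserting all $|A|$ elements of $A$, which costs $O(|A|\log|A|\cdot\log|\cX|)$ — each insertion is $O(\log|A|)$ tree operations, each operation comparing $O(\log|\cX|)$-bit keys, and subtree-size maintenance is folded into the rotations at no asymptotic cost. Then I would run the main loop $r$ times. In iteration $i$ the tree has $|A| + (i-1)$ elements, so it has height $O(\log(|A|+r))$; sampling the uniform integer $j$ from the oracle costs $\tilde O(\log|\cX|)$; the rank/select walk from the root to locate the $j$-th gap element visits $O(\log(|A|+r))$ nodes and does an $O(\log|\cX|)$-bit arithmetic comparison at each, for $O(\log(|A|+r)\cdot\log|\cX|)$; and the subsequent insertion of that element costs another $O(\log(|A|+r)\cdot\log|\cX|)$. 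Summing over the $r$ iterations gives $O(r\cdot\log^2|\cX|\cdot\log(|A|+r))$, using $\log|\cX| = \Theta(\log m)$ and the fact that $\tilde O(\log|\cX|)$ for the oracle call and integer bookkeeping is dominated by this term. Together with the $O(|A|\log|A|\cdot\log|\cX|)$ setup cost this matches the claimed bound.

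For correctness I would argue by induction on $i$ that, conditioned on the first $i-1$ elements drawn, each element of $\cX\setminus A$ not yet drawn is equally likely to be the $i$-th: this is immediate because the select operation is a bijection between $\{1,\ldots,|\cX|-|A|-(i-1)\}$ and the set of non-holes, and $j$ is uniform on that range. Hence the output sequence is uniform over all length-$r$ sequences of distinct elements of $\cX\setminus A$, and in particular distinct by construction (each chosen element immediately becomes a hole). I would also note the boundary assumption $r \le |\cX\setminus A|$ guarantees the sampling range is nonempty at every step.

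The main obstacle — really the only subtle point — is bookkeeping the running time of the rank/select operations carefully, since the tree stores $\log|\cX|$-bit keys but has only $O(\log(|A|+r))$ depth, so one must be careful to charge $\log|\cX|$ per node comparison rather than conflating key length with tree size; this is what produces the $\log^2|\cX|$ factor (one $\log|\cX|$ from depth being at most $\log|\cX|$, since $|A|+r \le |\cX|$, composed with one $\log|\cX|$ per comparison) reconciled against the tighter $\log(|A|+r)$ bound on depth when $|A|+r \ll |\cX|$. Everything else is routine data-structure analysis, and I would invoke the standard augmented red-black tree (order-statistic tree) from \cite{CoLeRiSt09} rather than reprove its properties.
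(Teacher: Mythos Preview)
Your approach is essentially the same as the paper's: maintain an order-statistic tree of ``holes'' (elements of $A$ together with already-drawn elements), and at each step sample a uniform index $j$ into the complement and translate it to the $j$-th non-hole element. The paper cites exactly the augmented red-black tree from \cite{CoLeRiSt09} that you invoke, and the correctness argument by induction is identical.

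The one implementation-level difference: you propose a direct select walk down the tree, visiting $O(\log(|A|+r))$ nodes with $O(\log|\cX|)$-bit arithmetic at each, giving $O(\log(|A|+r)\cdot\log|\cX|)$ per draw. The paper instead does a binary search over $\cX = [m]_+$, at each of the $O(\log|\cX|)$ midpoints issuing a rank query $|\{x' \in T : x' \le x\}|$ that costs $O(\log(|A|+r)\cdot\log|\cX|)$; this is where their extra $\log|\cX|$ factor (hence the $\log^2|\cX|$ in the bound) actually comes from. Your direct-walk analysis is tighter and still upper-bounded by the stated bound, so your final paragraph's attempt to manufacture a second $\log|\cX|$ factor is unnecessary and a bit muddled --- but it does not affect correctness.
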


To prove this lemma, we will use a data structure that supports efficiently computing the numbers of elements in the tree less than a given value.
We will use this data structure to store the elements in $\cX$ which we do not want to sample.

\begin{proposition}[Order-Statistic Tree \cite{CoLeRiSt09} Chapter 14.1]\label{prop:redblack}
There exists a data structure $T$ maintaining a set over $\cX$ (let $|T|$ denote the size of the set maintained by $T$) with the following properties:
{\renewcommand{\theenumi}{\roman{enumi}}
\begin{enumerate}
\item
	Inserting an element $x \in \cX$ into $T$ takes $O(\log |T| \cdot \log |\cX|)$ time.
\item
	$T$ can be represented in $O(|T| \cdot \log |\cX|)$ bits.
\item
	For all $x \in \cX$, the quantity $|\{x' \in T \,:\, x' \le x\}|$ can be computed in $O(\log |T|\cdot \log |\cX|)$ time.
\end{enumerate}}
\end{proposition}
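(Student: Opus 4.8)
The plan is to instantiate $T$ as a \emph{red-black tree} storing the current set $S\subseteq\cX$, augmented so that every node $v$ records, in addition to its key (an element of $\cX$), its color bit, and its parent/child pointers, the integer $\mathrm{size}(v)$ equal to the number of nodes in the subtree rooted at $v$; this is exactly the order-statistic tree of \cite{CoLeRiSt09}, Chapter 14.1. I would verify the three requirements in the word-RAM model of Section \vpref{ssec:model}, where a comparison between two elements of $\cX$ costs $O(\log|\cX|)$ and an arithmetic operation on an integer in $\{0,\dots,|\cX|\}$ also costs $O(\log|\cX|)$ (such integers span $O(\log|\cX|)$ bits). Property (ii) is then immediate: a red-black tree on $|T|$ nodes has $|T|$ nodes, and each node stores a key of $O(\log|\cX|)$ bits, a size field of $O(\log|T|)=O(\log|\cX|)$ bits, a color bit, and $O(1)$ pointers into an array of $|T|\le|\cX|$ nodes, hence $O(|T|\cdot\log|\cX|)$ bits in total.

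For property (i), insertion proceeds as in the standard red-black algorithm: descend from the root along the search path for the new key, which visits $O(\log|T|)$ nodes since a red-black tree on $|T|$ nodes has height $O(\log|T|)$, performing one key comparison per node; along the descent, increment $\mathrm{size}(v)$ for each node $v$ on the path, since the new element joins each such subtree. Then splice in the new red leaf and restore the red-black invariants using $O(\log|T|)$ recolorings and $O(1)$ rotations. I would check, following \cite{CoLeRiSt09}, that each rotation touches only $O(1)$ nodes and that for each affected node $v$ the value $\mathrm{size}(v)$ is recomputed locally as $1+\mathrm{size}(\mathrm{left}(v))+\mathrm{size}(\mathrm{right}(v))$ using $O(1)$ additions. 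Since each comparison and each $O(\log|\cX|)$-bit addition costs $O(\log|\cX|)$, the insertion runs in $O(\log|T|\cdot\log|\cX|)$ time.

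For property (iii), to compute $r(x):=|\{x'\in T:x'\le x\}|$ I would run the usual rank walk: initialize a counter $c\gets 0$, start at the root, and at a node $v$ with key $k_v$ either go to the left child if $x<k_v$, or else set $c\gets c+1+\mathrm{size}(\mathrm{left}(v))$ and go to the right child; upon falling out of the tree, return $c$. Correctness is the standard invariant that at each visited node $v$, $c$ counts exactly the elements of $T$ that are $\le x$ and lie outside the subtree rooted at $v$. This walk visits $O(\log|T|)$ nodes, performing $O(1)$ comparisons and $O(1)$ additions on integers of $O(\log|\cX|)$ bits at each, for a total of $O(\log|T|\cdot\log|\cX|)$ time. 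The only content beyond transcribing \cite{CoLeRiSt09} is the per-operation bit-cost accounting --- that comparisons of $\cX$-elements and additions on size counters each cost $O(\log|\cX|)$ rather than $O(1)$ in this model, which simply multiplies the textbook $O(\log|T|)$ step counts by $O(\log|\cX|)$; I do not anticipate any real obstacle.
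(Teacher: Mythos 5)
Your proposal is correct and is essentially the same argument the paper relies on: the paper simply cites the order-statistic tree of \cite{CoLeRiSt09}, Chapter 14.1 (a size-augmented red-black tree), and your construction, maintenance of the size fields under insertion and rotation, and rank walk are exactly that, with the only addition being the straightforward bit-cost accounting in the paper's word-RAM model, which you carry out correctly. No gaps.
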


\begin{proof}[Proof of Lemma \vpref{lem:distinct sample}]
We define the sampling algorithm as follows:
\begin{figure}[H]
\rule{\textwidth}{.5pt}
\begin{algorithm}\label{alg:distinctsample}
${\tt DistinctSample}(A, r)$
for $A \subseteq \cX$ and $r \in \N_+$ such that $r \le |\cX \setminus A|$
\begin{enumerate}
\item
	Let $S = \emptyset$ and $T$ be the data structure defined in Proposition \vpref{prop:redblack}.
\item
	For $x \in A$, insert $x$ into $T$.
\item
	For $i \in [r]_+$:
	\begin{enumerate}
	\item
		Let $m' = |\cX|-|T|$.
	\item
		Sample $z$ uniformly at random form $[m']_+$.
	\item
		Perform binary search over $\cX$ to find $s = \min \{x \in \cX \,:\, x - |\{x' \in T \,:\, x' \le x\}|  = z\}$.
	\item
		Insert $s$ into $T$ and let $S = S \cup \{s\}$.
	\end{enumerate}
\item
	Return $S$
\end{enumerate}
\vspace{-.1in}
\rule{\textwidth}{.5pt}
\end{algorithm}
\end{figure}

We prove correctness by induction on $r$.
We start with the base case $r = 1$.
Notice that $x-|\{x' \in T \,:\, x' \le x\}| = |\{x' \in \cX\setminus A \,:\, x' \le x\}|$.
Thus for all $z \in [m']_+$ there exists $x \in \cX \setminus A$ such that $|\{x' \in \cX\setminus A \,:\, x' \le x\}| = z$.

Now, we show $s \in \cX \setminus A$.
Let $x \in A$.
If $|\{x' \in \cX\setminus A \,:\, x' \le x\}| = 0$, then $s \neq x$ as $z \ge 1$.
Otherwise $|\{x' \in \cX\setminus A \,:\, x' \le x\}| = |\{x' \in \cX\setminus A \,:\, x' \le (x-1)\}|$ which also implies $s \neq x$ by definition of $s$.
Therefore,  $s \in \cX \setminus A$.

Notice that two different values of $z$ cannot output the same $s$ and $m' = |\cX \setminus A|$.
Therefore, ${\tt DistinctSample}(A, 1)$ is uniformly distributed over $\cX\setminus A$.

For the induction step, let $S \sim {\tt DistinctSample}(A, 1)$ and assume ${\tt DistinctSample}(A', r-1)$ is uniformly distributed over random sequences of $r-1$ elements from $\cX \setminus A'$ for any $A' \subseteq \cX$.
Then
\begin{align*}
{\tt DistinctSample}(A, r) \sim S \cup {\tt DistinctSample}(A \cup S, r-1)
\end{align*}
Therefore ${\tt DistinctSample}(A, r)$ is uniformly distributed over random sequences of $r$ elements from $\cX \setminus A$.

Now, we analyze the running time of ${\tt DistinctSample}$.
Step 2 can be done in $O(|A| \log |A| \cdot \log |\cX|)$ time by Proposition \vpref{prop:redblack} Part {\it i}.
Each of the $r$ iterations of step 3 is dominated by step 3c
which takes $O(\log^2 |\cX| \cdot \log (|A| + r))$ time by Proposition \vpref{prop:redblack} Part {\it iii} as $|T| \le |A| + r$.
\end{proof}
\end{document}